\newtheorem{theorem}{Theorem}[section]
\newtheorem{claim}[theorem]{Claim}
\newtheorem{lemma}[theorem]{Lemma}
\newtheorem{fact}[theorem]{Fact}
\newtheorem{definition}[theorem]{Definition}
\newtheorem{observation}[theorem]{Observation}
\newtheorem{assumption}[theorem]{Assumption}
\newtheorem{algorithm}[theorem]{Algorithm}
\newcommand{\vol}{{\hbox{\bf vol}}}
\newcommand{\current}{{\hbox{current}}}
\newcommand{\dtfs}{{\hbox{find-size}}}
\newcommand{\dtfr}{{\hbox{find-root}}}
\newcommand{\dtfv}{{\hbox{find-value}}}
\newcommand{\dtfm}{{\hbox{find-min}}}
\newcommand{\dtcv}{{\hbox{change-value}}}
\newcommand{\dtlink}{{\hbox{link}}}
\newcommand{\dtcut}{{\hbox{cut}}}
\newcommand{\del}{\tilde{\delta}}
\newcommand\myeq{\mathrel{\overset{\makebox[0pt]{\mbox{\normalfont\tiny\sffamily def}}}{=}}}
\newcommand{\polylog}{\operatorname{polylog}}
\newcommand{\tab}{.\hskip.1in}
\newcommand{\ex}{\operatorname{ex}}
\newcommand{\tempfix}[2]{#2}
\newcommand{\cutsize}{1.1\del}
\DeclarePairedDelimiter\floor{\lfloor}{\rfloor}
\title{Deterministic Near-Linear Time Minimum Cut in Weighted Graphs}
\author
{
Monika Henzinger\thanks{
Institute of Science and Technology Austria (ISTA), Klosterneuburg, Austria.
See funding information in the 
acknowledgement  section.
email: \texttt{monika.henzinger@ista.ac.at}}
\and
Jason Li\thanks{Simons Institute, UC Berkeley. email: \texttt{jmli@alumni.cmu.edu}}
\and
Satish Rao\thanks{UC Berkeley. email: \texttt{satishr@berkeley.edu}}
\and
Di Wang\thanks{Google Research. email: \texttt{wadi@google.com}}
}
\begin{document}
\maketitle
 \sloppy 
 \thispagestyle{empty}

\begin{abstract}

  In 1996, Karger \cite{KargerStoc} gave a startling randomized algorithm that finds a
  minimum-cut in a (weighted) graph in time $O(m\log^3n)$ which he
  termed near-linear time meaning linear (in the size of the input) times a polylogarthmic
  factor. In this paper, we give the first deterministic algorithm
  which runs in near-linear time for weighted graphs. 

  Previously, the breakthrough results of Kawarabayashi and Thorup \cite{KawarabayashiT19}
  gave a near-linear time algorithm for simple graphs (which was
  improved to have running time $O(m \log^2 n \log \log n)$ in
  \cite{DBLP:journals/siamcomp/HenzingerRW20}.)  The main technique here is a clustering procedure that
  perfectly preserves minimum cuts. Recently, Li \cite{Li21} gave an
  $m^{1+o(1)}$ deterministic minimum-cut algorithm for weighted graphs;
  this form of running time has been termed ``almost-linear''.  Li
  uses almost-linear time deterministic expander decompositions which
  do not perfectly preserve minimum cuts, but he can use these
  clusterings to, in a sense, ``derandomize'' the methods of Karger.

  In terms of techniques, we provide a structural theorem that says there
  exists a sparse clustering that preserves minimum cuts in a weighted
  graph with $o(1)$ error. In addition, we construct it
  deterministically in near linear time.  This was done exactly for
  simple graphs in \cite{KawarabayashiT19,DBLP:journals/siamcomp/HenzingerRW20} and with polylogarithmic error for
  weighted graphs in \cite{Li21}. Extending the
  techniques in
  \cite{KawarabayashiT19,DBLP:journals/siamcomp/HenzingerRW20} to
  weighted graphs presents significant challenges,
  and moreover, the algorithm can
  only polylogarithmically approximately preserve minimum cuts. A
  remaining challenge is to reduce the polylogarithmic-approximate
  clusterings to $1+o(1/\log n)$-approximate so that they can be
  applied recursively as in~\cite{Li21}
  over $O(\log n)$ many levels.
  This is an additional challenge that requires building on
  properties of tree-packings in the presence of a wide range of edge
  weights to, for example, find sources for local flow computations
  which identify minimum cuts that cross clusters.
\end{abstract}
\break
\pagenumbering{arabic}

\section{Introduction}

  The minimum cut problem of finding the partition in a graph with $n$ nodes and $m$ edges which
  minimizes the number (or weight) of edges between the parts has a
  long history and many applications. Algorithms were initially
  derived from algorithms for the minimum $s-t$ cut problem, which
  finds a partition that separates $s$ and $t$ and minimizes the edge
  number or weight between the pieces; here one can simply use $n-1$
  instances of the minimum $s-t$ cut to find a minimum cut. It is
  interesting to remind the reader that the minimum $s-t$ cut problem
  has as a dual, the maximum $s-t$ flow problem, whose flow value equals
  the minimum cut value, and algorithms are inextricably linked for
  both. 

  For minimum cut itself, a directed version of a dual was developed
  in 1962 by Nash-Williams \cite{nash1961edge}, who showed that a
  packing of arborescences could be found in a directed graph whose
  size is equal to the size of a related cut.  This packing figures in
  modern algorithms. In particular, the packing
  algorithm of Gabow \cite{gabo1995matroid} makes this duality algorithmically
  effective.  We also use it in our algorithm.
  Approximate versions using the multiplicative weights
  methods \cite{plotkin1995fast} of these ideas are directly
  useful in our work.  

  A linear-time deterministic method for producing  a ``sparse'' graph
  with $O(\lambda n)$ edges that preserves all minimum cuts for a graph
  with minimum cut size $\lambda$ was given by Nagamochi and Ibaraki
  \cite{NagamochiI92}.  
  Another structural breakthrough was given by Karger
  \cite{karger1993global} who showed that the remarkably simple
  randomized algorithm of repeatedly choosing a random edge to
  collapse, returns a minimum cut with probability $1/n^2$. This
  yields a proof that the number of minimum cuts is at most $O(n^2)$
  which is matched by the number of minimum cuts on a simple
  cycle. The method can also be used to bound the number of
  near-minimum cuts.  This led to a sequence of randomized algorithms
  \cite{karger1996new,benczur1996approximating,karger2000minimum} with
  improved running times over the deterministic maximum-flow based
  approaches.

  Furthermore, Karger \cite{karger2000minimum} gave a randomized
  algorithm that ran in time $O(m \log^3 n)$, which yielded a minimum cut
  with high probability. The algorithm makes use of the structural
  insights on the number of cuts in a graph, on tree packings, and 
  on analyzing sampling methods to yield a weighted graph with few
  edges that almost preserved all minimum cuts. This,
  along with a packing of trees and a clever dynamic program yielded
  Karger's bounds.

  For deterministic algorithms, the state of the art remained largely
  tied to maximum flow based methods until the breakthrough work of
  Kawarabayashi and Thorup \cite{KawarabayashiT19} who developed graph decomposition
  techniques to give a $\tilde{O}(m)$ deterministic algorithm for
  simple graphs, which was improved by Henzinger, Rao, and Wang
  \cite{DBLP:journals/siamcomp/HenzingerRW20} to $O(m \log^2 n \log\log n)$\footnote{While the running time is better than Karger's
randomized algorithm, subsequent work has provided an $O(m\log^2n)$
time randomized algorithm for minimum cut in simple graphs \cite{gawrychowski2020minimum}.}. Interestingly, the
  decomposition method's closest progenitor was perhaps the work of
  Spielman and Teng \cite{spielman2004nearly} who gave {\em randomized} near-linear
  time algorithms for solving linear systems.  Moreover, these graph
  decomposition methods are a central ingredient in the recent {\em
    randomized} almost-linear time algorithms for maximum flow
  \cite{chen2022maximum}.  We note that a central tool in \cite{KawarabayashiT19} was the page-rank
  algorithm which was replaced by local flow algorithms in \cite{DBLP:journals/siamcomp/HenzingerRW20}.
  The usefulness of decompositions in both randomized and deterministic
  settings is notable. 
  
  The main weakness of the method of \cite{KawarabayashiT19,DBLP:journals/siamcomp/HenzingerRW20} is that it is limited to simple
  graphs due to several issues, one of which is the fact that any
  minimum cut of size $\lambda$ must consist of either a single vertex
  or must have at least $\lambda$ vertices. When $\lambda$ is
  large\footnote{When $\lambda$ is small, the minimum cut can be found
  using \cite{gabo1995matroid}.}, this allows the decomposition procedure of
 \cite{KawarabayashiT19,DBLP:journals/siamcomp/HenzingerRW20} to find clusterings which are ``perfect''
  in the  sense that they don't ``cross'' any cut whose size is close to a minimum cuts. This is one of several obstacles that we must overcome.

  Li \cite{Li21} produced an $m^{1+o(1)}$ deterministic algorithm for
  weighted graphs using deterministic constructions of expander
  decompositions \cite{li2021deterministic,chuzhoy2020deterministic}
  to, in a sense, derandomize Karger's \cite{karger2000minimum} approach to
  minimum cut. His approach is limited by both the time for
  constructing hierarchical expander decompositions and the ``inefficiency''  (relative to
  \cite{KawarabayashiT19}) of the resulting decompositions, in the sense that the minimum cut in the corresponding collapsed graph might be a factor of $\Omega(\log n)$ larger per level in the hierarchical decomposition than the minimum cut. This leads to a super-polylogarithmic 
  multiplicative factor in the size of the minimum cut over all levels. Fortunately, through the framework of Karger's randomized algorithm, this multiplicative approximation can be paid for in the running time to compute a so-called \emph{skeleton graph}, from which an exact minimum cut can be computed. 

  In this paper, we give an $\tilde{O}(m)$ deterministic algorithm for
  minimum cut. While we combine the approaches in \cite{Li21} and
  those in
  \cite{KawarabayashiT19,DBLP:journals/siamcomp/HenzingerRW20}, we
  have to address a number of issues in each by devising more
  efficient algorithms for weighted graphs as well as deal with the
  more complicated structure of minimum cuts in weighted graphs as
  compared to simple graphs.
  
  Of independent interest, we provide a structural theorem that there
  exists a sparse clustering, i.e., a partition of the vertices into set such that there are few edges between the sets, that preserves minimum cuts in weighted
  graphs with $1/\polylog(n)$ multiplicative error.  This extends the theorem of \cite{KawarabayashiT19}
  which gives a perfect clustering for simple graphs and the theorem
  of \cite{Li21} which gives a clustering with a logarithmic error.  
  
  \subsection{Our techniques and related work}\label{subsec:overview}
  
We discuss in more detail some of the techniques that
we build upon along with a description of our contributions.

A classical result of Nash and Williams \cite{nash1961edge} showed that any graph
$G$ with minimum cut $\lambda$ contains a packing of $\lambda$
arborescences (directed trees) in the directed graph where each edge
in $G$ is replaced by an edge in each direction. Gabow \cite{gabo1995matroid}
provided an efficient $\tilde{O}(\lambda m)$ algorithm that
produced the packing. There are randomized approximation algorithms that
produce fractional and approximate packings as well
\cite{plotkin1995fast}.

As noted previously, in 1996, Karger's famous paper \cite{KargerStoc}
gave an $O(m \log^3 n)$ randomized min-cut algorithm that takes a graph $G$
and produces a weighted graph with $O(n\log n)$ edges where the
minimum cuts are preserved to within a $(1+\epsilon)$
factor.\footnote{This work is inspired by a sequence of previous
papers on randomized approaches that remained far from near linear
time. \cite{karger1996new,benczur1996approximating,benczur2015randomized}.}
Then Karger builds a tree packing using \cite{gabo1995matroid} in
that weighted graph that consists of $O(\log n)$ trees.  In such a
tree packing, there must be a tree which intersects the minimum-cut at
most twice, and Karger provides an efficient dynamic programming
algorithm that given that tree produces a minimum cut.

Karger produces the sparse graph by sampling edges inversely to their
``strength'', a measure of the connectivity of the endpoints, in a
manner that preserves all small cuts and yields a sparse
graph. Specifically, the sampling introduces an edge of weight $1/p$ if
an edge is sampled with probability $p$, which preserves the edge on
expectation, and sampling by strength preserves all cuts with high
probability, which can be argued using the bounds on the number of
minimum-cuts and near minimum-cuts as follows from
\cite{karger1996new}. This sampling procedure was very much
randomized and an efficient deterministic construction has remained
elusive.
We note that Karger's algorithm works naturally for weighted graphs.

The deterministic algorithms for simple graphs in \cite{KawarabayashiT19,DBLP:journals/siamcomp/HenzingerRW20} instead generate a sparse graph that preserves all non-trivial minimum cuts \footnote{A trivial cut consists of just one
vertex.} by producing a
hierarchical clustering. At each level, a set of clusters is produced
such that no non-trivial minimum-cut separates any cluster and has few intercluster
edges. Thus, collapsing the graph and recursing ensures that all
non-trivial minimum cuts are preserved. The collapsing can be repeated until the
graph has $m' = \tilde{O}(m/\delta)$ edges, where $\delta$ is the minimum degree, and then simply run Gabow's
algorithm \cite{gabo1995matroid} to find a minimum cut in $\tilde{O}(\lambda\cdot m/\delta)=\tilde{O}(m)$
time. 

An example is useful here. First, note that a simple cycle is ``easy''
in the sense that it is sparse and one can simply run Gabow
\cite{gabo1995matroid} in linear time.  Replacing each vertex with a
clique on $\delta$ vertices and connecting adjacent cliques along the cycle with matchings produces a ``dense'' graph with minimum degree $\delta + 1$ and minimum cut
size $\delta$. Note that the cliques are 
not split by any non-trivial minimum cut.
The algorithms in
\cite{KawarabayashiT19,DBLP:journals/siamcomp/HenzingerRW20} produce a
final graph such that each node is created by collapsing a clique.  The number of edges,
including parallel edges, in the collapsed graph is $O(m/\delta)$ and
this graph contains all non-trivial minimum cuts in the original
graph. Note that each collapsed graph induces a vertex clustering or partition, where each node in the collapsed graph induces a \emph{piece} in the partition. Inspired by this example, we say a vertex clustering is
\emph{non-crossing} if there exists a minimum cut that does not cut any
of the pieces. Thus, the algorithms in
\cite{KawarabayashiT19,DBLP:journals/siamcomp/HenzingerRW20} produce a non-crossing vertex clustering.

To proceed it is useful to provide a little more detail.  The
algorithms of
\cite{KawarabayashiT19,DBLP:journals/siamcomp/HenzingerRW20} produce
the non-crossing clusterings \emph{incrementally}. That is, their algorithm
begins with a partition where any minimum cut only overlaps (i.e.,
crosses) any piece by some volume $s$, which is initially $m/2$, and
then proceeds to iteratively refine that clustering, in each iteration
decreasing $s$ by a factor of $2$.  Once $s$ is smaller than
$\Theta(\delta)$, where $\delta$ is the minimum vertex degree, they clean up the resulting partition to obtain the
property that a non-trivial minimum cut does not cross any of the
resulting clusters at all. This last step of obtaining a 
non-crossing clustering relies crucially on a local expansion
property of simple graphs: that any minimum cut in a simple graph
consists of either a single vertex or at least $\delta$ vertices and,
thus, any non-trivial minimum-cut is extremely sparse for simple
graphs. However, this does not hold for weighted graphs: In a weighted graph a non-trivial minimum cut can contain
just two vertices for arbitrary values of $\delta$.

Moreover, the recursive techniques face substantial issues for weighted graphs
in terms of complexity. The refinement process in
\cite{DBLP:journals/siamcomp/HenzingerRW20} performs local flow
computations where the progress is proportional to the total weight of the 
edges that are touched.  In simple graphs, the weight and the number of
edges are the same and, thus, the total work is near-linear in the number
of edges.  For weighted graphs, the number of edges and the total
weight differ and we, thus, need to devise new algorithms for
the refinement process.  Another issue in the refinement process is
the allocation of sources for the local flows, which also is more
complex as weighted graphs do not ``expand'' with respect to weights
as fast as simple graphs expand with respect to degrees. E.g., a
degree $\delta$ simple graph has $\delta$-neighbors, where a vertex of
weighted degree of $\delta$ may have only a single neighbor. This is problematic for the local flow computation.

This expansion issue creates an even bigger obstruction for going
from ``almost'' non-crossing as defined by $s$ to completely
non-crossing.  This barrier remains and we do not produce a clustering
such that every cluster is non-crossing in this work.

Fortunately, Li \cite{Li21} provides an ingenious approach for
utilizing ``approximately non-crossing'' clusterings to derandomize the
sampling approach of Karger \cite{karger1996new}. He first shows how
to use expander decompositions to produce approximately non-crossing
clusterings\footnote{The technical definitions of approximately
non-crossing are a bit different than those in
\cite{KawarabayashiT19,DBLP:journals/siamcomp/HenzingerRW20} and here,
but are similar in consequence.}, the deterministic constructions of
which take $m^{1+o(1)}$ time. He introduced another measure of
non-crossing of cuts for the partition in addition to the volume $s$
of overlap. In particular, his clusterings ensuring that making a cut
consistent with a clustering would only increase the size of the cut by a multiplicative
$n^{o(1)}$ factor. The idea is to ``shift'' the cut by adding or removing a small set of vertices across the cut. This is what he calls \emph{uncrossing} a cut. This is how he
produces a hierarchical clustering where every minimum cut
is well approximated by an arithmetic expression using a small linear
number of the clusters in the hierarchy.

He argued further that sampling a subset of edges that approximately
preserves the boundary size of the near-linear number of clusters
would then preserve approximately the minimum cuts as well.  This
contrasts with Karger's sampling which argued about an exponential
number of cuts.  The linear number of cuts allows one to derandomize
the sampling as one can explicitly track whether all the cuts are properly 
represented as one chooses edges to include in the sample.  One method to sample edges
is to satisfy a weighted sum of the near-linear number of cuts using a multiplicative weights method to ensure that all cuts are properly represented. This method was introduced in \cite{raghavan1988probabilistic}.

The obstruction for Li~\cite{Li21} to obtaining 
a $\tilde O(m)$ running time is both in the use of the expander decomposition and the fact
that even a constant multiplicative factor increase in boundary size, much less a polylogarithmic,
induced by uncrossing cuts limits the number of recursive levels to a
constant. Indeed, a recursion of depth $k = O(\log n)$ would give a
disastrous $\log n^{O(\log n)}$ multiplicative increase in the size of the minimum cut, while fewer recursive levels
are insufficient to sufficiently sparsify the graph.  We note that the use of
expander decompositions requires at least a $\Omega(\log n)$ multiplicative increase of the cut size per level of the recursion without a fundamental new understanding of finding
sparsest cuts.

We address these problems by using a modified version of the techniques from \cite{KawarabayashiT19,DBLP:journals/siamcomp/HenzingerRW20}  which runs faster and is better in that the uncrossing of
minimum cuts only introduces an error of $(1+\epsilon)$ at each level
where $\epsilon = 1/\polylog(n)$. Thus, a subset of the edges (sampled deterministically as sketched above) that
approximates the cuts in the resulting decomposition yields an
accurate representation of all minimum cuts.

As before, we produce these clusterings by first recursively reducing the
overlap, $s$, of any minimum cut to have weighted volume  $\tilde{O}(\lambda)$. 
Producing clustering with overlap of $s$ needs modification,
as does understanding the cost of uncrossing at the detail level.  The
recursion needs significant changes due to the running time analysis
having to depend on the number of edges rather than their weight as mentioned above.  At
this point, cuts that do not cross any of the clusters 
can increase the size of the minimum cut by a polylogarithmic factor.

We then need to refine these clusters so that the size of the minimum cut increases only by a $(1+ \epsilon)$ factor.  To do so, we appeal to several
structural properties of minimum cuts to deal with a spectrum of edge
weights. For ``small'' clusters (with a polylogarithmic number of
vertices), we can use bounds on the number of minimum cuts along with
our flow procedures to produce a better clustering and charging
arguments like those in
\cite{KawarabayashiT19,DBLP:journals/siamcomp/HenzingerRW20}.  For
large clusters, we have the property that minimum cuts do not overlap
them very much which implies that overlapping minimum cuts contains
edges which have large weights. This allows us to pack few forests
into these large weight edges which, in turn, allows us to
appropriately select small sets of sources for the local flow
procedures that find problematic overlapping minimum cuts. In the end,
we prune off overlapping small cuts from a large cluster until the
large cluster is certified to have no bad overlapping mininum cuts.
We note that the localized flow methods are similar to those used in
randomized constructions of expander decompositions
\cite{DBLP:conf/soda/SaranurakW19} in addition to those in
\cite{DBLP:journals/siamcomp/HenzingerRW20}.



\section{Preliminaries}
We are given an undirected, weighted graph $G = (V,E, w)$. Let  $\lambda$ be the minimum cut in $G$ and let $W$ be the sum of the weights of all edges. We assume that $W$ is polynomial in $n$. If this is not the case, the stated running time increases by a factor of $O(\log W)$.
We use \emph{component} to denote a connected component of $G \setminus E'$ for some  $E' \subseteq E$. Given a component $C$ we use $\partial C$ to denote the set of edges called \emph{boundary edges} with exactly one endpoint in $C$.

\paragraph{Weights.} Given two non-overlapping vertex sets $A, B$, we use $w_G(A,B)$ (or $w(A,B)$ if the graph we refer to clear from the text) to denote the sum of the weight of the edges between $A$ and $B$
and we use $w_G(A)$ (or $w(A)$) to denote the sum of the weighted edge degrees of all vertices in $A$. Furthermore, we adapt the notations to weighted graphs by explicitly using superscript $W$ to denote the weighted quantities, e.g. $d(v)$ is the number of edges incident to $v$, $d^W(v)$ is the total weight of edges incident to $v$, while $\vol(V)=\sum_{v\in V} d(v)$, $\vol^W(V)=\sum_{v\in V} d^W(v)$ for any vertex set $V$.

\paragraph{Subgraphs.} We use $G[A]$ to denote the subgraph of $G$ induced by the vertices in $A$, while $G\{A\}$ denotes the graph $G[A]$ together with,  for every node $u\in A$ with $w_G(u,G\setminus A)>0$, an added self-loop of $u$ with weight $w_G(u,G\setminus A)$. We will use the subgraph $G\{A\}$ by default when we work on a component $A$, and the self-loops make $d^W(v)$ the same in $G\{A\}$ as in $G$ for all $v\in A$, so there is no ambiguity when we talk about the weighted degree or volume in a subgraph.

\begin{definition}[\textbf{Strength}]
\label{def:strength}
Let $s,\del \ge 0$. A  component $C$ in a graph $G$ is \emph{$s$-strong for $\del$} if every cut $S$ with cut-size $w(S,G\setminus S)\leq \cutsize$ satisfies $\min\left(\vol^W_{G\{C\}}(S\cap C),\vol^W_{G\{C\}}(C\setminus S) \right)\leq s$. We call $s$ the \emph{strength} of a $s$-strong component. Note that  $s$-strong is always defined with respect to some parameter $\del$ even though we frequently drop $\del$ and just use $s$-strong when it is clear from the context.
\end{definition}

\begin{fact}\label{fact:strong}
    Let $s,\del \ge 0$.
    Any component that is a subset of an $s$-strong component is $s$-strong, where both use the same $\del$ in the definition of $s$-strong.
\end{fact}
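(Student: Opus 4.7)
The plan is to unpack the definition of $s$-strong directly and use the monotonicity of volume under subset containment, together with the crucial observation (stated right before the definition) that the self-loops in $G\{A\}$ ensure $\vol^W_{G\{A\}}(v) = d^W_G(v)$ for every $v \in A$. In particular, the vertex weights used to compute $\vol^W_{G\{C\}}$ and $\vol^W_{G\{C'\}}$ agree on $C'$ whenever $C' \subseteq C$.

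So let $C$ be $s$-strong for $\del$, and let $C' \subseteq C$ be a component. I take an arbitrary cut $S \subseteq V(G)$ with $w_G(S, G\setminus S) \le \cutsize$ and want to show
\[
\min\!\left(\vol^W_{G\{C'\}}(S\cap C'),\ \vol^W_{G\{C'\}}(C'\setminus S)\right) \le s.
\]
Applying the $s$-strength of $C$ to this same cut $S$ gives
\[
\min\!\left(\vol^W_{G\{C\}}(S\cap C),\ \vol^W_{G\{C\}}(C\setminus S)\right) \le s.
\]
Since $S\cap C' \subseteq S\cap C$ and $C'\setminus S \subseteq C\setminus S$, and since for every $v$ in either set the per-vertex contribution $d^W_G(v)$ is identical whether computed in $G\{C\}$ or in $G\{C'\}$, we have
\[
\vol^W_{G\{C'\}}(S\cap C') \le \vol^W_{G\{C\}}(S\cap C), \qquad \vol^W_{G\{C'\}}(C'\setminus S) \le \vol^W_{G\{C\}}(C\setminus S).
\]
Taking the minimum of both sides preserves the inequality, so whichever of the two terms witnesses $\le s$ on the $C$ side gives the same bound on the $C'$ side.

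There is essentially no obstacle: the statement is a syntactic consequence of the definition once one recognises that the parameter $\del$ is shared (so the set of admissible cuts $S$ is the same for both $C$ and $C'$), and that volume is monotone under taking subsets of vertices whose individual weighted degrees are unchanged. The only thing to be careful about is to note that the cut $S$ in the definition ranges over subsets of $V(G)$ and its cut-size is measured in $G$, so the hypothesis on $S$ is independent of which component we test; this is what makes the reduction from $C$ to $C' \subseteq C$ immediate.
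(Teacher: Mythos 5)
Your proof is correct: the paper states this as a Fact without proof, and your argument — that the admissible cuts $S$ are the same for $C$ and $C'$ since the cut-size condition is measured in $G$, that the self-loops make each vertex's contribution $d^W_G(v)$ identical in $G\{C\}$ and $G\{C'\}$, and that volume is therefore monotone under $S\cap C'\subseteq S\cap C$ and $C'\setminus S\subseteq C\setminus S$ — is exactly the intended verification. Nothing is missing.
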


\begin{definition}\label{def:boundary-sparse}
For a component $A\subset V$ and parameter $\beta\le1$, a set $U\subseteq A$ is $\beta$-boundary-sparse in $A$ iff
\[ w(U,A\setminus U) < \beta\min\{w(U,V\setminus A),w(A\setminus U,V\setminus A)\} .\]
Otherwise it is non-$\beta$-boundary-sparse in $A$.
\end{definition}
Note that if $U$ is $\beta$-boundary-sparse in $A$, then $A \setminus U$ is also $\beta$-boundary-sparse in $A$, and that both $\emptyset$ and
$A$ are non-$\beta$-boundary-sparse in $A$.


\paragraph{Crossing.}
Given two vertex sets $A,B$, we say that $A$ and $B$ \emph{cross} if the sets $A\cap B$, $A\cap(V\setminus B)$, $(V\setminus A)\cap B$, and $(V\setminus A)\cap(V\setminus B)$  are all non-empty. Informally, when we say we wish to \emph{uncross} a set $A$ with a set $B$, we mean replacing $A$ with another set $A'$ that does not cross $B$.

\section{Overview of the algorithm}
At a high level, we follow the approach of the $O(|E|^{1+o(1)})$ deterministic minimum cut algorithm by Li~\cite{Li21}, which first performs a hierarchical expander decomposition of the input graph $G$, and uses the structure uncovered by the decomposition to derandomize the sampling procedure from Karger's classical randomized near-linear time minimum cut algorithm~\cite{karger2000minimum} to construct a skeleton graph. Informally speaking, the skeleton graph approximately preserves the minimum cut from the original graph while being much sparser. Finally, one can use the (deterministic) tree packing and dynamic programming techniques from Karger's result to find the minimum cut. While all other steps take time $\tilde{O}(|E|)$,
the bottleneck of Li's algorithm is the deterministic expander decomposition  and we replace it with a customized graph decomposition inspired by ideas from minimum cut algorithms for unweighted graphs~\cite{KawarabayashiT19,DBLP:journals/siamcomp/HenzingerRW20}. Although our decomposition doesn't provide all the guarantees of an expander decomposition, it takes time $\tilde{O}(|E|)$ and we show that it captures all the essential properties required by the subsequent steps in Li's algorithm. This improves the time complexity of the whole algorithm to be nearly linear. 
Concretely, the goal of the decomposition is to partition the vertices into subsets, such that there exists a cut of size $(1 + \epsilon) \lambda$
that does not cross any of the subsets, where $\epsilon = 1/\polylog|V|$. In the rest of this section, we discuss our graph decomposition routines and outline the main steps of our algorithm. The remaining sections then fill in the details for these steps.

First we get a close approximation of $\tilde \lambda$ of $\lambda$ (i.e.\ $\lambda \le \tilde\lambda \le 1.01 \lambda$) by guessing the value of $\lambda$ and then start decomposing the graph using three different procedures.
The first decomposition routine is similar to \cite{KawarabayashiT19,DBLP:journals/siamcomp/HenzingerRW20} and decomposes the graph $G'$ into \emph{$s_0$-strong} components for parameters $\del = \tilde\lambda/1.01$ and  $s_0=O(\del \tau^2)$
with $\tau = \polylog|V|$. 
We intuitively think of $s_0$-strong components as having similar guarantees for near-minimum cuts as expanders in the expander decomposition.
 Our decomposition is summarized in the following lemma, whose running time is near-linear in the unweighted number of edges in the graph.
 
 \begin{restatable*}[$s_0$-strong partition]{lemma}{StrongPartition}\label{lem:strong-partition}
Given a weighted graph $G=(V,E,w)$, a parameter $\del$ such that $\del\leq \min_{v\in V}d^W(v)$ and a parameter $s_0=\Theta(\del\log^c |V|)$ for some $c\geq 2$, there is an algorithm that runs in $\tilde O(|E|)$ time and partitions the vertex set $V$ into $s_0$-strong components for $\del$ such that the total weight of the edges between distinct components is at most $O(\frac{\sqrt{\del}\log|V|}{\sqrt {s_0}})\cdot\vol^W(V)$. 
\end{restatable*}
To prove this lemma we need to extend the techniques from \cite{DBLP:journals/siamcomp/HenzingerRW20} to weighted graphs, which poses the challenges discussed in Subsection~\ref{subsec:overview}. Sections \ref{sec:unit-flow} and~\ref{sec:s0} shows how we overcome them.

In the algorithm we choose $\del = \tilde\lambda/1.01$. 
In the following we always use 
$$s_0 = 10^{11}\tilde\delta\tau^2 = 10^{11} \tilde\lambda \tau^2/1.01$$
 and we use $\tau$ to denote
 $\Theta( \log^{c/2} |V|)$  for some $c\geq 2$.
We will call an $s_0$-strong component a \emph{cluster}.
To emphasize that a cluster is $s_0$-strong, we will often write $s_0$-strong cluster, even though this is redundant. We then decompose these clusters further.
By Fact~\ref{fact:strong} every refinement of a cluster is a cluster so that the two subsequent decompositions are only applied to clusters.
These decompositions are novel and constitute our main technical contributions. They are presented in Section~\ref{sec:uncrossing}.
The first decomposes each  cluster into at most one large-volume cluster and a collection of small-volume clusters. It is summarized in the following lemma, which is proven in Subsection~\ref{sec:bigpieces}.

\begin{restatable*}[Large cluster decomposition]{lemma}{FinalCuts}\label{lem:final-cuts}
Let $A\subseteq V$ be a  cluster, let $0<\epsilon\le0.1$ be a parameter and let $\tilde\lambda\le1.01\lambda$ be an approximate lower bound to the minimum cut known to the algorithm. There is an algorithm that partitions $A$ into a (potentially ``large'') set $A_0$ and ``small'' sets $A_1,\ldots,A_r$ such that
 \begin{enumerate}
 \item For each $i\in\{1,2,\ldots,r\}$, we have $\textbf{\textup{vol}}^W_G(A_i)\le (\frac{10 s_0}{\epsilon \tilde \lambda})\cdot s_0$.\label{item:final-cuts-1}
 \item The total weight $w(A_0,A_1,\ldots,A_r)$ of inter-cluster edges is at most $O(\epsilon^{-1})\cdot\partial A$.\label{item:final-cuts-2}
 \item For any set $U\subseteq A$ with $\partial_{G[A]}U\le1.01\tilde\lambda$ and $\textbf{\textup{vol}}^W_G(U)\le s_0$, there exists $U^*\subseteq U$ with $U^*\cap A_0=\emptyset$ and $w(U\setminus U^*,V\setminus A)+\partial_{G[A]}U^*\le(1+\epsilon)\partial_{G[A]}U$.\label{item:final-cuts-3} 
 \end{enumerate}
The algorithm runs in time $\tilde O((\frac{s_0}{\epsilon\tilde\lambda})^{O(1)}|E(G[A])|)$.
\end{restatable*}
By our choice of $s_0$ the running time is $\tilde O(|E(G[A])|)$.

By Fact~\ref{fact:strong} every set $A_i$ is $s_0$-strong.
We now call a cluster  \emph{small} if it has volume $(\frac{10 s_0}{\epsilon \tilde \lambda})\cdot s_0$ and \emph{large} otherwise. As we will choose $\epsilon = O(1/\log^{1.1} |V|)$, a small cluster has volume $  O(\del \tau^4 \log^{1.1} |V|) = \tilde O(\lambda)$.
Since every vertex has degree at least $\lambda$, it follows that small clusters contain only $\tilde O(1)$ many vertices.
Note that the lemma states that the potentially large cluster $A_0$ has been uncrossed, i.e., every cut $U$ of size at most $1.01 \tilde \lambda$ can be shifted to a cut $U^*$ such that $U^*$ does not cross $A_0$ and such that the size of $U^*$ is only a factor of $(1 + \epsilon)$ larger than the size of $U$. 

Thus we are left with potentially further decomposing the small clusters, for which we use the second decomposition. It runs in time polynomial in the number of vertices of the cluster, which is why we need the clusters to have only $\polylog|V|$ vertices. We will not define the concept of \emph{$(1+\epsilon)$-uncrossable} in property~(1) in this section, but informally, for two sets $S$ and $A$, we say that $S\cap A$ is $(1+\epsilon)$-uncrossable with $A$ if either $\partial(S\setminus A)\approx\partial S$ or $\partial(S\cup A)\approx\partial S$. In other words, we can ``uncross'' $S$ with $A$ at little additional cost to the size of the boundary.
\begin{restatable*}[Small cluster decomposition]{lemma}{SmallCuts}\label{lem:small-clusters}
Let $A\subseteq V$ be a cluster, let $0<\epsilon\le0.01$ be a parameter, and let $\tilde\lambda\le1.01\lambda$ be an approximate lower bound to the minimum cut known to the algorithm. There is an algorithm that partitions $A$ into a disjoint union of clusters $A_1\cup A_2\cup\cdots\cup A_k=A$ such that
 \begin{enumerate}
 \item For any set $S\subseteq A$ of $G$ with $\partial_{G[A]}S\le1.01\tilde\lambda$, there is a partition $\mathcal P$ of the set $\{A_1,A_2,\ldots,A_k\}$ such that for each part $P=\{A_{i_1},\ldots,A_{i_\ell}\}\in\mathcal P$, the set $S\cap\bigcup_{A'\in P}A'$ is non-$(1-\epsilon)$-boundary-sparse in $\bigcup_{A'\in P}A'$. 
 \item The sum of boundaries $\sum_i\partial A_i$ is at most $O(\epsilon^{-3}(\log|A|)^{O(1)}\partial A)$.
 \end{enumerate}
 The running time of the algorithm is $O(|A|^9 \polylog|A|)$.
\end{restatable*}
Note that if $A$ is a small cluster, the running time is $\tilde O(1)$.

With these decomposition routines, we now outline the algorithm for a weighted graph $G=(V,E)$. The goal of Step $1$ and the outer loop of Step \ref{step:2}  is to find a good enough estimation $\tilde{\lambda}$ of the minimum cut value $\lambda$. Step 1 find a value $\lambda_0$ with $\lambda \le \lambda_0 \le 3 \lambda$.
As Step \ref{step:2} iterates over all powers of 1.01 between $\lambda_0/3$ and $1.01 \lambda_0$, the iteration  is executed $O(1)$ times and $\tilde {\lambda}$ can be seen as a ``guess'' of $\lambda$
that will  at least once assume a value in $[\lambda,1.01\lambda]$. If $\tilde \lambda$ does not belong to $[\lambda,1.01\lambda]$, an arbitrary cut will be returned, but as the algorithm returns at the end the smallest over all cuts found for all guesses, it suffices that the algorithm finds a correct minimum cut in the iteration when $\tilde\lambda \in [\lambda,1.01\lambda]$.
Our  graph decomposition procedures are featured in Step~\ref{item:decomposition}, and the subsequent Steps~\ref{step:skel} and \ref{step:karger} are mostly the same as the original $O(|E|^{1+o(1)})$ algorithm by Li~\cite{Li21} and a presented in Section~\ref{sec:sparsifier}.
\begin{enumerate}
\item We first run Matula's $(2+\epsilon)$-approximation algorithm~\cite{DBLP:conf/soda/Matula93} (which can be easily extended to the weighted setting) to get a value $\lambda_0$ such that $\lambda \le \lambda_0 \le 3\lambda$. 
\item \label{step:2} For all powers $\tilde \lambda$ of $1.01$ between $\lambda_0/3$ and $1.01 \lambda_0$\label{item:guess-lambda}
\begin{enumerate}
    \item Initialize $j \gets 0$ and $G_0 \gets G$ with vertex set $V_0$  and edge set $ E_0$
    \item While there are at least 2 nodes in $G_j=(V_j,E_j)$: \label{item:decomposition}
    \begin{enumerate}

    \item Partition $V_j$ into subsets called clusters that are $s_0$-strong using the algorithm of \Cref{lem:strong-partition}, for parameters $\del=\tilde\lambda/1.01$ and $s_0=10^{11}\del\tau^2$. Let $\mathcal C^{s_0}_j$ be the set of $s_0$-strong clusters.\label{item:s0-strong} 
    \item Let $\epsilon= 1/\log^{1.1}|V|$.
    \item For each resulting cluster $C$, we further decompose $C$ according to \Cref{lem:final-cuts} with values $\tilde\lambda$ and $\epsilon$. Note that all clusters except possibly $A_0$ (if it exists) is a small cluster.\label{item:first-cluster}
    \item For each small cluster, we further decompose it according to \Cref{lem:small-clusters} with values $\tilde\lambda$ and $\epsilon$.\label{item:cluster}
    \item Let ${\cal C}_j = \{ C_{j,1}, C_{j,2}, \dots \}$ be the resulting set of clusters. Collapse every cluster $C \in {\cal C}_j$ with $|C| > 1$ into one node  and call the resulting graph $G_{j+1} = (V_{j+1}, E_{j+1})$.\label{item:contraction}
    \item $j \gets j + 1$
    \end{enumerate}
    \item Build a skeleton graph $H$ using the clusters $\{{\cal C}_j\}_j$ using the algorithm from \Cref{sec:sparsifier}.\label{step:skel}
    \item Run Karger's tree packing and dynamic programming algorithm on the skeleton $H$ to find a cut $S$. \label{step:karger}
\end{enumerate}
\item Output the cut $S$ with the minimum value found over all guesses of $\tilde\lambda$.
\end{enumerate}

 Note that every iteration of Step~\ref{item:decomposition} takes time  $\tilde O(m)$.
We will show in \Cref{lem:progress} that the clustering ${\cal C}_j$ guarantees that $\vol^W_{G_{j+1}}(V_{j+1}) \le \vol^W_{G_j}(V_j)/2$ for every $j \ge 0$, so there are $O(\log\vol^W_G(V))$ loop iterations of total time $\tilde O(m)$. 

For correctness 
let us first assume that $\tilde \lambda$ is between $\lambda$ and $1.01\lambda$.
In that case we will show, for each $j$, that after Step~\ref{item:cluster}, every approximately minimum cut can be ``shifted'' (or ``uncrossed'') into another cut that does not intersect any cluster and whose cut value grows at most by factor $(1 + o(1/\log |V|))$. In particular, collapsing such a cluster increases the size of the minimum cut by at most a factor of $(1 + o(1/\log |V|))$, which implies that the minimum cut of $G_{j+1}$ is at most a factor of $(1 + o(1/\log |V|))$ larger than the minimum cut of $G_j$. 
Thus, throughout all $O(\log |V|)$ iterations of Step~\ref{item:decomposition} the minimum cut  in $G_j$ is at most $1.01 \tilde\lambda$, which is a property needed for the decomposition algorithms in  steps\ref{item:first-cluster} and \ref{item:cluster} to correctly compute the clustering ${\cal C}_j$.
Using the guarantees of the clusters ${\cal C}_j$, the algorithm then constructs a skeleton graph $H$ following the pessimistic estimator approach of \cite{Li21}. Finally, the algorithm runs (the remainder of) Karger's near-linear time minimum cut algorithm on the skeleton $H$, which is deterministic and returns a minimum cut of $G$.


The lemmata below summarize the two key measures of progress. First, the total volume halves on each iteration $j$. Second, any near-minimum cut $S$ can be shifted by $O(\tau^2) = \polylog(n)$ vertices into a near-minimum cut $\widetilde S$ that does not cross any cluster in the partition, and, hence, survives the contraction on step~\ref{item:contraction}.

\begin{lemma}[Progress per iteration]\label{lem:progress}
For each iteration $j$, the total volume drops by a factor of at least $2$ in each iteration: $\vol^W_{G_{j+1}}(V_{j+1}) \le \vol^W_{G_j}(V_j)/2$.
\end{lemma}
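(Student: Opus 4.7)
The plan is to reduce the lemma to a bound on the total weight of inter-cluster edges created by the three decomposition passes (steps~\ref{item:s0-strong}, \ref{item:first-cluster}, \ref{item:cluster}) within one iteration of the while loop. After the contraction in step~\ref{item:contraction}, every edge of $G_{j+1}$ corresponds to a unique inter-cluster edge of $G_j$ with respect to the final partition $\mathcal C_j$ (edges internal to a cluster become self-loops and disappear), and each such edge contributes its weight to the weighted degree of both endpoints. Hence
\[ \vol^W_{G_{j+1}}(V_{j+1}) = 2W^*, \]
where $W^*$ is the total weight of inter-cluster edges of $G_j$ with respect to $\mathcal C_j$, and it suffices to show $W^* \le \vol^W_{G_j}(V_j)/4$.

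I would bound $W^*$ by tracking the weight introduced at each of the three decomposition steps. For step~\ref{item:s0-strong}, Lemma~\ref{lem:strong-partition} applied with $\del=\tilde\lambda/1.01$ and $s_0 = 10^{11}\del\tau^2$ bounds the inter-cluster weight $W_1$ by
\[ W_1 \le O\!\left(\frac{\sqrt{\del}\log|V|}{\sqrt{s_0}}\right)\vol^W_{G_j}(V_j) = O\!\left(\frac{\log|V|}{\tau}\right)\vol^W_{G_j}(V_j). \]
For step~\ref{item:first-cluster}, Lemma~\ref{lem:final-cuts} adds intra-$A$ inter-cluster weight at most $O(\epsilon^{-1})\partial A$ for each $s_0$-strong cluster $A$; summing over $A$ with $\sum_A\partial A = 2W_1$ yields a total weight $W_2 \le O(\epsilon^{-1})W_1$ after step~\ref{item:first-cluster}. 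For step~\ref{item:cluster}, Lemma~\ref{lem:small-clusters} increases the boundary of each small cluster $A'$ by a factor of $O(\epsilon^{-3}(\log|V|)^{O(1)})$; summing using $\sum_{A'}\partial A' \le 2W_2$ yields $W^* \le O(\epsilon^{-3}(\log|V|)^{O(1)})W_2$. Combining the three estimates gives
\[ W^* \le O\!\left(\frac{\epsilon^{-4}(\log|V|)^{O(1)}}{\tau}\right)\vol^W_{G_j}(V_j). \]

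Substituting $\epsilon = 1/\log^{1.1}|V|$ and $\tau = \Theta(\log^{c/2}|V|)$, the right-hand side is at most $\vol^W_{G_j}(V_j)/4$ provided the constant $c$ is chosen sufficiently large (larger than $8.8$ plus twice the exponent of the $(\log|V|)^{O(1)}$ factor produced by Lemma~\ref{lem:small-clusters}). The main (and in fact only) difficulty I expect is this parameter calibration: the multiplicative blow-ups $O(\epsilon^{-1})$ from the large-cluster uncrossing and $O(\epsilon^{-3}\polylog|V|)$ from the small-cluster refinement have to be fully absorbed by the $\log|V|/\tau$ savings of the initial $s_0$-strong partition, which is exactly why $\tau$ is fixed to be a sufficiently large polylogarithm in $|V|$.
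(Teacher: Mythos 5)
Your proof is correct and follows essentially the same route as the paper's: bound the inter-cluster weight created by each of the three decomposition passes (Lemma~\ref{lem:strong-partition}, then Lemma~\ref{lem:final-cuts}, then Lemma~\ref{lem:small-clusters}), multiply the three factors, and calibrate $\tau$ against $\epsilon^{-4}\polylog|V|$. If anything you are slightly more careful than the paper, both in making explicit the factor of $2$ relating $\vol^W_{G_{j+1}}(V_{j+1})$ to the inter-cluster weight and in spelling out how large the exponent $c$ in $\tau=\Theta(\log^{c/2}|V|)$ must be.
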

\begin{proof}
The value of $\del$ used in Step~\ref{item:s0-strong} fulfills the condition of  by \Cref{lem:strong-partition}, and, thus, with $c=6$ the lemma implies that the total weight of edges between clusters is at most $O(\frac{\sqrt{\del}\log |V|}{\sqrt{s_0}})\cdot\textbf{\textup{vol}}^W_{G'_j}(V_j)$
for $s_0 = \Theta(\del \tau^2)$ with $\tau = \Theta(\log^3 |V|)$.
In particular, the sum of the boundaries $\sum_A\partial A$ is at most $O(\frac{\sqrt{\del}\log |V|}{\sqrt{s_0}})\cdot\textbf{\textup{vol}}^W_{G'_j}(V_j)$. 

For each such cluster $A$, by \Cref{lem:final-cuts}, Step~\ref{item:first-cluster} produces a partition into sub-clusters $A_0,A_1,\ldots,A_r$ such that $\sum_i\partial A_i\le O(\epsilon^{-1})\cdot\partial A$. 
Therefore, after Step~\ref{item:first-cluster}, the new sum of boundaries of all clusters is $O(\frac{\sqrt{\del}\log |V|}{\epsilon \sqrt{s_0}})\cdot\textbf{\textup{vol}}^W_{G'_j}(V_j)$. 
Finally on Step~\ref{item:cluster}, by 
\Cref{lem:small-clusters}, each small cluster $A$ is divided one last time into clusters whose sum of boundaries is at most $O(\epsilon^{-3}(\log|A|)^{O(1)}\partial A)$, bringing the new total sum of boundaries to $O(\frac{\sqrt{\del} \log |V| (\log \log |V|)^{O(1)}}{\epsilon^4 \sqrt{s_0}})\cdot\textbf{\textup{vol}}^W_{G'_j}(V_j)$. 
Putting everything together, we conclude that the total weight of inter-cluster edges is at most
\[ O\left(\frac{\sqrt{\del}  \log |V| (\log \log |V|)^{O(1)}}{\epsilon^4 \sqrt{s_0}}\right) \cdot \vol^W_{G_j}(V_j). \]
Recall that $s_0=10^{11}\del \tau^2$, $\tau = O(\log^3 |V|)$. Using  a sufficiently large constant factor for $\tau$, this bound is at most $\vol^W_{G_j}(V_j)/2$, as promised.
\end{proof}

\begin{lemma}[Structure lemma]\label{lem:structure}
Suppose that Step~\ref{item:guess-lambda} is run with $\tilde\lambda \in [\lambda,1.01\lambda]$, and fix an iteration $j$. For any cut $S\subseteq V_j$ with $\partial_{G_j}S\le1.01\lambda$, there exists a set $S'\subseteq V_j$ such that
  \begin{enumerate}
  \item $S'$ does not cross any cluster in $\mathcal C_j$,
  \item $\partial_{G_j}S'\le(1+3\epsilon)\partial_{G_j}S$, and
  \item $\textbf{\textup{vol}}^W_{G_j}(S\triangle S')\le O(s_0^2/(\epsilon\lambda))$.\label{item:structure-1c}
  \end{enumerate}
\end{lemma}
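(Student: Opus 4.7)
The plan is to construct $S'$ by two sequential uncrossings mirroring the two finer decomposition lemmas of Step~\ref{item:decomposition}. To start, for every $s_0$-strong cluster $A\in\mathcal{C}^{s_0}_j$, the bound $\partial_{G_j}S\le1.01\lambda\le1.01\tilde\lambda<1.1\del$ (where $\del=\tilde\lambda/1.01$) together with \Cref{def:strength} implies that one of $\vol^W(S\cap A),\vol^W(A\setminus S)$ is at most $s_0$; call this smaller side $U_A$. The internal boundary $\partial_{G[A]}U_A$ is a subset of $\partial_{G_j}S$, hence at most $1.01\tilde\lambda$, and $\vol^W(U_A)\le s_0$, meeting the preconditions of Property~3 of \Cref{lem:final-cuts}. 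The lemma yields $U_A^*\subseteq U_A$ disjoint from the potentially-large piece $A_0$ of $A$ with $w(U_A\setminus U_A^*,V\setminus A)+\partial_{G[A]}U_A^*\le(1+\epsilon)\partial_{G[A]}U_A$. I form $S^{(2)}$ by replacing the small side $U_A$ of $S$ with $U_A^*$ inside every $A$; a direct accounting of which edges enter and leave the cut shows the per-cluster boundary change is at most $\partial_{G[A]}U_A^*+w(U_A\setminus U_A^*,V\setminus A)-\partial_{G[A]}U_A\le\epsilon\partial_{G[A]}U_A$, and summing over $A$ (the $\partial_{G[A]}U_A$ are edge-disjoint subsets of $\partial_{G_j}S$) gives $\partial_{G_j}S^{(2)}\le(1+\epsilon)\partial_{G_j}S$.

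Next, for every small sub-cluster $A_i$ produced by \Cref{lem:final-cuts}, I apply \Cref{lem:small-clusters} with input $S\cap A_i$ (whose boundary inside $G[A_i]$ is at most $\partial_{G_j}S\le1.01\tilde\lambda$), obtaining a partition $\mathcal{P}^{(i,A)}$ of the sub-sub-clusters such that for each part $P$ with $B=\bigcup P$, the set $S\cap B$ is non-$(1-\epsilon)$-boundary-sparse in $B$. I form $S'$ from $S^{(2)}$ by, on each such part, snapping $S'\cap B$ to either $B$ or $\emptyset$, picking the side of smaller external weight. Since $S\triangle S^{(2)}\subseteq\bigcup_A(U_A\setminus U_A^*)$ has volume at most $s_0$ per cluster, the non-boundary-sparseness of $S\cap B$ transfers, up to a negligible perturbation, to $S^{(2)}\cap B$, giving a per-part boundary increase of at most $\epsilon\min\{w(S^{(2)}\cap B,V\setminus B),w(B\setminus S^{(2)},V\setminus B)\}$; together with $\sum_P w(S^{(2)}\cap B,B\setminus S^{(2)})\le\partial_{G[A_i]}(S^{(2)}\cap A_i)$ and the non-sparseness lower bound this sums to at most $2\epsilon\partial_{G_j}S^{(2)}$. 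Composing, $\partial_{G_j}S'\le(1+\epsilon)(1+2\epsilon)\partial_{G_j}S\le(1+3\epsilon)\partial_{G_j}S$, proving (2). Property~(1) is immediate by construction: Step~2 makes $S'\cap A_0$ either empty or full for every $A$, and Step~3 makes $S'\cap A_i$ a union of sub-sub-clusters.

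For (3), within each $A$ the Step-2 symmetric difference is $\vol^W(U_A\setminus U_A^*)\le s_0$, and Step~3 only affects parts $B$ intersected by $U_A^*$; since weighted degrees are at least $\del$, at most $|U_A^*|\le s_0/\del$ small sub-clusters can be touched, each of volume $O(s_0^2/(\epsilon\tilde\lambda))$ by Item~1 of \Cref{lem:final-cuts}. The main obstacle I anticipate is tightening this into the stated $O(s_0^2/(\epsilon\lambda))$ bound: the straightforward product is a factor $\tau^2$ too large, and matching the bound should invoke that every $s_0$-strong cluster with $U_A\neq\emptyset$ contributes $\partial_G U_A\ge\lambda$ to $\sum_A\partial_G U_A$ whereas $\sum_A\partial_{G[A]}U_A\le\partial_{G_j}S=O(\lambda)$. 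Together with the inter-cluster weight bound from \Cref{lem:strong-partition}, this should limit the number of clusters contributing nontrivially to $\vol^W(S\triangle S')$ and yield the claimed estimate.
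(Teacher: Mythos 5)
Your two-stage uncrossing (first via \Cref{lem:final-cuts}, then via \Cref{lem:small-clusters}) is the same route the paper takes, and your accounting for properties (1) and (2) is essentially sound, modulo one wrinkle in the second stage: you invoke the partition of \Cref{lem:small-clusters} for $S\cap A_i$ and then snap $S^{(2)}$, claiming the non-boundary-sparseness ``transfers up to a negligible perturbation.'' That perturbation is not negligible: $S\triangle S^{(2)}$ can have weighted volume up to $s_0=\Theta(\tilde\lambda\tau^2)$ inside a single cluster, so the quantities $w(\cdot,B\setminus\cdot)$ and $w(\cdot,V\setminus B)$ entering the sparseness condition can each shift by far more than $\epsilon\tilde\lambda$, and the partition produced for $S$ need not be the right one for $S^{(2)}$. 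The clean fix --- and what the paper does --- is to apply the guarantee of \Cref{lem:small-clusters} directly to $S^{(2)}\cap A_i$, which still satisfies the boundary precondition since $\partial_{G_j}S^{(2)}\le(1+\epsilon)\partial_{G_j}S$.

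The genuine gap is property (3), which you have partly diagnosed yourself. Your count is off by $\tau^2$, and in fact even your Stage-1 volume bound is unbounded as written: you bound the symmetric difference by $s_0$ \emph{per cluster} without bounding the number of clusters that change. The idea you gesture at ($\partial_G U_A\ge\lambda$ for every nonempty $U_A$) does not close the gap, because the dominant term $w(U_A,V\setminus A)$ in that cut is not chargeable to $\partial_{G_j}S$, so it limits nothing. The missing argument is: only clusters where $U_A^*\ne U_A$ contribute to $S\triangle S'$, and for those, $\lambda\le\partial_{G_j}(U_A\setminus U_A^*)\le w(U_A\setminus U_A^*,V_j\setminus A)+\partial_{G[A]}U_A+\partial_{G[A]}U_A^*\le(2+\epsilon)\partial_{G[A]}U_A$ by the guarantee of \Cref{lem:final-cuts}, hence $\partial_{G[A]}U_A\ge\lambda/(2+\epsilon)$. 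Since the $\partial_{G[A]}U_A$ are edge-disjoint pieces of $\partial_{G_j}S=O(\lambda)$, only $O(1)$ clusters change, giving $\vol^W(S\triangle S^{(2)})\le O(s_0)$. The identical argument in the second stage (a part $P$ that is actually snapped satisfies $w(U_P,A_P\setminus U_P)\ge\frac{1-\epsilon}{2-\epsilon}\lambda$, and these weights are disjoint pieces of $\partial_{G_j}S^{(2)}$) shows only $O(1)$ parts are snapped, each moving volume at most $\vol^W(A_i)=O(s_0^2/(\epsilon\lambda))$, which yields the claimed bound.
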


For the remainder of this section, we prove \Cref{lem:structure}.

Since $\partial_{G_j}S\le 1.01\lambda \le 1.01 \tilde\lambda \le \cutsize$ and each cluster $A\in\mathcal C^{s_0}_j$ is $s_0$-strong for $\del$, we have $\min\{\textbf{\textup{vol}}^W_{G_j}(S\cap A),\textbf{\textup{vol}}^W_{G_j}(A\setminus S)\}\le s_0$ for all clusters $A\in\mathcal C^{s_0}_j$. For each such cluster $A$ that crosses $S$, apply property~(\ref{item:final-cuts-3}) of \Cref{lem:final-cuts} on either $U=S\cap A$ or $U=A\setminus S$, whichever has smaller volume. 
By assumption, $\partial_{G_j[A]}U\le1.01\lambda\le1.01\tilde\lambda$ and $\textbf{\textup{vol}}^W_{G_j}(U)\le s_0$, so property~(\ref{item:final-cuts-3}) of \Cref{lem:final-cuts} guarantees a set $U^*$ with $U^*\cap A_0=\emptyset$ and $w_{G_j}(U\setminus U^*,V_j\setminus A)+\partial_{G_j[A]}U^*\le(1+\epsilon)\partial_{G_j[A]}U$. To distinguish the sets among different $A$, we add a subscript $A$ to $U,U^*$.

We now shift $S$ into a new set $S^*$ by, intuitively, replacing each of the above $U_A$ with $U^*_A$. Some care is needed to distinguish the cases $U_A=S\cap A$ and $U_A=A\setminus S$. If $U_A=S\cap A$, then we remove the vertices of $U_A\setminus U^*_A$ from $S$. If $U_A=A\setminus S$, then we add the vertices of $U_A\setminus U^*_A$ to $S$. By construction, we only add or remove vertices in $U_A$ of total volume at most $\textbf{\textup{vol}}_{G_j}^W(U_A)\le s_0$.

\begin{claim}\label{clm:S-star-size}
$\partial_{G_j}S^*\le(1+\epsilon)\partial_{G_j}S$.
\end{claim}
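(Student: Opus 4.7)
The plan is to bound $\partial_{G_j}S^*$ by comparing it to $\partial_{G_j}S$ edge by edge, splitting the edges of $G_j$ into \emph{intra-cluster} edges (both endpoints in the same cluster $A\in\mathcal C^{s_0}_j$) and \emph{inter-cluster} edges. The key algebraic observation is that, for every vertex $u\in A$, the construction of $S^*$ gives
\[
  \mathbf{1}_{S^*}(u)=\mathbf{1}_{S}(u)\oplus\mathbf{1}_{I_A}(u),
\]
where $I_A:=U_A\setminus U_A^*$ is the set of vertices ``moved'' within $A$ (with $I_A:=\emptyset$ if $A$ does not cross $S$). This uniform formula handles the two cases $U_A=S\cap A$ and $U_A=A\setminus S$ simultaneously.

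Intra-cluster edges lying in a cluster $A$ that does not cross $S$ contribute nothing to either $\partial_{G_j}S$ or $\partial_{G_j}S^*$. For each cluster $A$ that does cross $S$, the intra-cluster contribution to $\partial_{G_j}S$ equals $\partial_{G_j[A]}(S\cap A)=\partial_{G_j[A]}U_A$, while that to $\partial_{G_j}S^*$ equals $\partial_{G_j[A]}U_A^*$. Invoking property~\ref{item:final-cuts-3} of \Cref{lem:final-cuts} for each such $A$ and summing yields
\[
  \partial_{\text{intra}}S^*+\sum_A w_{G_j}(I_A,V_j\setminus A)\le(1+\epsilon)\,\partial_{\text{intra}}S.
\]

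For inter-cluster edges, the XOR description shows that an edge $(u,v)$ with $u\in A$, $v\in B$, $A\ne B$, is in $\partial_{G_j}S\,\triangle\,\partial_{G_j}S^*$ exactly when one of $u\in I_A$ and $v\in I_B$ holds and the other does not. In particular, every inter-cluster edge that is \emph{added} to the boundary has precisely one endpoint in $\bigcup_A I_A$; since the clusters partition $V_j$, that endpoint lies in a unique $I_A$ and its partner lies in $V_j\setminus A$. Summing over $A$, the total weight of added inter-cluster edges is therefore at most $\sum_A w_{G_j}(I_A,V_j\setminus A)$, and consequently
\[
  \partial_{\text{inter}}S^*\le\partial_{\text{inter}}S+\sum_A w_{G_j}(I_A,V_j\setminus A).
\]

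Adding the two inequalities, the $\sum_A w_{G_j}(I_A,V_j\setminus A)$ terms cancel and leave $\partial_{G_j}S^*\le(1+\epsilon)\,\partial_{G_j}S$. The only subtlety is double-counting in $\sum_A w_{G_j}(I_A,V_j\setminus A)$: an inter-cluster edge with both endpoints in moved sets is counted twice there but is not added to the boundary, so the inequality bounding added weight holds with room to spare. I expect this edge-type bookkeeping to be the main (and only real) point of care.
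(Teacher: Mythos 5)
Your proof is correct and follows essentially the same route as the paper's: per-cluster application of property~(\ref{item:final-cuts-3}) of \Cref{lem:final-cuts}, charging the newly cut inter-cluster edges incident to the moved set $U_A\setminus U_A^*$ against the $w(U\setminus U^*,V_j\setminus A)$ term, and the intra-cluster change against $\partial_{G_j[A]}U^*-\partial_{G_j[A]}U$, then summing over clusters using disjointness of the intra-cluster boundaries. Your XOR bookkeeping and the explicit handling of edges with both endpoints moved is just a slightly more careful write-up of the same argument.
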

\begin{proof}
Consider a cluster $A\in\mathcal C^{s_0}_j$ with $U_A=S\cap A$. If $U_A\ne U^*_A$, then removing vertices of $U_A\setminus U^*_A$ results in a subset of edges of $E_{G_j}(U\setminus U^*,V_j\setminus A)$ being added to $\partial_{G_j}S^*$, and within cluster $A$, the net difference of edges is exactly $\partial_{G_j[A]}S^*-\partial_{G_j[A]}S$. The total net difference of edges from processing cluster $A$ is at most $w_{G_j}(U\setminus U^*,V_j\setminus A)+\partial_{G_j[A]}S^*-\partial_{G_j[A]}S$, which is at most $\epsilon\partial_{G_j[A]}S$ by property~(\ref{item:final-cuts-3}) of \Cref{lem:small-clusters}. A symmetric argument holds for the case $U_A=A\setminus S$. Finally, over all clusters $A$, the net differences sum to at most $\epsilon\partial_{G_j}S$.
\end{proof}

\begin{claim}\label{clm:in-cluster-at-least}
There are $O(1)$ many clusters for which $U^*_A\ne U_A$.
\end{claim}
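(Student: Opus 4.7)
The plan is to show that every cluster $A$ with $U^*_A \ne U_A$ must satisfy $\partial_{G_j[A]} U_A \ge \lambda/4$, which combined with $\sum_A \partial_{G_j[A]}U_A \le \partial_{G_j}S \le 1.01\lambda$ immediately forces the number of such clusters to be a constant.

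First I would establish a useful dichotomy: whenever $U_A \cap A_0 = \emptyset$, we may as well take $U^*_A = U_A$, since this choice trivially satisfies the conclusion of property~(\ref{item:final-cuts-3}) of \Cref{lem:final-cuts} (the left-hand side becomes $0 + \partial_{G_j[A]}U_A$). Therefore $U^*_A \ne U_A$ forces $U_A \cap A_0 \ne \emptyset$, and in particular $U_A \setminus U^*_A$ is a nonempty proper subset of $V_j$, so it forms a valid cut in $G_j$ and satisfies the minimum-cut bound
\[ \lambda \;\le\; \partial_{G_j}(U_A \setminus U^*_A) \;=\; \partial_{G_j[A]}(U_A \setminus U^*_A) \;+\; w(U_A \setminus U^*_A,\, V_j \setminus A). \]

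Next I would bound the intra-cluster term. Since $U^*_A \subseteq U_A$, a direct edge-counting calculation in $G_j[A]$ gives
\[ \partial_{G_j[A]}(U_A \setminus U^*_A) \;=\; \partial_{G_j[A]} U_A + \partial_{G_j[A]} U^*_A - 2\, w(U^*_A, A \setminus U_A) \;\le\; \partial_{G_j[A]} U_A + \partial_{G_j[A]} U^*_A. \]
Applying property~(\ref{item:final-cuts-3}) of \Cref{lem:final-cuts} to $U = U_A$ yields both $\partial_{G_j[A]} U^*_A \le (1+\epsilon)\partial_{G_j[A]} U_A$ and $w(U_A \setminus U^*_A, V_j \setminus A) \le (1+\epsilon)\partial_{G_j[A]} U_A$. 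Substituting into the minimum-cut bound gives
\[ \lambda \;\le\; (2+\epsilon)\partial_{G_j[A]} U_A + (1+\epsilon)\partial_{G_j[A]} U_A \;=\; (3+2\epsilon)\partial_{G_j[A]} U_A \;\le\; 4\,\partial_{G_j[A]} U_A, \]
so each changed cluster contributes $\partial_{G_j[A]} U_A \ge \lambda/4$ to the cut.

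Finally I would sum over changed clusters. Every edge counted in $\partial_{G_j[A]}U_A$ is an intra-cluster edge of $G_j$ crossing $S$, hence contributes to $\partial_{G_j}S$, and distinct clusters contribute disjoint edge sets. Therefore $k\cdot (\lambda/4) \le \partial_{G_j}S \le 1.01\lambda$, yielding $k \le 4 = O(1)$. The main thing to be careful about is the sign in the additivity identity for $\partial(U_A \setminus U^*_A)$ and verifying that the minimum-cut lower bound in $G_j$ is really $\lambda$ (it is, since $G_j$ is a contraction of $G$, so it can only destroy cuts, never create smaller ones); both are routine.
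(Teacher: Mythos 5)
Your proof is correct and follows essentially the same route as the paper: lower-bound $\partial_{G_j}(U_A\setminus U_A^*)$ by $\lambda$, upper-bound it via property~(3) of Lemma~\ref{lem:final-cuts} to conclude $\partial_{G_j[A]}U_A\ge\Omega(\lambda)$ for each changed cluster, and use disjointness of the sets $\partial_{G_j[A]}U_A$ across clusters. The only difference is that you apply the two parts of the guarantee separately (getting the constant $3+2\epsilon$ rather than the paper's $2+\epsilon$), which is immaterial for an $O(1)$ bound.
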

\begin{proof}
Consider the value of the cut $\partial_{G_j}(U_A\setminus U_A^*) \ge \lambda_{G_j} \ge \lambda$. We can upper bound $\partial_{G_j}(U_A\setminus U_A^*)$ as
\begin{align*}
\partial_{G_j}(U_A\setminus U_A^*)&=w_{G_j}(U_A\setminus U_A^*,V_j\setminus A)+\partial_{G_j[A]}(U_A\setminus U_A^*)
\\&\le w_{G_j}(U_A\setminus U_A^*,V_j\setminus A)+\partial_{G_j[A]}U_A+\partial_{G_j[A]}U_A^*
\\&\le(2+\epsilon)\partial_{G_j[A]}U_A,
\end{align*}
where the last inequality is by property~(\ref{item:final-cuts-3}) of \Cref{lem:small-clusters}.
It follows that $\partial_{G_j[A]}U_A\ge\frac1{2+\epsilon}\partial_{G_j}(U_A\setminus U_A^*)\ge\frac1{2+\epsilon}\lambda$. Finally, the edges $\partial_{G_j[A]}U_{A}$ are disjoint among different clusters $A$, so there are at most $\partial_{G_j}S/(\frac1{2+\epsilon}\lambda)=O(1)$ many clusters for which $U^*_A\ne U_A$. 
\end{proof}

\begin{claim}
$\textbf{\textup{vol}}_{G_j}^W(S\triangle S^*)\le O(s_0)$.
\end{claim}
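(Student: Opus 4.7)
The plan is to unwind the construction of $S^*$ from $S$ and observe that $S\triangle S^*$ is exactly the disjoint union, over clusters $A\in\mathcal C^{s_0}_j$ crossed by $S$, of the vertex sets $U_A\setminus U_A^*$. Indeed, by definition the construction only modifies $S$ within each such cluster $A$: when $U_A=S\cap A$ the vertices of $U_A\setminus U_A^*$ are removed from $S$, and when $U_A=A\setminus S$ the vertices of $U_A\setminus U_A^*$ are added to $S$. Since different clusters in $\mathcal C^{s_0}_j$ are vertex-disjoint, these contributions are disjoint and together cover $S\triangle S^*$ exactly.

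Next I would bound the volume contribution per cluster. By the choice of $U_A$ as the side of smaller volume, together with the $s_0$-strong property of $A$ (which gives $\min\{\vol^W_{G_j}(S\cap A),\vol^W_{G_j}(A\setminus S)\}\le s_0$), we have $\vol^W_{G_j}(U_A\setminus U_A^*)\le\vol^W_{G_j}(U_A)\le s_0$ for every cluster $A$ that $S$ crosses.

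Finally, Claim~\ref{clm:in-cluster-at-least} already shows that $U_A^*\ne U_A$ for only $O(1)$ clusters $A$. Clusters with $U_A^*=U_A$ contribute nothing to $S\triangle S^*$, so only these $O(1)$ exceptional clusters matter, and each contributes at most $s_0$ to $\vol^W_{G_j}(S\triangle S^*)$. Summing yields $\vol^W_{G_j}(S\triangle S^*)\le O(1)\cdot s_0=O(s_0)$, which is the desired bound.

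There is no substantive obstacle here: the claim is essentially a bookkeeping consequence of the two preceding claims, the $s_0$-strong property of the clusters in $\mathcal C^{s_0}_j$, and the fact that $U_A$ was defined to be the smaller-volume side. The only thing to be mildly careful about is that $S\triangle S^*$ is expressed as a union of pairwise disjoint pieces (one per crossed cluster), so that summing the per-cluster bounds is legitimate, and that the per-cluster bound uses $\vol^W_{G_j}(U_A)\le s_0$ rather than $\vol^W_{G_j}(U_A\setminus U_A^*)$ directly.
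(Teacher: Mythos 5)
Your proposal is correct and follows essentially the same argument as the paper: express $S\triangle S^*$ as the disjoint union of the sets $U_A\setminus U_A^*$ over crossed clusters, bound each by $\vol^W_{G_j}(U_A)\le s_0$ via the $s_0$-strong property and the smaller-side choice of $U_A$, and invoke Claim~\ref{clm:in-cluster-at-least} to restrict attention to $O(1)$ clusters. Your write-up is slightly more explicit about the disjointness and the source of the per-cluster volume bound, but the reasoning is identical.
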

\begin{proof}
For each cluster $A$, we add or remove a subset of $U_A$, which has volume at most $\textbf{\textup{vol}}_{G_j}^W(U_A)\le s_0$. By \Cref{clm:in-cluster-at-least}, there are at most $O(1)$ many clusters for which $U^*_A\ne U_A$. In each such cluster, we add/remove vertices with total volume at most $s_0$. For the clusters $A$ with $U^*_A=U_A$, no action is needed. Altogether, we obtain $\textbf{\textup{vol}}_{G_j}^W(S\triangle S^*)\le O(1)\cdot s_0$.
\end{proof}

After step~(\ref{item:first-cluster}), by construction of $S^*$ and \Cref{lem:final-cuts}, the set $S^*$ only crosses small clusters, which have volume $\textbf{\textup{vol}}_{G_j}^W(A)\le O(s_0^2/(\epsilon\lambda))$. Each of these clusters is then processed according to \Cref{lem:small-clusters} in Step \ref{item:cluster} as follows.  
We shift $S^*$ into a new set $S'$ by, intuitively, uncrossing with each cluster according to \Cref{lem:small-clusters}. For each cluster $A$ that crosses $S^*$, let $\mathcal P$ be the partition guaranteed by property~(1) of~\Cref{lem:small-clusters}. Recall from \Cref{def:boundary-sparse} that a set $U$ is $\beta$-boundary-sparse in $A$ if $w(U,A\setminus U) < \beta\min\{w(U,V_j\setminus A),w(A\setminus U,V_j\setminus A)\}$.
For each $P\in\mathcal P$, define $A_P=\bigcup_{A'\in P}A'$ to avoid clutter. If $S^*$ crosses $A_P$, then the set $U_P=S^*\cap A_P$ is not $(1-\epsilon)$-boundary-sparse in $A_P$, so either $w(U_P,A_P\setminus U_P)\ge(1-\epsilon)w(U_P,V_j\setminus A_P)$ or $w(U_P,A_P\setminus U_P)\ge(1-\epsilon)w(A_P\setminus U,V_j\setminus A_P)$. If $w(U_P,A_P\setminus U_P)\ge(1-\epsilon)w(U_P,V_j\setminus A_P)$, then update $S^*\gets S^*\triangle U_P$, i.e., we remove the vertices of $U_P$ from $S^*$. 
If $w(U_P,A_P\setminus U_P)\ge(1-\epsilon)w(A_P\setminus U_P,V_j\setminus A_P)$, then update $S^*\gets S^*\triangle(A_P\setminus U_P)$,  i.e., we add the vertices of $A_P \setminus U_P$ to $S^*$. 

By construction, we only add or remove vertices in $A$ of total volume at most $\textbf{\textup{vol}}_{G_j}^W(A)\le O(\frac{s_0}{\epsilon\tilde\lambda})\cdot s_0$.

\begin{claim}\label{clm:S-prime-size}
$\partial_{G_j}S'\le\frac1{1-\epsilon}\partial_{G_j}S^*$.
\end{claim}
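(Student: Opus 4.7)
The plan is to write $S'$ as a symmetric difference $S'=S^*\triangle B$, where $B=\bigcup_P\Delta_P$ and $\Delta_P\subseteq A_P$ is the modification prescribed by the algorithm for each part $P$ that $S^*$ crosses (so $\Delta_P=U_P$ in Case 1 and $\Delta_P=A_P\setminus U_P$ in Case 2). The essential structural point, which I will use repeatedly, is that the $A_P$'s across all parts and all small clusters are pairwise disjoint, hence the $\Delta_P$'s are pairwise disjoint. Edges with both endpoints in the same $A_P$ therefore have either both endpoints in $B$ or at most one; edges between different $A_P$'s cannot have their boundary membership relative to $B$ affected twice.

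First I would classify edges by how their $\partial$-membership changes. For an edge $(u,v)$, one checks that $(u,v)\in\partial S'$ if and only if $\mathbbm{1}[(u,v)\in\partial S^*]$ XOR (exactly one of $u,v$ lies in $B$). Define
\[ F^+=\{(u,v)\in\partial S^*:\text{exactly one of }u,v\in B\},\qquad F^-=\{(u,v)\notin\partial S^*:\text{exactly one of }u,v\in B\}. \]
Then $\partial_{G_j}S'=\partial_{G_j}S^*-w(F^+)+w(F^-)$, so the whole claim reduces to upper bounding $w(F^-)-w(F^+)$ by $\frac{\epsilon}{1-\epsilon}\partial_{G_j}S^*$.

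Next I would produce the two matching bounds. For a lower bound on $w(F^+)$, consider any part $P$ with $\Delta_P\ne\emptyset$: every edge inside $A_P$ between $U_P$ and $A_P\setminus U_P$ lies in $\partial_{G_j}S^*$ (since $U_P=S^*\cap A_P$), and exactly one of its endpoints lies in $\Delta_P\subseteq B$ (the disjointness of the $A_P$'s rules out the other endpoint being in any other $\Delta_{P'}$). These internal boundary edges are disjoint across parts, giving
\[ w(F^+)\ \ge\ \sum_P w(U_P,A_P\setminus U_P). \]
For an upper bound on $w(F^-)$, any edge counted there has exactly one endpoint in $\Delta_P$ for some $P$, and its other endpoint must lie outside $A_P$ (otherwise, as just argued, it would already be in $\partial S^*$). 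Hence
\[ w(F^-)\ \le\ \sum_P w(\Delta_P,V_j\setminus A_P). \]

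The final step is to invoke the non-$(1-\epsilon)$-boundary-sparsity of $U_P$ in $A_P$ to tie these two quantities together. In Case 1, $\Delta_P=U_P$, and the defining inequality of that case gives $w(\Delta_P,V_j\setminus A_P)=w(U_P,V_j\setminus A_P)\le\frac{1}{1-\epsilon}w(U_P,A_P\setminus U_P)$; in Case 2 the symmetric inequality applies to $\Delta_P=A_P\setminus U_P$. Substituting and summing yields
\[ w(F^-)-w(F^+)\ \le\ \sum_P\Big(\tfrac{1}{1-\epsilon}-1\Big)w(U_P,A_P\setminus U_P)\ =\ \tfrac{\epsilon}{1-\epsilon}\sum_P w(U_P,A_P\setminus U_P)\ \le\ \tfrac{\epsilon}{1-\epsilon}\,\partial_{G_j}S^*, \]
where the last step uses that the edges $w(U_P,A_P\setminus U_P)$ are disjoint subsets of $\partial_{G_j}S^*$. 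Rearranging gives exactly $\partial_{G_j}S'\le\frac{1}{1-\epsilon}\partial_{G_j}S^*$. The only slightly subtle point — and the step I would double-check most carefully — is the disjointness argument that guarantees no edge's boundary-status gets flipped twice, since it is what allows the parts $P$ to be analyzed independently despite the updates being applied sequentially in the algorithm description.
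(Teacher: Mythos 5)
Your proof is correct and follows essentially the same route as the paper's: both bound the change in $\partial S^*$ by (edges gained, a subset of $E(U_P,V_j\setminus A_P)$ or its symmetric counterpart) minus (edges lost, exactly $E(U_P,A_P\setminus U_P)$), apply the non-$(1-\epsilon)$-boundary-sparsity inequality per part, and then sum using the disjointness of the sets $E(U_P,A_P\setminus U_P)$ inside $\partial_{G_j}S^*$. Your global edge-classification via $S'=S^*\triangle B$ and the sets $F^{\pm}$ is just a more formal packaging of the paper's per-part net-increase accounting, and your disjointness check correctly justifies why the sequential updates can be analyzed independently.
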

\begin{proof}
Consider a cluster $A$ and its partition $\mathcal P$ guaranteed by \Cref{lem:final-cuts}. 
If $w(U_P,A_P\setminus U_P)\ge(1-\epsilon)w(U_P,V_j\setminus A_P)$, then the update $S^*\gets S^*\triangle U_P$ results in a subset of edges of $E_{G_j}(U_P,V_j\setminus A)$ being added to $\partial_{G_j}S^*$, and within $A_P$, all edges of $E_{G_j}(U_P,A_P\setminus U_P)$ are no longer in $\partial_{G_j}S^*$. So the net increase of $\partial_{G_j}S^*$ is at most $w(U_P,V_j\setminus A_P)-w(U_P,A_P\setminus U_P)\le\frac1{1-\epsilon}w(U_P,A_P\setminus U_P)-w(U_P,A_P\setminus U_P)\le(\frac1{1-\epsilon}-1)w(U_P,A_P\setminus U_P)$. 
The argument for the update
$S^*\gets S^*\triangle (A_p \setminus U_P)$ is symmetric and has the same upper bound of $(\frac1{1-\epsilon}-1)w(U_P,A_P\setminus U_P)$ for the net increase of $\partial_{G_j}S^*$.

Summed over all clusters $A$ and parts $P\in\mathcal P$, the total net increase of $\partial_{G_j}S^*$ is at most $(\frac1{1-\epsilon}-1)\sum_{A,P}w(U_P,A_P\setminus U_P)$, which is at most $(\frac1{1-\epsilon}-1)\partial_{G_j}S^*$ since the edges in $E_{G_j}(U_P,A_P\setminus U_P)$ are disjoint over all $A,P$. It follows that $\partial_{G_j}S'\le\frac1{1-\epsilon}\partial_{G_j}S^*$.
\end{proof}

\begin{claim}\label{clm:S-prime-number}
$\textbf{\textup{vol}}_{G_j}^W(S'\triangle S^*)\le O(s_0^2/(\epsilon \lambda))$.
\end{claim}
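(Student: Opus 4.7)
The plan is to mirror the two-step argument used for Claim~\ref{clm:in-cluster-at-least} and the subsequent volume bound: first bound the number of (cluster, part) pairs $(A,P)$ at which the Step~\ref{item:cluster} uncrossing actually alters $S^*$, and then bound the volume contributed per such pair by the size guarantee for small clusters from Lemma~\ref{lem:final-cuts}(\ref{item:final-cuts-1}).

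For the counting step, fix a cluster $A$ processed in Step~\ref{item:cluster} and a part $P\in\mathcal P$ for which $U_P=S^*\cap A_P$ is non-$(1-\epsilon)$-boundary-sparse in $A_P$, so that $w(U_P,A_P\setminus U_P)\ge(1-\epsilon)w(U_P,V_j\setminus A_P)$ or $w(U_P,A_P\setminus U_P)\ge(1-\epsilon)w(A_P\setminus U_P,V_j\setminus A_P)$. In the first case, I use $\partial_{G_j}(A_P\cap S^*)=w(U_P,A_P\setminus U_P)+w(U_P,V_j\setminus A_P)\ge\lambda$ together with the non-sparsity inequality to conclude $w(U_P,A_P\setminus U_P)\ge\lambda/3$. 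In the second case I apply the same argument to $\partial_{G_j}(A_P\setminus S^*)\ge\lambda$, again obtaining $w(U_P,A_P\setminus U_P)\ge\lambda/3$.

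Next I observe that the sets $A_P$ are pairwise disjoint across different parts (since $\mathcal P$ partitions $\{A_1,\dots,A_k\}$) and across different clusters (since Step~\ref{item:s0-strong} produces vertex-disjoint clusters). Hence the edge sets $E_{G_j}(U_P,A_P\setminus U_P)$ are disjoint subsets of $\bigsqcup_A\partial_{G_j[A]}(S^*\cap A)\subseteq\partial_{G_j}S^*$. Summing the per-pair lower bound $w(U_P,A_P\setminus U_P)\ge\lambda/3$ against the upper bound $\partial_{G_j}S^*\le(1+\epsilon)\partial_{G_j}S\le(1+\epsilon)\cdot1.01\lambda$ from Claim~\ref{clm:S-star-size} shows that at most $O(1)$ pairs $(A,P)$ ever trigger an uncrossing.

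For each such pair, the update $S^*\gets S^*\triangle U_P$ or $S^*\gets S^*\triangle(A_P\setminus U_P)$ changes only vertices inside $A_P\subseteq A$, so the volume changed is at most $\textbf{\textup{vol}}^W_{G_j}(A_P)\le\textbf{\textup{vol}}^W_{G_j}(A)$. Because $A$ is small, Lemma~\ref{lem:final-cuts}(\ref{item:final-cuts-1}) gives $\textbf{\textup{vol}}^W_{G_j}(A)\le\frac{10s_0}{\epsilon\tilde\lambda}\cdot s_0=O(s_0^2/(\epsilon\tilde\lambda))$, and since $\tilde\lambda\ge\lambda$ this is $O(s_0^2/(\epsilon\lambda))$. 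Multiplying by the $O(1)$ bound on the number of uncrossings yields the desired $\textbf{\textup{vol}}^W_{G_j}(S'\triangle S^*)\le O(s_0^2/(\epsilon\lambda))$.

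The only mild subtlety is the case analysis in the non-boundary-sparse inequality: one must pick the correct side ($A_P\cap S^*$ versus $A_P\setminus S^*$) on which to apply the global lower bound $\partial_{G_j}(\cdot)\ge\lambda$ so that the term $w(U_P,A_P\setminus U_P)$ can be isolated; everything else reduces to the same bookkeeping as in Claim~\ref{clm:in-cluster-at-least}.
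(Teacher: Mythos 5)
Your proof is correct and follows essentially the same route as the paper's: lower-bound $w(U_P,A_P\setminus U_P)$ by a constant fraction of $\lambda$ using the non-$(1-\epsilon)$-boundary-sparsity inequality together with $\partial_{G_j}U_P\ge\lambda$ (or $\partial_{G_j}(A_P\setminus U_P)\ge\lambda$), use disjointness of the edge sets $E_{G_j}(U_P,A_P\setminus U_P)$ plus Claim~\ref{clm:S-star-size} to get $O(1)$ altered pairs, and charge each pair at most $\textbf{\textup{vol}}^W_{G_j}(A)\le O(s_0^2/(\epsilon\tilde\lambda))=O(s_0^2/(\epsilon\lambda))$ via Lemma~\ref{lem:final-cuts}(\ref{item:final-cuts-1}). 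The only difference is cosmetic: you make the second (symmetric) case explicit and use the cruder constant $\lambda/3$ in place of the paper's $\frac{1-\epsilon}{2-\epsilon}\lambda$, neither of which affects the $O(1)$ count.
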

\begin{proof}

Suppose that $S^*$ crosses some $A_P$ for a small cluster $A$ and a partition $\mathcal P$ returned by
Step \ref{item:cluster}. Recall that $U_P = S^* \cap A_P$ and that Lemma \ref{lem:small-clusters} guarantees that either $w(U_P,A_P\setminus U_P)\ge(1-\epsilon)w(U_P,V_j\setminus A_P)$ or $w(U_P,A_P\setminus U_P)\ge(1-\epsilon)w(A_P\setminus U,V_j\setminus A_P)$. In the first case, consider the value of the cut $\partial_{G_j}U_P\ge\lambda_{G_j}$. We can upper bound $\partial_{G_j}U_P$ as
\[ \partial_{G_j}U_P=w(U_P,V_j\setminus A_P)+w(U_P,A_P\setminus U_P)\le \frac1{1-\epsilon}w(U_P,A_P\setminus U_P)+w(U_P,A_P\setminus U_P)=\frac{2-\epsilon}{1-\epsilon}w(U_P,A_P\setminus U_P). \] It follows that $w(U_P,A_P\setminus U_P)\ge\frac{1-\epsilon}{2-\epsilon}\partial_{G_j}U_P\ge\frac{1-\epsilon}{2-\epsilon}\lambda_{G_j}\ge\frac{1-\epsilon}{2-\epsilon}\lambda$.

The edges $E(U_P,A_P\setminus U_P)$ are disjoint among pairs $(A,P)$, so there are at most $\partial_{G_j}S^*/(\frac{1-\epsilon}{2-\epsilon}\lambda)\le(1+\epsilon)\partial_{G_j}S/(\frac{1-\epsilon}{2-\epsilon}\lambda)=O(1)$ many pairs $(A,P)$ for which any action is needed. For each, we add or remove vertices of total volume at most $\textbf{\textup{vol}}_{G_j}^W(A)= O(s_0^2/(\epsilon\tilde\lambda))
O(s_0^2/(\epsilon\lambda))$, and the claim follows.
\end{proof}
We now verify the requirements of \Cref{lem:structure}:
 \begin{enumerate}
 \item By construction, $S'$ does not cross any cluster in $\mathcal C_j$.
 \item $\partial_{G_j}S'\le\frac1{1-\epsilon}\partial_{G_j}S^*\le\frac1{1-\epsilon}\cdot(1+\epsilon)\partial_{G_j}S\le(1+3\epsilon)\partial_{G_j}S$.
 \item $\textbf{\textup{vol}}^W_{G_j}(S\triangle S')\le\textbf{\textup{vol}}^W_{G_j}(S\triangle S^*)+\textbf{\textup{vol}}^W_{G_j}(S^*\triangle S')\le O(s_0) + O(s_0^2/(\epsilon\lambda))\le O(s_0^2/(\epsilon\lambda))$. 
 \end{enumerate}
This concludes the proof of \Cref{lem:structure}
\section{Unit-Flow Algorithm}
\label{sec:unit-flow}
We will consider flow problems extensively in our graph decomposition procedure, and we start with some notations. Formally, a {\em flow problem} is defined with a {\em source function}, $\Delta:V\rightarrow \mathbb{R}_{\geq 0}$, a {\em sink function}, $T:V\rightarrow \mathbb{R}_{\geq 0}$, and edge capacities $c(\cdot)$. We say that $v$ is {\em a sink of capacity} $x$ if $T(v)=x$. 
 To avoid confusion with the way flow is used, we use {\em supply} to refer to the substance being routed in flow problems.

For the sake of efficiency, we will not typically obtain a full solution to a flow problem. We will compute a \emph{pre-flow}, which is a function $f:V\times V \rightarrow \mathbb{R}$, where $f(u,v) = -f(v,u)$. A pre-flow $f$ is {\em source-feasible} with respect to source function $\Delta$ if $\forall v:\sum_u f(v,u) \leq \Delta(v)$. A pre-flow $f$ is {\em capacity-feasible} with respect to $c(\cdot)$ if $|f(u,v)| \leq c(e)$ for $e=\{u,v\} \in E$ and $f(u,v) = 0$ otherwise. We say that $f$ is a {\em  feasible pre-flow} for flow problem $\Pi$, or simply a {\em pre-flow} for $\Pi$, if $f$ is both source-feasible and capacity-feasible with respect to $\Pi$. 

For a pre-flow $f$ and a source function $\Delta(\cdot)$, we extend the notation to denote $f(v)\myeq\Delta(v)+\sum_u f(u,v)$ as {\em the amount of supply ending at $v$ after $f$}. Note that $f(v)$ is non-negative for all $v$ if $f$ is source-feasible. When we use a pre-flow as a function on vertices, we refer to the function $f(\cdot)$, and it will be clear from the context what $\Delta(\cdot)$ we are using. If in addition, $\forall v:f(v)\leq T(v)$, the pre-flow $f$ will be a {\em feasible flow (solution)} to the flow problem $\Pi$. 

We denote $\ex(v)\myeq\max(f(v)-T(v),0)$ as the excess supply at $v$, and we call the part of the supply below $T(v)$ as the supply routed to the sink at $v$, or {\em absorbed} by $v$. We call the sum of all the supply absorbed by vertices, $\sum_v \min(f(v),T(v))$, the total supply routed to sinks. Finally, given a source function $\Delta(\cdot)$, we define $|\Delta(\cdot)|\myeq\sum_v\Delta(v)$ as the total amount of supply in the flow problem. Note the total amount of supply is preserved by any pre-flow routing, so $\sum_v f(v) =|\Delta(\cdot)|$ for any source-feasible pre-flow $f$. 

In this section, we consider a particular family of flow problems that is useful in our graph decomposition, and discuss the algorithm we use to compute pre-flows. The flow problem we consider comes with a source function $\Delta(\cdot)$, a weighed undirected graph $G$ with edge weights $w_e$'s, an edge capacity parameter $U$ such that $c_e=w_e\cdot U$ for all edges, and $T(v)=d^W(v)$ for all vertices. As a side note, for the actual flow problems we solve in our graph decomposition procedure, all the quantities (i.e. source supply, edge and sink capacities) are in the same unit whose value can vary across different flow problems, but the flow algorithm should be completely oblivious to the value of the unit. Nonetheless, 
in the discuss of our flow computation we may refer to supply of amount $x$ as $x$ units of supply.  

To solve our flow problems, we use the {\em Unit-Flow} algorithm, which was designed in~\cite{DBLP:journals/siamcomp/HenzingerRW20} for unweighted graphs, and we adapt it to weighted graphs. The algorithm follows the classic Push-belabel framework by Goldberg and Tarjan~\cite{GT88}, which at a high level works as follows. The algorithm maintains an integer label (or height) $l(v)$ for each node $v$ starting at $0$. If there is excess on any node, the algorithm picks one such node and tries to send its excess to any neighbour of smaller label while obeying the edge's residual capacity (a.k.a  {\em Push}), and if there is no possible Push for a node with excess, the algorithm raises the label of the node by $1$ (a.k.a {\em Relabel}). The standard Push-relabel algorithm for max flow keeps carrying out Push or Relabel until it finds a flow (i.e. no excess supply on any node). 

The Unit-Flow algorithm works exactly the same except one difference. In particular, Unit-Flow takes as input a cap $h$ on the height, and when Push-relabel needs to raise a node's label from $h$ to $h+1$, Unit-Flow will instead leave the node at height $h$ without working further on the node's excess supply. Note this is also the only case that Unit-Flow leaves excess on nodes, so when the algorithm stops, if there is any node with excess, the node must have label $h$. Since the algorithm description and the running time analysis follow quite mechanically from the Push-relabel algorithm in~\cite{GT88} using the dynamic-tree data structuer~\cite{ST81}, in this section we only summarize the result of Unit-Flow that we will use in our later graph decomposition. The proof of the main theorem also follows largely from the original analysis of unweighted Unit-Flow in~\cite{DBLP:journals/siamcomp/HenzingerRW20}, so we defer most of the implementation details and analysis of our weighted Unit-Flow to the Appendix~\ref{app:unit-flow}.

Upon termination of the algorithm, it returns
$(1)$ a pre-flow $f$, $(2)$ labels $l(\cdot)$ (i.e. height) on vertices, and (3) a possibly empty cut $A$. More specifically, {\em Unit-Flow} will either successfully route a large amount of supply to sinks and return an empty cut, or it returns a low conductance cut using the labels. Formally, we have the following theorem.
\begin{restatable}{theorem}{unitFlow}
\label{thm:unit-flow}
Given $G,\Delta,h,U,\beta\geq 2$ such that $\Delta(v)\leq \beta d^W(v)$ for all $v$, {\em Unit-Flow} terminates with a pre-flow $f$, where we must have one of the following three cases
\begin{enumerate}[(1)]
\item $f$ is a feasible flow, i.e. $\forall v:\ex(v)=0$. All units of supply are absorbed by sinks.\label{item:unit-flow-1}
\item $f$ is not a feasible flow, but $\forall v:f(v)\geq d^W(v)$, i.e., at least $\vol^W(G)$ units of supply are absorbed by sinks.\label{item:unit-flow-2}
\item If $f$ satisfies neither of the two cases above, denote $T$ as the set of nodes whose labels are $h$ (i.e. largest possible label), and assume $h\geq \ln \vol^W(G)$, then we can find a cut $(A,\bar{A})$ such that \label{item:unit-flow-3}
\begin{enumerate}
\item $T\subseteq A$, and $(\beta + U)d^W(v)\geq f(v)\geq d^W(v)$ for all $v\in T$, and $f(v)= d^W(v)$ for all $v\in A\setminus T$, and $f(v)\leq d^W(v)$ for all $v\notin A$.
\item The cut's conductance $\Phi(A)=\frac{W(A,V\setminus A)}{\min\left(\vol^W(A),\vol^W(G)-\vol^W(A)\right)}\leq \frac{20\ln \vol^W(G)}{h}+\frac{\beta}{U}$.
\item Let $\ex(T)$ be the total amount of excess supply (which can only be on nodes in $T$), i.e. $\ex(T)=\sum_v \max\left(0,f(v)-d^W(v)\right)$, we have $\vol^W(A)\geq \frac{\ex(T)}{(\beta-1)+10U\ln \left(\vol^W(G)\right)/h}$
\end{enumerate}
\end{enumerate}
Let $C=\left\{v:f(v)\geq d^W(v)\right\}$ be the set of saturated nodes when Unit-Flow finishes, the running time for {\em Unit-Flow} is $O(\vol(C)\cdot h\cdot \log \vol(C))$. Moreover, the above guarantees hold in the case when $\Delta$ is incremental, i.e., after Unit-Flow finishes working with the supply from the current $\Delta(\cdot)$, new source supply can be added on nodes to increase $\Delta(\cdot)$, and the theorem holds when Unit-Flow finishes with any interim $\Delta(\cdot)$.
\end{restatable}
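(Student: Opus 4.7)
The plan is to adapt the unweighted Unit-Flow analysis of \cite{DBLP:journals/siamcomp/HenzingerRW20} to the weighted setting by a rescaling that replaces unit edge capacities with $U \cdot w_e$ and unit sink capacities with $d^W(v)$; the algorithmic mechanics of Goldberg--Tarjan Push-Relabel with a height cap are otherwise unchanged. The algorithm maintains labels $l(v) \in \{0,\ldots,h\}$ (initialized to $0$) and a capacity-feasible pre-flow $f$ (initialized to $\mathbf{0}$), repeatedly applying \emph{Push} along admissible edges (those with $l(u) = l(v)-1$ and positive residual capacity from $u$ to $v$) and \emph{Relabel} at vertices with $f(v) > d^W(v)$ and no admissible edge, capping at height $h$ so that when a node at height $h$ would be relabeled, the algorithm instead leaves its excess in place and halts work on it.

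I would first verify that the standard Push-Relabel invariants---capacity-feasibility, source-feasibility, and the label-gap invariant ($l(u) \leq l(v)+1$ whenever the residual capacity from $u$ to $v$ is positive)---all carry over unchanged to the weighted setting because they are insensitive to the scale of capacities. The running time analysis then follows the Goldberg--Tarjan argument using Sleator--Tarjan dynamic trees: the total number of relabels is $O(h)$ per saturated vertex, saturating pushes amortize to $O(1)$ per edge-touch, and nonsaturating pushes amortize to $O(\log |C|)$ per edge-touch, yielding the claimed bound of $O(\vol(C) \cdot h \cdot \log \vol(C))$. The key observation making this hold is that only vertices in $C = \{v : f(v) \geq d^W(v)\}$ ever accrue excess and are ever processed, since any vertex with $f(v) \leq d^W(v)$ absorbs its entire inflow.

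For the case analysis at termination: cases~(\ref{item:unit-flow-1}) and~(\ref{item:unit-flow-2}) are immediate from the halting condition. In case~(\ref{item:unit-flow-3}), I would extract the cut as a level set $A = S_{i^*} = \{v : l(v) \geq i^*\}$ for a carefully chosen $i^* \in \{1,\ldots,h\}$. The choice uses a standard geometric averaging argument over the telescoping sequence $\vol^W(S_0) \geq \vol^W(S_1) \geq \cdots \geq \vol^W(S_h) \geq 1$: since $h \geq \ln \vol^W(G)$, there must exist an $i^*$ where $\vol^W(S_{i^*-1})/\vol^W(S_{i^*}) \leq 1 + O(\ln \vol^W(G)/h)$ and correspondingly the boundary weight $W(S_{i^*}, V \setminus S_{i^*})$ is a small fraction of $\vol^W(S_{i^*})$. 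Flow-conservation on $S_{i^*}$, combined with the fact that boundary edges carry flow at full capacity $U \cdot w_e$ (by the label-gap invariant), then relates the excess $\ex(T)$, the absorbed volume in $A$, and the supply bound $\Delta(A) \leq \beta \vol^W(A)$ to yield the conductance bound $\frac{20 \ln \vol^W(G)}{h} + \frac{\beta}{U}$ and the volume lower bound on $A$. The stated values of $f(v)$ on $A$ follow from the same invariants: $v \in A \setminus T$ was processed until its excess discharged, giving $f(v) = d^W(v)$ exactly, while $v \in T$ retains at most $\Delta(v) + U d^W(v) \leq (\beta + U) d^W(v)$.

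The main obstacle is the weighted adaptation of the cut-extraction step: in the unweighted case the averaging is over $|V|$ and each boundary edge contributes weight one, whereas here we must bound the total weight of $\partial A$ and the averaging must be over $\vol^W$, which is why the height cap requirement reads $h \geq \ln \vol^W(G)$ rather than $\ln |V|$. A secondary subtlety is that the running-time logarithm is $\log \vol(C)$ (unweighted) rather than $\log \vol^W(C)$, since the dynamic-tree amortization counts individual edge operations rather than capacity units; confirming this requires revisiting the dynamic-tree accounting to ensure that no step's cost scales with edge weight. Both points are handled by mechanical but careful substitution in the HRW20 argument, which is why the detailed analysis is deferred to the appendix.
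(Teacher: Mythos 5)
Your plan follows the paper's proof essentially step for step: the same push--relabel with a height cap, the same dynamic-tree accounting for the $O(\vol(C)\cdot h\cdot\log\vol(C))$ running time (with the correct observation that only saturated vertices are ever touched, and that the logarithm is over the unweighted volume because tree operations are counted per edge, not per capacity unit), the same trivial disposal of cases~(1) and~(2), and the same level-set sweep $S_i=\{v: l(v)\ge i\}$ with flow conservation across $\partial S_{i^*}$ for case~(3). The decomposition of boundary edges into eligible arcs (label gap exactly one) and saturated arcs, and the use of $\Delta(A)\le\beta\vol^W(A)$ plus the inflow bound $z_1(i^*)\cdot U$ to derive both $(3)(b)$ and $(3)(c)$, are exactly what the paper does.

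One step as you describe it would not quite go through. Your averaging argument over the full telescoping sequence $\vol^W(S_0)\ge\cdots\ge\vol^W(S_h)\ge 1$ produces an index $i^*$ where the eligible boundary weight is at most $O(\ln\vol^W(G)/h)\cdot\vol^W(S_{i^*})$ --- a bound relative to $\vol^W(S_{i^*})$ itself, not relative to $\min\{\vol^W(S_{i^*}),\vol^W(G)-\vol^W(S_{i^*})\}$ as the conductance definition requires. When $\vol^W(S_{i^*})>\vol^W(G)/2$ this is too weak. The paper's proof handles this with a two-sided region-growing argument anchored at level $\lfloor h/2\rfloor$: if $\vol^W(S_{\lfloor h/2\rfloor})\le\vol^W(G)/2$ it sweeps downward from $i=h$ (so every candidate $S_i$ is the smaller side), and otherwise it sweeps upward from $i=1$ measuring volume growth of the complements. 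You should incorporate this case split. A second, smaller omission: you do not address the final claim that the guarantees hold for incremental $\Delta$; the justification (absorbed supply stays absorbed, and excess stranded at height $h$ remains excess, so the termination state after any interim $\Delta$ is a valid termination state) is needed because the decomposition procedure in Section~\ref{sec:s0} relies on it.
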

Note that the running time only depends on the \emph{unweighted} volume $\vol(C)$ and \emph{not} on the weighted volume $\vol^W(C)$. 





\section{Guaranteeing $s_0$-strength}\label{sec:s0}
Our procedure will make progress on the graph by breaking it into $s_0$-strong components, i.e., components such that if we consider any nearly min cut in the graph, any component will be mostly on one side of the cut up to some small error. 
We prove in this section the following result. 
\StrongPartition
Throughout this section, when we talk about $s$-strong components for any $s\geq s_0$, the definition of strength is with respect to the parameter $\del$, so we only worry about any cut $S$ with $w(S,G\setminus S)\leq \cutsize$. We denote $\tau=\Theta(\sqrt{s_0/\del})=\Theta(\ln^{c/2}|V|)$ as the parameter, such that in the section we always break components along cuts of conductance at most $O(1/\tau)$. Moreover, we need the following  definitions.
\begin{definition}
Given a  component $C$ in a graph $G$ we say that a cut $(A, C \setminus A)$ is \emph{relatively balanced} if $\min\left(\vol^W(A),\vol^W(C\setminus A)\right) \ge \Omega\left(\frac{\vol^W(C)}{\tau^2 \ln m_G}\right)$ where all volumes refer to the graph $G\{C\}$ (which are the same as in $G$ due to the added self-loops).
\end{definition}


We start with some input graph $G$ with $m_G$ edges and minimum weighted node degree at least $\del$. By definition $G$ is $\vol^W(G)/2$-strong. For a component $C$ that is already certified to be $s$-strong for some $s\geq s_0$, our algorithm will try to certify $C$ is $s/2$-strong, but during the process it may find a low conductance cut in $C$ and break up $C$. When we break a $s$-strong component $C$ into $(C',C\setminus C')$, by definition we already certify $C'$ is $\min(s,\vol^W(C')/2)$-strong (and similarly for $C\setminus C'$). The running time to certify that $C$ is
$s/2$-strong or to find a low conductance cut will be $\tilde{O}(m_C)$, where $m_C$ is the number of edges in $C$. We will iteratively work on the graph (by cutting or certifying) and stop when all connected components are certified to be $s_0$-strong where for concreteness we use the explicit value of $s_0 =
10^{11} \del\tau^2$. 

Moreover, for the low conductance cut we use to divide a component $C$, either the cut is relatively balanced (i.e., the weighted volume of both $C'$ and $C \setminus C'$ is at most $(1 - \frac{1}{\tau^2\ln m_G}) \vol^W(C)$), or we can certify that the larger side of the cut is $s/2$-strong. This guarantees that the total running time to decompose the graph $G$ into $s_0$-strong components is near-linear in the number of edges of the graph (i.e. $m_G$).


For now, we will work with the graph $G\{C\}$ of one connected component $C$ that is certified to be $s$-strong for some $s>s_0$. We use $m_C$, $W_C$ to denote the number of edges and weighted volume of $G\{C\}$ respectively. 



\subsection{Flow Problems and Excess Scaling}
We will construct a sequence of flow problems, and have a total of (up to) $L=O(\ln m_C)$ stages for each flow problem. The basic structure is that each flow problem will start at stage $0$ and be associated with a source node at stage $0$. 

Technically for each flow problem, we need its associated source node to reserve a distinct $\del$ adjacent edge weight for that flow problem, and thus we can (and may have to) use the same node $v$ as the source node of multiple flow problems if $d^W(v)\gg \del$ to accommodate a large enough number of flow problems in total. However, for the sake of notation simplicity, it is easier if we can associate each flow problem with a distinct source node (and thus refer to it by its source node), and to make sure the graph can still support enough flow problems, we assume each node $v$ has $\del\leq d^W(v)\leq 2\del$. The lower bound automatically holds, and for any node $v$ with $d^W(v)>2\del$, we can view it as several virtual nodes such that they (virtually) partition the total weight of $v$'s adjacent edges in a way that the weighted degree of each is between $\del$ and $2\del$. This guarantees that we have at least $\frac{W_C}{2\del}$ distinct (virtual) nodes to use as source nodes. Consequently, we WLOG make the following assumption.
\begin{assumption}
\label{assumption:enough-source}
We can pick up to $\frac{W_C}{2\del}$ distinct nodes in the graph as source nodes of our flow problems in stage $0$, and all source nodes have weighted degree at least $\del$.    
\end{assumption}


\paragraph{Stages. }For the flow problem associated with a node $v$, we will inject a certain amount of source supply at $v$ in stage $0$ and try to spread it to sinks iteratively in stages. In each stage we use Unit-Flow, which may terminate with some excess supply. If the total excess supply is at most a $O(1/\ln m_C)$ fraction of the total amount of source supply, we will call the flow problem \emph{successful for that stage} (otherwise \emph{unsuccessful for that stage}) and advance the flow problem into the next stage. We keep working on a flow problem until it is unsuccessful at some stage or we finish successfully through all $L$ stages. We then pick a new source node and work on the flow problem starting with the new node at stage $0$, and keep doing this until enough flow problems are successful at any particular stage. The flow problems happen sequentially, i.e. we work on a flow problem through stages $0,1,\ldots$ until it is unsuccessful at some stage, and then start a new flow problem. Moreover, every flow problem is associated with a distinct (stage-$0$) source node.
\paragraph{Batches. }We will group the sequence of flow problems into batches, where we keep adding flow problems into the current batch until (a) we find enough flow problems successfully finishing at stage L in  the current batch or (b) the total number of flow problems in the current batch reaches a limit. In case (a) we start a new batch, and in Case (b) we will show that we can find a relatively balanced low conductance cut. 

\paragraph{Capacity sharing. }We will run multiple flow problems of the same stage on the same copy of the graph, which means the pre-flows share the edge and sink capacities in the same graph. More specifically, 
\begin{itemize}
\item All the flow problems (over all batches) in stage $0$ will share a single copy of the graph (i.e., the edge and sink capacities). That is, the edge capacities and sink capacities are constraints over the total flow routed by \emph{all} the flow problems in stage $0$.
\item For any later stage $i\in\{1,\ldots,L\}$, all the flow problems \emph{of the same batch in stage $i$} will run on a single copy of the graph, while flow problems of different batches or of different stages have separate copies of the graph 
(i.e.~, do not share their edge and sink capacities). 
\end{itemize}

\paragraph{Flow problems.}
We now fill in more details. For a flow problem associated with source node $v$ in batch $i$, we denote its source supply function at stage $j$ as $\Delta_{i,j}^v(\cdot)$. At stage $0$, we have \tempfix{$\Delta_{i,0}^v(v')=10^7\del\ln m$}{$\Delta_{i,0}^v(v')=10^7\del\tau$}
 if $v'=v$ and $0$ otherwise (i.e. start in stage $0$ with source supply only at node $v$). As in the excess scaling algorithm in~\cite{DBLP:journals/siamcomp/HenzingerRW20}, \emph{all the quantities for a flow problem (i.e. source supply, sink capacity, and edge capacity) in a stage are specified in some unit}. For stage $i$, the unit is
\[
\tempfix{\mu_i = \lceil \frac{2s}{2^i10^7\del\ln m}\rceil}{\mu_i = \lceil \frac{2s}{2^i10^7\del\tau}\rceil},
\]
and in particular the total number of stages $L$ equals the smallest value of $i$ by which the unit becomes at most $1$. 
Since $s$ is at most $W_C/2$, we know $L\leq \ln W_C = O(\ln m_C)$ if edge weights are bounded by polynomials of $m$. 
For stage $j$, the edge capacity (i.e. total net supply routed over the edge by all flow problems sharing the same copy of the graph) of $e$ is $U_j\cdot w_e$ units, where $w_e$ is the edge weight of $e$. We use \tempfix{$U_0=10^8\tau\ln m$}{$U_0=10^8\tau^2$} and $U_j=100\tau$ for $j\geq 1$. The total sink capacity (shared across all flow problems using the same copy of the graph) of a node $v$ is always its weighted degree $d^W(v)$ units.


We consider all the flow problems sharing the same copy of the graph in aggregate as one flow problem, and denote the (aggregate) source supply function as $\Delta_{i,j}(\cdot)$ for the aggregated problem corresponding to batch $i$ and stage $j$; for stage $0$ since there is one copy of the graph over all batches, the function $\Delta_{i,0}(\cdot)$ is only for convenience in the analysis, and we denote the actual aggregate source supply function as $\Delta_{0}(\cdot)$. For $i,j$, the source function $\Delta_{i,j}(\cdot)$ is considered as incremental as we add more flow problems to the aggregate problem, i.e. we add $\Delta_{i,j}^v(\cdot)$ to $\Delta_{i,j}(\cdot)$ when a flow problem in batch $i$ associated with node $v$ reaches stage $j$,  and we solve each aggregate flow problem with one run of the Unit-Flow algorithm with incremental source supply. 
Because of how Unit-Flow works, if some supply introduced by $\Delta_{i,j}^v(\cdot)$ is absorbed in a sink during the execution of Unit-Flow, it will remain absorbed throughout the execution even with additional source supply introduced later. Similarly, if some supply is left as excess supply after Unit-Flow terminates with some interim $\Delta_{i,j}(\cdot)$, the excess must be at a node already at height $h$, and it will remain as excess even with additional source supply added later. In particular, this means whenever Unit-Flow finishes with the interim aggregate flow problem after we add some $\Delta_{i,j}^v(\cdot)$ to it, we can tell at that point whether the flow problem of $v$ is successful for stage $j$ or not, and we know which of the supply introduced in $\Delta_{i,j}^v(\cdot)$ will be absorbed or be excess (without even knowing the complete aggregate flow problem in future). 

The source supply function $\Delta_{i,j+1}^v(\cdot)$ is defined to be  double the amount of absorbed supply on each node, more specifically it is defined as follows:
Let $f_{i,j}^v(v')$ be the total amount of supply introduced by $\Delta_{i,j}^v(\cdot)$ that gets routed to any node $v'$ by Unit-Flow, and let $\ex_{i,j}^v(v')$ be the amount of excess (from that flow) routed to $v'$, then formally we set
\[
\Delta_{i,j+1}^v(v') = 2\left(f_{i,j}^v(v')-\ex_{i,j}^v(v')\right) \qquad \forall v'.
\]
Note that the excess also depends on the earlier flow problems using this copy of the graph since the sink capacities are shared. Recall that each flow problem starts with supply \tempfix{$10^7\del\ln m$}{$10^7\del\tau$} (in unit $\mu_0$) at stage $0$, and if all supply has been carried over into the next stage (i.e. no excess) since stage $0$, the amount of supply at stage $j$ 
 would increase to $2^j10^7\del\tempfix{\ln m}{\tau}$ (in unit $\mu_j$ since $\mu_0=2^j\mu_j$). We call a flow problem \emph{successful at stage $j$} if at the end of stage $j$, the amount of supply for the flow problem being absorbed by sinks, i.e. $\sum_{v'}f_{i,j}^v(v')$, is at least $(1-\frac{j+1}{10(L+1)})2^j10^7\del\tempfix{\ln m}{\tau}$. Equivalently, a flow problem reaching stage $j\geq 1$ loses at most a $\frac{j}{10(L+1)}=O(j/\ln m_C)$ fraction of the initial source supply as excess over all previous stages. The absorbed supply on nodes after removing excess after stage $j$ would serve as the source supply if we proceed the flow problem to the next stage $j+1$, and the amount would double (i.e. the factor of $2$ in the formula above for $\Delta_{i,j+1}^v(\cdot)$) as we switch to the unit $\mu_{j+1}=\mu_j/2$.
 
In our Unit Flow computation, we will use \tempfix{$h=\Theta(\tau\ln^2 m_G)$}{$h=\Theta(\tau^2\ln m_G)$} for stage $0$, and we put $10^7\del\tau$ units of source supply on the source node for every flow problem. As the source nodes are distinct for different flow problems and all nodes have weighted degree at least $\del$, we know the ratio $\Delta_0(v)/d^W(v)$  (i.e. $\beta$ in Theorem~\ref{thm:unit-flow}) is at most $O(\tau)$. For all stages $j\geq 1$, we will use $h=\Theta(\tau\ln m_G)$ and since we only carry absorbed supply into the next stage, we have $\beta\leq 2$ in later stages. Note the $2$ comes from the scaling of unit value from stage $j$ to $j+1$.
\paragraph{Termination of the Certification Procedure: Cut or Certification.} 
Whenever we finish the (incremental) Unit-Flow computation of a flow problem in a stage, we perform two checks as described below. If none of the checks passes, we proceed the flow problem to the next stage $j+1$ if it's successful for the current stage $j<L$, or start a new flow problem with a new sourve vertex at stage $0$ if the current flow problem is not successful or it finishes stage $j=L$.

\emph{Check $1$:} Recall that if a flow problem is successful after $L$ stages, it has lost at most a $\frac{L}{10(L+1)} \le \frac{1}{10}$ fraction of its initial source supply. Thus, it routes at least $0.9\cdot 10^7 \del \tau \cdot \mu_0 = 1.8 s$ source supply to sinks.
When we stop with a particular flow problem successfully at stage $L$,  we check if there are already $\frac{W_C}{10s}$
successful flow problems at stage $L$ in the current batch. 

If so, we call the current batch \emph{successful} and start a new batch of flow problems unless we already have \tempfix{$5000\ln m$}{$5000\tau$} successful batches. If we succeed in constructing \tempfix{$5000\ln m$}{$5000\tau$} successful batches, we can certify that $C$ (or a constant fraction of it) is $s'$-strong with $s'=s/2$ (see the proof in Section~\ref{sec:certification}). 


\emph{Check $2$:} 
If Check 1 fails, 
 we  check if the result of Unit-Flow for the (aggregate) flow problem of any stage $j\geq 0$ in the current batch  has more than $\frac{W_C}{10^6\tau(L+1)}$ excess supply in total. 
As soon as this happens, we call this batch \emph{unsuccessful} and we will show in Section~\ref{sec:cut} that we can find a \emph{relatively balanced} low conductance cut in component $C$, i.e., both sides have weighted volume $\Omega(\frac{W_c}{\tau^2\ln m_G})$. Thus, we stop the certification procedure,
divide $C$ along the cut, leave the cut edges as inter-cluster edges, and continue the certification procedure on each of the two components, which are both $s$-strong by definition since $C$ is $s$-strong.

We show that the second stopping condition guarantees that the flow problems are well-defined. We start with the following lemma bounding the maximum number of flow problems we may run in a batch and stage (whether the flow problem is successful or not in the stage).
\begin{lemma}
\label{lem:num_flows}
For any batch $i$ we will work on at most $\frac{W_C}{10^{11}\del\tau^2}$ flow problems in stage $0$, and at most $\frac{W_C}{2\cdot 2^j10^7\del\tau}$ flow problems in any stage $j\geq 1$. 
\end{lemma}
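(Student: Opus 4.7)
The plan is to apply a supply-conservation argument at each stage, leveraging three facts: (i) each flow problem in a given stage injects a tightly controlled amount of source supply, (ii) the total sink capacity of the shared graph copy equals $W_C$ measured in the stage's unit $\mu_j$, and (iii) the total excess accumulated in an active (i.e., not-yet-terminated) batch is capped by the Check 2 threshold of $W_C/(10^6\tau(L+1))$. By supply conservation for Unit-Flow, the total $\mu_j$-units of source supply injected equals the absorbed supply plus the excess, so dividing by the per-problem supply lower bound gives the count of flow problems.

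First I would handle the case $j\geq 1$. For batch $i$ at stage $j\geq 1$, the aggregate flow problem runs on its own copy of the graph with total sink capacity $W_C$ (in $\mu_j$-units). Since $\Delta_{i,j}^v$ is defined by doubling the absorbed supply from stage $j-1$, each flow problem that enters stage $j$ injects at most $2^j\cdot 10^7\del\tau$ $\mu_j$-units of source supply (the physical amount $10^7\del\tau\cdot\mu_0 = 2^j\cdot 10^7\del\tau\cdot\mu_j$ is an upper bound, as absorbed supply never exceeds what was injected). Conservation gives
\[ n_{i,j}\cdot 2^j\cdot 10^7\del\tau \;\leq\; W_C + \frac{W_C}{10^6\tau(L+1)}, \]
and rearranging absorbs the tiny additive slack to yield $n_{i,j}\leq W_C/(2\cdot 2^j\cdot 10^7\del\tau)$ as claimed.

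For stage $0$ the argument must be redistributed over batches, since all stage-$0$ flow problems across all batches share a single graph copy with total sink capacity $W_C$. By the Check 1 termination rule, the certification procedure creates at most $5000\tau$ batches. Each stage-$0$ flow problem injects exactly $10^7\del\tau$ $\mu_0$-units of supply at a distinct source node. Applying conservation to the global aggregate $\Delta_0$ (and using that the global excess is at most $5000\tau\cdot W_C/(10^6\tau(L+1))=O(W_C/L)$, a lower-order term), we obtain
\[ \Big(\textstyle\sum_i n_{i,0}\Big)\cdot 10^7\del\tau \;\leq\; W_C\cdot \big(1+o(1)\big). \]
Since there are at most $5000\tau$ batches and per-batch supply injection is identical in structure, the worst-case per-batch count is $n_{i,0}\leq W_C/(10^{11}\del\tau^2)$ after absorbing the constant $5000$ into the $10^{11}$.

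The main obstacle is the careful bookkeeping between the three relevant units (physical supply, $\mu_0$-units, and $\mu_j$-units) and, for stage $0$, disentangling the per-batch accounting from the globally shared graph copy. Specifically, one must verify that the per-batch Check 2 threshold on ``excess attributable to batch $i$'' is a meaningful upper bound on how much of $\Delta_{i,0}$'s contribution to $\Delta_0$ can end up as excess, and that the per-problem injection at stage $j$ is indeed capped by $2^j\cdot 10^7\del\tau$ regardless of earlier excess shedding (which can only decrease, never increase, the supply carried forward). Once these points are pinned down, the lemma follows immediately from the one-line conservation inequality above.
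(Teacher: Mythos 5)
Your conservation argument does not establish the lemma; in this paper the bound comes from the two stopping conditions, not from sink capacity. Three things go wrong. First, your central inequality points the wrong way: from ``total injected supply $\le$ absorbed $+$ excess $\le W_C + \frac{W_C}{10^6\tau(L+1)}$'' you can only upper-bound the number of flow problems if you have a per-problem \emph{lower} bound on the injected supply, yet you invoke the upper bound ``injects at most $2^j\cdot10^7\del\tau$.'' The direction can be repaired (a problem reaching stage $j$ was successful at stage $j-1$, hence injects at least $(1-\tfrac{j}{10(L+1)})2^j10^7\del\tau\ge0.9\cdot2^j10^7\del\tau$ units at stage $j$), but then conservation only yields $n_{i,j}\le\frac{W_C}{0.9\cdot2^j10^7\del\tau}$, which exceeds the claimed $\frac{W_C}{2\cdot2^j10^7\del\tau}$ by a factor of about $2.2$. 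That factor of $2$ is not cosmetic: \Cref{lemma:supplybound} uses exactly this lemma to conclude that the total supply is at most $W_C/2$, i.e., at most \emph{half} the sink capacity, which is what forces the excess side of the Unit-Flow cut to be the smaller side; an argument that starts from ``supply $\le$ sink capacity'' is circular for that purpose and cannot recover the one-half. Second, the stage-$0$ step is invalid: all batches share a single stage-$0$ graph copy, so conservation only bounds $\sum_i n_{i,0}\le\frac{W_C(1+o(1))}{10^7\del\tau}$, and you cannot divide by the number of batches to obtain a per-batch bound---nothing in the conservation inequality prevents one batch from consuming the entire budget. The claimed per-batch bound $\frac{W_C}{10^{11}\del\tau^2}$ is smaller than this global bound by a factor of $10^4\tau$, so the gap is not a constant that can be ``absorbed.''

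The missing idea is Check~1. The paper's proof is a per-batch counting argument: a flow problem that reaches stage $j$ but not stage $j+1$ must shed at least $\frac{2^j10^6\del\tau}{L+1}$ units of excess (in unit $\mu_j$) at stage $j$ alone, so Check~2 (aggregate excess below $\frac{W_C}{10^6\tau(L+1)}$ at every stage of the current batch) caps the number of \emph{unsuccessful} problems in the batch, summed over all stages, at $\frac{2W_C}{10^{12}\del\tau^2}$; every other problem started in the batch finishes stage $L$ successfully, and Check~1 terminates the batch once $\frac{W_C}{10s}\le\frac{W_C}{10s_0}=\frac{W_C}{10^{12}\del\tau^2}$ of those have accumulated. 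Adding the two caps (and, for stage $j\ge1$, restricting the unsuccessful count to stages $j'\ge j$ and using $2^j\le 2^L=\frac{2s}{10^7\del\tau}$) gives exactly the stated bounds. Your proposal never uses the count of successful flow problems, which is why it cannot reach the stated constants.
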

\begin{proof}
We start with proving the claim for stage $0$. Assume by contradiction that we have worked on $\frac{W_C}{10^{11}\del\tau^2}$ flow problems in any batch $i$, but neither of the two stopping conditions are satisfied. In particular, this means that for every stage $j\geq 0$ the total excess in the Unit-Flow result of stage $j$ is less than $\frac{W_C}{10^6\tau(L+1)}$. We know that in order for a flow problem to reach stage $j$ but not stage $j+1$, the flow problem must lose at least \tempfix{$\frac{2^{j}10^6\del\ln m}{L+1}$}{$\frac{2^{j}10^6\del\tau}{L+1}$} units (in unit $\mu_{j}$) of supply in stage $j$ alone (i.e. as excess supply). Thus, dividing the upper-bound of the total excess in stage $j$ shows that, in any stage $j$, the number of unsuccessful flow problems  is at most $\frac{W_C}{2^j10^{12}\del\tau^2}$. Hence, the total number of flow problems that are unsuccessful over all stages $j\geq 0$ is at most $\frac{2W_C}{10^{12}\del\tau^2}$. As we have started $\frac{W_C}{10^{11}\del\tau^2}$ flow problems in stage $0$, we must have at least $\frac{W_C}{10^{12}\del\tau^2}=\frac{W_C}{10s_0}\geq \frac{W_C}{10s}$ successful flow problems in stage $L$, which contradicts the assumption that we haven't finished this batch. 

The claim for any $j\geq 1$ follows the similar argument. In particular, our analysis above says that the number of flow problems that can be unsuccessful over all stages $j'\geq j$ is at most $\frac{2W_C}{2^j10^{12}\del\tau^2}$, and if we already have $\frac{W_C}{2\cdot 2^j10^7\del\tau}$ flow problems reaching stage $j$, we must have at least $\frac{W_C}{4\cdot 2^j10^7\del\tau}$  of them being successful in stage $L$. Since $2^j\leq 2^L=\frac{2s}{10^7\del\tau}$, we have 
$
\frac{W_C}{4\cdot 2^j10^7\del\tau}\geq \frac{W_C}{4\cdot 2^L10^7\del\tau} =  \frac{W_C}{8s}\geq \frac{W_C}{10s},
$
which again is a contradiction.

\end{proof}

 With the above result, we can show below that there will be enough distinct nodes in $C$ that we can use as the source node for flow problems in stage $0$. Furthermore, we show the aggregate flow problem for any batch and stage has at most $W_C/2$ units of supply. This ensures we must get case $(1)$ or $(3)$ in~\Cref{thm:unit-flow}, and if we get case $(3)$ then $A$ (i.e. the side with excess) must be the smaller side of the cut.
\begin{lemma}\label{lemma:supplybound}
    There are enough distinct nodes to serve as source nodes in stage $0$, and in every stage the total source supply is at most $W_C/2$ in any aggregated flow problem. 
\end{lemma}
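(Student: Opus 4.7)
The plan is to prove both claims by straightforward accounting, using \Cref{lem:num_flows} to bound the number of flow problems in play and then multiplying by the per-problem supply, noting that the constants in the construction have been tuned so that everything fits with slack exactly $W_C/2$.

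For the first claim, I will add up the number of stage-$0$ source nodes used across all batches. Since stage $0$ shares one copy of the graph across all batches, the relevant count is the total number of stage-$0$ flow problems. By \Cref{lem:num_flows}, each batch uses at most $W_C/(10^{11}\del\tau^2)$ stage-$0$ flow problems, and the algorithm runs at most $5000\tau$ batches (by the termination rule in Check~1). Multiplying gives at most
\[
5000\tau \cdot \frac{W_C}{10^{11}\del\tau^2} \;=\; \frac{W_C}{2\cdot 10^7\del\tau}
\]
stage-$0$ source nodes needed in total, which is far less than the $W_C/(2\del)$ distinct source nodes guaranteed by \Cref{assumption:enough-source}.

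For the second claim I split into stage $0$ and later stages. In stage $0$, supplies aggregate across \emph{all} batches (they share the graph); each flow problem injects $10^7\del\tau$ units of source supply (in unit $\mu_0$) at its source node. Using the same two bounds as above, the total stage-$0$ supply is at most
\[
5000\tau \cdot \frac{W_C}{10^{11}\del\tau^2} \cdot 10^7\del\tau \;=\; \frac{W_C}{2}.
\]
For a stage $j\ge 1$ the aggregation is per-batch (different batches use different graph copies), so I only need to bound the sum within one batch. By \Cref{lem:num_flows}, at most $W_C/(2\cdot 2^j 10^7\del\tau)$ flow problems of that batch reach stage $j$. Each such problem started stage $0$ with $10^7\del\tau$ units (in $\mu_0$) and may have carried all of its supply forward; because the unit halves from $\mu_i$ to $\mu_{i+1}$ and the definition of $\Delta^v_{i,j+1}(\cdot)$ doubles the carried amount, the supply of one flow problem entering stage $j$ is at most $2^j\cdot 10^7\del\tau$ units in unit $\mu_j$. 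Multiplying gives total stage-$j$ supply at most
\[
\frac{W_C}{2\cdot 2^j 10^7\del\tau} \cdot 2^j\cdot 10^7\del\tau \;=\; \frac{W_C}{2}.
\]

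There is no real obstacle here: the bookkeeping is tight but elementary, and the slack was arranged when the constants $10^{11}$, $10^7$, and $5000\tau$ were chosen. The one subtlety worth being careful about is distinguishing the stage-$0$ capacity-sharing regime (over all batches, giving the single combined bound) from the stage-$j\ge 1$ regime (per-batch), and being explicit that "supply of a flow problem at stage $j$" is measured in unit $\mu_j$ so that the doubling in $\Delta^v_{i,j+1}$ is exactly what is needed to match the halving of $\mu_i$.
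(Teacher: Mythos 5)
Your proof is correct and follows essentially the same route as the paper's: both claims are obtained by combining the per-batch/per-stage counts from \Cref{lem:num_flows} with the number of batches ($5000\tau$) and the per-problem supply ($2^j\cdot 10^7\del\tau$ in unit $\mu_j$), and checking against \Cref{assumption:enough-source} and the total sink capacity $W_C$. The arithmetic matches the paper's (your $W_C/(2\cdot 10^7\del\tau)$ is the paper's $5W_C/(10^8\del\tau)$), and you correctly distinguish the all-batches sharing in stage $0$ from the per-batch sharing in stages $j\ge 1$.
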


\begin{proof}
    We show first that there are enough source nodes in stage 0 to support all the (successful or not) flow problems. By~\Cref{lem:num_flows} we have at most $\frac{W_C}{10^{11}\del\tau^2}$ flow problems in a batch in stage $0$, so we need at most $\frac{5000\tau W_C}{10^{11}\del\tau^2}=\frac{5 W_C}{10^8\del\tau}$ nodes to serve as the source node in stage $0$ over all the $5000\tau$ batches. Recall we have WLOG~\Cref{assumption:enough-source}, so we have up to $\frac{W_C}{2\del}>\frac{5 W_C}{10^8\del\tau}$ nodes to use, which is enough in this case. 

    Recall that quantities of supply are in the unit $\mu_j$ when we discuss flow in stage $j$.
    We know the total sink capacity in the graph of stage $0$ (which is shared across all batches) is more than twice the total source supply as follows. We calculated above that we have at most $\frac{5 W_C}{10^8\del\tau}$ flow problems in stage $0$, and we add $10^7\del\tau$ source supply in each flow problem, so the total supply is at most $W_C/2$. Note the total sink capacity is $W_C$ in the shared graph.
    
    Now we show the total sink capacity in the graph of stage $j\geq 1$ for any batch $i$ is more than twice of the total supply. Since we work on at most $\frac{W_C}{2\cdot 2^j10^7\del\tau}$ flow problems in stage $j$ by~\Cref{lem:num_flows}, and each flow problem can have at most $2^j 10^7\del\tau$ units (in unit $\mu_j$) of source supply in stage $j$ since we start with $10^7\del\tau$ units (in unit $\mu_0$) in stage $0$, and only the absorbed supply goes into the next stage while the value of one unit being halved. Thus the total supply $|\Delta_{i,j}(\cdot)|\leq W_C/2$ through the procedure for any $i,j$.  
\end{proof}
\subsection{The Cut Case}
\label{sec:cut}
In this section we show that if in any batch $i$ and stage $j$, after Unit-Flow finishes with a flow problem, if the second stopping condition (i.e. Check $2$) is satisfied, then we can find a relatively balanced low conductance cut. Recall in this case we have at least $\frac{W_C}{10^6\tau(L+1)}=\Omega(\frac{W_C}{\tau\ln m_G})$ excess in the Unit Flow result for the aggregated flow problem of batch $i$ stage $j$.
%
\begin{lemma}
    \label{lem:cut}
    After (incrementally) running Unit-Flow on a flow problem, if we have at least $\Omega(\frac{W_C}{\tau\ln m_G})$ excess in the flow for some stage $j\ge 0$, then we can find a cut $(A,C\setminus A)$ of conductance $O(\frac{1}{\tau})$ such that
\[
\min\left(\vol^W(A),\vol^W(C\setminus A)\right) \ge \Omega\left(\frac{W_C}{\tau^2 \ln m_G}\right).
\]
\end{lemma}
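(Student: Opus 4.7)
The plan is to derive the lemma as a direct consequence of Theorem~\ref{thm:unit-flow}, applied to the aggregate flow problem $\Delta_{i,j}(\cdot)$ for batch $i$ and stage $j$ that triggered Check~$2$. The key step is to argue that we must be in case~(\ref{item:unit-flow-3}) of the theorem (the ``cut case''), since the presence of $\Omega(W_C/(\tau \ln m_G))$ excess rules out case~(\ref{item:unit-flow-1}) (no excess), while \Cref{lemma:supplybound} bounds the total source supply by $W_C/2 < W_C = \vol^W(G\{C\})$, so not every vertex can be saturated and case~(\ref{item:unit-flow-2}) is impossible. This yields a cut $(A, C\setminus A)$, and the same supply bound also implies $\vol^W(A) \le W_C/2$, so $A$ is the smaller side and $\vol^W(C\setminus A) \ge W_C/2$, which already handles the balance requirement on the larger side.

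Next I would verify the conductance bound $\Phi(A) \le \frac{20 \ln W_C}{h} + \frac{\beta}{U}$ from case~(\ref{item:unit-flow-3}b) against the concrete parameter choices of each stage. For stage $j=0$, where $h = \Theta(\tau^2 \ln m_G)$, $U_0 = 10^8 \tau^2$, and $\beta = O(\tau)$ (since each source node gets $10^7\del\tau$ units on a node with $d^W \ge \del$), both terms are $O(1/\tau)$. For $j \ge 1$, with $h = \Theta(\tau \ln m_G)$, $U_j = 100\tau$, and $\beta \le 2$ (only absorbed supply carries over), the first term is $O(1/\tau)$ and the second is $O(1/\tau)$ as well. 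So in both cases $\Phi(A) = O(1/\tau)$.

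Then I would invoke case~(\ref{item:unit-flow-3}c) to bound the smaller-side volume: $\vol^W(A) \ge \ex(T)/\bigl((\beta-1) + 10 U \ln W_C / h\bigr)$. For stage $0$, the denominator is $O(\tau) + 10 \cdot 10^8 \tau^2 \ln W_C / \Theta(\tau^2 \ln m_G) = O(\tau)$, and combined with $\ex(T) = \Omega(W_C/(\tau \ln m_G))$ this gives $\vol^W(A) = \Omega(W_C/(\tau^2 \ln m_G))$, matching the claim. For stage $j\ge 1$, the denominator is $O(1) + 10 \cdot 100\tau \ln W_C / \Theta(\tau \ln m_G) = O(1)$, yielding the even stronger $\vol^W(A) = \Omega(W_C/(\tau \ln m_G))$.

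The main potential obstacle is bookkeeping of units: the excess ``in the flow'' is naturally counted in units of $\mu_j$, while the statement is in absolute weighted volume. Since $\mu_j \ge 1$ by the definition of $L$, a lower bound of $\Omega(W_C/(\tau \ln m_G))$ units is at least as strong in absolute terms, and the conductance and volume inequalities in Theorem~\ref{thm:unit-flow} are homogeneous, so the computation is unaffected. I should also note briefly that the cut lives inside $G\{C\}$ (hence inside $C$ once the added self-loops are ignored), which is consistent with the lemma's statement about a cut in $C$.
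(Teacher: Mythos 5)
Your proposal is correct and follows essentially the same route as the paper: rule out cases (1) and (2) of Theorem~\ref{thm:unit-flow} using the excess and the $W_C/2$ supply bound from \Cref{lemma:supplybound}, then read off the conductance and smaller-side volume bounds from cases (3)(b) and (3)(c) with the stage-specific values of $\beta$, $U$, and $h$. Your additional remarks on unit bookkeeping ($\mu_j \ge 1$) and on $A$ being the smaller side are consistent with, and slightly more explicit than, the paper's argument.
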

\begin{proof}

We will apply Theorem~\ref{thm:unit-flow} to it to show the result. Note that the flow problem cannot fulfill Case $(1)$ of the theorem since there is excess remaining in the current pre-flow solution, and we also cannot get case $(2)$ as there is not enough supply to saturate all sinks since $|\Delta_{i,j}(\cdot)|\leq W_C/2$ for all $i$ and $j$ by \Cref{lemma:supplybound}.
Thus, we must get a cut as in case $(3)$. We look at the parameters of our flow problems show that the conductance and volume properties in the lemma follow from Theorem~\ref{thm:unit-flow}$(3)(b,c)$.   

In stage $0$, the source supply we put on any node $v$ is \tempfix{$10^7\del\ln m$}{$10^7\del\tau$}, and since $d^W(v)\geq \del$ we know the $\beta$ in Theorem~\ref{thm:unit-flow} is $O(\tau)$, and $U$ is \tempfix{$10^8\tau\ln m$}{$10^8\tau^2$}. Recall that we use \tempfix{$h=\Theta(\tau\ln m_G)$}{$h=\Theta(\tau^2\ln m_G)$} in stage $0$, so Theorem~\ref{thm:unit-flow}$(3)(b)$ guarantees the conductance of the cut is $O(\frac{1}{\tau})$. By \Cref{lemma:supplybound}, we have at most $W_C/2$ supply in the aggregate flow problem in each Unit-Flow computation, so the side $A$ in the returned cut must be the smaller side in terms of weighted volume. Since $\ex(T)\geq \Omega(\frac{W_C}{\tau\ln m_G})$, by Theorem~\ref{thm:unit-flow}$(3)(c)$ we know \tempfix{$\vol^W(A)\geq \Omega\left(\frac{W_C}{\ln^3m}\right)$}{$\vol^W(A)\geq \Omega\left(\frac{W_C}{\tau^2 \ln m_G}\right)$} from our choice of $U,h,\beta$.

It is straightforward to do the same calculation for stages $j\geq 1$ to show the conductance and volume guarantees, where we have $\beta=2,U=100\tau,h=\Theta(\tau\ln m_G)$ in these stages. 
\end{proof}
\paragraph{Edges on the cut.} When we find a relatively balanced low conductance cut as above, we will break the $C$ along the cut into $A$ and $C\setminus A$. The edges crossing the cut are duplicated to exist in both $G\{A\}$ and $G\{C\setminus A\}$ (as self-loops) by our definition of these sub-graphs. Note that if $C$ is $s$-strong, $G\{A\}$ and $G\{C\setminus A\}$ must also be $s$-strong, even though we add a copy of each cut edge to both sides. Also since the conductance of the cut is $O(1/\tau)$, the total weight of the inter-cluster edges is at most a $O(1/\tau)$ fraction of the weighted volume of the smaller side, so even with two copies of each cut edge on both side of a cut, we can charge $O(1/\tau)$ to each unit of weight on the smaller side of the cut. Since a unit of weight can be on the smaller side of a cut at most $O(\ln m_G)$ times before we get to a trivial cluster, the total additional weight from adding the cut boundary edges over all recursive calls starting from the whole graph is at most $O(\frac{\vol^W(G)\ln m_G}{\tau})\leq 0.001 \vol^W(G)$ as we have a large enough $\tau=\Omega(\ln m_G)$, so for simplicity we can safely ignore the impact of the additional edges to the total asymptotic running time.

\subsection{The Certification Case}
\label{sec:certification}
In the previous section we showed that if the algorithm terminated in the second stopping condition (Check $2$), then the algorithm finds a relatively balanced low conductance cut. Next, we look at the case when the algorithm terminated in Check $1$. In this section on we will explicitly multiply values in stage $j$ by the unit $\mu_j$ so all quantities are in unit $1$. 

We now consider only the flow problems that successfully finishes all $L$ stages. For such a flow problem associated with source node $v$, we call $v$ together with the edges on which its supply was transported  an \emph{edge-bundle} with \emph{center} $v$. As the algorithm terminated in Check 1, we find $5000\tempfix{\ln m}{\tau}$ batches, each with $\frac{W}{10s}$ edge-bundles. That is, each edge-bundle routes at least $1.8s$ (i.e. $0.9\cdot 10^7\del \tau \cdot \mu_0$) supply to sinks in stage $L$. We start with the following definition.

 \begin{definition}
For an edge-bundle with source node $v$ in a $s$-strong component $C$, we call the edge-bundle \emph{$s$-captured} if in stage $0$, the flow problem from $v$ is successful and at most $0.5s$ of its supply is routed to nodes in $C\setminus S$ for any cut $S$ in $G$ with $\vol^W(S\cap C)\leq \vol^W(C\setminus S)$ and cut-size $w(S,G\setminus S)\leq \cutsize$.  
We call an edge-bundle $s$-free if it's not $s$-captured. 
 \end{definition}
Note in the above definition, since $C$ is $s$-strong w.r.t the same $\del$ (\Cref{def:strength}), we know 
all the cut $S$ being considered must have $\vol^W(S\cap C)\leq s$. We start with the following result, which implies that in the case where we successfully get $5000\tempfix{\ln m}{\tau}$ bathes each with $\frac{W}{10s}$ edge-bundles, all these edge-bundles are $s$-free since at least $1.8s>1.6s$ supply is routed to sinks in stage $L$ for each edge-bundle.
\begin{lemma}
\label{lem:no-s-captured}
The flow problem corresponding to any $s$-captured edge-bundle cannot have a total of more than $1.6$s supply routed to sinks in stage $L$. 
\end{lemma}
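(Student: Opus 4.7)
The plan is to reduce the claim about stage $L$ to a simpler claim about stage $0$. I will show that the total supply from the flow problem associated with $v$ that is absorbed by sinks, measured in absolute units, is non-increasing across stages, so it suffices to bound the stage-$0$ absorption by $1.5s$ using the $s$-captured hypothesis.

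For the across-stages monotonicity, the recurrence $\Delta_{i,j+1}^v(v') = 2(f_{i,j}^v(v') - \ex_{i,j}^v(v'))$ in unit $\mu_{j+1}$, together with $\mu_{j+1} \approx \mu_j/2$, gives that in absolute terms the total source supply at stage $j+1$ is at most the total absorbed supply from $v$'s flow at stage $j$. Since absorbed supply at any stage is bounded by the source supply at that stage (by flow conservation plus nonnegative excess), iterating this comparison shows that the absolute absorbed supply from $v$'s flow at stage $L$ is at most the absolute absorbed supply at stage $0$, up to a negligible correction from the ceilings in the definition of $\mu_j$.

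For the stage-$0$ bound, I would apply the $s$-captured condition to a carefully chosen cut $S$. Because $\lambda \leq \tilde\lambda = 1.01\tilde\delta < 1.1\tilde\delta$, the minimum cut of $G$ satisfies the cut-size requirement in the $s$-captured definition; orienting it so that $\vol^W(S \cap C) \leq \vol^W(C \setminus S)$ and invoking the $s$-strength of $C$ further gives $\vol^W(S \cap C) \leq s$. I then split the stage-$0$ absorbed supply from $v$'s flow across this cut. On $C \setminus S$, the absorbed supply is bounded above by the total mass routed to $C \setminus S$, which by $s$-captured is at most $0.5s$. On $S \cap C$, the absorbed supply from $v$'s flow is bounded by the total sink capacity $\sum_{v' \in S \cap C} d^W(v') = \vol^W(S \cap C) \leq s$, since sink capacities are shared nonnegatively across all flow problems. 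Summing, stage-$0$ absorbed supply is at most $1.5s$, which combined with the monotonicity yields the $1.6s$ bound at stage $L$ with room to spare.

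The main subtlety I anticipate is the unit bookkeeping, since $\mu_j$ is defined with a ceiling and $\mu_{j+1} = \mu_j/2$ holds only approximately; however, the gap between the $1.5s$ bound I obtain and the $1.6s$ target comfortably absorbs any such lower-order correction. A secondary point is verifying that ``supply routed to $C \setminus S$'' in the definition of $s$-captured refers to the total mass $f_{i,0}^v(v')$ ending at nodes of $C \setminus S$ (absorbed plus excess), so that it indeed upper-bounds the absorbed contribution on $C \setminus S$ used in the split above.
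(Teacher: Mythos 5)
Your proof is correct, and it takes a genuinely different route from the paper's for the key step. Both arguments share two ingredients: the $0.5s$ bound on stage-$0$ supply reaching $C\setminus S$ (from the capture definition) and the sink-capacity bound $\vol^W(C\cap S)\le s$ on what the $C\cap S$ side can absorb. The difference is how one gets from stage $0$ to stage $L$. The paper stays ``local in space'': it bounds the stage-$L$ absorption in $C\setminus S$ by $0.6s$, namely the $0.5s$ that was already there after stage $0$ plus at most $0.1s$ that can cross the cut in stages $1,\dots,L$, the latter coming from summing the cross-cut edge capacities $U_j\mu_j\cdot w(S\cap C,C\setminus S)\le 100\tau\cdot\frac{2s}{2^j 10^7\del\tau}\cdot1.1\del$ over $j\ge1$. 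This gives $0.6s+s=1.6s$. You instead observe that the total absorbed supply of the bundle is (essentially) non-increasing across stages, because the stage-$(j{+}1)$ source is by definition twice the stage-$j$ absorbed supply in half the unit, so it suffices to bound the stage-$0$ absorption by $0.5s+s=1.5s$. Your argument is shorter, yields a slightly stronger constant, and does not use the later-stage edge capacities at all; the paper's cross-cut accounting is, however, exactly the style of argument reused in Lemma~\ref{lem:s-bottleneck-inflow}, which is presumably why it is presented this way.

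Two small caveats. First, your ``negligible ceiling correction'' deserves more care than you give it: the per-stage factor is $2\mu_{j+1}/\mu_j\le 1+1/\mu_j$, and the product telescopes to $2^L\mu_L/\mu_0$, which under the literal ceiling definition can approach $2$ --- not absorbed by your $1.5s$-versus-$1.6s$ slack. The paper, however, consistently treats $\mu_j$ as exactly $\mu_0/2^j$ (e.g.\ it asserts $\mu_0=2^j\mu_j$ and computes $10^7\del\tau\cdot\mu_0=2s$ without the ceiling), so under the paper's own convention your inequality is exact and this is a shared imprecision rather than a gap in your argument. Second, there is no need to ``carefully choose'' $S$ to be the minimum cut: the capture hypothesis is stated for every cut $S$ with $w(S,G\setminus S)\le\cutsize$ and $\vol^W(S\cap C)\le\vol^W(C\setminus S)$, and your argument works verbatim for whichever such cut witnesses the capture; invoking the minimum cut only serves to show that at least one admissible cut exists.
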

\begin{proof}
Consider any $s$-captured edge-bundle and a corresponding cut $S$. By definition at most $0.5s$ supply is routed to $C\setminus S$ in stage $0$. The remaining at least $1.5s$ supply is inside $C\cap S$ after stage $0$, and we can look at how much can be routed into $C\setminus S$ in later stages.

By our choice of edge capacity for edge $e$ being $U_j\mu_j w_e=100\tau\frac{2s}{2^j 10^7\del\tau}w_e$ in stage $j\geq 1$, we know that over all later stages, the total edge capacity is at most $\frac{2sw_e}{10^5\del}$ for edge $e$. If we sum over all edges across the cut $C\cap S$ and $C\setminus S$, the total capacity is at most $\frac{2s\cdot \cutsize}{10^5\del}$, so we can route at most $0.1s$ supply into $C\setminus S$. 

In total this means at most $0.6s$ supply can be routed to nodes (and thus absorbed by sinks) in $C\setminus S$ over all $L$ stages. The nodes in $C\cap S$ can absorb at most $s$ supply since $\vol^W(C\cap S)\leq s$, so in total at most $1.6s$ supply can be absorbed by sinks. 
\end{proof}
\begin{lemma} 
\label{lem:num-s-free}
There are at most $50\tau$ edge-bundles that are $s$-free but with centers inside a cut $S$ such that $\vol^W(S\cap C)\leq \vol^W(C\setminus S)$ and $w(S\cap C,C\setminus S)\leq \cutsize$. 
\end{lemma}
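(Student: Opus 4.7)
The approach is a simple edge-capacity argument across the cut $(S\cap C,\,C\setminus S)$ in the shared stage-$0$ flow graph. All stage-$0$ flow problems (across all batches) share a single copy of the graph with a common capacity budget on each edge, so the total supply that the individual bundles push from $S\cap C$ to $C\setminus S$ in stage~$0$ is collectively bounded by the total edge capacity across the cut. I plan to turn this capacity bound into an upper bound on the number of $s$-free bundles centered in $S\cap C$.

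First I would note that, by construction, an edge-bundle exists only for a flow problem that is successful in all $L$ stages, and hence in particular in stage~$0$. Therefore the first clause of the $s$-captured definition (stage-$0$ success) holds automatically for every edge-bundle, and the only way a bundle centered at $v\in S\cap C$ can fail to be $s$-captured is that more than $0.5s$ of its stage-$0$ supply is routed to $C\setminus S$ (taking $S$ itself as the witness cut). Any such bundle contributes more than $0.5s$ to the aggregate flow from $S\cap C$ to $C\setminus S$ during stage~$0$.

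Next I would compute the stage-$0$ edge capacity across the cut. Each edge $e$ has capacity $U_0 w_e$ in units of $\mu_0$, i.e.\ $U_0\mu_0 w_e$ in actual units, with $U_0=10^8\tau^2$ and $\mu_0=\lceil 2s/(10^7\tilde\delta\tau)\rceil$. Using the lemma's hypothesis $w(S\cap C,C\setminus S)\le 1.1\tilde\delta$, and noting that $s\ge s_0=10^{11}\tilde\delta\tau^2$ makes $2s/(10^7\tilde\delta\tau)\gg 1$, so the ceiling contributes negligibly, the total cross-cut capacity is at most $U_0\mu_0\cdot 1.1\tilde\delta\le 23\tau s$. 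If $k$ such bundles each contribute more than $0.5s$ to the cross-cut flow, then $0.5ks<23\tau s$, so $k<46\tau\le 50\tau$.

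The one delicate point I expect to have to justify is the interpretation of ``$s$-free'' in the lemma. The definition of $s$-captured quantifies over \emph{all} cuts $S^*$ in $G$ with global cut-size $w(S^*,G\setminus S^*)\le\cutsize$, whereas the lemma's $S$ only satisfies the local bound $w(S\cap C,C\setminus S)\le\cutsize$. The intended reading is almost certainly to apply the non-capture condition with respect to this particular $S$; the edge-capacity argument then goes through verbatim, since all that matters is the capacity across the edges of $G\{C\}$ crossing $(S\cap C,C\setminus S)$, and the self-loops of $G\{C\}$ do not cross this cut.
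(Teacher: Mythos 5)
Your proposal is correct and follows essentially the same argument as the paper: each $s$-free bundle centered in $S\cap C$ must push more than $0.5s$ of its stage-$0$ supply across the cut, the shared stage-$0$ capacity across the cut is $U_0\mu_0\cdot w(S\cap C,C\setminus S)\le 22s\tau$ (you get $23s\tau$ after accounting for the ceiling in $\mu_0$), and dividing gives at most $50\tau$ bundles. Your side remark about the local versus global cut-size condition is a fair reading and does not affect the argument, since only the capacity of edges crossing $(S\cap C, C\setminus S)$ enters the bound.
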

\begin{proof}
For an edge-bundle with center inside $C\cap S$ to be $s$-free, the flow problem must send at least $0.5s$ supply from $v\in C\cap S$ to nodes in $C\setminus S$ in stage $0$. The total edge capacity across $C\cap S$ to $C\setminus S$ in stage $0$ is at most $U_0\mu_0\cdot\cutsize=10^8\tau^2\frac{2s}{10^7\del\tau}\cutsize\leq 22s\tau$, so we can have at most $\frac{22s\tau}{0.5s}\leq 50\tau$ such edge-bundles.
\end{proof}
We construct a new flow problem $\Pi_F$ based on all the successful stage $L$ flows we get as follows. There are in total $5000\tau \cdot \frac{W_C}{10s}$ such flow problems, and in the stage-$L$ pre-flow result for each of them, we have at least $1.8s$ supply absorbed in sinks. We aggregate the absorbed supply over all the $5000\tau \cdot \frac{W_C}{10s}$ pre-flows, and scale the amount down by $500\tau$, and we will use the result amount of supply on each node as the source supply function $\tilde{\Delta}$ for our new flow problem $\Pi_F$. The sink capacity of any node $v$ in our new flow problem will again be $d^W(v)$.
\begin{observation}
\label{obs:total_supply}
The total amount of source supply in $\Pi_F$ is more than the total sink capacity (i.e. $W_C$), since $|\tilde{\Delta}(\cdot)|\geq \frac{1.8s}{500\tau}\frac{5000\tau W_C}{10s}=1.8W_C>W_C$. Moreover, $\tilde{\Delta}(v)\leq \frac{5000\tau\cdot d^W(v)}{500\tau}=10 d^W(v)$, since pre-flows in the same batch share the sink capacity $d^W(v)$.
\end{observation}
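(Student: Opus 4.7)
The plan is to establish both inequalities by direct accounting from the construction of $\Pi_F$ together with the parameters that were fixed in the certification case. Recall that $\tilde\Delta$ is obtained by aggregating the absorbed-supply values across every successful stage-$L$ flow problem and then scaling the result down by the factor $500\tau$.

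For the total-supply bound, I would use the fact that the certification branch is entered only after $5000\tau$ batches have each produced at least $\frac{W_C}{10s}$ flow problems that successfully finish stage $L$. By the definition of ``successful at stage $L$'' used in Check~$1$, each such flow problem routes at least $0.9 \cdot 10^7\del\tau$ units of supply in the unit $\mu_0$ to sinks; since $\mu_0 = \lceil 2s/(10^7\del\tau)\rceil$, this is at least $1.8s$ in the normalized unit. Summing this lower bound over all $5000\tau\cdot \frac{W_C}{10s}$ successful flows yields at least $900\tau W_C$ total absorbed supply, and dividing by $500\tau$ gives $|\tilde\Delta(\cdot)|\ge 1.8W_C > W_C$.

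For the per-vertex bound, I would invoke the capacity-sharing rule stated in the setup: all the flow problems belonging to one batch run on a single shared copy of the graph, so the sink capacity $d^W(v)$ at each $v$ is a joint constraint on the total supply that these flows can absorb at $v$. Hence, the aggregate supply absorbed at $v$ within any single batch is at most $d^W(v)$. Summing over the $5000\tau$ batches gives an aggregate of at most $5000\tau\cdot d^W(v)$ before scaling, and dividing by $500\tau$ yields $\tilde\Delta(v)\le 10\,d^W(v)$.

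No step here is technically deep; both bounds are mechanical consequences of the definitions. The only mild subtlety, and the one I would double-check, is that the ``at least $1.8s$'' estimate applies to the absorbed supply (rather than to some intermediate quantity such as the stage-$0$ source supply or an excess budget), which is exactly what Check~$1$'s cumulative ``at most a $\frac{1}{10}$ fraction lost'' bound gives once one carries the unit conversion $\mu_0 = 2^L\mu_L$ through all $L$ stages.
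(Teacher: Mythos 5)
Your proposal is correct and matches the paper's argument essentially verbatim: the lower bound on $|\tilde\Delta(\cdot)|$ comes from the $1.8s$ absorbed-supply guarantee of each of the $5000\tau\cdot\frac{W_C}{10s}$ successful stage-$L$ flows (via Check~1 and the unit conversion $\mu_0=2^L\mu_L$), and the per-vertex bound comes from the shared sink capacity $d^W(v)$ within each batch, summed over $5000\tau$ batches and scaled by $1/(500\tau)$. No differences worth noting.
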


Note that any source supply in $\tilde{\Delta}$ on some node $v$ is a result of our previous flow problems, that is, the supply on $v$ must be from either the initial source supply we put on $v$ in stage $0$, or being routed to $v$ by the pre-flows over the stages $0,\ldots L$. Consider any cut $S$ being considered in the definition of $s$-captured, we first bound the amount of source flow in $\tilde{\Delta}$ on nodes in $S\cap C$.
\begin{lemma}\label{lem:s-bottleneck-inflow} For any cut $S$ such that $w(S,G\setminus S)\leq \cutsize$ and $\vol^W(S\cap C)\leq \vol^W(C\setminus S)$, we have
$\tilde{\Delta}(C\cap S)\leq s/4$
\end{lemma}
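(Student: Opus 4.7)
The plan is to bound $\tilde\Delta(C\cap S)$ by classifying the $5000\tau\cdot\frac{W_C}{10s}$ edge-bundles used to construct $\tilde\Delta$ according to where their center lies: Case~1 for centers in $C\cap S$, Case~2 for centers in $C\setminus S$. Since every such bundle is successful at stage~$L$, it routes at least $1.8s>1.6s$ supply to sinks, so \Cref{lem:no-s-captured} forces each of them to be $s$-free; then \Cref{lem:num-s-free} caps the number of Case~1 bundles by $50\tau$ across all batches. Because $\tilde\Delta(C\cap S)=\frac{1}{500\tau}\sum_b \mathrm{absorbed}^b_L(C\cap S)$ and absorbed supply is upper-bounded by total supply, it suffices to upper-bound the aggregate bundle-supply located in $C\cap S$ at the end of stage~$L$, and then show this is at most $500\tau\cdot s/4=125s\tau$.

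For each bundle $b$, I would track its absorbed supply $A^b_j(C\cap S)$ in $C\cap S$ over stages. The absorbed supply from stage~$j$ becomes the source supply of bundle $b$ at the same locations at the start of stage~$j+1$ (in absolute units), while excess supply is only dropped, so supply conservation within a single stage gives
\[
A^b_{j+1}(C\cap S)\le A^b_j(C\cap S)+\text{net flow from }C\setminus S\text{ to }C\cap S\text{ by }b\text{ in stage }j+1.
\]
Iterating and summing over $b$ yields
\[
\sum_b A^b_L(C\cap S)\le \sum_b A^b_{-1}(C\cap S)+\sum_{j=0}^L \sum_b (\text{net flow into }C\cap S\text{ by }b\text{ in stage }j),
\]
where the inner sums over $b$ collapse to the aggregate net flow into $C\cap S$ in each stage, and the initial term $\sum_b A^b_{-1}(C\cap S)$ is contributed only by Case~1 bundles, each carrying $10^7\del\tau\cdot\mu_0=2s$ absolute supply, for a total of at most $100s\tau$.

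The aggregate net flow into $C\cap S$ during any single Unit-Flow call is bounded by the total edge capacity across the cut $(C\cap S, C\setminus S)$, which has weight at most $w(S,G\setminus S)\le\cutsize=1.1\del$. In the shared stage-$0$ graph this gives $U_0\mu_0\cdot 1.1\del = 10^8\tau^2\cdot\frac{2s}{10^7\del\tau}\cdot 1.1\del = 22 s\tau$. In each stage $j\ge 1$ the bound is per batch and equals $U_j\mu_j\cdot 1.1\del = 100\tau\cdot\frac{2s}{2^j\cdot 10^7\del\tau}\cdot 1.1\del = \frac{220 s}{2^j\cdot 10^7}$; summing over $j\ge 1$ and all $5000\tau$ batches gives at most $5000\tau\cdot\frac{220 s}{10^7}=0.11 s\tau$. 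Combining everything, $\sum_b A^b_L(C\cap S)\le 100s\tau+22s\tau+0.11s\tau<123s\tau$, and dividing by $500\tau$ yields $\tilde\Delta(C\cap S)\le 123s/500<s/4$.

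The arithmetic is tight and the main obstacle I expect is the bookkeeping through the two-level batch/stage structure: stage~$0$ is shared globally while each stage $j\ge1$ runs on a fresh per-batch graph, and the absorbed-to-source transition halves the unit of measurement while preserving absolute supply. I need to confirm that per-bundle net flow is additive across stages (justified by the fact that absorbed supply becomes source supply unchanged in absolute units, while excess only decreases the bundle's surviving supply) and that when bundles sharing one stage-graph are aggregated, the sum of their net cross-cut flows is bounded by the capacity of that graph's cut. Once those two reductions are verified, the constants chosen in the construction (in particular $U_0=10^8\tau^2$, the initial supply $10^7\del\tau$, and the scaling factor $500\tau$) leave a small but sufficient margin to close the bound at $s/4$.
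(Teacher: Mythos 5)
Your proposal is correct and follows essentially the same route as the paper's proof: bound the initial stage-$0$ supply originating in $S\cap C$ via \Cref{lem:no-s-captured} and \Cref{lem:num-s-free} (giving $100\tau s$), bound the supply routed into $S\cap C$ by the edge capacities across the cut — once for the shared stage-$0$ graph ($22\tau s$) and once per batch for stages $j\ge 1$ (the paper bounds the geometric sum by $2U_1\mu_1$ and gets at most $\tau s$, matching your $0.11\tau s$) — and divide by $500\tau$. Your extra bookkeeping about absorbed supply becoming next-stage source supply is exactly the justification the paper relies on implicitly, so nothing is missing.
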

\begin{proof}
We first bound the amount of supply inside $S$ from the initial source supply we put on nodes in $S$ in stage $0$. By~\Cref{lem:no-s-captured} and~\Cref{lem:num-s-free}, at most $50\tau$ nodes in $S$ are centers of the edge-bundles we include, so the total amount of supply is at most $50\tau\cdot 2s=100\tau s$.

Now we bound the amount of supply that can be routed in to $S$ by the pre-flows. For stage $0$, since all flow problems share a single copy of the graph, the total supply can be routed into $S$ over all of them is at most the total capacity into $S$, which is
\[
U_0\cdot w(S\cap C,C\setminus S)\cdot \mu_0 \leq 10^8\tau^2\cdot\cutsize\cdot \frac{2s}{10^7\del\tau}\leq 22\tau s.
\]
For stages $1,\ldots, L$, the different batches don't share edge capacities, so the total capacity into $S$ is
\begin{align*}
5000\tau \cdot w(S\cap C,C\setminus S)\cdot \sum_{j=1}^L U_j\mu_j \leq &5000\tau\cdot \cutsize\cdot 2U_1\mu_1\\
= & 5000\tau\cdot \cutsize\cdot 100\tau\cdot \frac{2s}{10^7\del\tau} \leq \tau s.
\end{align*}
Since we scale the supply down by $500\tau$ to get $\tilde{\Delta}$, we know
\[
\tilde{\Delta}(S\cap C)\leq \frac{100\tau s + 22\tau s + \tau s}{500\tau} \leq \frac{s}{4}.
\]
\end{proof}
For the flow problem $\Pi_F$, we will again use $U=100\tau$ and call Unit-Flow with $h=\Theta(\tau\ln m_G)$, and we have the following result.
\begin{lemma}
\label{lem:certify}
We can either certify $C$ is $s/2$-strong, or find a cut $A\subsetneq C$ such that $\Phi_C(A)=\frac{w(A,C\setminus A)}{\min(\vol^W(A),\vol^W(C\setminus A))}\leq \tau$, $\vol^W(A)= \Theta(W_C)$ and $A$ is $s/2$-strong.
\end{lemma}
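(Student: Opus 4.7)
The plan is to apply Theorem~\ref{thm:unit-flow} to the flow problem $\Pi_F$. By Observation~\ref{obs:total_supply}, $|\tilde\Delta(\cdot)| \ge 1.8 W_C$ and $\tilde\Delta(v) \le 10\, d^W(v)$, so the preconditions hold with $\beta = 10$. Since the total sink capacity is only $W_C < 1.8 W_C$, Case~(\ref{item:unit-flow-1}) of the theorem is ruled out, and Unit-Flow must terminate either in Case~(\ref{item:unit-flow-2}) (every node of $C$ is saturated), which will certify $C$ itself is $s/2$-strong, or in Case~(\ref{item:unit-flow-3}), which will produce the cut $A \subsetneq C$ with the three stated properties.

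Suppose first Case~(\ref{item:unit-flow-2}) holds. Take any cut $S$ of $G$ with $w(S, G\setminus S) \le 1.1\del$ and, WLOG, $\vol^W(S\cap C) \le \vol^W(C\setminus S)$. Lemma~\ref{lem:s-bottleneck-inflow} gives $\tilde\Delta(S\cap C) \le s/4$. The total edge capacity across $(S\cap C, C\setminus S)$ in $\Pi_F$ is at most $U \cdot w(S\cap C, C\setminus S) \le 100\tau \cdot 1.1\del = 110\tau\del$. Since every node in $S\cap C$ is saturated, flow conservation on $S\cap C$ gives $\vol^W(S\cap C) \le \tilde\Delta(S\cap C) + 110\tau\del \le s/4 + 110\tau\del$, which is at most $s/2$ since $s \ge s_0 = 10^{11}\del\tau^2$. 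Thus $C$ is $s/2$-strong.

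Suppose instead Case~(\ref{item:unit-flow-3}) holds, giving a cut $A \subsetneq C$. Plugging $h = \Theta(\tau \ln m_G)$, $\beta = 10$, $U = 100\tau$ into Theorem~\ref{thm:unit-flow}(3)(b) gives $\Phi_C(A) \le \frac{20\ln W_C}{h} + \frac{\beta}{U} = O(1/\tau) \le \tau$. Since $\ex(T) \ge |\tilde\Delta| - W_C \ge 0.8 W_C$ and $(\beta - 1) + 10 U \ln W_C / h = O(1)$, part~(3)(c) gives $\vol^W(A) = \Theta(W_C)$. To show $A$ is $s/2$-strong, take any cut $S$ of $G$ with $w(S, G\setminus S) \le 1.1\del$, WLOG with $\vol^W(A\cap S) \le \vol^W(A\setminus S)$. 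If $\vol^W(S\cap C) \le \vol^W(C\setminus S)$, apply Lemma~\ref{lem:s-bottleneck-inflow} to $S$; flow conservation on $S\cap C$, using that the saturated set $A\cap S \subseteq S\cap C$ contributes at least $\vol^W(A\cap S)$ to the absorbed supply inside $S\cap C$, yields $\vol^W(A\cap S) \le \tilde\Delta(S\cap C) + 110\tau\del \le s/2$. Otherwise $\vol^W(C\setminus S) < \vol^W(S\cap C)$, so applying Lemma~\ref{lem:s-bottleneck-inflow} to $V\setminus S$ and running the same flow-conservation argument on $C\setminus S$ (where $A\setminus S$ is the saturated part) yields $\vol^W(A\setminus S) \le s/2$; the WLOG assumption $\vol^W(A\cap S) \le \vol^W(A\setminus S)$ then gives $\vol^W(A\cap S) \le s/2$ as well.

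The only real subtlety is the asymmetry exploited in the last step: the side of $S$ on which Lemma~\ref{lem:s-bottleneck-inflow} bounds the source supply is determined by volumes inside $C$, whereas the side on which we need to bound sink volume is the smaller side inside $A$, and these two choices may disagree. This is precisely why two sub-cases are required in the $s/2$-strength argument for $A$. Everything else is a routine plug-in of the Unit-Flow parameters and flow conservation across appropriate sides of the cut, leveraging that $s \ge s_0$ dominates the $110\tau\del$ slack.
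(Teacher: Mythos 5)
Your proof is correct and follows essentially the same route as the paper: rule out Case~(1) of Theorem~\ref{thm:unit-flow} via Observation~\ref{obs:total_supply}, certify $s/2$-strength of the saturated set by combining Lemma~\ref{lem:s-bottleneck-inflow} with the $U\cdot w(S,G\setminus S)$ capacity bound, and read off conductance and volume from (3)(b) and (3)(c). The ``subtlety'' you flag is actually a non-issue: since the strength condition only requires $\min(\vol^W(A\cap S),\vol^W(A\setminus S))\le s/2$, bounding whichever side of $A$ lies inside the smaller (in $C$) side of $S$ already suffices, which is how the paper dispenses with the case split.
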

\begin{proof}
Consider any cut $S$ as in~\Cref{lem:s-bottleneck-inflow}, we first show that the set $\tilde{C}$ of nodes whose sinks are saturated in the Unit-Flow result of $\Pi_F$ must have $\vol^W(\tilde{C}\cap S)\leq s/2$. This is because at most $s/4$ supply starts in $S$ by~\Cref{lem:s-bottleneck-inflow}, and in the pre-flow for $\Pi_F$, the additional supply that can be routed into $S$ is bounded by the total edge capacity into $S$, which by our choice of $U$ is at most $100\tau\cdot\cutsize\leq s_0/\tau < s/\tau$, and thus the total supply in $S$ when Unit-Flow finishes is at most $s/2$, and thus $\vol^W(\tilde{C}\cap S)\leq s/2$. Note this certified that $\tilde{C}$ (and thus any subset of it) is $s/2$-strong.

By~\Cref{obs:total_supply}, we know there must be excess in the pre-flow from Unit-Flow, so we cannot get case $(1)$ in~\Cref{thm:unit-flow}. If we get case $(2)$, then we know $C=\tilde{C}$ and by the above argument $\vol^W(C\cap S)\leq s/2$ for all such $S$, which certifies $C$ is $s/2$-strong. If we get case $(3)$, let $(A,C\setminus A)$ be the returned cut. We know all nodes in $A$ are saturated, i.e. $A\subset \tilde{C}$, so $A$ is $s/2$-strong. The claim on conductance $\Phi(A)$ and volume $\vol^W(A)$ follow $(3)(b)$ and $(3)(c)$ in~\Cref{thm:unit-flow} respectively by our choice of $U=100\tau,h=\Theta(\tau\ln W_C)$, and note by~\Cref{obs:total_supply} $\beta\leq 10, \ex(T)\geq 0.8W_C$.
\end{proof}
We conclude this section with the proof of the main lemma of $s_0$-strong partition.
\StrongPartition*
\begin{proof}
Our procedure starts with the graph $G$, and recursively running the certification procedure until all components are certified to be $s_0$-strong. Note a component with weighted volume at most $s_0$ is by definition $s_0$-strong. Thus, if for any component $C$ we always divide it along a relatively balanced low conductance cut whose smaller side has weighted volume at least $\Omega(\vol^W(C)/\polylog \vol^W(G))$, then the depth of our recursion is at most $O(\polylog \vol^W(G))$ until all components have weighted volume at most $s_0$.

In our certification procedure, the only case when we don't divide $C$ along a relatively balanced cut is in the case of~\Cref{sec:certification} where we get a cut $(A,C\setminus A)$ as in~\Cref{lem:certify}. Since $\vol^W(A)=\Theta(\vol^W(C))$, if it is not relatively balanced, we know $A$ must be the larger side, and we certify it's $s/2$-strong. Thus, when we cannot find a relatively balanced cut, we make progress by certifying a smaller strength on the larger side of the cut (or possibly the entire component $C$). This can happen at most $\ln\frac{\vol^W(G)}{s_0}$ times before we are done with the component. Consequently, when we look at both cases (i.e. cut or certify), the depth of the recursion is at most $O(\polylog \vol^W(G))$ since in each level we either decrease the weighted volume by a factor of $1-\Omega(1/\polylog \vol^W(G))$ or certify a strength of $1/2$ the current certified strength for the larger side of the unbalanced cut.

The running time of our certification procedure for a component $C$ is $\tilde{O}(m_C)$ if $C$ has $m_C$ edges. This is because we run one Unit-Flow computation for the aggregate flow problem in each stage and each batch, so $L\cdot 5000\tau=O(\tau\ln m_C)$ of them in total (plus the last flow problem $\Pi_F$ if needed). Each of the Unit-Flow computation takes time $\tilde{O}(m_C\cdot h)$ by~\Cref{thm:unit-flow}, which is at most $\tilde{O}(m_C\tau^2)$ by our choice of $h$, i.e. $h=\tau^2\ln m_G$ for stage $0$, and $h=\tau\ln m_G$ for all later stages and $\Pi_F$. In total, the running time of all the Unit-Flow calls take $\tilde{O}(m_C\tau^3)$ in the certification procedure of $C$. By our earlier discussion, the recursion depth of the partition procedure is at most $O(\polylog \vol^W(G))$, the total running time for all the Unit-Flow calls over all components is $\tilde{O}(m_G\tau^3)$. 

As to the total weight of edges between distinct components, since we only divide a component along cut of conductance at most $O(1/\tau)$, so we can charge the weight of cut edges to the weighted volume of the smaller side of the cut. Since one unit of volume can be on the smaller side of a cut at most $O(\ln \vol^W(G))$ times before we stop, each unit of weight is charged at most $O(\frac{\ln m_G}{\tau})$, so the total weight of edges between components is $O(\frac{\vol^W(V)\ln |V|}{\tau})$ (assuming polynomially bounded edge weights). Given our choice of $\tau=\Theta(\sqrt{s_0/\del})=\Theta(\ln^{c/2}|V|)$, the running time is $\tilde{O}(m_G\tau^3)=\tilde{O}(m_G)$ as in the lemma statement. The weight of edges between components is $O(\frac{\ln|V|}{\tau})\cdot \vol^W(G)$, which is $O(\frac{\sqrt{\del}\ln|V|}{\sqrt{s_0}})\cdot \vol^W(G)$ as claimed in the lemma.
\end{proof}

\section{From $s_0$-strength to efficient uncrossing.}\label{sec:uncrossing}

Recall that an $s_0$-strong decomposition, ${\cal D}$, of a graph $G$ is a decomposition of $V$ in which any min-cut, $S$, has
at most volume $s_0$ in common with any cluster $A \in {\cal D}$ in the
decomposition, i.e., the volume of $A \cap S$ is at most $s_0$.

In this section, we convert such a decomposition into a decomposition,
${\cal D'}$ such that for any min-cut $S$ and cluster $A \in {\cal D'}$  an uncrossing of the cut $S$ with $A$ does not locally increase
the cut-size ``too much''. Recall the definition of boundary-sparsity given in \Cref{def:boundary-sparse}.



Intuitively, for a cut $S$ that crosses $A$, if $S\cap A$ is not $\beta$-boundary-sparse in $A$, then uncrossing $S$ with respect to $A$
only increases the cut size by a small amount proportional to $1/\beta$. In other words, we want to find $\beta$-boundary-sparse cuts in $A$ since these are the cuts that do not allow for cheap uncrossing. Ideally, in our final decomposition, no cluster would contain any $\beta$-boundary-sparse sets. In reality, we will only be able to achieve a slightly relaxed version of this
statement; see \Cref{lem:small-clusters} below.

We will proceed by considering each cluster in the original
decomposition and further decomposing it as necessary.  In
Section~\ref{sec:smallpieces}, we give a procedure that works for
clusters that have volume $\tilde{O}(s_0) $ and, thus, consist of $\tilde O(1)$ vertices. We call such clusters \emph{small} clusters. In Section
\ref{sec:bigpieces} we give a procedure for 
\emph{large} clusters, i.e. clusters that are not small.

\subsection{Efficient uncrossing for small clusters.}
\label{sec:smallpieces}
Recall that a small cluster  contains $\tilde O(1)$ vertices. Thus a running time that is polynomial in the number of vertices of $A$ is sufficient for our purpose.
Throughout this section we assume that 
$\tilde\lambda \le 1.01\lambda$ is an approximate lower bound of the minimum cut value that is available to our algorithm.
\SmallCuts

To prove \Cref{lem:small-clusters}, 
we will need the following helper lemma, whose proof is standard.
\begin{theorem}\label{thm:enumerate-cuts}
Given a weighted graph $H$ and a parameter $\alpha\ge1$, there is a deterministic algorithm that enumerates all $\alpha$-approximate minimum cuts of $H$, of which there are $O(n^{\lfloor2\alpha\rfloor})$. The algorithm runs in time $\tilde{O}(m^2n^{\lfloor2\alpha\rfloor}) \le \tilde{O}(n^{\lfloor 2\alpha\rfloor+4})$.
\end{theorem}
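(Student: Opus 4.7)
The proof would follow the classical tree-packing approach of Karger~\cite{karger2000minimum}. First I would compute the minimum cut value $\lambda$ of $H$ deterministically, e.g.\ via Gabow~\cite{gabo1995matroid} in $\tilde O(\lambda m)\le \tilde O(m^2)$ time, and then invoke Gabow's matroid-partition algorithm to construct (still deterministically, in $\tilde O(\lambda m)$ time) a packing $\mathcal T$ of $\lfloor\lambda/2\rfloor$ edge-disjoint spanning trees of $H$, whose existence is guaranteed by Nash--Williams~\cite{nash1961edge}. For weighted $H$ this is a fractional packing, which we may treat as a list of $\tilde O(m)$ trees with polynomially bounded rational weights; the details are standard.

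Next, for each tree $T\in\mathcal T$, I would enumerate all cuts of $H$ that $k$-respect $T$, where $k=\lfloor 2\alpha\rfloor$. Concretely, iterate over the $\sum_{j=0}^{k}\binom{n-1}{j}=O(n^{\lfloor 2\alpha\rfloor})$ subsets $F\subseteq E(T)$ with $|F|\le k$; each $F$ breaks $T$ into $r\le k+1$ subtrees, and each of the $2^{r-1}-1=O(1)$ nontrivial bipartitions of these subtrees yields a candidate cut $S$ of $H$. For every candidate, compute $|\partial S|$ in $O(m)$ time and retain $S$ iff $|\partial S|\le\alpha\lambda$; deduplicate the surviving cuts using a hash table keyed by the indicator vector of $S$. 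Since $|\mathcal T|=\tilde O(m)$, the total work is $\tilde O(m)\cdot O(n^{\lfloor 2\alpha\rfloor})\cdot O(m)=\tilde O(m^2 n^{\lfloor 2\alpha\rfloor})$, subsuming the preprocessing.

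Correctness, and simultaneously the $O(n^{\lfloor 2\alpha\rfloor})$ cardinality bound, would come from a standard averaging argument: for any cut $S$ with $|\partial S|\le\alpha\lambda$, the trees in $\mathcal T$ are edge-disjoint subgraphs of $H$, so the total number (counted with packing weight) of tree edges in $\partial S$ is at most $|\partial S|\le\alpha\lambda$. Dividing by the total packing weight $\lfloor\lambda/2\rfloor$ shows that some $T\in\mathcal T$ has at most $\alpha\lambda/\lfloor\lambda/2\rfloor$ of its edges in $\partial S$; since this count is an integer, it is at most $\lfloor 2\alpha\rfloor$, so $S$ is enumerated from $T$. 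Because each of the $\tilde O(m)$ trees contributes at most $O(n^{\lfloor 2\alpha\rfloor})$ distinct bipartitions, after deduplication at most $O(n^{\lfloor 2\alpha\rfloor})$ distinct $\alpha$-approximate minimum cuts remain (matching Karger's random-contraction bound, which may alternatively be cited to directly justify the count).

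The only mildly delicate point is the integer-rounding at the boundary: when $2\alpha\in\mathbb Z$, or when $\lambda$ is very small, the averaging ratio $\alpha\lambda/\lfloor\lambda/2\rfloor$ may slightly exceed $\lfloor 2\alpha\rfloor$. This is routine to handle by either (i) adding one extra tree to the packing to strengthen the averaging, (ii) replacing $\lfloor\lambda/2\rfloor$ by $\lceil\lambda/2\rceil$-packings via the fractional Nash--Williams formula, or (iii) brute-force enumerating all cuts when $\lambda=O(1)$, which is trivially within the budget. This is the main (but essentially bookkeeping) obstacle; the structural steps above are otherwise standard.
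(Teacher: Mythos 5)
Your proposal is correct and follows essentially the same route as the paper: compute a maximum spanning-tree packing, use the averaging argument to show every $\alpha$-approximate minimum cut $\lfloor 2\alpha\rfloor$-respects some tree, enumerate all such cuts from each of the $\tilde O(m)$ trees, and filter by value. The only cosmetic difference is that the paper obtains the packing via Barahona's algorithm and cites Karger's crossing bound directly, whereas you build the packing via Gabow/Nash--Williams and re-derive the averaging (your boundary-rounding concern disappears once you use a maximum fractional packing of value at least $\lambda/2$, exactly as the paper does).
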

\begin{proof}
First, compute a maximum (spanning) tree packing of $H$, which can be done in $\tilde{O}(mn)$ time~\cite{barahona1995packing} to produce a packing of $\tilde{O}(m)$ trees. It was shown in~\cite{karger2000minimum}) that for any $\alpha$-approximate minimum cut $S$, there is a tree $T$ in a maximum tree packing for which $\partial_TS\le2\alpha$, i.e., at most $2\alpha$ many edges of $T$ cross the cut $S$. The algorithm iterates over all $\tilde{O}(m)$ trees in the packing, and for each tree $T$, considers all ways to remove at most $2\alpha$ edges, group the components together into two sides of a cut, and determine the size of the cut. For constant $\alpha$, this takes $O(mn^{\lfloor2\alpha\rfloor})$ time per tree, for a total of $\tilde{O}(m^2n^{\lfloor2\alpha\rfloor})$ time. The algorithm then computes the minimum cut $\lambda$ over this collection, which is the minimum cut of $G$, and outputs the subcollection of cuts of value at most $\alpha \lambda$. The bound of $O(n^{\lfloor2\alpha\rfloor})$ for the number of such cuts was shown in \cite{karger2000minimum}.
\end{proof}
We now prove \Cref{lem:small-clusters}.
The algorithm for \Cref{lem:small-clusters} has two stages. The first stage, described in the next lemma, decomposes $A$ into clusters such that (a) each has boundary size $O(\tilde\lambda)$ and minimum cut in the induced graph of at least $0.49 \tilde \lambda$ and (b) the size of the sum of their boundaries is small, namely $O(\epsilon^{-1}\partial A)$. These clusters are further broken down in the next stage given in \Cref{lem:small-clusters-2}. To shorten the notation given a cluster $A$ and a set $U \subseteq A$, we use $\partial U$ to denote $w(U, \bar U)$ and
$\partial_{G[A]}U$ 
to denote $w(U, A \setminus U)$.
\begin{lemma}\label{lem:small-clusters-1}
Let $A\subseteq V$ be a cluster. Suppose an algorithm is given a mincut estimate $\tilde\lambda\in[\lambda,1.01\lambda]$. Then, the algorithm can decompose $A$ into a disjoint union of clusters $A_1\cup A_2\cup\cdots\cup A_k=A$ such that
 \begin{enumerate}
 \item For each cluster $A_i$, either $\partial A_i\le3\tilde\lambda$ or there is no $(1-\epsilon)$-boundary-sparse set $U$ in $A_i$ with $\partial_{G[A_i]}U\le1.01\tilde\lambda$.\label{item:small-clusters-1-1}
 \item The sum of boundaries $\sum_i\partial A_i$ is $O(\epsilon^{-1}\partial A)$.\label{item:small-clusters-1-2}
 \item For each cluster $A_i$, the induced graph $G[A_i]$ has minimum cut at least $0.49\tilde\lambda$.\label{item:small-clusters-1-3}
 \end{enumerate}
The algorithm takes time $\tilde{O}(|A|^9)$.
\end{lemma}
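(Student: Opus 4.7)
The plan is an iterative refinement algorithm. Starting from $\mathcal P=\{A\}$, I repeatedly locate a violation of property~(1) or~(3) in some cluster $A_i\in\mathcal P$ and split $A_i$ along an appropriate cut; the procedure terminates when no cluster violates either property. Two subroutines identify violations in $A_i$: I first compute $\lambda_{G[A_i]}$ with a standard min-cut algorithm and, if it is below $0.49\tilde\lambda$, record the minimizing cut as a \emph{Type-3 violation}; otherwise, if $\partial A_i>3\tilde\lambda$, I enumerate all cuts $U\subseteq A_i$ with $\partial_{G[A_i]}U\le 1.01\tilde\lambda$ via \Cref{thm:enumerate-cuts} and check each for $(1-\epsilon)$-boundary-sparseness, recording any such $U$ as a \emph{Type-1 violation}. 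Processing Type-3 violations first guarantees $\lambda_{G[A_i]}\ge 0.49\tilde\lambda$ when enumeration is invoked, so the enumerated cuts are at most $\approx\!2.06$-approximate minimum cuts of $G[A_i]$; there are $O(|A_i|^4)$ of them and enumeration takes $\tilde O(|A_i|^8)$.

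The crux of the proof is bounding the total boundary for property~(2), via the potential
\[ \Phi \;=\; \sum_i (\partial A_i - 3\tilde\lambda)^+ \;+\; c\tilde\lambda\cdot\bigl|\{i:\partial A_i>3\tilde\lambda\}\bigr|, \]
with a small constant $c\in(0,0.98)$ (say $c=\epsilon$), giving $\Phi=O(\partial A)$ initially. I claim every Type-1 split decreases $\Phi$ by $\Omega(\epsilon\tilde\lambda)$ through a case analysis. When both resulting clusters remain ``big'' (boundary $>3\tilde\lambda$), the first sum changes by $2Y-3\tilde\lambda\le -0.98\tilde\lambda$ against a $c\tilde\lambda$ increase from the count term. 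When only the smaller-external-boundary side becomes ``small'' (which is necessarily the side with smaller total boundary), the change is $Y-X\le -\epsilon X\le -0.49\epsilon\tilde\lambda$, using $X>Y/(1-\epsilon)\ge\lambda_{G[A_i]}/(1-\epsilon)\ge 0.49\tilde\lambda$ from property~(3). When both sides become small, the first sum drops by $\partial A_i-3\tilde\lambda>0$ while the count term drops by $c\tilde\lambda$. Hence there are $O(\partial A/(\epsilon\tilde\lambda))$ Type-1 splits, contributing $O(\partial A/\epsilon)$ to the total boundary since each adds $2Y\le 2.02\tilde\lambda$.

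For Type-3 splits I observe that any cut of weight below $0.49\tilde\lambda$ inside a cluster with $\partial A_i>3\tilde\lambda$ is automatically $(1-\epsilon)$-boundary-sparse---since $\partial_G U\ge\lambda\ge 0.99\tilde\lambda$ forces each external side to exceed $0.5\tilde\lambda$, so $0.49<(1-\epsilon)\cdot 0.5$ for $\epsilon\le 0.02$---and is therefore already a Type-1 violation. Consequently, Type-3 splits occur only in clusters with $\partial A_i\le 3\tilde\lambda$. Within any such small cluster, comparing the child boundary upper bound $\partial A_i+0.98\tilde\lambda\cdot T_i$ (from $Y<0.49\tilde\lambda$) with the lower bound $0.99\tilde\lambda(T_i+1)$ (from $\partial A_j\ge\lambda$ on each resulting cluster) yields $T_i=O(\partial A_i/\tilde\lambda)$; summing gives an additional $O(\partial A/\epsilon)$ boundary, establishing property~(2).

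The total number of splits is $O(|A|)$, each requiring one cut enumeration at cost $\tilde O(|A|^8)$, so the running time is $\tilde O(|A|^9)$. The main obstacle I anticipate is calibrating the potential $\Phi$ correctly so that every Type-1 split yields an $\Omega(\epsilon\tilde\lambda)$ decrease; in particular, the ``both sides become small'' subcase can yield an arbitrarily small decrease in the first sum alone, so the auxiliary count term with appropriately chosen $c$ is essential to guarantee progress. A secondary subtlety is that a Type-1 split may re-open Type-3 violations in its children, requiring alternation between the two subroutines to keep both invariants valid until termination.
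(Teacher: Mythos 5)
Your proposal is correct and follows essentially the same route as the paper: split iteratively along $(1-\epsilon)$-boundary-sparse sets found by the cut enumeration of \Cref{thm:enumerate-cuts} (with $\alpha=2.1$) or along sub-$0.49\tilde\lambda$ minimum cuts, and charge the added boundary to a potential that penalizes clusters with large boundary. The only differences are organizational — the paper runs two sequential phases with potentials $\sum\max\{0,\partial A'-2.1\tilde\lambda\}$ and $\sum(\partial A'-0.98\tilde\lambda)$, whereas you interleave the two split types under a single combined potential plus a leaf-counting argument for splits inside small clusters — and both accountings are valid (in particular, your implicit assumption that a low-min-cut split of a cluster with $\partial A_i\le3\tilde\lambda$ produces only small children holds, since the children's boundaries sum to at most $3.98\tilde\lambda$ and each is at least $0.99\tilde\lambda$).
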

\begin{proof}
We first describe an algorithm that fulfills the first two properties. At the end, we augment it to satisfy property~(\ref{item:small-clusters-1-3}).

\emph{ Property \ref{item:small-clusters-1-1}.}
Given a cluster $A$ the algorithm maintains a current decomposition $\mathcal D$, initialized to the singleton set $\{A\}$. The algorithm iteratively considers sets in $\mathcal D$ that have not been processed yet until all have been processed, starting with $A$. For a given set $A'$, the algorithm does the following.

First, if $\partial A'\le3\tilde\lambda$, then the algorithm keeps $A'$ as is, declares $A'$ as processed, and proceeds to the next set in $\mathcal D$. Otherwise, it applies the algorithm of \Cref{thm:enumerate-cuts} to the induced graph $G[A']$ for $\alpha=2.1$, which takes $\tilde{O}(|A'|^8)$ time, and let $\mathcal C$ be the set of returned cuts. Compute the cut $U^* \in\mathcal C$ such that  $\partial_{G[A']}U^*$ is the smallest cut over all cuts in $\mathcal C$. 
By the choice of $\mathcal C$, $U^*$  is the minimum cut in $G[A']$. We now have a few cases.
\begin{enumerate}[(a)]
\item If $\partial_{G[A']}U^*\le0.49\tilde\lambda$, then the algorithm decomposes $A'$ into $U^*$ and $A'\setminus U^*$, i.e., it removes $A'$ from $\mathcal D$ and adds the unprocessed sets $U^*$ and $A'\setminus U^*$ to $\mathcal D$. (The sets $U^*$ and $A'\setminus U^*$ are processed at a later iteration.)\label{item:small-clusters-case-1}
\item If $\partial_{G[A']}U^*>0.49\tilde\lambda$, then it follows from our choice of $\alpha=2.1$ that the collection $\mathcal C$ contains all sets $U\subseteq A'$ with $\partial_{G[A']}U\le1.01\tilde\lambda$. The algorithm iterates over all sets in $\mathcal C$ and looks for one that is $(1-\epsilon)$-boundary-sparse and has boundary size $\le 1.01 \tilde \lambda$. If a $(1-\epsilon)$-boundary-sparse set with boundary size $\le 1.01 \tilde \lambda$ does not exist, then the algorithm keeps $A'$ as is, declares $A'$ as processed, and proceeds to the next set in $\mathcal D$. Otherwise, let $U$ be an arbitrary $(1-\epsilon)$-boundary-sparse set with boundary size $\le 1.01 \tilde \lambda$. The algorithm decomposes $A'$ into $U$ and $A'\setminus U$, i.e., removes $A'$ from $\mathcal D$ and adds the unprocessed sets $U$ and $A'\setminus U$ into $\mathcal D$. (The sets $U$ and $A'\setminus U$ are processed at a later iteration.)\label{item:small-clusters-case-2}
\end{enumerate}

All sets in the final decomposition satisfy property~(\ref{item:small-clusters-1-1}) by construction. 
As each iteration of the algorithm processes one set, each set is processed only once, and the collection of processed sets forms a laminar family, there are at most $|A|$ many iterations.
As each iteration takes time $\tilde{O}(|A|^8)$, the total running time is $\tilde{O}(|A|^9)$. 

\emph{ Property \ref{item:small-clusters-1-2}.}
Let $\mathcal D_1$ be the decomposition of $A$ at the end of the algorithm.
If $\partial A\le3\tilde\lambda$ then the algorithm does nothing and property~(\ref{item:small-clusters-1-2}) is trivially true. Otherwise, assume that the algorithm decomposes $A$. Throughout the algorithm, we maintain a potential function
\[ \Phi(\mathcal D) = \sum_{A'\in\mathcal D}\max\{0,\partial A'-2.1\tilde \lambda\} .\]
The initial value of $\Phi(\mathcal D) = 
\Phi(\{A\}) = O(\partial A).$
We will show (1) that $\Phi(\mathcal D)$ does not increase throughout the decomposition, which implies that
$\Phi(\mathcal D_1) \le \Phi(\{A\})$. Moreover, we show that (2) each time a set $A'$ with $\partial A'\le2.2\tilde\lambda$ is added to $\mathcal D$, the potential $\Phi(\mathcal D)$ drops by at least $\Omega(\epsilon\tilde\lambda)$. These two claims together imply property~(\ref{item:small-clusters-1-2}) because
\begin{align*}
\sum_{A' \in \mathcal D_1}\partial A'&= \sum_{A' \in \mathcal D_1:\partial A'\le2.2\tilde\lambda}\partial A' + \sum_{A' \in \mathcal D_1:\partial A'>2.2\tilde\lambda}\partial A'
\\&\le \frac{\Phi(\{A\})}{\Omega(\epsilon \tilde \lambda)} \cdot 0.1 \tilde \lambda + O(\Phi(\mathcal D_1)) \le
O(\Phi(\{A\})/\epsilon)\le O(\partial A/\epsilon) .
\end{align*}

To show the two claims, suppose a set $A'$ is decomposed into $U$ and $A'\setminus U$ for some $U\subseteq A'$.
By construction, we always have $\partial_{G[A']}U \le 1.01 \tilde \lambda$. 
If $\partial U$ and $\partial(A'\setminus U)$ are both at least  $2.1\tilde\lambda$, then the net increase in $\Phi(\mathcal D)$ is
\begin{align*}
& \max\{0,\partial U-2.1\tilde\lambda\} + \max\{0,\partial(A'\setminus U)-2.1\tilde\lambda\}-\max\{0,\partial A'-2.1\tilde\lambda\}
\\&= (\partial U-2.1\tilde\lambda) + (\partial(A'\setminus U)-2.1\tilde\lambda) - (\partial A'-2.1\tilde\lambda)
\\&= 2w(U,A'\setminus U) - 2.1\tilde\lambda \le 2\cdot 1.01\tilde\lambda-2.1\tilde\lambda < -0.08\tilde\lambda.
\end{align*}
If $\partial U$ and $\partial(A'\setminus U)$ are both at most $2.1\tilde\lambda$, then the net increase in $\Phi(\mathcal D)$ is
\begin{align*}
& \max\{0,\partial U-2.1\tilde\lambda\} + \max\{0,\partial(A'\setminus U)-2.1\tilde\lambda\}-\max\{0,\partial A'-2.1\tilde\lambda\}
\\&\le 0 + 0- (\partial A'-2.1\tilde\lambda) \le  0 + 0- (3\tilde\lambda-2.1\tilde\lambda) \le -0.9 \tilde\lambda.
\end{align*}
Else, suppose without loss of generality that $\partial U\le2.1\tilde\lambda\le\partial(A'\setminus U)$. The net increase in $\Phi(\mathcal D)$ is
\begin{align*}
& \max\{0,\partial U-2.1\tilde\lambda\} + \max\{0,\partial(A'\setminus U)-2.1\tilde\lambda\}-\max\{0,\partial A'-2.1\tilde\lambda\}
\\&= 0 + (\partial(A'\setminus U)-2.1\tilde\lambda) - (\partial A'-2.1\tilde\lambda)
\\&= \partial(A'\setminus U)-\partial A'
\\&=w(U,A'\setminus U)-w(U,\overline{A'}) .
\end{align*}
If $\partial_{G[A']}U\le0.49\tilde\lambda$, i.e., the algorithm went through case~(\ref{item:small-clusters-case-1}), then
\[ w(U,\overline{A'}) = \partial U-w(U,A'\setminus U) \ge \lambda - 0.49\tilde\lambda \ge 0.99\tilde\lambda-0.49\tilde\lambda=0.5\tilde\lambda ,\]
and the net increase in $\Phi(\mathcal D)$ is
\[ w(U,A'\setminus U)-w(U,\overline{A'}) \le 0.49\tilde\lambda-0.5\tilde\lambda < -0.01 \tilde\lambda. \]
If  $\partial_{G[A']}U>0.49\tilde\lambda$, i.e., case~(\ref{item:small-clusters-case-2}), then $U$ is $(1-\epsilon)$-boundary-sparse in $A'$ by construction, so the net increase in $\Phi(\mathcal D)$ is
\[ w(U,A'\setminus U)-w(U,\overline{A'}) \le w(U,A'\setminus U) - \frac{w(U,A'\setminus U)}{1-\epsilon} \le -\Omega(\epsilon w(U,A'\setminus U)) \le -\Omega(\epsilon \tilde\lambda). \]
This concludes the proof that the algorithm that fulfills the properties~(\ref{item:small-clusters-1-1})~and~(\ref{item:small-clusters-1-2}).

\emph{Property \ref{item:small-clusters-1-3}.}
To satisfy property~(\ref{item:small-clusters-1-3}), we take the decomposition $\mathcal D_1$ after termination of the algorithm above and, while there exists a cluster $A'$ in the current decomposition whose induced graph $G[A']$ has minimum cut $0.49\tilde\lambda$ or less, we find the minimum cut $U^*\subseteq A'$ and decompose $A'$ into $U^*$ and $A'\setminus U^*$. Computing the minimum cut can be done by applying \Cref{thm:enumerate-cuts} for $\alpha=1$, which takes $\tilde{O}(|A'|^6)$ time.

We claim that this additional step increases $\sum_{A'\in\mathcal D_1}\partial A'$ by only a constant factor. Let $\mathcal D$ be the decomposition during this process and let $\mathcal D_{2}$ be the decomposition at the end of this process. Consider the potential function
\[ \Phi'(\mathcal D)=\sum_{A'\in\mathcal D}(\partial A'-0.98\tilde\lambda) .\]
Note that $\Phi'(\mathcal D_1) \le 
\sum_{A'\in\mathcal D_1}\partial A'
$.
Since $\partial A'\ge\lambda\ge\frac1{1.01}\tilde\lambda\ge0.99\tilde\lambda$, we have $\partial A'\le 99(\partial A'-0.98\tilde\lambda)$ and $\sum_{A'\in\mathcal D}\partial A'\le O(\Phi'(\mathcal D))$. 
In particular, $\sum_{A'\in\mathcal D_2}\partial A' \le O(\Phi'(\mathcal D_2))$.
We will show that $\Phi'(\mathcal D)$ does not increase over time. 
This shows the claim as it implies that $\sum_{A'\in\mathcal D_2}\partial A'\le O(\Phi'(\mathcal D_2))
\le
O(\Phi'(\mathcal D_1))
=
O(\sum_{A'\in\mathcal D_1}\partial A')
$. To show that $\Phi'$ does not increase
upon decomposing some $A'\in\mathcal D$ into $U$ and $A'\setminus U$, the net increase of $\Phi'(\mathcal D)$ is
\[ (\partial U-0.98\tilde\lambda)+(\partial(A'\setminus U)-0.98\tilde\lambda)-(\partial A'-0.98\tilde\lambda) = 2w(U,A'\setminus U)-0.98\tilde\lambda \le 2\cdot0.49\tilde\lambda-0.98\tilde\lambda\le0 ,\]
proving the claim. This concludes the proof.
\end{proof}

Using \Cref{lem:small-clusters-1} as a first  step, we now apply the following lemma to each cluster with boundary at most $3\tilde\lambda$ output by \Cref{lem:small-clusters-1}.

\begin{lemma}\label{lem:small-clusters-2}
Let $A\subseteq V$ be a cluster satisfying the following two guarantees:
 \begin{itemize}
 \item $\partial A\le3\tilde\lambda$, and
 \item the minimum cut of $G[A]$ is at least $0.49\tilde\lambda$.
 \end{itemize}
Let $0<\epsilon\le0.01$ be a parameter. Suppose an algorithm is given a mincut estimate $\tilde\lambda\in[\lambda,1.01\lambda]$. Then, the algorithm can decompose $A$ into a disjoint union of clusters $A_1\cup A_2\cup\cdots\cup A_k=A$ such that
 \begin{enumerate}
 \item For any set $S\subseteq A$ with $\partial_{G[A]}S\le1.01\tilde\lambda$, there is a partition $\mathcal P$ of $\{A_1,A_2,\ldots,A_k\}$ such that for each part $P\in\mathcal P$, the set $S\cap\bigcup_{A'\in P}A'$ is non-$(1-\epsilon)$-boundary-sparse in $\bigcup_{A'\in P}A'$.\label{item:small-clusters-2-1}
 \item There are at most $O(\epsilon^{-2}(\log|A|)^{O(1)})$ many clusters, and each cluster $A_i$ satisfies $\partial A_i\le\partial A$.\label{item:small-clusters-2-2}
 \end{enumerate}
The algorithm runs in time $\tilde{O}(|A|^8)$.
\end{lemma}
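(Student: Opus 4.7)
The plan is to iteratively refine the decomposition $\mathcal D$, starting from $\{A\}$, by splitting clusters along $(1-\epsilon)$-boundary-sparse sub-cuts of boundary at most $1.01\tilde\lambda$; at termination, the singleton partition will witness property~(\ref{item:small-clusters-2-1}). Concretely, for each cluster $A'\in\mathcal D$, first compute the minimum cut $\lambda'$ of $G[A']$. If $\lambda'<0.49\tilde\lambda$, let $U^*$ realize it; since $\partial_G U^*\ge\lambda\ge 0.99\tilde\lambda$, we get $w(U^*,V\setminus A'),\, w(A'\setminus U^*,V\setminus A')>0.5\tilde\lambda$, and thus $\lambda'<0.49\tilde\lambda<(1-\epsilon)\cdot 0.5\tilde\lambda$ certifies $U^*$ as $(1-\epsilon)$-boundary-sparse in $A'$, so we split along it. Otherwise $\lambda'\ge 0.49\tilde\lambda$, and we invoke \Cref{thm:enumerate-cuts} with $\alpha=2.06$ on $G[A']$ to enumerate all $O(|A'|^4)$ cuts of boundary $\le 1.01\tilde\lambda$ in $\tilde O(|A'|^8)$ time, check each for $(1-\epsilon)$-boundary-sparsity in $A'$ (with respect to $G$) in $O(|A'|^2)$ time, and split if one exists; otherwise $A'$ is finalized. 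The algorithm halts when every cluster is finalized.

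Property~(\ref{item:small-clusters-2-1}) then follows from the termination condition. For any $S\subseteq A$ with $\partial_{G[A]}S\le 1.01\tilde\lambda$ and any final cluster $A_i$, the restriction $S\cap A_i$ satisfies $\partial_{G[A_i]}(S\cap A_i)\le\partial_{G[A]}S\le 1.01\tilde\lambda$ and so, by termination, is non-$(1-\epsilon)$-boundary-sparse in $A_i$. Hence $\mathcal P=\{\{A_1\},\ldots,\{A_k\}\}$ discharges the property. For the boundary bound in property~(\ref{item:small-clusters-2-2}), each $(1-\epsilon)$-boundary-sparse split of $A'$ into $U, A'\setminus U$ satisfies $w(U,A'\setminus U)<(1-\epsilon)w(A'\setminus U,V\setminus A')<w(A'\setminus U,V\setminus A')$, which immediately yields $\partial U<\partial A'$; symmetrically $\partial(A'\setminus U)<\partial A'$. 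By induction $\partial A_i\le\partial A$ for every final cluster.

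The main obstacle is bounding $|\mathcal D|\le O(\epsilon^{-2}(\log|A|)^{O(1)})$. I would establish this with a potential function $\Phi(\mathcal D)=\sum_{A'\in\mathcal D}\phi(\partial A')$ for an appropriately chosen $\phi$, showing that $\Phi(\{A\})\le O(\tilde\lambda\cdot(\log|A|)^{O(1)})$ and that each $(1-\epsilon)$-boundary-sparse split decreases $\Phi$ by $\Omega(\epsilon^2\tilde\lambda)$. The $\epsilon^2$ factor would arise by combining the first-order gain $w(U,A'\setminus U)-w(A'\setminus U,V\setminus A')\le-\epsilon\cdot w(A'\setminus U,V\setminus A')$ provided by $(1-\epsilon)$-boundary-sparsity with a second-order quantity (e.g.\ $\phi(x)\sim(x-c)^2$ for a tuned threshold $c$, in the spirit of sum-of-squares potential analyses). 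Alternatively, I would run the procedure in $O((\log|A|)/\epsilon)$ phases, each producing $O(1/\epsilon)$ splits in the style of \Cref{lem:small-clusters-1}'s argument, again yielding the desired bound.

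The running time is dominated by cut enumeration via \Cref{thm:enumerate-cuts}, which on cluster $A'$ costs $\tilde O(|A'|^8)$. Since at most $O(\epsilon^{-2}(\log|A|)^{O(1)})$ iterations occur and each enumerates on a cluster of size at most $|A|$, the total is $\tilde O(|A|^8)$ after absorbing the $\epsilon^{-2}(\log|A|)^{O(1)}$ factor into $\tilde O(\cdot)$ (which is legitimate for the $\epsilon=1/\polylog|V|$ used by the outer algorithm).
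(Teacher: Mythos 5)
Your algorithm has the right general shape (recursive splitting along $(1-\epsilon)$-boundary-sparse cuts found via \Cref{thm:enumerate-cuts}, plus the observation that boundary sizes never increase across a sparse split), but the heart of the lemma --- the bound of $O(\epsilon^{-2}(\log|A|)^{O(1)})$ on the number of clusters --- is exactly the part you leave open, and neither of your two sketched routes closes it. The first-order gain you compute only shows that each split decreases the boundary of \emph{each child} by $\Omega(\epsilon\tilde\lambda)$ relative to its parent; since boundaries live in $[0.99\tilde\lambda,3\tilde\lambda]$, this bounds the recursion \emph{depth} by $O(1/\epsilon)$ but says nothing about the branching, so the leaf count could a priori be $2^{\Omega(1/\epsilon)}$ --- superpolynomial for the $\epsilon=1/\polylog|V|$ used downstream. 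A potential of the form $\sum_{A'}\phi(\partial A')$ cannot fix this, because $\sum_{A'}\partial A'$ genuinely \emph{increases} by $2w(U,A'\setminus U)=\Omega(\tilde\lambda)$ at every split while each term stays in a window of width $O(\tilde\lambda)$; there is no slack of order $\epsilon^{2}\tilde\lambda$ per split to harvest. Indeed, for your algorithm (which re-enumerates cuts of each new sub-cluster and insists that \emph{no} final cluster contain any boundary-sparse set) I do not believe the claimed count holds: splitting along one sparse cut can create fresh sparse cuts in the children, and nothing in your argument prevents this cascading.

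The paper escapes this in a structurally different way, and the differences are exactly the missing ideas. It enumerates the candidate cuts \emph{once} on $G[A]$, maintains this collection $\mathcal C'$ while pruning cuts that become non-sparse, and --- crucially --- \emph{weakens} property~(\ref{item:small-clusters-2-1}) so that the partition $\mathcal P$ may group several final clusters into one part (the part being the entire decomposition of the instance $A'$ at which $S$ was pruned); your singleton-partition claim is stronger than what is needed and is precisely what forces the uncontrolled recursion. The counting then goes through a three-parameter recurrence $f(b,c,d)$ on the boundary $\partial A'$, the maximum residual cut size $c(A')$, and $|A'|$, using (i) a threshold case ($\partial_{G[A']}(C\cap A')\le0.4\tilde\lambda$) in which both children lose $0.19\tilde\lambda$ of boundary, and (ii) minimality of $|C\cap A'|$, which guarantees $|C\cap A'|\le|A'|/2$ and that every surviving cut also crosses the other side, so that one of the two branches either halves $d$ or shrinks $c$ by a constant factor. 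Without some analogue of these mechanisms your branching is unbounded, so the proposal as written has a genuine gap. (A minor secondary point: with $\alpha=2.06$ and the threshold $\lambda(G[A_i])\ge0.49\tilde\lambda$ you get $2.06\cdot0.49\tilde\lambda=1.0094\tilde\lambda<1.01\tilde\lambda$, so the enumeration can miss cuts you need; take $\alpha=2.1$ as in the paper.)
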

\begin{proof}
Apply \Cref{thm:enumerate-cuts} for $\alpha=2.1$ to obtain, in time $\tilde{O}(|A|^8)$, a collection $\mathcal C_0$ of $\tilde{O}(|A|^4)$ sets that contains all $\emptyset\subsetneq U\subsetneq A$ with $\partial_{G[A]}U\le1.01\tilde\lambda$. Take the subcollection $\mathcal C\subseteq\mathcal C_0$ of all $U\in\mathcal C$ with $\partial_{G[A]}U\le1.01\tilde\lambda$.   The rest of the algorithm recursively decomposes an input cluster $A'$ (starting with $A'=A$) while keeping a set $\mathcal C'\subseteq\mathcal C$ that starts as $\mathcal C'=\mathcal C$ and may lose elements over time.

Initially, the algorithm removes from $\mathcal C'$ all cuts $C\in\mathcal C'$ such that $C\cap A'$ is not $(1-\epsilon)$-boundary sparse in $A'$. We call this the initial \emph{pruning} step. The recursive algorithm then considers the following cases in order:
 \begin{enumerate}[(a)]
 \item If there is $C\in\mathcal C'$ that crosses $A'$ such that $\partial_{G[A']}(C\cap A')\le0.4\tilde\lambda$, decompose $A'$ into $C\cap A'$ and $A'\setminus C$ and recursively call $C\cap A'$ and $A'\setminus C$ with the updated $\mathcal C'$.\label{item:small-clusters-2-case-1}
 \item Otherwise, if there exists $C\in\mathcal C'$ that crosses $A'$
 , take a set $C \in \mathcal C'$ with minimum $|C\cap A'|$ crossing $A'$. Decompose $A'$ into $C\cap A'$ and $A'\setminus C$ and recursively call $C\cap A'$ and $A'\setminus C$ with the updated $\mathcal C'$.\label{item:small-clusters-2-case-2}
 \item Otherwise, return the trivial decomposition $A'$.\label{item:small-clusters-2-case-3}
 \end{enumerate}
 \emph{Running time analysis.}
In the initial pruning step, the algorithm can check each of the $\tilde{O}(|A|^4)$ sets in $\mathcal C'$ in $O(|A|^2)$ time. As the set of processed sets forms a laminar family, there are $O(|A|)$ many recursive instances, so the recursive algorithm takes time $\tilde{O}(|A|^7)$. Together with the initial $\tilde{O}(|A|^8)$ time to compute $\mathcal C'$, this establishes the running time.

\emph{Correctness: Property (\ref{item:small-clusters-2-1}).}
We first show that the resulting decomposition satisfies property~(\ref{item:small-clusters-2-1}) of the lemma. Consider a set $S\subseteq A$ with $\partial_{G[A]}S\le1.01\tilde\lambda$. If $S=\emptyset$ or $S=A$, then property~(\ref{item:small-clusters-2-1}) is trivial because $\emptyset$ and $A$ are always non-$(1-\epsilon)$-boundary sparse in $A$. 

Thus we are left with proving property~(\ref{item:small-clusters-2-1}) if $S$ is a cut in $A$.
Since $\partial_{G[A]}S\le1.01\tilde\lambda$, we have $S\cap A\in\mathcal C$, so initially $S\cap A\in\mathcal C'$ on the root instance $A$. We now prove property~(\ref{item:small-clusters-2-1}) by induction on the recursion tree of the algorithm. As base cases, we consider instances $A'$ where $S$ is removed from $\mathcal C'$ in the initial pruning step. Since $S$ is removed from $\mathcal C'$, it is non-$(1-\epsilon)$-boundary sparse in $A'$, i.e.,~property~(\ref{item:small-clusters-2-1}) holds for the trivial partition with the entire decomposition of $A'$ as a single part. For instances $A'$ where $S\in\mathcal C'$ survives the initial pruning step, we apply induction on the two recursive instances $C\cap A'$ and $A'\setminus C$ and take the union of the two partitions, which is a partition of the decomposition returned by instance $A'$. 
It follows that property~(\ref{item:small-clusters-2-1}) holds.

\emph{Correctness: Property (\ref{item:small-clusters-2-2}).}
To show property~(\ref{item:small-clusters-2-2}), 
we first make two observations.
\begin{observation}\label{o:1}
If the algorithm decomposes by case~(\ref{item:small-clusters-2-case-1})
into $U=C\cap A'$ and $A'\setminus C=A'\setminus U$, then 
$\partial U \le  \partial A'-0.19\tilde\lambda$ and
$\partial(A' \setminus U) \le  \partial A'-0.19\tilde\lambda$.
\end{observation}
\begin{proof}
Suppose the algorithm decomposes into $U=C\cap A'$ and $A'\setminus C=A'\setminus U$. First observe that if the algorithm decomposes by case~(\ref{item:small-clusters-2-case-1}), then
\begin{align*}
&w(U,\overline{A'}) = \partial U-w(U,A'\setminus U) \ge \lambda - 0.4 \tilde\lambda \ge 0.99\tilde\lambda-0.4\tilde\lambda=0.59\tilde\lambda
\\\implies&\partial(A'\setminus U)=\partial A'-w(U,\overline{A'})+w(U,A'\setminus U) \le \partial A'-0.59\tilde\lambda+0.4\tilde\lambda = \partial A'-0.19\tilde\lambda
\end{align*}
and by symmetry, we also obtain $\partial U\le\partial A'-0.19\tilde\lambda$. In other words, boundary size decreases by at least $0.19\lambda$ on both recursive instances. 
\end{proof}
\begin{observation}\label{o:2}
If the algorithm decomposes by case~(\ref{item:small-clusters-2-case-2})
into $U=C\cap A'$ and $A'\setminus C=A'\setminus U$, then $\partial U \le  \partial A'-0.4 \epsilon \tilde\lambda$ and
$\partial(A' \setminus U) \le  \partial A'-0.4 \epsilon \tilde\lambda$.
\end{observation}
\begin{proof}
If the algorithm decomposes by case~(\ref{item:small-clusters-2-case-2}), then since $U$ is $(1-\epsilon)$-boundary-sparse in $A'$,
\begin{align*}
\partial(A'\setminus U)&=\partial A'-w(U,\overline{A'})+w(U,A'\setminus U)
\\&\le \partial A' - \frac1{1-\epsilon}w(U,A'\setminus U) + w(U,A'\setminus U) \le \partial A'-\epsilon w(U,A'\setminus U)\le\partial A'-0.4\epsilon\tilde\lambda
\end{align*}
and by symmetry, $\partial U\le\partial A'-0.4\epsilon\tilde\lambda$.
\end{proof}
The two observations together show that boundary size decreases by at least $\min\{0.19\tilde\lambda,0.4\epsilon\tilde\lambda\}$ on both recursive instances.  

We now upper bound the number of clusters by tracking the following three parameters of a cluster $A'$:
 \begin{enumerate}
 \item The boundary size $\partial A'$, and
 \item The maximum cut size within $A'$ of a cut in $\mathcal C'$ after the pruning step, defined by
 \[ c(A')=\max_{\substack{C\in\mathcal C'\\\text{after}\\\text{pruning}}}\partial_{G[A']}(C\cap A') ,\]
 which we define as $0$ if $\mathcal C'=\emptyset$, and
 \item The cluster size $|A'|$.
 \end{enumerate}
We claim that the first two parameters are at most $3\tilde\lambda$ and $1.01\tilde\lambda$, respectively. The first parameter is always at most $3\tilde\lambda$ since originally $\partial A\le3\tilde\lambda$ by assumption, and we have shown that boundary size never increases upon recursion. The second parameter is at most $1.01\tilde\lambda$ originally, and it never increases upon recursion since the value $\partial_{G[A']}(C\cap A')$ decreases as $A'$ gets smaller upon recursion and $\mathcal C'$ can only lose elements over time.

Let $f(b,c,d)$ be the maximum number of clusters in the final decomposition starting from a cluster $A'$ with $\partial A'\le b$, $c(A')\le c$, and $|A'|\le d$.
Our goal is to show that
$$f(b,c,d) \le O\left( \big(\frac{b}{\epsilon \tilde \lambda}\big)^2 (\log d)^{O(b/\tilde\lambda)}\right).$$
We recursively bound $f(b,c,d)$ by stepping through each case of the algorithm. To do so we first make two observations.

\begin{observation}\label{o:3}\label{o:casea}
If the algorithm decomposes by case~(\ref{item:small-clusters-2-case-1})
into $U=C\cap A'$ and $A'\setminus C=A'\setminus U$, then 
the maximum number of clusters in the final decomposition of each
recursive instance $U,A'\setminus U$ is upper bounded by $f(b-0.19 \tilde \lambda, c, d)$, i.e., we have 
$f(b,c,d) = 2f(b-0.19\tilde\lambda,c,d).$
\end{observation}
\begin{proof}
This observation follows directly from Observation~\ref{o:1} as
it shows that 
$\partial U,\partial(A'\setminus U)\le \partial A'-0.19\tilde\lambda \le b-0.19\tilde\lambda$.
\end{proof}

\begin{observation}\label{obs:smaller-than-half}
If the algorithm decomposes by case~(\ref{item:small-clusters-2-case-2}), we must have $|C\cap A'|\le|A'|/2$. Moreover, all sets $C'\in\mathcal C'$ that cross $C\cap A'$ must also cross $A'\setminus C$.
\end{observation}
\begin{proof}
For the first statement, if $C\cap A'$ is $(1-\epsilon)$-boundary-sparse, then so is $A'\setminus C$, and since the algorithm minimizes for $|C\cap A'|$, we must have $|C\cap A'|\le|A'\setminus C|$, which means that $|C\cap A'|\le|A'|/2$.

For the second statement, suppose for contradiction that there exists $C'\in\mathcal C'$ that crosses $C\cap A'$ but not $A'\setminus C$. Without loss of generality, by replacing $C'$ with $\overline{C'}$ if needed, we can assume that $C'$ is disjoint from $A'\setminus C$.
But then $|C' \cap A'| < |C\cap A'|$, and by minimality, $C'$ should have been selected over $C$, a contradiction.
\end{proof}

\begin{observation}\label{o:caseb}
If the algorithm decomposes by case~(\ref{item:small-clusters-2-case-2}) we obtain the recursive bound
\begin{align*}
f(b,c,d) \le \max \{ 1,\
 2 f(b-0.19\tilde\lambda,c,d) + f(b,c,d/2),
\ f(b,0.6c,d) + f(b-0.4\epsilon\tilde\lambda,c,d) \} 
\end{align*}
\end{observation}
\begin{proof}
Suppose that the algorithm decomposes by case~(\ref{item:small-clusters-2-case-2}) into $U=C\cap A'$ and $A'\setminus C=A'\setminus U$. By \Cref{obs:smaller-than-half}, we have $|U|\le|A'|/2$ and moreover, all sets $C'\in\mathcal C'$ that cross $U$ must also cross $A'\setminus U$. We case on whether $c(U)\ge0.6 c(A')$, i.e., whether there exists $C'\in\mathcal C'$ that crosses $U$ with $\partial_{G[U]}C'\ge0.6 c(A')$.
 \begin{enumerate}
 \item If 
 such a set $C'$ exists, then from $\partial_{G[U]}C'+\partial_{G[A'\setminus U]}C'\le\partial_{G[A']}C'$ we obtain $\partial_{G[A'\setminus U]}C'\le\partial_{G[A']}C'-\partial_{G[U]}C'\le c(A')-0.6c(A')=0.4c(A')$. Since $C'$ also crosses $A'\setminus U$, the recursive instance $A'\setminus U$ must go through case~(\ref{item:small-clusters-2-case-1}); let $A_1,A_2$ be the decomposition of $A'\setminus U$. Then by \Cref{o:1}, $\partial A_1,\partial A_2 \le \partial(A'\setminus U)-0.19\tilde\lambda\le \partial A'-0.19\tilde\lambda$. By induction on clusters $U,A_1,A_2$, the total number of clusters at the end is at most
\[  2 f(b-0.19\tilde\lambda,c,d) + f(b,c,d/2) .\]
 \item Otherwise, we have $c(U)<0.6c(A')$. By \Cref{o:2}, we have $\partial(A'\setminus U)\le\partial A'-0.4\epsilon\tilde\lambda$. By induction, the total number of clusters at the end is at most
\[ f(b,0.6c,d) + f(b-0.4\epsilon\tilde\lambda,c,d) .\]
 \end{enumerate}
Overall, if the algorithm decomposes by case~(\ref{item:small-clusters-2-case-2}) we obtain the recursive bound
\begin{align*}
f(b,c,d) \le \max \{ & 1,
\\& 2 f(b-0.19\tilde\lambda,c,d) + f(b,c,d/2),
\\& f(b,0.6c,d) + f(b-0.4\epsilon\tilde\lambda,c,d) \}.
\end{align*}
\end{proof}
Next we state the base cases for our induction that bounds $f(b,c,d)$.
\begin{enumerate}
    \item Since we always have $\partial A'\ge \lambda>0.99\tilde\lambda$, we have the vacuous bound
$f(b,c,d)=0$ for $b\le0.99\tilde\lambda$.
    \item If $\mathcal C'=\emptyset$, no further recursion is necessary. Thus, $f(b,0,d) = 1$.
    \item If $|A'| = 1$, the cluster cannot be cut any further. Thus, $f(b,c,1) = 1$.
\end{enumerate}

We now bound $f(b,c,d)$ using a case analysis depending on the value of the second parameter. As shown by Observation~\ref{o:casea} and ~\ref{o:caseb}, the value of the second parameter never increases. Furthermore, if it drops to at most $0.4 \tilde \lambda$ on a set $A'$ then the algorithm always executes case~(\ref{item:small-clusters-2-case-1}) for all subsequent recursive calls on subsets of $A'$. Thus, we bound this setting first.

\emph{Case 1:} If $c(A')\le0.4\tilde\lambda$, then the algorithm either takes case~(\ref{item:small-clusters-2-case-1}) or does nothing (in the case $c(A')=0$), so
\[ f(b,c,d) \le \max \{ 1, 2f(b-0.19\tilde\lambda,c,d) \} \text{ for }c\le0.4\tilde\lambda .\]
Every recursive case still has $c(A')\le0.4\tilde\lambda$, so the algorithm only makes recursive calls through case~(\ref{item:small-clusters-2-case-1}). The number of recursive branches doubles for at most $\lceil\frac{b-0.99\tilde\lambda}{0.19\tilde\lambda}\rceil \le \frac{b-0.99\tilde\lambda}{0.19\tilde\lambda} + 1 \le \frac{b}{0.19\tilde\lambda} \le \frac{6b}{\tilde \lambda}$ levels of recursion and $2^x \ge 1$ for all positive $x$, so it holds that 
$f(b,c,d) \le  2^{6b/\tilde\lambda}$.


\emph{Case 2:} For $c\in(0.4\tilde\lambda,0.61\tilde\lambda]$, 
the algorithm either takes case~(\ref{item:small-clusters-2-case-2}) or
 case~(\ref{item:small-clusters-2-case-3}). 
We claim the bound
\[ f(b,c,d) \le \frac b{0.4\epsilon\tilde\lambda} (2+2\log_2d)^{6b/\tilde\lambda} .\]
It is trivially true for $b\le0.99\tilde\lambda$ and $d=1$.
Note that $0.6c\le0.6\cdot0.61\tilde\lambda\le0.4\tilde\lambda$ so that a call
on a set with second parameter $0.6c$ guarantees that case (\ref{item:small-clusters-2-case-1}) 
is executed on this call and we can apply the bound from Case 1.
Now by induction on $b\ge0.99\tilde\lambda$ and $d>1$ it holds that
\begin{alignat*}{2}
f(b,c,d) &\le \max \{ && 1,
\\& && 2 f(b-0.19\tilde\lambda,c,d) + f(b,c,d/2),
\\& && f(b,0.6c,d) + f(b-0.4\epsilon\tilde\lambda,c,d) \}
\\&\le \max\{ && 1,
\\& && 2 \cdot \frac b{0.4\epsilon\tilde\lambda} (2+2\log_2d)^{6(b-0.19\tilde\lambda)/\tilde\lambda} + \frac b{0.4\epsilon\tilde\lambda} (2+2\log_2(d/2))^{6b/\tilde\lambda},
\\& && 2^{6b/\tilde\lambda} + \frac{b-0.4\epsilon\tilde\lambda}{0.4\epsilon\tilde\lambda} (2+2\log_2d)^{6b/\tilde\lambda} \}
\\&\le \max\{ && 1,
\\& && 2 \cdot \frac b{0.4\epsilon\tilde\lambda} (2+2\log_2d)^{6b/\tilde\lambda-1} + \frac b{0.4\epsilon\tilde\lambda} (2+2\log_2d)^{6b/\tilde\lambda-1} \cdot (2+2\log_2 d - 2),
\\& && (2+2\log_2d)^{6b/\tilde\lambda} + \frac{b-0.4\epsilon\tilde\lambda}{0.4\epsilon\tilde\lambda} (2+2\log_2d)^{6b/\tilde\lambda} \}
\\& = && \frac b{0.4\epsilon\tilde\lambda}  (2+2\log_2d)^{6b/\tilde\lambda} .
\end{alignat*}

\emph{Case 3:} 
For $c\in[0.61\tilde\lambda,1.01\tilde\lambda]$, we claim the bound
\[ f(b,c,d) \le \left(\frac b{0.4\epsilon\tilde\lambda}\right)^2 (2+2\log_2d)^{6b/\tilde\lambda} .\]
It is trivially true for $b\le0.99\tilde\lambda$ and $d=1$.
Note that $0.6c\le0.6\cdot1.01\tilde\lambda\le0.61\tilde\lambda$, which implies that a call on a set with second parameter $0.6c$ guarantees that the bound from Case 2 can be applied to this call.
Now by induction on $b\ge0.99\tilde\lambda$ and $d>1$, it holds that
\begin{alignat*}{2}
f(b,c,d) &\le \max \{ && 1,
\\& && 2 f(b-0.19\tilde\lambda,c,d) + f(b,c,d/2),
\\& && f(b,0.6c,d) + f(b-0.4\epsilon\tilde\lambda,c,d) \}
\\&\le \max\{ && 1,
\\& && 2 \cdot \left(\frac b{0.4\epsilon\tilde\lambda}\right)^2 (2+2\log_2d)^{6(b-0.19\tilde\lambda)/\tilde\lambda} + \left(\frac b{0.4\epsilon\tilde\lambda}\right)^2 (2+2\log_2(d/2))^{6b/\tilde\lambda},
\\& && \frac b{0.4\epsilon\tilde\lambda} (2+2\log_2d)^{6b/\tilde\lambda} + \bigg(\frac{b-0.4\epsilon\tilde\lambda}{0.4\epsilon\tilde\lambda}\bigg)^2 (2+2\log_2d)^{6b/\tilde\lambda} \}
\\&\le \max\{ && 1,
\\& && 2 \cdot \left(\frac b{0.4\epsilon\tilde\lambda}\right)^2 (2+2\log_2d)^{6b/\tilde\lambda-1} + \left(\frac b{0.4\epsilon\tilde\lambda}\right)^2 (2+2\log_2d)^{6b/\tilde\lambda-1} \cdot (2+2\log_2d - 2),
\\& && \frac b{0.4\epsilon\tilde\lambda} (2+2\log_2d)^{6b/\tilde\lambda} + \bigg(\frac{b-0.4\epsilon\tilde\lambda}{0.4\epsilon\tilde\lambda}\bigg)\left(\frac b{0.4\epsilon\tilde\lambda}\right) (2+2\log_2d)^{6b/\tilde\lambda} \}
\\& = && \left(\frac b{0.4\epsilon\tilde\lambda}\right)^2  (2+2\log_2d)^{6b/\tilde\lambda} .
\end{alignat*}
Since $b\le3\tilde\lambda$, $c\le1.01\tilde\lambda$, and $d\le|A|$, we obtain the desired bound $O(\epsilon^{-2}(\log|A|)^{O(1)})$.
\end{proof}

Finally, we claim that \Cref{lem:small-clusters-1,lem:small-clusters-2} together imply \Cref{lem:small-clusters}. Applying \Cref{lem:small-clusters-1} to the initial cluster $A$ produces clusters $A'$ whose total boundary size is $O(\epsilon^{-1}\partial A)$. Each cluster $A'$ with $\partial A'>3\tilde\lambda$ has no $(1-\epsilon)$-boundary-sparse set $U\subseteq A$ with $\partial_{G[A]}A\le1.01\tilde\lambda$, so in particular, for every $1.01$-approximate minimum cut $S\subseteq V$ of $G$, the set $S\cap A'$ is non-$(1-\epsilon)$-boundary-sparse in $A'$.

For each cluster $A'$ with $\partial A'\le3\tilde\lambda$, \Cref{lem:small-clusters-2} produces $O(\epsilon^{-2}(\log|A'|)^{O(1)})$ many clusters, each of boundary size at most $\partial A'$, such that for every $1.01$-approximate minimum cut $S\subseteq V$ of $G$, there is a partition $\mathcal P$ of the decomposition such that for each part $P\in\mathcal P$, the set $S\cap\bigcup_{A'\in P}A'$ is non-$(1-\epsilon)$-boundary-sparse in $\bigcup_{A'\in P}A'$. 
Recall that the total boundary size of the output of \Cref{lem:small-clusters-1} is $O(\epsilon^{-1}\partial A)$.
It follows that the total boundary size of the outputs of \Cref{lem:small-clusters-2} over all $A'$ is $O(\epsilon^{-1}\partial A) \cdot O(\epsilon^{-2}(\log|A|)^{O(1)}) = O(\epsilon^{-3}(\log|A|)^{O(1)}\partial A)$.

Given a $1.01$-approximate minimum cut $S\subseteq V$ of $G$, we define the partition $\mathcal P$ so that every cluster $A'$ with $\partial A'>3\tilde\lambda$ produced by \Cref{lem:small-clusters-1} is a singleton part in $\mathcal P$, and for each cluster $A'$ that is further decomposed by \Cref{lem:small-clusters-2}, we add to $\mathcal P$ the partition guaranteed property~(\ref{item:small-clusters-2-1}) of \Cref{lem:small-clusters-2}.

By construction, for each part $P$ in the final partition $\mathcal P$, the set $S\cap\bigcup_{A'\in P}A'$ is non-$(1-\epsilon)$-boundary sparse in $\bigcup_{A'\in P}A'$. For the running time, the algorithm calls \Cref{lem:small-clusters-1} once and \Cref{lem:small-clusters-2} up to $|A|$ times, for a total running time of $\tilde{O}(|A|^9)$. This concludes the proof of \Cref{lem:small-clusters}.

\subsection{Large clusters}
\label{sec:bigpieces}

In this section, we present a fast algorithm to break down a large cluster $A$, i.e., a cluster with $\Omega(s_0)$ total volume, into a 
(potentially large) cluster $A_0$ and a (potentially empty) collection of small clusters $A_1, \dots, A_r$. The key property of cluster $A_0$ is that any near-minimum cut $U$ with small volume can be transformed, or ``uncrossed'', into a small enough cut $U^*$ with $U^*\cap A_0=\emptyset$. More formally, the main result of this section is the following.
\FinalCuts

We begin with a high-level description of the algorithm. In the rest of the section let $A$ be a large cluster.

\paragraph{Setting up the sources.}
To find cuts that are not uncrossable with $A$, the algorithm sets up a collection of flow problems. Let $U \subset A$ is a set with $\vol_G(U)\le s_0$ that is not uncrossable with $A$. More precisely, suppose that $w(U,V\setminus A) \ge (1+\epsilon)w(U,A\setminus U)$; we will use this definition of ``uncrossable'' for intuition in this section. Consider a flow problem in $G[A]$ where each vertex $v\in U$ has source $w(v,V\setminus A)$, each edge has flow capacity equal to its weight, and each vertex has sink capacity equal to $\eta d^W_G(v)$ for small enough $\eta>0$. Within $U$, there is $w(U,V\setminus A)$ source originating from $U$, at most $w(U,A\setminus U)$ flow leaves through the edges $\partial_{G[A]}U$, and at most $\eta\vol^W _G(U)\le\eta s_0$ flow that is absorbed in $U$. Since $w(U,V\setminus A) \ge (1+\epsilon)w(U,A\setminus U)$, the flow cannot be feasible as long as $\eta>0$ is small enough. It turns out that to make this argument work, we must set $\eta>0$ small enough that there is not enough sink to absorb all the source. Therefore, we split the source up into many sub-sources, each of them small enough to be fully absorbed. More formally, we construct a collection of subsets $S\subseteq A$ that are small---$O(s_0)$ volume each---such that for each $U$ with (i) $\partial_{G[A]}U\lesssim\tilde\lambda$ and (ii) $\vol^W (U)\le s_0$ that is (iii) not uncrossable, there exists a set $S$ containing $U$. For each set $S$, we construct a flow where only vertices $v\in S$ receive their $w(v,V\setminus A)$ of source (with edge capacities and sinks unchanged). This is not quite achievable for technical purposes, but we show that it is possible \emph{if the induced graph $G[U]$ has minimum cut $\Omega(\lambda)$}. In this case, if we run a variant of tree packing on $G[A]$, then there is a lot of ``room'' in $G[U]$ to pack trees, and the vertices in $U$ will be contained in the union of $O(1)$ many trees, even after a post-processing step where each tree is broken down into pieces of volume $O(s_0)$. We then stitch the trees together by connecting pieces together in a bounded fashion so that one of the resulting sets $S$ contains $U$.

\paragraph{Executing the flows.}
Consider a set $U$ with $\partial_{G[A]}U\lesssim\tilde\lambda$ and $\vol^W (U)\le s_0$ that is not uncrossable. In the previous step we have constructed a ``source set'' $S\subseteq A$ with $\vol^W (S)\le O(s_0)$ and $U \subseteq S$. We cannot afford to run a separate flow where each vertex of $S$ is a source, for each ``source set'' $S$, since each flow computation may take linear time. (Note that if the graph $G[A]$ were unweighted, then we can resort to local flow algorithms, but these do not have local guarantees in weighted graphs.)

Instead, we set up instances of \emph{approximate isolating cuts} which allow us to execute the flows of many sources at once. We partition the collection of sources into $\textup{polylog}(n)$ subcollections that are vertex-disjoint (which is a requirement for approximate isolating cuts) and solve an instance of the approximate isolating cuts problem for each subcollection, obtaining a cut $C_i\subseteq A$ for each source $S_i$ such that if $U\subseteq S_i$ then either $U\subseteq C_i$ or $U\setminus C_i$ is now uncrossable in $G[A\setminus C_i]$. We then remove all cuts $C_i$ from $A$, and continue to the instance of the next subcollection where we restrict to the remaining sources $S\cap A$ on the remaining induced graph $G[A]$. 

\paragraph{Removing the connected assumption.}
In general, $G[U]$ may not be well-connected. (It may even be disconnected, or very loosely connected.) The algorithm simply handles this issue by cycling through the approximate isolating cut problem instances some $\textup{polylog}(n)$ times. The analysis is a lot more complicated and technical, and we have to break down a general set $U$ into well-connected sets and analyze the progress the algorithm makes in each cycle.

\subsubsection{Setting up the sources}

Let $W=\tilde\lambda^3/(50s_0^2)$. Let $A \subseteq V$. The algorithm first constructs the following unweighted multi-graph $H$ supported on the same vertices and edges as $G[A]$: for each edge $e$ in $G[A]$, let $w_H(e)=\lfloor w_G(e)/W\rfloor$, i.e., there are $\lfloor w_G(e)/W\rfloor$ parallel edges between the endpoints of $e$ in $H$.

\paragraph{Well-connected case.}

We first treat the ``well-connected case'' established by the following lemma, whose proof is the main focus of this subsection. By well-connected, we refer to the condition in property~(\ref{item:well-connected-3}) below that $G[U']$ has minimum cut at least $0.1\tilde\lambda$.
\begin{lemma}\label{lem:well-connected}
Let $A\subseteq V$ be a cluster. There is an algorithm that runs in $\tilde{O}((s_0/\tilde\lambda)^{O(1)}|E(G[A])|)$ time and computes a collection $\mathcal X$ of subsets of $A$ such that
 \begin{enumerate}
 \item $\vol^W (X)\le3s_0$ for each $X\in\mathcal X$.\label{item:well-connected-1}
 \item Each vertex is in $O((s_0/\tilde\lambda)^5\ln m)$ many sets in $\mathcal X$. In particular, $|\mathcal X|\le O((s_0/\tilde\lambda)^5\ln m)|A|$.\label{item:well-connected-2}
 \item Consider a set $U\subseteq A$ such that $\partial_{G[A]}U\le1.01\tilde\lambda$ and $\vol^W (U)\le s_0$. Let $U'\subseteq U$ and suppose that the minimum cut in $G[U']$ is at least $0.1\tilde\lambda$. There exist a family of at most $2525$ subsets in $\mathcal X$ whose union contains $U'$.\label{item:well-connected-3}
 \end{enumerate}
\end{lemma}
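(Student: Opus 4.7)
The overall plan is to reduce to the sparse multi-graph $H$, pack spanning trees in $H$, and slice each tree into small-volume pieces whose vertex sets, together with bounded ``stitched'' unions, form the family $\mathcal X$. First I would record the consequences of the choice $W=\tilde\lambda^3/(50s_0^2)$. Because every vertex of $V$ has weighted degree at least $\lambda\ge\tilde\lambda/1.01$, any $U'$ satisfying the hypotheses of property~(\ref{item:well-connected-3}) obeys $|U'|\le s_0/\tilde\lambda$ and hence $|E(G[U'])|\le(s_0/\tilde\lambda)^2$. Substituting into the definition of $H$ yields $\partial_H(U')\le 1.01\tilde\lambda/W=O((s_0/\tilde\lambda)^2)$, while the minimum cut of $H[U']$ is at least $0.1\tilde\lambda/W-|E(G[U'])|=\Omega((s_0/\tilde\lambda)^2)$, i.e.\ $U'$ is extremely well-connected in $H$ relative to its own boundary.

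Next I would construct $\mathcal X$. Compute a tree packing of $H$ (via Gabow's algorithm, augmented by an iterative/fractional Nash-Williams packing if the global min-cut of $H$ is small) to obtain spanning trees $T_1,\ldots,T_k$. For each $T_i$, apply a Frederickson-style partition using $G$-weighted volumes to split $T_i$ into vertex-disjoint subtrees, each of $G$-weighted volume in $(s_0,3s_0]$, with one possibly smaller leftover. Form $\mathcal X$ by taking the vertex sets of these pieces together with bounded-size unions of adjacent pieces within each tree. Property~(\ref{item:well-connected-1}) is immediate from the volume bound of the partition and the stitching cap; property~(\ref{item:well-connected-2}) follows by accounting, since each vertex lies in exactly one piece per tree, each piece belongs to at most a bounded number of stitched sets, and $k=O((s_0/\tilde\lambda)^5\ln m)$ suffices for the subsequent averaging.

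The heart of the proof is property~(\ref{item:well-connected-3}). For any spanning tree $T$ of $H$ and $U'\subsetneq V(H)$, the forest $T[U']$ has at most $b_T:=|\{e\in T:|e\cap U'|=1\}|$ connected components, since each component must use at least one boundary edge of $T$ (as $T$ spans $V(H)$). Summing over the edge-disjoint packing, $\sum_i b_{T_i}\le\partial_H(U')=O((s_0/\tilde\lambda)^2)$, so averaging over a large enough packing produces an index $i^*$ with $b_{T_{i^*}}=O(1)$. Then $T_{i^*}[U']$ has only $O(1)$ connected components; each component lies inside a single subtree piece of $T_{i^*}$, and the adjacent-piece stitching in $\mathcal X$ allows these components to be covered by $O(1)$ sets of $\mathcal X$. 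Tracking the constants through the Frederickson cap, the stitching radius, and the averaging gives the explicit bound $2525$.

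The main technical obstacle is guaranteeing a usable packing of $H$ even when $H$'s global edge connectivity is small, despite $H[U']$ being locally very well-connected. I expect to resolve this either by working with a fractional packing weighted by tree ``usefulness'' (so that only trees contributing to the averaging for $U'$ enter the count), or by iterating the packing across subgraphs produced by low-conductance cuts and absorbing the extra iterations into the $\ln m$ factor of property~(\ref{item:well-connected-2}). The running time $\tilde O((s_0/\tilde\lambda)^{O(1)}|E(G[A])|)$ follows because $H$ has $O((s_0/\tilde\lambda)^2|E(G[A])|)$ edges and both Gabow's packing and the Frederickson partition run near-linearly in their input size.
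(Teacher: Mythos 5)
Your high-level plan (reduce to the multigraph $H$, pack trees, slice each tree into volume-bounded pieces, average over the packing) matches the paper's, but the averaging step at the heart of property~(\ref{item:well-connected-3}) is set up on the wrong set, and this is a genuine gap. You bound $\sum_i b_{T_i}$ by $\partial_H(U')$ and claim $\partial_H(U')\le 1.01\tilde\lambda/W=O((s_0/\tilde\lambda)^2)$. That bound only holds for $U$, not for $U'$: the hypothesis controls $\partial_{G[A]}U$, while $\partial_{G[A]}U'$ additionally contains $w(U',U\setminus U')$, which can be as large as $\vol^W(U)\le s_0$, giving $\partial_H(U')=O(s_0/W)=O((s_0/\tilde\lambda)^3)$. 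Since the packing size is capped by the min-cut of $H[U']$, which is only $\Theta((s_0/\tilde\lambda)^2)$ (even with the $O(\ln m)$ edge multiplicity), averaging $\partial_{T_i}U'$ over the packing yields $O(s_0/\tilde\lambda)$ per tree, not $O(1)$. The paper avoids this by averaging $\partial_F U$ (the boundary of the \emph{enclosing} set $U$, which is genuinely $O(\tilde\lambda/W)$ in $H$) to find one forest $F$ with $\partial_F U\le 2525$, and uses the high connectivity of $H[U']$ only to guarantee that $U'$ sits inside a single tree $T$ of that forest; the pieces then cover all of $U\cap V(T)\supseteq U'$.

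Two further points. First, your claim that each connected component of $T_{i^*}[U']$ ``lies inside a single subtree piece'' is false: the volume-based partition of a tree can cut straight through a connected subtree of $U'$. The correct counting argument is the one in the paper's tree-partition lemma: every piece has volume strictly greater than $s_0\ge\vol^W(U)$, so no piece is contained in $U$, hence every piece that intersects $U$ must contain an edge of $\partial_T U$, giving at most $\partial_T U\le\partial_F U\le 2525$ pieces. Second, you correctly flag that $H$ need not be globally well-connected enough to pack spanning trees, but you leave the resolution as two sketched alternatives; the paper resolves it by packing \emph{forests} via a greedy MST/length-doubling scheme whose guarantee is exactly local (any induced subgraph with min-cut at least $\kappa$ stays connected in every forest), which is what makes the single-tree step for $U'$ go through. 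Finally, the stitching of adjacent pieces that you fold into $\mathcal X$ is not needed for this lemma (it belongs to the subsequent construction of the source sets) and, if done as described, would need a separate argument to preserve the multiplicity bound in property~(\ref{item:well-connected-2}).
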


We prove \Cref{lem:well-connected} with the help of the following lemmata.

\begin{lemma}\label{lem:induced-subgraph-mincut}
Consider a set $U\subseteq A$ such that $\vol^W (U)\le s_0$ and the minimum cut in $G[U]$ is at least $0.1\tilde\lambda$. Then, the minimum cut in $H[U]$ is at least $\tilde\lambda/(25W)=s_0^2/(5\tilde\lambda^2)$.
\end{lemma}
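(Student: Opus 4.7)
The plan is to show that for every cut $(S, U\setminus S)$ of $U$ with both sides nonempty, the weight $w_H(S, U\setminus S)$ in the derived multigraph is at least $\tilde\lambda/(25W)$. The strategy is to control the contribution to a $G[U]$-cut from ``light'' edges (those with $w_G(e) < W$, which are dropped by the floor in the definition of $H$), by exploiting the fact that a small-volume set on a graph with large minimum degree must contain very few vertices and hence very few edges.

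First I would translate the volume bound into a bound on $|U|$ and hence on $|E(G[U])|$. Every vertex of $G$ has weighted degree at least $\lambda \ge \tilde\lambda/1.01$, so the assumption $\vol^W(U)\le s_0$ forces
\[ |U| \le 1.01\, s_0/\tilde\lambda, \qquad |E(G[U])| \le \binom{|U|}{2} \le (1.01)^2\, s_0^2/(2\tilde\lambda^2). \]
Each light edge contributes weight strictly less than $W = \tilde\lambda^3/(50 s_0^2)$, so the total weight of all light edges in $G[U]$ is at most
\[ |E(G[U])| \cdot W \le \frac{(1.01)^2 s_0^2}{2\tilde\lambda^2} \cdot \frac{\tilde\lambda^3}{50 s_0^2} < 0.02\tilde\lambda. \]
In particular, for every cut $(S, U\setminus S)$, the light-edge contribution is bounded by $w_G^{\mathrm{light}}(S, U\setminus S) < 0.02\tilde\lambda$.

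Combining with the hypothesis $w_G(S, U\setminus S) \ge 0.1\tilde\lambda$, the heavy (i.e.\ $w_G(e)\ge W$) portion of the cut satisfies $w_G^{\mathrm{heavy}}(S, U\setminus S) \ge 0.08\tilde\lambda$. For every heavy edge $e$, $w_G(e)/W \ge 1$ and hence $\lfloor w_G(e)/W\rfloor \ge w_G(e)/(2W)$. Summing over heavy crossing edges yields
\[ w_H(S, U\setminus S) \ge \frac{w_G^{\mathrm{heavy}}(S, U\setminus S)}{2W} \ge \frac{0.04\tilde\lambda}{W} = \frac{\tilde\lambda}{25W}, \]
which is the claimed bound; taking the minimum over all cuts gives the statement.

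The only subtle point, and the reason the lemma goes through at all, is the volume-to-edge-count reduction: the small weighted volume of $U$, combined with the global minimum-degree bound $\lambda$, forces $|E(G[U])|$ to be at most $O(s_0^2/\tilde\lambda^2)$, so light edges---each of weight less than $W = \Theta(\tilde\lambda^3/s_0^2)$---cannot together accumulate more than $O(\tilde\lambda)$ of weight in $G[U]$, let alone across a single cut. Once this is in place, everything reduces to the elementary estimate $\lfloor x\rfloor \ge x/2$ for $x\ge 1$.
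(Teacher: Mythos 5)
Your proof is correct and follows essentially the same route as the paper: both bound $|U|$ via the volume assumption and the minimum weighted degree $\ge\lambda$, deduce that $G[U]$ has only $O(s_0^2/\tilde\lambda^2)$ edges, and then show that the total rounding loss from the floor $\lfloor w_G(e)/W\rfloor$ across any cut is a small fraction of $\tilde\lambda$. The paper packages the rounding loss as the additive bound $w_G(e)\le w_H(e)\cdot W+W$ per edge, whereas you split into light and heavy edges and use $\lfloor x\rfloor\ge x/2$ for $x\ge1$; these are interchangeable and yield the same constant $\tilde\lambda/(25W)$.
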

\begin{proof}
Let $S\subseteq U$ be the minimum cut in $H[U]$. Since $|U|\le\vol^W (U)/\delta\le\vol^W (U)/\lambda\le1.01\vol^W (U)/\tilde\lambda$, there are at most $(|U|/2)^2\le(1.01\vol^W (U)/(2\tilde\lambda))^2$ many edges between $S$ and $U\setminus S$ in $H[U]$. For each such edge $e$, we have $w_G(e)\le w_H(e)\cdot W+W$. Summing over all edges $e$ between $S$ and $U\setminus S$ gives
\begin{align*}
w_G(S,U\setminus S)&\le w_H(S,U\setminus S)\cdot W+\left(\frac{1.01\vol^W (U)}{2\tilde\lambda}\right)^2\cdot W
\\&\le w_H(S,U\setminus S)\cdot W+\left(\frac{1.01s_0}{2\tilde\lambda}\right)^2\cdot\frac{\tilde\lambda^3}{5s_0^2}
\\&\le w_H(S,U\setminus S)\cdot W+0.06\tilde\lambda .
\end{align*}
Rearranging and using that $w_G(S,U\setminus S)\ge0.1\tilde\lambda$ gives $w_H(S,U\setminus S)\ge\tilde\lambda/(25W)=s_0^2/(5\tilde\lambda^2)$, as promised.
\end{proof}

We will need the following two lemmas, whose technical proofs are deferred to \Cref{sec:deferred-proofs}. First, we run a variant of tree packing on $G[A]$ where we pack forests instead, since $G[A]$ may not be well-connected enough to admit a large tree packing.
\begin{lemma}\label{lem:forest-packing}
Let $\kappa\ge1$ be a parameter. There is an algorithm that, given $\kappa$, runs in $O(\kappa|E(H)|\log^2m)$ time and computes $\lceil\kappa\ln m\rceil$ forests in $H$ such that
 \begin{enumerate}
 \item Each edge $e$ in $H$ participates in at most $100\ln m$ forests.\label{item:forest-packing-1}
 \item For any induced subgraph $H[U]$ with minimum cut at least $\kappa$, all vertices belong to a single connected component for each of the forests.\label{item:forest-packing-2}
 \end{enumerate}
\end{lemma}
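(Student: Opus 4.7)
\textbf{Proof plan for \Cref{lem:forest-packing}.} The plan is to decompose $H$ into pieces that are each $\kappa$-edge-connected, apply Gabow's spanning-tree packing algorithm within each piece, and then assemble the resulting trees into forests cyclically across iterations.

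\textbf{Decomposition.} First, I would iteratively split the current piece of $H$ along any cut $(S,\bar S)$ of size less than $\kappa$, discarding the fewer-than-$\kappa$ crossing edges, until every remaining piece has min cut at least $\kappa$. The key correctness claim is that no $\kappa$-well-connected set $U$ (one with $H[U]$ having min cut $\ge \kappa$) ever crosses any discarded cut: if $U$ had vertices on both sides of some such cut $(S,\bar S)$, then $(U \cap S,\, U \cap \bar S)$ would witness a cut in $H[U]$ of size at most $|E_H(S,\bar S)| < \kappa$, contradicting $H[U]$'s min-cut assumption. Hence every $\kappa$-well-connected $U$ ends up within a single final piece.

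\textbf{Packing and assembly.} Within each final piece $H_i$, Nash-Williams/Tutte guarantees $k := \max(1, \lfloor \kappa/2 \rfloor)$ edge-disjoint spanning trees $T_{i,1},\ldots,T_{i,k}$, which Gabow's algorithm produces in $\tilde{O}(\kappa |E(H_i)|)$ time. For $t = 1,\ldots,T := \lceil \kappa \ln m \rceil$, I would set $F_t := \bigsqcup_i T_{i,\, ((t-1)\bmod k) + 1}$. Each $F_t$ is a forest (a disjoint union of vertex-disjoint spanning trees), each edge of any $H_i$ appears in at most $\lceil T/k \rceil = O(\ln m)$ of the forests, and discarded edges appear in none---giving property~(1) with constant slack. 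Property~(2) follows because each $\kappa$-well-connected $U$ lies in a single $H_i$, and the chosen spanning tree of $H_i$ connects all of $V(H_i) \supseteq U$ into one component of $F_t$.

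\textbf{The main obstacle: running time.} Reaching the promised $O(\kappa |E(H)| \log^2 m)$ bound is the most delicate point, since naively re-running Gabow's algorithm on each intermediate piece costs $\tilde{O}(\kappa |E(H)| \cdot \mathrm{depth})$, potentially $\tilde{O}(\kappa |E(H)| \cdot |V(H)|)$ in the worst case. My remedy is to reuse the partial tree packing across recursive calls: whenever Gabow's algorithm fails on a piece $H_i$ and certifies a sparse cut $(S,\bar S)$, its current packing of edge-disjoint trees restricts to valid partial packings on $H[S]$ and $H[\bar S]$, from which Gabow's algorithm resumes augmentation. A telescoping argument across the decomposition tree then absorbs the total Gabow work to $\tilde{O}(\kappa |E(H)|)$ up to an extra $\log m$ factor, matching the target.
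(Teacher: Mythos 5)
Your correctness skeleton (split along cuts of size $<\kappa$, pack $\lfloor\kappa/2\rfloor$ edge-disjoint spanning trees per piece via Nash--Williams/Gabow, cycle the trees across the $\lceil\kappa\ln m\rceil$ forests) is sound as far as properties (1) and (2) go, and the observation that a well-connected $U$ never straddles a discarded cut is exactly right. But the running time is not a delicate detail you can defer --- it is the actual content of the lemma, and your sketch has a genuine gap there. Two concrete problems. First, when Gabow's tree-packing fails on a piece, its certificate is a Tutte/Nash--Williams partition $\mathcal P$ with fewer than $k(|\mathcal P|-1)$ crossing edges, not a two-way cut of size $<\kappa$; extracting such a cut is essentially a thresholded min-cut computation, which is the very problem the surrounding paper is trying to solve, so you cannot invoke it as a black box here. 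Second, even granting the cut, the recursion can be completely unbalanced (depth $\Theta(|V(H)|)$), and the claim that Gabow's partial packing ``restricts to valid partial packings on $H[S]$ and $H[\bar S]$ from which augmentation resumes'' is asserted, not proved: the restricted trees become forests, Gabow's internal data structures do not obviously survive the split, and each edge can still be touched at every level of the recursion, so nothing telescopes without a balance guarantee you do not have.

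The paper avoids all of this by never decomposing $H$. It runs a greedy/multiplicative-weights MST packing on all of $H$: edge lengths start at $1$, each of the $\lceil\kappa\ln m\rceil$ iterations computes a minimum spanning tree under the current lengths, deletes from it every edge of length exceeding $e^5m^6$ to form $F_i$, and multiplies the lengths of the surviving tree edges by $1.1$. Property (1) is immediate from the length cap, and property (2) follows from an LP-duality argument over the spanning-tree polytope (in the style of Young's fractional packing), using Nash--Williams only to bound $\min_{x\in P}\max_{y\in\Delta}\langle x,y\rangle\le 2/\kappa$ for any induced subgraph $H[U]$ with min cut $\ge\kappa$: the total length inside $H[U]$ grows by at most a $(1+5/\kappa)$ factor per iteration, so no edge in $H[U]$ ever exceeds the cap and each $F_i$ restricted to $U$ stays connected. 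The cost is just $\lceil\kappa\ln m\rceil$ MST computations, giving $O(\kappa|E(H)|\log^2m)$ directly. If you want to salvage your route, you would need a deterministic near-linear procedure that either certifies min cut $\ge\kappa$ or produces a cut of size $<\kappa$, together with a depth-independent accounting of the recursive Gabow calls; as written, neither is established.
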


Then, the lemma below decomposes each forest in our packing into a collection of subtrees whose vertex sets have small volume.
\begin{lemma}\label{lem:partition-tree}
Consider a tree $T$ supported on a subset of vertices in $A$. We can compute subsets $V_1,\ldots,V_\ell\subseteq V(T)$ satisfying the following.
 \begin{enumerate}
 \item $\vol^W (V_i)\le3s_0$ for all $i$.\label{item:partition-tree-1}
 \item Each vertex $v\in A$ is in at most $d_T(v)$ many subsets $V_i$.\label{item:partition-tree-2}
 \item Consider any (not necessarily connected) subset $U\subseteq V(T)$ with $\vol^W (U)\le s_0$ and let $k=\partial_TU$ be the number of edges in $T$ with exactly one endpoint in $U$. Then, there are at most $k$ subsets $V_i$ intersecting $U$, and their union contains $U$.\label{item:partition-tree-3}
 \end{enumerate}
The algorithm runs in $\tilde O(|V(T)|)$ time.
\end{lemma}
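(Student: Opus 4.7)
We may assume without loss of generality that $d^W(v) \le s_0$ for every $v \in V(T)$: a vertex of larger weighted degree cannot appear in any $U$ with $\vol^W(U) \le s_0$ and thus is irrelevant to property~(\ref{item:partition-tree-3}). The plan is to root $T$ at an arbitrary vertex and perform a post-order traversal, maintaining at each vertex $v$ a \emph{carry} $P(v)$: a connected subtree of $T_v$ containing $v$ with $\vol^W(P(v)) \le s_0$. When processing $v$, I would greedily combine the carries of $v$'s children via $v$ into bins of combined volume in $[s_0, 2s_0]$; each such bin together with $\{v\}$ will be finalized as a piece of volume at most $3s_0$, while the remaining ``leftover'' carries together with $\{v\}$ form $P(v)$, carried upward. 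The root's final carry is itself output as a piece. Properties~(\ref{item:partition-tree-1}) and~(\ref{item:partition-tree-2}) will then follow: every finalized piece has volume at most $3s_0$, and each vertex $v$ appears in at most one piece per bin at its own processing step (at most $d_T(v) - 1$ non-leftover bins, each consuming a distinct child's carry) plus the single ancestor piece that absorbs $P(v)$, totaling at most $d_T(v)$ pieces containing $v$.

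The heart of the proof will be property~(\ref{item:partition-tree-3}). Given $U \subseteq V(T)$ with $\vol^W(U) \le s_0$, I would decompose $T[U]$ into its $c$ connected components $U_1, \ldots, U_c$. Excluding the trivial case $U = V(T)$, each component contributes at least one edge to $\partial_T U$, so $c \le k$. The key claim to establish is that each $U_j$ is entirely preserved inside the carry of its highest vertex $r_j$, i.e., $U_j \subseteq P(r_j)$; once this holds, $U_j$ is absorbed wholly into a single finalized piece, and summing over all $j$ yields $|P_U| \le c \le k$ pieces intersecting $U$ whose union covers $U$.

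The hard part will be establishing the invariant $U_j \subseteq P(r_j)$ while the bin-packing at each internal vertex is oblivious to $U$. I plan to show that processing children in a carefully chosen order (for example, by increasing carry volume) is simultaneously compatible with every relevant $U$: the argument should pair any bin that would otherwise split a small connected $U_j$ with a distinct boundary edge of $U$, which keeps $|P_U| \le k$. The running time will be $\tilde O(|V(T)|)$: the traversal visits each vertex once, and the greedy bin-packing at each vertex runs in time linear in its number of children.
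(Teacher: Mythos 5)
Your overall architecture (drop heavy vertices, greedily pack subtrees bottom-up into connected, edge-disjoint pieces of volume $\Theta(s_0)$, then charge pieces meeting $U$ to edges of $\partial_T U$) is in the same spirit as the paper's construction, but your proof of property~(\ref{item:partition-tree-3}) has a genuine gap, and in fact the construction as you describe it does not satisfy property~(\ref{item:partition-tree-3}). The invariant $U_j\subseteq P(r_j)$ is simply false: once a descendant's carry is placed into a bin at some vertex $v$, the part of $U_j$ inside that carry is finalized into a piece while the rest of $U_j$ continues upward in $P(v)$, so a single connected component of $U$ is split across a finalized piece and the carry. Concretely, let $T$ be the star-plus-pendant $a_1\!-\!b$, $a_2\!-\!b$, $b\!-\!c$ rooted at $c$, with $d^W(a_1)=d^W(a_2)=s_0/2$ and $d^W(b),d^W(c)$ tiny. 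Your algorithm bins $P(a_1),P(a_2)$ at $b$ into the piece $\{a_1,a_2,b\}$, sets $P(b)=\{b\}$, and outputs the root carry $\{b,c\}$ as a second piece. For $U=\{a_1,b,c\}$ we have $\vol^W(U)\le s_0$ and $k=|\partial_TU|=1$ (only the edge $a_2b$), yet two pieces intersect $U$. The fallback charging argument you gesture at also cannot rescue this: the piece $\{b,c\}$ is entirely contained in $U$, so it contains no boundary edge of $U$ to be charged to, and no reordering of the bin-packing removes this obstruction, because it stems from leftover carries being emitted as pieces of volume below $s_0$.

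The paper avoids exactly this failure mode by a threshold on when to subdivide at all: a whole component of volume at most $3s_0$ is output as a single set (so in the example above the entire tree is one piece and the count is $1\le k$), and only components of volume exceeding $3s_0$ are partitioned, in which case \emph{every} produced piece is arranged to have volume strictly greater than $s_0$. Then a piece intersecting $U$ cannot be contained in $U$ (since $\vol^W(U)\le s_0$), hence, being a connected subtree, it must contain an edge of $\partial_TU$; edge-disjointness of the pieces makes this charging injective and gives the bound $k$. If you want to keep your bottom-up carry scheme, you must (i) abandon the invariant $U_j\subseteq P(r_j)$ in favor of this charging argument, and (ii) modify the construction so that no output piece has volume $\le s_0$ --- e.g., merge the final leftover carry into an adjacent finalized piece or output the whole component unsplit when it is small --- while rechecking that the $3s_0$ upper bound and the degree bound of property~(\ref{item:partition-tree-2}) survive the merge (note also that at the root your count gives $d_T(\mathrm{root})+1$ rather than $d_T(\mathrm{root})$ pieces).
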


The algorithm applies \Cref{lem:forest-packing} to $\kappa=\tilde\lambda/(25W)=s_0^2/(5\tilde\lambda^2)$, producing a collection $\mathcal F$ of $\lceil\tilde\lambda/(25W)\cdot\ln m\rceil$ forests. For each forest $F\in\mathcal F$, apply \Cref{lem:partition-tree} to each connected component (tree) of $F$. Take the union of the output $\{V_1,\ldots,V_\ell\}$ over all connected components, and let this collection of vertex subsets be $\mathcal V(F)$.

Our proof of \Cref{lem:well-connected} proceeds as follows: 
We set $\mathcal X$ as the union of $\mathcal V(F)$ over all $F\in\mathcal F$.
\begin{itemize}
    \item 
Property~(\ref{item:well-connected-1}) follows from property~(\ref{item:partition-tree-1}) of \Cref{lem:partition-tree}. 
\item For property~(\ref{item:well-connected-2}), observe that there are $\lceil\tilde\lambda/(25W)\cdot\ln m\rceil = \lceil s_0^2/(5\tilde\lambda^2)\cdot\ln m\rceil$ forests $F\in\mathcal F$. For each $F\in\mathcal F$, by property~(\ref{item:partition-tree-2}) of \Cref{lem:partition-tree}, each vertex $v\in A$ is in at most $d_F(v)$ many subsets in $\mathcal V(F)$, and by property~(\ref{item:partition-tree-1}), each vertex $v\in A$ that joins some subset in $\mathcal V(F)$ must have $d^W_G(v)\le3s_0$. 
Since $F$ is a spanning forest of $H$, we have $d_F(v)\leq_H(v) \le d^W_G(v)/W \le 3s_0/W \le O((s_0/\tilde\lambda)^3)$. So in total, each vertex $v\in A$ is in at most $O((s_0/\tilde\lambda)^3)$ subsets in $\mathcal V(F)$ for each of $O((s_0/\tilde\lambda)^2\ln m)$ many forests $F\in\mathcal F$, concluding property~(\ref{item:well-connected-2}).
\end{itemize}
So all that we are left to show is property~(\ref{item:well-connected-3}). For that we use the following lemma, which concludes the proof of \Cref{lem:well-connected}.

\begin{lemma}\label{lem:vertex-subsets-intersecting-U}
Consider a set $U\subseteq A$ such that $\partial_{G[A]}U\le1.01\tilde\lambda$ and $\vol^W (U)\le s_0$. Let $U'\subseteq U$ and suppose that the minimum cut in $G[U']$ is at least $0.1\tilde\lambda$. There exists a forest $F$ such that there are at most $2525$ many subsets $V'\in\mathcal V(F)$ that intersect $U'$, and their union contains $U'$.
\end{lemma}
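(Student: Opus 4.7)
The plan is an averaging argument over the forests in $\mathcal F$, but with a key twist: rather than bounding the number of subsets of $\mathcal V(F)$ intersecting $U'$ directly, I bound the number intersecting the \emph{larger} set $U$. This is essential because $\partial_{G[A]} U \le 1.01 \tilde\lambda$ is small, whereas $w_G(U', U \setminus U')$ could be as large as $\vol^W(U) \le s_0$, and paying for this internal cut would inflate the constant by a factor $\Theta(s_0/\tilde\lambda)$. Crucially, any subset $V_i$ containing a vertex of $U'$ automatically intersects $U'$, so the subcollection of covering subsets that intersect $U'$ is contained in the subcollection intersecting $U$ and still covers $U'$.

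Concretely, fix $F \in \mathcal F$. For each tree $T$ of $F$, the set $U \cap V(T)$ has $\vol^W(U \cap V(T)) \le \vol^W(U) \le s_0$, so property~(\ref{item:partition-tree-3}) of \Cref{lem:partition-tree} applied with input $U \cap V(T)$ guarantees that at most $\partial_T(U \cap V(T))$ of the subsets in $\mathcal V(F)$ coming from $T$ intersect $U$, with union containing $U \cap V(T)$. Summing over all trees of $F$, at most $\partial_F U$ subsets of $\mathcal V(F)$ intersect $U$, and their union contains $U$. By property~(\ref{item:forest-packing-1}) of \Cref{lem:forest-packing}, each $H$-edge appears in at most $100 \ln m$ forests, so
\[
\sum_{F \in \mathcal F} \partial_F U \;\le\; 100 \ln m \cdot w_H(U, A \setminus U) \;\le\; 100 \ln m \cdot \frac{\partial_{G[A]} U}{W} \;\le\; \frac{100 \cdot 1.01 \tilde\lambda \ln m}{W}.
\]
Since $\kappa = \tilde\lambda/(25W)$ by definition and $|\mathcal F| \ge \kappa \ln m$, the average of $\partial_F U$ over $F \in \mathcal F$ is at most $100 \cdot 1.01 \cdot 25 = 2525$, so some forest $F^*$ achieves $\partial_{F^*} U \le 2525$.

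For this $F^*$, at most $2525$ subsets of $\mathcal V(F^*)$ intersect $U$, and a fortiori at most $2525$ intersect $U' \subseteq U$. Every $v \in U'$ lies in some covering subset $V_i$, and since $v \in U' \cap V_i$ this $V_i$ intersects $U'$; thus the union of the subsets intersecting $U'$ still contains $U'$, completing the proof. The main obstacle, as I have hinted, is identifying the right set to average over: the ``natural'' approach of invoking \Cref{lem:induced-subgraph-mincut} and property~(\ref{item:forest-packing-2}) of \Cref{lem:forest-packing} to place $U'$ inside a single tree $T_F$, then applying \Cref{lem:partition-tree} directly to $U'$, forces the averaging bound to include $w_H(U', U \setminus U')$ and blows up the constant by $\Theta(s_0/\tilde\lambda)$; averaging with $U$ instead (accepting that $U'$ may be spread across several trees of $F$) leverages the tight bound $\partial_{G[A]} U \le 1.01 \tilde\lambda$ and yields exactly $2525$.
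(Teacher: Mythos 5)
Your averaging step is identical to the paper's, and your instinct to charge the count to $\partial_F U$ rather than to anything involving $w_H(U',U\setminus U')$ is also exactly what the paper does: it applies property~(\ref{item:partition-tree-3}) of \Cref{lem:partition-tree} to $U\cap V(T)$, not to $U'$, so the ``natural approach'' you argue against is not the paper's argument. The genuine gap is in the step ``summing over all trees of $F$, at most $\partial_F U$ subsets of $\mathcal V(F)$ intersect $U$, and their union contains $U$.'' Property~(\ref{item:partition-tree-3}) bounds the number of subsets produced from a tree $T$ that meet $U\cap V(T)$ by $\partial_T(U\cap V(T))$, but this bound breaks down for any tree $T$ of $F$ that is \emph{wholly contained} in $U$: then $\partial_T(U\cap V(T))=0$, yet $\vol^W(V(T))\le\vol^W(U)\le s_0\le 3s_0$, so the construction of \Cref{lem:partition-tree} outputs the entire $V(T)$ as a single subset of $\mathcal V(F)$, which intersects $U$ and is charged to no boundary edge. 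Such components are not exotic: each $F$ is a minimum spanning forest with its long edges deleted, so in later forests many vertices of $U$ may sit in components of $F^*$ lying entirely inside $U$, each contributing one uncharged covering set. Since $|U|$ can be as large as $s_0/\lambda=\polylog(n)$, your count of subsets meeting $U$ --- and meeting $U'$, if $U'$ touches many of these components --- can exceed $2525$.

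This is precisely where the hypothesis you discard, that $G[U']$ has minimum cut at least $0.1\tilde\lambda$, is load-bearing. Via \Cref{lem:induced-subgraph-mincut} and property~(\ref{item:forest-packing-2}) of \Cref{lem:forest-packing}, it forces all of $U'$ into a \emph{single} tree $T$ of $F^*$, so that only the subsets produced from that one tree can meet $U'$; their number is at most $\partial_T(U\cap V(T))\le\partial_{F^*}U\le 2525$ (or a single set in the degenerate case $V(T)\subseteq U$), and their union contains $U\cap V(T)\supseteq U'$. Reinstating that single-tree localization repairs your argument; without it, the claim that the well-connectedness of $U'$ is dispensable is incorrect.
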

\begin{proof}
By construction of $H$, we have $\partial_HU\le\partial_{G[A]}U/W\le1.01\tilde\lambda/W$. By property~(\ref{item:forest-packing-1}) of \Cref{lem:forest-packing}, each of the at most $1.01\tilde\lambda/W$ edges in $\partial_HU$ participates in at most $100\ln m$ forests. Since there are $\lceil\tilde\lambda/(25W)\cdot\ln m\rceil$ many forests, there is a forest $F\in\mathcal F$ with
\[ \partial_FU \le \frac{1.01\tilde\lambda/W\cdot100\ln m}{\tilde\lambda/(25W)\cdot\ln m} = 2525. \]
By \Cref{lem:induced-subgraph-mincut}, the minimum cut in $H[U']$ is at least $\kappa=\tilde\lambda/(25W)$, so property~(\ref{item:forest-packing-2}) of \Cref{lem:forest-packing} guarantees that all vertices of $U'$ belong to a single tree $T$ of $F$.
Then, property~(\ref{item:partition-tree-3}) of \Cref{lem:partition-tree} applied to subset $U\cap V(T)$ guarantees at most $\partial_TU\le\partial_FU\le2525$ subsets $V_i\in\mathcal V(F)$ intersecting $U\cap V(T)$, and their union contains $U\cap V(T)$ which contains $U'$.
\end{proof}
To finish the proof of \Cref{lem:well-connected}, it remains to bound the running time. Since $\kappa=O(s_0/\tilde\lambda)^{O(1)}$ and $|E(H)|\le|E(G[A])|$, \Cref{lem:forest-packing} runs in $\tilde O((s_0/\tilde\lambda)^{O(1)}|E(G[A])|)$ time. There are $O(s_0/\tilde\lambda)^{O(1)}\log m$ many forests, and \Cref{lem:partition-tree} runs in $\tilde O(|V(F)|)$ time total over all the trees of each forest $F\in\mathcal F$. It follows that the overall running time is $\tilde O((s_0/\tilde\lambda)^{O(1)}|E(G[A])|)$.

\paragraph{General case.}

We show the following lemmata which relate the general case to the well-connected case. We remark that the partition specified below is not computed by the algorithm; we only need its existence to prove correctness. 

\begin{lemma}\label{lem:general-case}
Consider a set $U\subseteq A$ such that $\partial_{G[A]}U\le1.01\tilde\lambda$ and $\vol^W (U)\le s_0$. There exists an integer $k$ and a partition $U_1,\ldots,U_k$ of $U$ such that
 \begin{enumerate}
 \item For each $i\in[k]$, the minimum cut of $G[U_i]$ is at least $0.1\tilde\lambda$.\label{item:general-case-1}
 \item For at least $k/2-5$ many indices $i\in[k]$, the boundary $\partial_{G[A]}U_i$ is at most $0.4\tilde\lambda$. \label{item:general-case-2}
 \item The total weight $w(U_1,\ldots,U_k)$ of inter-cluster edges is at most $0.1(k-1)\tilde\lambda$.\label{item:general-case-3}
 \end{enumerate}
More generally, for any $A'\subseteq A$, if we consider the subset $I\subseteq[k]$ of indices $i\in[k]$ with $U_i\cap A'\ne\emptyset$, then
 \begin{enumerate}
 \item[2'.] For at least $|I|/2-5$ many indices $i\in I$, the boundary $\partial_{G[A']}(U_i\cap A')$ is at most $0.4\tilde\lambda$. \label{item:general-case-2'}
 \end{enumerate}
\end{lemma}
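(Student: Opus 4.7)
\medskip

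\noindent\textbf{Proof plan for \Cref{lem:general-case}.}

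The plan is to construct the partition greedily via recursive splitting. Start with the trivial partition $\{U\}$, and while some current part $U_i$ admits a cut in $G[U_i]$ of weight less than $0.1\tilde\lambda$, pick such a cut and replace $U_i$ by its two sides. The process terminates in finitely many steps (each step strictly refines the partition), and at termination every $G[U_i]$ has minimum cut at least $0.1\tilde\lambda$, yielding property~(\ref{item:general-case-1}). I record the sequence of splits as a rooted binary tree $\mathcal T$ whose leaves are $U_1,\ldots,U_k$; $\mathcal T$ has exactly $k-1$ internal nodes, and the split at each internal node cuts edges of total weight less than $0.1\tilde\lambda$. Every inter-cluster edge of the final partition was cut by exactly one split, so
\[ w(U_1,\ldots,U_k) \le 0.1(k-1)\tilde\lambda, \]
giving property~(\ref{item:general-case-3}).

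For property~(\ref{item:general-case-2}), I use the identity
\[ \sum_i \partial_{G[A]}U_i \;=\; \partial_{G[A]}U + 2\,w(U_1,\ldots,U_k) \;\le\; 1.01\tilde\lambda + 0.2(k-1)\tilde\lambda, \]
since each edge from $U$ to $A\setminus U$ is counted once and each inter-cluster edge within $U$ is counted twice. A Markov-style count then bounds the number of indices with $\partial_{G[A]}U_i > 0.4\tilde\lambda$ by $(1.01+0.2(k-1))/0.4 = 0.5(k-1)+2.525 \le k/2 + 3$, so at least $k/2-3 \ge k/2-5$ indices satisfy $\partial_{G[A]}U_i \le 0.4\tilde\lambda$.

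The main obstacle is property~(2'), where the bound must scale with $|I|$ rather than $k$; naively substituting $A$ with $A'$ in the argument above yields the weaker $0.2(k-1)\tilde\lambda$ term, which is useless when $k \gg |I|$. The fix is a tree-counting argument using $\mathcal T$. Mark the leaf $U_i$ iff $i \in I$, i.e., iff $U_i \cap A' \ne \emptyset$. At any split where one child's subtree contains no marked leaf, every edge cut by that split has at least one endpoint outside $A'$, hence does not appear in $G[A']$; only the at most $|I|-1$ internal nodes of $\mathcal T$ whose \emph{both} subtrees contain a marked leaf can contribute edges to $w_{G[A']}(\{U_i\cap A' : i\in I\})$, and each contributes at most $0.1\tilde\lambda$. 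Combined with $\partial_{G[A']}(U\cap A') \le \partial_{G[A]}U \le 1.01\tilde\lambda$, the analogous identity gives
\[ \sum_{i\in I} \partial_{G[A']}(U_i\cap A') \;\le\; 1.01\tilde\lambda + 0.2(|I|-1)\tilde\lambda, \]
and the same Markov argument finishes the proof of (2'), since the number of $i \in I$ with $\partial_{G[A']}(U_i\cap A') > 0.4\tilde\lambda$ is at most $0.5|I|+2.025$, leaving at least $|I|/2-5$ indices with small restricted boundary.
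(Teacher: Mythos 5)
Your proposal is correct and follows essentially the same route as the paper: greedy recursive splitting along sub-$0.1\tilde\lambda$ cuts for properties (1) and (3), and a counting/averaging bound on $\sum_i\partial_{G[A]}U_i$ for property (2) (the paper phrases this as a contradiction via the partition $\{U_1,\ldots,U_k,A\setminus U\}$ of $A$, which is the same computation). Your tree-counting argument for (2') — charging only the internal nodes whose both subtrees contain a part meeting $A'$, of which there are at most $|I|-1$ — is exactly the paper's argument of replaying the cut sequence on $G[A']$ and skipping the cuts with one side disjoint from $A'$.
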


\begin{proof}
Iteratively partition the vertex set $U$ as follows: start with the collection $\mathcal P=\{U\}$, and while there exists a vertex set $U'\in\mathcal P$ such that $G[U']$ has a cut of weight less than $0.1\tilde\lambda$, replace $U'$ in $\mathcal P$ with the two sides $U'_1,U'_2\subseteq U'$ of the cut. Let $\mathcal P=\{U_1,\ldots,U_k\}$ be the final partition. Property~(\ref{item:general-case-1}) holds by construction. The algorithm makes $k-1$ many cuts of weight at most $0.1\tilde\lambda$ each, so the total weight of inter-cluster edges is at most $(k-1)\cdot0.1\tilde\lambda$, proving property~(\ref{item:general-case-3}).

We now show property~(\ref{item:general-case-2}). Suppose for contradiction that fewer 
 than $k/2-5$ many indices $i\in[k]$ have $\partial_{G[A]}U_i\le0.4\tilde\lambda$. Equivalently, more than $k/2+5$ many indices $i\in[k]$ have $\partial_{G[A]}U_i>0.4\tilde\lambda$. Consider the partition $U_1,\ldots,U_k,A\setminus U$ of $A$; the total weight $w(U_1,\ldots,U_k,A\setminus U)=\frac12(\sum_i\partial_{G[A]}U_i+\partial_{G[A]}U)$ of inter-cluster edges is at least $\frac12\cdot(k/2+5)\cdot0.4\tilde\lambda $. On the other hand, $w(U_1,\ldots,U_k,A\setminus U) = w(U_1,\ldots,U_k)+\partial_{G[A]}U \le 0.1(k-1)\tilde\lambda+1.01\tilde\lambda$ by property~(\ref{item:general-case-3}) and the assumption $\partial_{G[A]}U\le1.01\tilde\lambda$. Altogether, 
\[ w(U_1,\ldots,U_k,A\setminus U) \le 0.1(k-1)\tilde\lambda+1.01\tilde\lambda < \frac12\cdot(k/2+5)\cdot0.4\tilde\lambda  \le w(U_1,\ldots,U_k,A\setminus U) ,\]
a contradiction.

Finally, to show property~(2'), consider any $A'\subseteq A$. Consider the sequence of cuts of weight less than $0.1\tilde\lambda$ that were made to construct $\mathcal P=\{U_1,\ldots,U_k\}$. Repeat this sequence of cuts on $G[A']$ as follows: for the next cut $(C,\bar C)$ of $G[U']$ for some $U'\subseteq A$, if $C\cap A'=\emptyset$ or $\bar C\cap A'=\emptyset$ then do nothing; otherwise make the cut $(C\cap A',\bar C\cap A')$ and split $U'$ into $C\cap A'$ and $\bar C\cap A'$. It is clear that the new output $\mathcal P'$ is $\{U_i\cap A':i\in I\}$. In total, there are $|I|-1$ cuts that split a set into two, so the total weight in $G[A']$ of inter-cluster edges of $\mathcal P'$ is at most $(|I|-1)\cdot0.1\tilde\lambda$. The rest of property~(2') follows similarly from the proof of property~(\ref{item:general-case-2}): suppose for contradiction that more than $|I|/2+5$ many indices in $I$ have $\partial_{G[A']}U_i>0.4\tilde\lambda$; then writing $\mathcal P'=U_{i_1}\cap A',\ldots,U_{i_{|I|}}\cap A'$, we obtain
\begin{align*}
w_{G[A']}(U_{i_1}\cap A',\ldots,U_{i_{|I|}}\cap A',A'\setminus U)&= w_{G[A']}(U_{i_1}\cap A',\ldots,U_{i_{|I|}}\cap A')+\partial_{G[A']}(U\cap A')
\\&\le w_{G[A']}(U_{i_1}\cap A',\ldots,U_{i_{|I|}}\cap A')+\partial_{G[A]}U
\\&\le0.1|I|\tilde\lambda+1.01\tilde\lambda
\\&<\frac12\cdot(|I|/2+5)\cdot0.4\tilde\lambda+\frac12\cdot1.01\tilde\lambda
\\&\le w_{G[A']}(U_{i_1}\cap A',\ldots,U_{i_{|I|}}\cap A',A'\setminus U),
\end{align*}
a contradiction.
\end{proof}

\paragraph{Stitching pieces together.} 
Let $U'$ be defined as in \Cref{lem:well-connected}. Property~(\ref{item:well-connected-3}) of \Cref{lem:well-connected} guarantees that there are at most $2525$ sets in $\mathcal X$ whose union contains a set $U'$ satisfying the conditions. In the lemma below, we stitch together sets in $\mathcal X$ so that there is now a single set containing $U'$, as long as the induced subgraph $G[U']$ is moderately connected: instead of a minimum cut of $\Omega(\tilde\lambda)$, we can allow $\Omega(\gamma\tilde\lambda)$ for any $\gamma=1/\textup{polylog}(n)$.
\begin{lemma}\label{lem:construct-S}
Let $A\subseteq V$ be a cluster and let $\gamma,K$ be parameters. There is an algorithm that runs in $\tilde{O}((\frac{s_0\ln m}{\tilde\lambda\gamma})^{O(K)}|E(G[A])|)$ time and computes a collection $\mathcal S$ of subsets of $A$ such that
 \begin{enumerate}
 \item $\vol^W (S)\le O(Ks_0)$ for each $S\in\mathcal S$.\label{item:construct-S-1}
 \item Each vertex is in $O(\frac{s_0\ln m}{\tilde\lambda\gamma})^{O(K)}$ many sets in $\mathcal S$. In particular, $|\mathcal S|\le O(\frac{s_0\ln m}{\tilde\lambda\gamma})^{O(K)}|A|$. \label{item:construct-S-2}
 \item Consider a set $U\subseteq A$ such that $\partial_{G[A]}U\le1.01\tilde\lambda$ and $\vol^W (U)\le s_0$, and consider a subset $U'\subseteq U$. Suppose that the minimum cut in $G[U']$ is at least $\gamma\tilde\lambda$, and suppose that there exist (not necessarily disjoint) $U_1,\ldots,U_k\subseteq U$ with $k\le K$ such that $U_1\cup\cdots\cup U_k\supseteq U'$ and each $G[U_i]$ has minimum cut at least $0.1\tilde\lambda$. Then, there exists a set $S\in\mathcal S$ containing $U'$.\label{item:construct-S-3}
 \end{enumerate}
\end{lemma}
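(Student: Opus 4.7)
The plan is to construct $\mathcal{S}$ by taking iterated unions of sets from $\mathcal{X}$ (provided by \Cref{lem:well-connected}) along a \emph{contact graph}, in which two sets $X, X' \in \mathcal{X}$ are declared adjacent if they share a vertex or if a (suitably heavy) edge of $G[A]$ joins them. Concretely, initialize $\mathcal{S}_0 = \mathcal{X}$ and, for $j = 1, \dots, J$ with $J = O(K)$, let $\mathcal{S}_j$ consist of all sets $S \cup X$ with $S \in \mathcal{S}_{j-1}$ and $X \in \mathcal{X}$ contact-adjacent to one of the $\mathcal{X}$-ingredients already inside $S$. Output $\mathcal{S} = \bigcup_j \mathcal{S}_j$; equivalently, its elements correspond to vertex unions over connected subgraphs of the contact graph of size at most $J + 1$.

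The heart of the argument is property~(\ref{item:construct-S-3}). Given $U', U_1, \dots, U_k$ satisfying the hypotheses, apply \Cref{lem:well-connected}(3) to each $U_i$: since $\partial_{G[A]} U_i \le \partial_{G[A]} U \le 1.01 \tilde\lambda$, $\vol^W(U_i) \le \vol^W(U) \le s_0$, and the minimum cut of $G[U_i]$ is $\ge 0.1 \tilde\lambda$, this covers $U_i$ by at most $2525$ sets from $\mathcal{X}$. Taking the union over $i \le k \le K$ yields at most $t \le 2525 K = O(K)$ sets $X_1, \dots, X_t \in \mathcal{X}$ whose union contains $U'$. The hypothesis that the minimum cut of $G[U']$ is $\ge \gamma \tilde\lambda > 0$ is used only to ensure $G[U']$ is connected: every edge of $G[U']$ has its endpoints in (possibly equal) $X_i, X_j$, witnessing contact-adjacency. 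Thus $\{X_1, \dots, X_t\}$ spans a connected subgraph of the contact graph; fix any spanning tree and union the $X_i$'s along it in BFS order. After $t - 1 = O(K)$ steps the resulting vertex set lies in $\mathcal{S}_J$ and contains $U'$.

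The volume bound (property~(\ref{item:construct-S-1})) is immediate by induction: $\vol^W(S) \le (j + 1) \cdot 3 s_0 = O(K s_0)$ for every $S \in \mathcal{S}_j$. For the counting and running time (property~(\ref{item:construct-S-2})), let $\Delta$ be the max-degree of the contact graph. Each $X \in \mathcal{X}$ has $|X| = O(s_0 / \tilde\lambda)$ vertices, and the number of $\mathcal{X}$-sets either sharing a vertex with $X$ or joined to $X$ by a sufficiently heavy $G[A]$-edge is bounded by combining (i) the per-vertex multiplicity $O((s_0 / \tilde\lambda)^5 \log m)$ from \Cref{lem:well-connected}(2), with (ii) a cap on the number of ``heavy-edge'' endpoints leaving $X$ obtained by quotienting $G[A]$ at a weight threshold tuned to the $\gamma \tilde\lambda$ connectivity of $U'$ (analogous to the role of $H$ in the proof of \Cref{lem:well-connected}). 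Together these give $\Delta = O(s_0 \log m / (\tilde\lambda \gamma))^{O(1)}$. Since the number of rooted subtrees of the contact graph of size $\le J+1$ touching a given vertex is at most $\Delta^{O(J)} = O(s_0 \log m / (\tilde\lambda \gamma))^{O(K)}$, this upper-bounds the per-vertex $\mathcal{S}$-multiplicity, summing to $|\mathcal{S}| \le O(s_0 \log m / (\tilde\lambda \gamma))^{O(K)} |A|$. The running time follows by enumerating contact-adjacencies round-by-round through the auxiliary structure.

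The main technical obstacle is calibrating the ``heavy-edge'' auxiliary structure so that the $\gamma \tilde\lambda$ connectivity of $G[U']$ survives into connectivity of the $X_i$'s in the contact graph: light-weight edges must be prevented from blowing up $\Delta$, yet enough of them must remain for the spanning-tree argument to traverse $U'$. Once this calibration is in place, the rest (volume induction, tree enumeration, amortized edge scan) is routine bookkeeping.
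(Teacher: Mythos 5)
Your construction is the same as the paper's: build a contact graph on the collection $\mathcal X$ from \Cref{lem:well-connected} (adjacency when two sets share a vertex or are joined by a heavy edge), and let $\mathcal S$ be the vertex unions over connected subgraphs of size $O(K)$. The volume bound, the per-vertex multiplicity bound via the contact-graph degree, and the reduction of property~(\ref{item:construct-S-3}) to "the $\le 2525K$ covering sets from $\mathcal X$ are connected in the contact graph" all match the paper.

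However, the one step you flag as the "main technical obstacle" is exactly the step your argument gets wrong, and it is the crux of the lemma. You assert that connectivity of $G[U']$ gives contact-adjacency because "every edge of $G[U']$ has its endpoints in (possibly equal) $X_i,X_j$, witnessing contact-adjacency." That is false for any positive weight threshold: a light edge of $G[U']$ between $X_i$ and $X_j$ does not make them adjacent, and setting the threshold to zero destroys the degree bound $\Delta$. The paper's resolution is not edge-by-edge but cut-by-cut, with an averaging argument that fixes the threshold at $3\gamma\tilde\lambda^3/s_0^2$: suppose for contradiction the covering sets split into two groups with no contact edge between them; then the two groups are vertex-disjoint and induce a bipartition of $U'$. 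Since $\vol^W(U')\le s_0$ and every vertex has weighted degree at least $\lambda$, we have $|U'|\le s_0/\lambda$, so at most $(|U'|/2)^2\le s_0^2/(4\lambda^2)$ unweighted edges cross this bipartition, while the minimum cut of $G[U']$ being at least $\gamma\tilde\lambda$ forces total crossing weight at least $\gamma\tilde\lambda$. Hence some single crossing edge has weight at least $\gamma\tilde\lambda/(s_0^2/(4\lambda^2))=4\gamma\tilde\lambda\lambda^2/s_0^2\ge 3\gamma\tilde\lambda^3/s_0^2$, which is a contact edge --- contradiction. This same threshold then yields the degree bound: each $X$ has $\vol^W(X)\le 3s_0$, so at most $(s_0/\tilde\lambda)^3/\gamma$ heavy edges leave it, and combined with the per-vertex multiplicity $O((s_0/\tilde\lambda)^5\ln m)$ one gets $\Delta=O((s_0/\tilde\lambda)^8(\ln m)/\gamma)$. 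Without this calibration your proof of property~(\ref{item:construct-S-3}) does not go through, so the proposal as written has a genuine gap rather than merely deferred bookkeeping.
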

\begin{proof}
The algorithm begins with the set $\mathcal X$ from \Cref{lem:well-connected}. Construct an auxiliary graph $H$ whose vertices are the sets $X\in\mathcal X$. Add an edge between $X,X'\in\mathcal X$ if $X\cap X'\ne\emptyset$ or there exists an edge between $X$ and $X'$ of weight at least $3\gamma\tilde\lambda^3/s_0^2$. This completes the construction of $H$.

We show that the degree of $H$ is $O((s_0/\tilde\lambda)^8(\ln m)/\gamma)$. For a given $X\in\mathcal X$, since $\vol^W (X)\le3s_0$ by property~(\ref{item:well-connected-1}) of \Cref{lem:well-connected}, and since each vertex $v\in X$ has $d^W(v)\ge\lambda$, we have $|X|\le3s_0/\lambda\le O(s_0/\tilde\lambda)$. By property~(\ref{item:well-connected-2}), each vertex $v\in X$ is in $O((s_0/\tilde\lambda)^5\ln m)$ many sets $X\in\mathcal X$, so the total number of edges $(X,X')$ added incident to $X$ because $X\cap X'\ne\emptyset$ is $O((s_0/\tilde\lambda)^6\ln m)$. Next, we bound the number of edges $(X,X')$ added because of an edge between $X$ and $X'$ of weight at least $3\gamma\tilde\lambda^3/s_0^2$. Note that from $\vol^W (X)\le3s_0$ we obtain that there are at most $3s_0/(3\gamma\tilde\lambda^3/s_0^2) = (s_0/\tilde\lambda)^3/\gamma$ many edges of weight at least $3\gamma\tilde\lambda^3/s_0^2$ incident to $X$. For each of these edges, there are $O((s_0/\tilde\lambda)^5\ln m)$ many sets $X'\in\mathcal X$ containing the other endpoint, for a total of $O((s_0/\tilde\lambda)^8(\ln m)/\gamma)$ many edges $(X,X')$.

The algorithm considers all connected subgraphs of at most $2525K$ vertices in $H$. For each connected subgraph with vertices $X_1,\ldots,X_\ell$, add $X_1\cup\cdots\cup X_\ell$ to $\mathcal S$. Since $\vol^W (X_i)\le3s_0$ for each $i\in[\ell]$, property~(\ref{item:construct-S-1}) holds by construction. By the degree bound on $H$, each vertex $X$ in $H$ is in $O((s_0/\tilde\lambda)^8(\ln m)/\gamma)^{O(K)}$ many connected subgraphs (since we can identify each one by a spanning tree of size $O(K)$). Each vertex is in at most $O((s_0/\tilde\lambda)^5\ln m)$ many sets $X\in\mathcal X$, each of which shows up $O((s_0/\tilde\lambda)^8(\ln m)/\gamma)^{O(K)}$ times in the construction of $\mathcal S$, fulfilling property~(\ref{item:construct-S-2}).

For the rest of the proof, we prove property~(\ref{item:construct-S-3}). For each $i\in[k]$, since $U_i\subseteq U$ and $G[U_i]$ has minimum cut at least $0.1\tilde\lambda$, property~(\ref{item:well-connected-3}) of \Cref{lem:well-connected} guarantees at most $2525$ sets $X\in\mathcal X$ whose union contains $U_i$. In total, over all $i\in[k]$, there are at most $2525K$ many subsets $X\in\mathcal X$ whose union contains $U_1\cup\cdots\cup U_k\supseteq U'$. It remains to show that these sets, when viewed as vertices in $H$, form a connected component in $H$.

Let $Y$ be this set of vertices in $H$. Suppose for contradiction that there is a bipartition $\{X_1,\ldots,X_\ell\}$, $\{X'_1,\ldots,X'_{\ell'}\}$ of $Y$ such that there is no edge in $H$ between any pair of vertices $X_i,X'_j$. Then, $Z=X_1\cup\cdots\cup X_\ell$ and $Z'=X'_1\cup\cdots\cup X'_{\ell'}$ are disjoint vertex sets in $A$ since a common vertex would imply an edge between the bipartition in $H$. In particular, since $Z\cup Z'$ contains $U'$, we have that $Z\cap U'$ and $Z'\cap U'$ partition $U'$. Since $\vol^W (U')\le\vol^W (U)\le s_0$, we have $|U'|\le s_0/\lambda$, so there are at most $(|U'|/2)^2\le s_0^2/(4\lambda^2)$ many edges between $Z\cap U'$ and $Z'\cap U'$. Since the minimum cut in $G[U']$ is at least $\gamma\tilde\lambda$, the total weight of edges between $Z\cap U'$ and $Z'\cap U'$ in $G$ is at least $\gamma\tilde\lambda$. It follows that at least one edge between $Z\cap U'$ and $Z'\cap U'$ has weight at least $\gamma\tilde\lambda/(s_0^2/(4\lambda^2))=4\gamma\tilde\lambda\lambda^2/s_0^2\ge3\gamma\tilde\lambda^3/s_0^2$. By construction of the edges of $H$, there must be an edge between some $X_i,X'_j$, a contradiction.

Finally, we bound the running time. By property~(\ref{item:well-connected-2}) of \Cref{lem:well-connected}, each vertex appears at most $O((s_0/\tilde\lambda)^5\ln m)$ times in $\mathcal X$, so in the construction of $H$, it is visited $O((s_0/\tilde\lambda)^5\ln m)$ times in total. Also, $|\mathcal X|\le O((s_0/\tilde\lambda)^5\ln m) \cdot |A|$. Therefore, constructing $H$ takes time $O((s_0/\tilde\lambda)^5\ln m)|E(G[A])|$. For each vertex in $H$, there are $O((s_0/\tilde\lambda)^8(\ln m)/\gamma)^{O(K)}$ many connected components with at most $2525K$ vertices, and these can be enumerated in the same asymptotic running time. For each connected subgraph with vertices $X_1,\ldots,X_\ell$, we have $|X_1\cup\cdots\cup X_\ell|\le\sum_i|X_i|\le\sum_i\vol(X_i)/\lambda\le\sum_i3s_0/\lambda\le\sum_iO(s_0/\tilde\lambda)$, so it takes $O(Ks_0/\tilde\lambda)$ time to add the set $X_1\cup\cdots\cup X_\ell$ to $\mathcal S$. Overall, the running time is bounded by $\tilde{O}((\frac{s_0\ln m}{\tilde\lambda\gamma})^{O(K)}|E(G[A])|)$.
\end{proof}

\subsubsection{Executing the flows}

With the sources established by \Cref{lem:construct-S}, we now set up and run the flow instances.
We need the following algorithm which is based on the \emph{approximate isolating cuts} algorithm of \cite{li2023near}. We defer the details to \Cref{sec:deferred-proofs}. 
Given a large cluster $A$, the algorithm repeatedly removes subsets that form small clusters from $A$. In this subsection we use $A'$ to denote the current version of $A$, which potentially has some small clusters already removed.

Given a cluster $A'$ and a family of vertex-disjoints sets of $\mathcal S$ (where $\mathcal S$ was constructed for $A'$), the algorithm in the next lemma constructs a family of subsets $C_i$ of $A'$ such that three properties hold, which are crucial to prove \Cref{lem:final-cuts}. Intuitively, Property~(\ref{item:isolating-cuts-3}) below says that the cut $U^\dag$ can be uncrossed to $U^*$ that is contained within $C_i$, but we are also penalized for boundary edges in the difference $U^\dag\setminus U^*$; see Figure~\ref{fig:isolating-cuts}. Note that we do not guarantee that $U^*$ equals $U^\dag\cap C_i$ as standard uncrossing may suggest; we only guarantee that is it a subset.

\begin{figure}
    \begin{center}
        \includegraphics{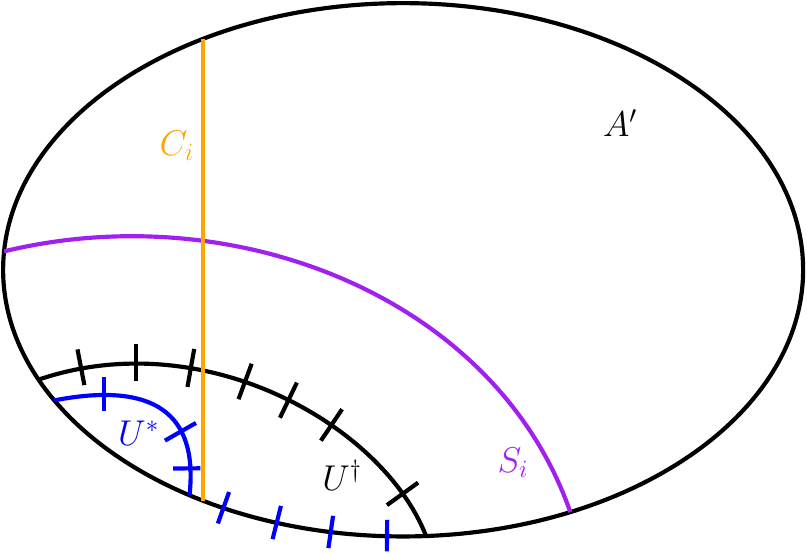}
        \caption{The set $U^\dag$ is uncrossed to $U^*\subseteq U^\dag\cap C_i$ according to property~(\ref{item:isolating-cuts-3}) of \Cref{lem:isolating-cuts}. The blue edges have weight at most $(1+\epsilon)$ times the black edges.}\label{fig:isolating-cuts}
    \end{center}
\end{figure}

\begin{lemma}\label{lem:isolating-cuts}
Let $A'\subseteq V$ be a cluster, let $\epsilon\le0.1$ be a parameter, and let $S_1,\ldots,S_k$ be disjoint vertex subsets of $A'$. There is an algorithm that computes, for each $i\in[k]$, a set $C_i\subseteq A'$ such that
 \begin{enumerate}
 \item $\vol^W (C_i) \le (\frac{10 s_0}{\epsilon \tilde \lambda})\, \vol^W (S_i)$.\label{item:isolating-cuts-1}
 \item $\partial_{G[A']}C_i \le (1-\epsilon)w(C_i,V\setminus A')$.\label{item:isolating-cuts-2}
 \item For any set $U^\dag\subseteq S_i$ with $\vol^W (U^\dag)\le s_0$, there exists $U^*\subseteq U^\dag\cap C_i$ such that $w(V\setminus A',U^\dag\setminus U^*) +\partial_{G[A']}U^*\le(1+\epsilon)\partial_{G[A']}U^\dag$. Moreover, if $U^*\ne U^\dag$ then $\partial_{G[A']}U^\dag\ge0.2\lambda$.
\label{item:isolating-cuts-3}
 \end{enumerate}
The sets $C_i$ are vertex-disjoint, and the algorithm runs in time $\tilde{O}((\frac{s_0}{\epsilon\tilde\lambda})^{O(1)}|E(G[A'])|)$ (over all $i\in[k]$).
\end{lemma}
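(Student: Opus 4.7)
The plan is to reduce the task to an instance of the approximate isolating cuts algorithm of~\cite{li2023near}. We construct an augmented graph $\hat G$ on vertex set $A' \cup \{s, t\}$ by (i) copying each edge $e$ of $G[A']$ with capacity $(1-\epsilon)w_e$, (ii) adding edges $(s,v)$ of capacity $w(v, V\setminus A')$ for each $v \in A'$ (so $s$ encodes the entire boundary of $A'$), and (iii) adding ``waste'' edges $(v, t)$ of capacity $\eta\,d^W(v)$ for each $v \in A'$, where $\eta = \epsilon\tilde\lambda/(10s_0)$. For each $i \in [k]$, we compute an approximate min $s$-$S_i$ cut in $\hat G$, with $t$ kept on the $s$-side, and define $C_i$ to be the $A'$-part of the $S_i$-side; by construction $S_i \subseteq C_i$. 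Since the $S_i$ are vertex-disjoint, the approximate isolating cuts framework computes all $k$ such cuts using only $\polylog(n/\epsilon)$ max-flow invocations on $\hat G$, each of which can be carried out in $\tilde O(|E(G[A'])|)$ time via the weighted Unit-Flow machinery of Section~\ref{sec:unit-flow}; this gives the promised running time. Vertex-disjointness of the $C_i$ follows from the isolating structure (disjoint sources yield disjoint isolating cuts) together with a simple greedy tie-breaking step.

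Property~\ref{item:isolating-cuts-2} comes from local optimality: for each $v \in C_i\setminus S_i$, moving $v$ out of $C_i$ cannot decrease the cut cost, which after the $(1-\epsilon)$ scaling implies that the weight of internal $G[A']$-edges from $v$ into $C_i$ dominates $w(v,V\setminus A')+\eta d^W(v)$ plus the edge-mass to $A'\setminus C_i$; summing over $v \in C_i \setminus S_i$ telescopes to $(1-\epsilon)\partial_{G[A']}C_i \le w(C_i, V\setminus A')$. Property~\ref{item:isolating-cuts-1} follows similarly: the total waste-flux at $t$ through $C_i$ is at most $\eta\vol^W(C_i)$ and is bounded by the source $w(C_i,V\setminus A')$ entering $C_i$, which in turn is $O(\vol^W(S_i))$ by combining the local-optimality bound with the fact that any newly acquired external weight $w(C_i\setminus S_i,V\setminus A')$ must be paid for by internal boundary reduction; rearranging yields $\vol^W(C_i)\le (10s_0/(\epsilon\tilde\lambda))\vol^W(S_i)$.

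Property~\ref{item:isolating-cuts-3}, the uncrossing guarantee, is where the main technical content lies. For $U^\dag \subseteq S_i$ with $\vol^W(U^\dag) \le s_0$, we take $U^* = U^\dag \cap C_i$. The key step is submodularity of the $\hat G$-cut function applied to $C_i$ and $C_i \cup U^\dag$---both are valid $s$-$S_i$ cut sides since both contain $S_i$---combined with the minimality of $C_i$, which yields $\partial_{\hat G}(C_i \cap U^\dag) \le \partial_{\hat G}(U^\dag)$. Expanding $\partial_{\hat G}$ into its three contributions (scaled internal $G[A']$-edges, $s$-edges of value $w(\cdot, V\setminus A')$, waste $t$-edges of value $\eta\vol^W(\cdot)$), canceling the $\vol^W(\cdot)$ terms using $\vol^W(U^\dag) = \vol^W(U^*)+\vol^W(U^\dag\setminus U^*)$, and absorbing the $(1-\epsilon)^{-1}$ factor into $(1+\epsilon)$ (valid for $\epsilon \le 0.1$), gives $w(V\setminus A', U^\dag\setminus U^*) + \partial_{G[A']}U^* \le (1+\epsilon)\partial_{G[A']}U^\dag$. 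The ``moreover'' clause follows because $U^*\ne U^\dag$ forces $U^\dag\setminus U^*$ to be a nonempty vertex set, so $\partial_G(U^\dag\setminus U^*)\ge\lambda$; combining with the uncrossing inequality and the identity $\partial_G X = w(X, V\setminus A') + \partial_{G[A']}X$ extracts $\partial_{G[A']}U^\dag \ge 0.2\lambda$.

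The main obstacle will be executing the submodularity calculation of property~\ref{item:isolating-cuts-3} with exact constants, together with confirming that the approximate isolating cuts framework adapts cleanly to our augmented $\hat G$ (which grows $G[A']$ by only $O(|A'|)$ extra vertices and edges) and produces vertex-disjoint $C_i$'s in the promised $\tilde O((s_0/(\epsilon\tilde\lambda))^{O(1)}|E(G[A'])|)$ total time.
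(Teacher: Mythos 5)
Your construction forces $S_i\subseteq C_i$ (you contract $S_i$ into a terminal and take the $S_i$-side of an $s$--$S_i$ cut), and this reversal of orientation is a genuine gap: Property~\ref{item:isolating-cuts-2} is simply not achievable under that constraint. If $S_i$ sits deep inside $A'$, say $A'$ is internally expanding with its few boundary edges far from $S_i$, then every set $C_i\supseteq S_i$ of volume at most $(\tfrac{10s_0}{\epsilon\tilde\lambda})\vol^W(S_i)$ has $w(C_i,V\setminus A')=0$ while $\partial_{G[A']}C_i\ge\lambda>0$, so no valid output exists. A symptom of the same issue is that your Property~\ref{item:isolating-cuts-3} trivializes: since $U^\dag\subseteq S_i\subseteq C_i$ you always have $U^\dag\cap C_i=U^\dag$, so $U^*=U^\dag$ works vacuously and the ``moreover'' clause never fires --- whereas the lemma is written precisely to handle the case where $C_i$ fails to capture all of $U^\dag$. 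The paper's construction points the other way: it attaches an auxiliary terminal $t_i$ to each $v\in S_i$ by an edge of weight $(1-2\epsilon)\,w_G(v,V\setminus A')$ (so $t_i$ encodes $S_i$'s \emph{external} attachment), attaches a single volume-terminal $t^*$ to every $v\in A'$ with weight $\tfrac{\epsilon\tilde\lambda}{10s_0}d^W_G(v)$, and isolates $t_i$ from $\{t_1,\dots,t_k,t^*\}\setminus\{t_i\}$; then $C_i=V_{t_i}\setminus\{t_i\}$, which may omit most or all of $S_i$. Property~\ref{item:isolating-cuts-2} then follows by comparing against the singleton cut $\{t_i\}$ of value $(1-2\epsilon)w(S_i,V\setminus A')$ (together with the flow certificate saturating $\partial V_{t_i}$), and the ``expensive'' parts of $S_i$ left outside $C_i$ are certified by the $\partial_{G[A']}U^\dag\ge0.2\lambda$ clause.

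Two further points. First, your submodularity step would not survive even in a corrected formulation: uncrossing $U^\dag$ against a cut that is only a $(1+\epsilon')$-approximate minimum yields $\partial_{\hat G}(U^\dag\cap C_i)\le\partial_{\hat G}(U^\dag)+\epsilon'\,\partial_{\hat G}(C_i)$, and the error term $\epsilon'\partial_{\hat G}(C_i)$ is not controlled by $\partial_{G[A']}U^\dag$. This is exactly why the paper invokes the \emph{fair cuts} machinery of \cite{li2023near}, whose output comes with the built-in guarantee that for every $U$ separating $t_i$ there exists $U'\subseteq U\cap V_{t_i}$ with $\partial U'\le(1+\epsilon')\partial U$; it also needs $\epsilon'\le c\tilde\lambda/s_0$ so that the volume-terminal contribution $\tfrac{\epsilon\tilde\lambda}{10s_0}\vol^W(U^\dag)\le\tfrac{\epsilon\tilde\lambda}{10}$ can be absorbed. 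Second, the running-time claim needs the conductance of the augmented graph to be $\Omega(\epsilon\tilde\lambda/s_0)$ (which the $t^*$-edges provide) in order to get a deterministic congestion approximator; ``each max-flow in $\tilde O(|E(G[A'])|)$ time via Unit-Flow'' is not justified as stated.
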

We now prove the main lemma of this section, restated below. In the lemma, we decompose a cluster $A$ into sets $A_i$ (that result from the $C_i$ of the lemma above) and the ``leftover'' piece, called $A_0$.
The first property of the lemma below guarantees that each set $A_i$ with $i\ge 1$ is  small and, thus, the algorithm for small clusters can be used to uncross all approximate minimum cuts in it. 
The second property guarantees that the boundary $\partial_{G[A']}C_i$ is a factor of $(1- \epsilon)$ smaller than the piece of the boundary $\partial A'$ that is cut out when $C_i$ is removed from $A'$. This property can be used to bound the number of new inter-cluster edges that are created when $C_i$ is removed from $A'$.
Think of $U$ in the third property
as being an approximately minimum cut, which is then ``uncrossed'' with $A_0$: the new $U^*$ no longer crosses the leftover piece $A_0$ and its cut size $\partial_{G[A]}U^*$ is only a factor $(1+\epsilon)$ larger. For technical reasons, we also have to account for the total weight $w(U\setminus U^*,V\setminus A)$ of ``external edges'' lost moving from $U$ to $U^*$. 
\FinalCuts*
For the rest of the section, we describe and analyze the algorithm for \Cref{lem:final-cuts}, called $\mathcal A$. 

The algorithm $\mathcal{A}$ consists of the following steps:
\begin{enumerate}
\item Apply \Cref{lem:construct-S} with $\gamma=\frac{\epsilon\tilde\lambda}{100s_0}$ and $K=10$ to obtain the set $\mathcal S$. 
\item Partition $\mathcal S$ into subsets $\mathcal S_1,\ldots,\mathcal S_\ell$, where $\ell\le(\frac{s_0\ln m}{\epsilon\tilde\lambda})^{O(1)}$ and the sets in each $\mathcal S_i$ are vertex-disjoint. This can be done by partitioning the sets in $\mathcal S$ into independent sets of the graph $H$ in the proof of \Cref{lem:construct-S}, which has degree $O((s_0/\tilde\lambda)^8(\ln m)/\gamma)\le (\frac{s_0\ln m}{\epsilon\tilde\lambda})^{O(1)}$.

\item Set $A'\gets A$ and $\mathcal C\gets\emptyset$. 
\item Repeat $O(\log|A|)$ times: 
\begin{enumerate}
    \item For every set $\mathcal S_j$ in $\mathcal S_1,\ldots,\mathcal S_\ell$ execute the following steps:
\begin{enumerate}
    \item For each $S_i \in \mathcal S_j$ run  the algorithm in \Cref{lem:isolating-cuts} with parameter $\epsilon/2$, the current cluster $A'$ and the sets $S_i\cap A'$ (discarding any sets that are empty), which computes for every set $S_i \cap A'$ with $S \in\mathcal S_j$  a set $C_i$
    \item Add all sets $C_i$  as a new elements to $\mathcal C$
    \item Set  $A'\gets A'\setminus\cup_iC_i$.
\end{enumerate}
\item Set $A_0$ as the final $A'$, and rename the sets in  $\mathcal C$ to be $\{A_1,\ldots,A_r\}$.
\end{enumerate}
\end{enumerate}
This concludes the description of the algorithm, which runs in time $\tilde{O}((\frac{s_0}{\epsilon\tilde\lambda})^{O(1)}|E(G[A])|)$ because \Cref{lem:construct-S} and \Cref{lem:isolating-cuts} run in the same asymptotic time and the latter is called $\ell\log|A|\le(\frac{s_0\ln m}{\epsilon\tilde\lambda})^{O(1)}$ times.

We first prove Properties~(\ref{item:final-cuts-1}) and (\ref{item:final-cuts-2}). Property~(\ref{item:final-cuts-1}) follows from Property~(\ref{item:isolating-cuts-1}) of \Cref{lem:isolating-cuts} and property~(\ref{item:construct-S-1}) of \Cref{lem:construct-S}. For Property~(\ref{item:final-cuts-2}), we track the value of $\partial A'$ throughout the algorithm. Every time a collection of sets $C_i$ are removed from $A'$, the value $\partial A'$ loses $\sum_iw(C_i,V\setminus A')$ and gains at most $\sum_i\partial_{G[A']}C_i$. For each $i$, we have $\partial_{G[A']}C_i\le(1-\epsilon/2)w(C_i,V\setminus A')$ by property~(\ref{item:isolating-cuts-2}) of \Cref{lem:isolating-cuts}, so the net loss of $\partial A'$ is at least $(\epsilon/2)\sum_iw(C_i,V\setminus A') \ge (\epsilon/2)\sum_i\partial_{G[A']}C_i$. We charge the new edges cut, which have total weight at most $\sum_i\partial_{G[A']}C_i$, to the decrease in $\partial A'$. It follows that the total weight of inter-cluster edges cannot exceed $O(\epsilon^{-1})$ times the original value of $\partial A'$, which is $\partial A$.

For the remainder of the proof, we focus on property~(\ref{item:final-cuts-3}). Fix a set $U\subseteq A$ with $\partial_{G[A]}U\le1.001\tilde\lambda$ and $\vol^W (U)\le s_0$. Let $U_1,\ldots, U_k$ be a partition the fulfills the conditions of
\Cref{lem:general-case}. Note that each $U_i$ is a subset $U'$ of $A$ that satisfies the conditions in property~(\ref{item:construct-S-3}) of \Cref{lem:construct-S} with  $\gamma=0.1$ and $k=1$.
Thus, \Cref{lem:construct-S} implies that each $
U_i$ is contained in some set that belongs to the collection $\mathcal S$ computed by the above algorithm in the first step.

Let $I$ be the subset of indices $i\in[k]$ with $U_i\cap A'\ne\emptyset$ (for the current set $A'$), which may lose elements over time as $A'$ shrinks.

\begin{lemma}
On each cycle through $\mathcal S_1,\ldots,\mathcal S_\ell$, the set $I$ loses at least $|I|/2-5$ elements.
\end{lemma}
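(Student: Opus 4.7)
The plan is to invoke property~(2') of \Cref{lem:general-case}, applied to the cluster $A'$ as it stands at the start of the cycle: at least $|I|/2-5$ indices $i\in I$ satisfy $\partial_{G[A']}(U_i\cap A')\le 0.4\tilde\lambda$. I will argue that every such index is removed from $I$ by the end of the cycle, which yields the claimed bound.

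Fix one such $i$, and recall that $U_i\subseteq S_{\sigma(i)}$ for a unique $S_{\sigma(i)}\in\mathcal S$, placed in exactly one color class $\mathcal S_{t(i)}$. Let $A''$ denote the value of $A'$ just before the cycle processes $\mathcal S_{t(i)}$. If $U_i\cap A''=\emptyset$ already, then $i$ has been removed from $I$ by some earlier $C_j$ within the cycle and there is nothing to check; so assume $U_i\cap A''\ne\emptyset$. The key monotonicity observation is $\partial_{G[A'']}(U_i\cap A'')\le\partial_{G[A']}(U_i\cap A')$: every edge counted on the left lies in $G[A'']$, hence has both endpoints in $A''\subseteq A'$, with exactly one endpoint in $U_i$ and the other outside $U_i$, so it is also counted on the right.

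I then apply property~(3) of \Cref{lem:isolating-cuts} with $U^\dag := U_i\cap A''$, which is a subset of $S_{\sigma(i)}\cap A''$ of weighted volume at most $s_0$. This yields $U^*\subseteq U^\dag\cap C_{\sigma(i)}$ satisfying the stated boundary inequality, and via the moreover clause, $U^*=U^\dag$ whenever $\partial_{G[A'']}U^\dag$ falls below the threshold of that lemma. Combining the monotonicity bound with $\tilde\lambda\in[\lambda,1.01\lambda]$ lets me conclude $U^*=U^\dag$, so $U_i\cap A''\subseteq C_{\sigma(i)}$. Once $C_{\sigma(i)}$ is removed in the course of processing $\mathcal S_{t(i)}$, the set $U_i$ no longer meets the updated $A'$, and $i$ therefore exits $I$ before the end of the cycle.

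The main obstacle I anticipate is reconciling the numerical thresholds: one must ensure that $0.4\tilde\lambda$ from \Cref{lem:general-case} lies strictly below the moreover threshold of \Cref{lem:isolating-cuts}. This comes down to a routine, but delicate, constant-tuning exercise between the $0.1\tilde\lambda$ minimum-cut parameter in \Cref{lem:general-case} and the cut-size threshold in \Cref{lem:isolating-cuts}; in the worst case it can be handled by an amortized argument where each index that survives the cycle is charged to $\Omega(\lambda)$ worth of boundary edges of $U_i$ against $A''\setminus U_i$, and these charges are controlled by $\partial_{G[A']}U\le1.01\tilde\lambda$ together with the inter-cluster weight bound in property~(3) of \Cref{lem:general-case}, capping the number of survivors at $|I|/2+5$.
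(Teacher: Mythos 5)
Your skeleton matches the paper's: invoke property~(2') of \Cref{lem:general-case} to get $|I|/2-5$ indices with $\partial_{G[A']}(U_i\cap A')\le0.4\tilde\lambda$, use monotonicity of the internal boundary under shrinking $A'$, and then apply property~(3) of \Cref{lem:isolating-cuts} to $U^\dag=U_i\cap A'$ to conclude $U^\dag\subseteq C_i$ so that the index dies when $C_i$ is removed. But the step you flag as "routine constant-tuning" is in fact where your argument breaks. The ``moreover'' clause of \Cref{lem:isolating-cuts} only forces $U^*=U^\dag$ when $\partial_{G[A']}U^\dag<0.2\lambda$, whereas all you have is $\partial_{G[A']}U^\dag\le0.4\tilde\lambda\ge0.4\lambda$. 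So the threshold reconciliation you hope for is impossible: $0.4\tilde\lambda$ sits strictly above $0.2\lambda$, not below it, and no choice of $\tilde\lambda\in[\lambda,1.01\lambda]$ fixes that.

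Your amortized fallback does not close the gap either. Charging each survivor $\Omega(\lambda)$ of boundary against $A'\setminus U_i$ and bounding the total by $\partial_{G[A]}U\le1.01\tilde\lambda$ plus the inter-cluster weight of \Cref{lem:general-case} fails because that inter-cluster weight is $0.1(k-1)\tilde\lambda$, which scales linearly with the number of parts; the resulting cap on survivors is roughly $|I|+O(1)$, which is vacuous. The missing idea is to bypass the $0.2\lambda$ threshold entirely: suppose $U^*\subsetneq U^\dag$; then $U^\dag\setminus U^*$ is a nonempty cut of $G$, so $\lambda\le\partial_G(U^\dag\setminus U^*)\le\partial_{G[A']}U^\dag+\partial_{G[A']}U^*+w(U^\dag\setminus U^*,V\setminus A')$, and the main inequality of property~(3) bounds the last two terms by $(1+\epsilon)\partial_{G[A']}U^\dag$, giving $\lambda\le(2+\epsilon)\partial_{G[A']}U^\dag\le(2+\epsilon)\cdot0.4\tilde\lambda<\lambda$, a contradiction. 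This effectively raises the usable threshold from $0.2\lambda$ to about $\lambda/(2+\epsilon)\approx0.476\lambda$, which is what makes the $0.4\tilde\lambda$ bound from property~(2') sufficient.
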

\begin{proof}
By property~(2') of \Cref{lem:general-case}, there are at least $|I|/2-5$ indices $i\in I$ such that $\partial_{G[A']}(U_i\cap A')\le0.4\tilde\lambda$. We show that after another cycle through $\mathcal S_1,\ldots,\mathcal S_\ell$, each of these indices is removed from $I$. Let $i\in I$ be an index with $\partial_{G[A']}(U_i\cap A')\le0.4\tilde\lambda$. Recall that $U_i$ is contained in some set $S_i$, which belongs to some $\mathcal S_j$. As the algorithm goes through $\mathcal S_1,\ldots,\mathcal S_{j-1}$, the set $A'$ can only shrink, so we still have $U_i\cap A'\ne\emptyset$ and $\partial_{G[A']}(U_i\cap A')\le0.4\tilde\lambda$ right before processing $\mathcal S_j$. 

To avoid clutter, define $U'_i=U_i\cap A'$ and $S'_i=S_i\cap A'$. Recall that on iteration $\mathcal S_j$, the algorithm runs \Cref{lem:isolating-cuts} on the sets $S\cap A'$ for $S\in\mathcal S_j$. In particular, we have $U'_i=U_i\cap A'\subseteq S_i\cap A'=S'_i$ and $\vol^W (U'_i) \le s_0$, so $U'_i$ satisfies the conditions of property~(\ref{item:isolating-cuts-3}) of \Cref{lem:isolating-cuts} (for $U^\dag\gets U'_i$ and $S_i\gets S'_i$). Let $C_i\subseteq A'$ be the corresponding set for $S_i\gets S'_i$, and let $U_i^*$ be the set $U^*$ guaranteed by property~(\ref{item:isolating-cuts-3}) for $U^\dag\gets U'_i$, so that $U^*_i\subseteq U'_i\cap C_i$. If $U^*_i=U'_i$, then $U_i\cap A'=U^*_i\subseteq C_i$ and $C_i$ is removed from $A'$ on iteration $\mathcal S_j$, which means index $i$ is removed from $I$, a contradiction. Therefore, $U^*_i\subsetneq U'_i$.

Consider the quantity $\partial_G(U'_i\setminus U^*_i)$. All edges involved are either in $\partial_{G[A']}(U'_i\setminus U^*_i)$ or $w(U'_i\setminus U^*_i,V\setminus A')$. The former expression can be further divided: the edges in $\partial_{G[A']}(U'_i\setminus U^*_i)$ are either in $\partial_{G[A']}U'_i$ or $\partial_{G[A']}U^*_i$ (or both). Applying the conditions of property~(\ref{item:isolating-cuts-3}), we obtain
\[ \lambda \le \partial_G(U'_i\setminus U^*_i) \le \partial_{G[A']}U'_i+\partial_{G[A']}U^*_i+w(U'_i\setminus U^*_i,V\setminus A') \le (2+\epsilon)\partial_{G[A']}U'_i .\]
It follows that $\partial_{G[A']}U'_i>0.4\tilde\lambda$, a contradiction. It follows that index $i$ leaves set $I$ after processing $\mathcal S_j$.
\end{proof}

As long as $|I|\ge11$, we have $|I|/2-5>0$, so the set $I$ loses a constant fraction of its vertices each cycle. It follows that after $O(\log|A|)$ cycles, we have $|I|\le10$.

At this point, let $A^\dag $ be the current value of $A'$, and consider $U\cap A^\dag $ in $G[A^\dag ]$, which also satisfies $\partial_{G[A^\dag]}(U\cap A^\dag)\le\partial_{G[A]}U\le1.001\tilde\lambda$ and $\vol(U\cap A^\dag)\le s_0$. We define a refinement of this partition as follows (these are not steps executed by our algorithm, as $U$ is unknown, but are just used for refining $\mathcal P$):
Iteratively partition the vertex set $U\cap A^\dag $ as follows: start with the collection $\mathcal P=\{U\cap A^\dag \}$, and while there exists a vertex set $U'\in\mathcal P$ such that $G[U']$ has a cut of weight less than $\gamma\tilde\lambda=\frac{\epsilon\tilde\lambda}{100s_0}\tilde\lambda$, replace $U'$ in $\mathcal P$ with the two sides $U'_1,U'_2\subseteq U'$ of the cut. Let $\mathcal P$ be the final partition of $U\cap A^\dag $.

\begin{lemma}\label{lem:last-cycle}
After one more cycle through $\mathcal S_1,\ldots,\mathcal S_\ell$ (and setting $A_0\gets A'$ at the end), for each $U'\in\mathcal P$, there exist $A^*\supseteq A_0$ and $U^*\subseteq U'\cap A^*\setminus A_0$ such that $w(V\setminus A^*,(U'\cap A^*)\setminus U^*)+\partial_{G[A^*]}U^*\le(1+\epsilon/2)\partial_{G[A^*]}(U'\cap A^*)$. Moreover, if $U^*\ne U'\cap A^*$ then $\partial_{G[A^*]}(U'\cap A^*)\ge0.2\lambda$. 
\end{lemma}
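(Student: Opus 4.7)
The plan is to apply \Cref{lem:construct-S}~(\ref{item:construct-S-3}) to locate a set $S\in\mathcal S$ containing $U'$, then invoke \Cref{lem:isolating-cuts}~(\ref{item:isolating-cuts-3}) during the processing of the batch that contains $S$ in the final cycle. First I would check that $U'$ satisfies the hypotheses of \Cref{lem:construct-S}~(\ref{item:construct-S-3}). By construction of $\mathcal P$, the induced subgraph $G[U']$ has minimum cut at least $\gamma\tilde\lambda=\frac{\epsilon\tilde\lambda}{100s_0}\cdot\tilde\lambda$, matching the value of $\gamma$ used in step~1 of $\mathcal A$. Moreover $U'\subseteq U\cap A^\dag\subseteq U$ and $\vol^W(U')\le s_0$.

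The covering required by property~(\ref{item:construct-S-3}) would come from the partition $U_1,\ldots,U_k$ of $U$ guaranteed by \Cref{lem:general-case}, restricted to the indices $i\in I$. By the end of the $O(\log|A|)$ cycles we arranged $|I|\le10=K$, and each $G[U_i]$ has min cut at least $0.1\tilde\lambda$ by property~(\ref{item:general-case-1}). For $i\notin I$ we have $U_i\cap A^\dag=\emptyset$, hence (since $U'\subseteq A^\dag$) $U_i\cap U'=\emptyset$, so $U'=\bigcup_{i\in I}(U_i\cap U')\subseteq\bigcup_{i\in I}U_i$. Thus \Cref{lem:construct-S}~(\ref{item:construct-S-3}) yields some $S\in\mathcal S$ with $U'\subseteq S$; let $j$ be the index with $S\in\mathcal S_j$.

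Now I would define $A^*$ to be the value of $A'$ immediately before $\mathcal S_j$ is processed in the final cycle. Since $A'$ only shrinks over time, $A_0\subseteq A^*$, as required. Invoke \Cref{lem:isolating-cuts} with parameter $\epsilon/2$, cluster $A^*$, and the vertex-disjoint family $\{S''\cap A^*:S''\in\mathcal S_j\}$; the set $S\cap A^*$ appears in this family. Set $U^\dag=U'\cap A^*\subseteq S\cap A^*$; note $\vol^W(U^\dag)\le\vol^W(U')\le s_0$. Then property~(\ref{item:isolating-cuts-3}) of \Cref{lem:isolating-cuts} (applied to $U^\dag$ and the cut $C$ it produces for $S\cap A^*$) gives a set $U^*\subseteq U^\dag\cap C$ with
\[
w(V\setminus A^*,(U'\cap A^*)\setminus U^*)+\partial_{G[A^*]}U^*\le(1+\epsilon/2)\partial_{G[A^*]}(U'\cap A^*),
\]
together with the moreover clause $\partial_{G[A^*]}(U'\cap A^*)\ge0.2\lambda$ whenever $U^*\ne U'\cap A^*$. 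Finally, after $\mathcal S_j$ is processed $C$ is deleted from $A'$, so $C\cap A_0=\emptyset$, which combined with $U^*\subseteq C$ gives $U^*\subseteq (U'\cap A^*)\setminus A_0$, completing the required conclusion.

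The only nontrivial step is verifying the hypotheses of \Cref{lem:construct-S}~(\ref{item:construct-S-3}); everything else is bookkeeping about which version of $A'$ is current when $\mathcal S_j$ is processed. The key reason the covering hypothesis is met is the bound $|I|\le 10$ established by the preceding cycle-analysis lemma together with the fact that the partition supplied by \Cref{lem:general-case} has induced-mincut at least $0.1\tilde\lambda$ on each part; thus the tuning $K=10$ in step~1 of $\mathcal A$ is exactly matched.
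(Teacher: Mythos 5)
Your proposal is correct and follows essentially the same route as the paper: use \Cref{lem:construct-S}~(3) (with the cover $\{U_i\}_{i\in I}$, $|I|\le 10=K$, from \Cref{lem:general-case}) to find $S\in\mathcal S_j$ containing $U'$, take $A^*$ to be the value of $A'$ just before $\mathcal S_j$ is processed in the last cycle, and apply \Cref{lem:isolating-cuts}~(3) to $U^\dag=U'\cap A^*\subseteq S\cap A^*$. Your explicit verification that $U'\subseteq\bigcup_{i\in I}U_i$ (via $U_i\cap A^\dag=\emptyset$ for $i\notin I$) is a detail the paper glosses over, but the argument is the same.
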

\begin{proof}
In the construction of $\mathcal P$, we make at most $|U\cap A^\dag |\le|U|\le s_0/\lambda \le 2s_0/\tilde\lambda$ many cuts of weight at most $\gamma\tilde\lambda=\frac{\epsilon\tilde\lambda}{100s_0}\tilde\lambda$, so each $U'\in\mathcal P$ at the end has boundary
\[ \partial_{G[A^\dag ]}U'\le\partial_{G[A^\dag ]}(U\cap A^\dag )+2s_0/\tilde\lambda\cdot\gamma\tilde\lambda \le 1.001\tilde\lambda + 0.02\epsilon\tilde\lambda \le 1.01\tilde\lambda .\]
By construction, $G[U']$ has minimum cut at least $\gamma\tilde\lambda$, so applying property~(\ref{item:isolating-cuts-3}) of \Cref{lem:construct-S} on $U'\subseteq U$ and the (at most $10$) sets $U_i$ for $i\in I$ guarantees a set $S'\in\mathcal S$ containing $U'$. Let $\mathcal S_j$ be the set containing $S'$.

As the algorithm $\mathcal A$ goes through $\mathcal S_1,\ldots,\mathcal S_{j-1}$, the set $A'$ can only shrink, so we still have $\partial_{G[A']}(U'\cap A')\le1.01\tilde\lambda$ right before processing $\mathcal S_j$. Recall that on iteration $\mathcal S_j$, the algorithm runs \Cref{lem:isolating-cuts} on the sets $S\cap A'$ for $S\in\mathcal S_j$. In particular, we have $U'\cap A'\subseteq S'\cap A'$, so the conditions of property~(\ref{item:isolating-cuts-3}) of \Cref{lem:isolating-cuts} are satisfied (for $U^\dag\gets U'\cap A'$ and $S_i\gets S'\cap A'$). We obtain a set $U^*\subseteq U'\cap A'\cap C_i$ with $w(V\setminus A',(U'\cap A')\setminus U^*)+\partial_{G[A']}U^*\le(1+\epsilon/2)\partial_{G[A']}(U'\cap A')$.
We set $A^*=A'$ for the current set $A' $ (which might be a superset of the final set $A'$) which implies  that the inequality in \Cref{lem:last-cycle} holds. 

If $U^*\ne U'\cap A^*$ then $\partial_{G[A^*]}(U'\cap A^*)\ge0.2\lambda$ by the additional guarantee of property~(\ref{item:isolating-cuts-3}) of \Cref{lem:isolating-cuts}. At the end of the cycle, we have $A^*\supseteq A'=A_0$ because $A'$ can only lose elements over time, and $U^*\subseteq U'\cap A^*\setminus A_0$ because $U^*\subseteq U'\cap A^*\cap C_i$ and $A_0\cap C_i=\emptyset$ (since $C_i$ is removed from $A'$ on iteration $\mathcal S_j$).
\end{proof}

\begin{figure}
\begin{center}
\includegraphics{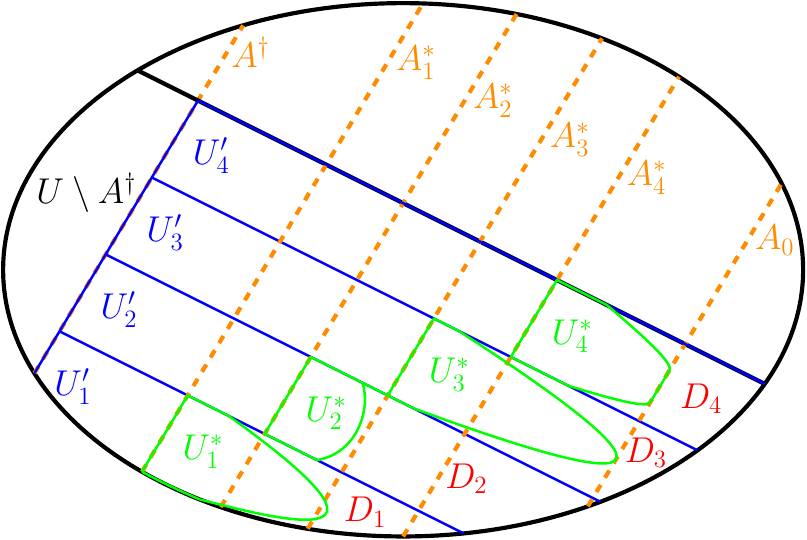}
\caption{The variables defined for the remainder of the proof. The black ellipse represents the cluster $A$ and the bottom half (below the black diagonal) is the set $U$.}\label{fig:executing-the-flows}
\end{center}
\end{figure}

For the remainder of the proof, we encourage the reader to use Figure~\ref{fig:executing-the-flows} as a reference.
Suppose the algorithm $\mathcal A$ terminates after the cycle analyzed by \Cref{lem:last-cycle}. Let us denote the sets in $\mathcal P$ by $U'_1,\ldots,U'_k$. \Cref{lem:last-cycle} applied to each $U'_i$ shows the existence of a set $A^*_i\supseteq A_0$ and a set $U^*_i\subseteq U'_i\cap A^*_i\setminus A_0$. To avoid clutter, define $D_i=(U'_i\cap A^*_i)\setminus U^*_i$. We finally set $U^*=U\setminus\cup_iD_i$.

Recall that $A$ is a cluster and we need to show a claim about a set $U \subseteq A$.
Observe that $U$ can be decomposed as $U=(U\setminus A^\dag )\cup(\cup_iU'_i)$, and $U^*$ can be decomposed as $U^*=(U\setminus A^\dag )\cup(\cup_i(U'_i\setminus D_i))$. We write
\begin{align}
w(D_i,U^*) &= w(D_i, (U\setminus A^\dag )\cup(\cup_j(U'_j\setminus D_j)) ) \nonumber
\\&= w(D_i,U\setminus A^\dag ) + \sum_{j\in[k]}w(D_i,U'_j\setminus D_j) \nonumber
\\&= w(D_i,U\setminus A^\dag ) + w(D_i,U'_i\setminus D_i) + \sum_{j\ne i}w(D_i,U'_j\setminus D_j) \nonumber
\\&= w(D_i,(U\setminus A^\dag )\cup(U'_i\setminus D_i)) + \sum_{j\ne i}w(D_i,U'_j\setminus D_j) \nonumber
\\&\le w(D_i,(A\setminus A^*_i)\cup(U'_i\cap A^*_i\setminus D_i)) + \sum_{j\ne i}w(U'_i,U'_j) \nonumber
\\&=w(D_i,A\setminus A^*_i) + w(D_i,U'_i\cap A^*_i\setminus D_i) + \sum_{j\ne i}w(U'_i,U'_j) \nonumber
\\&=w(D_i,A\setminus A^*_i) + w(D_i,U_i^*) + \sum_{j\ne i}w(U'_i,U'_j), \label{eq:medium-cuts-0}
\end{align}
where the last equation holds as $U'_i\cap A^*_i\setminus D_i = U_i^*$.
and
\begin{align}
w(D_i,A\setminus U) &\ge w(D_i,A^*_i\setminus U) \label{eq:medium-cuts-1}
\\&= w(D_i,A^*_i\setminus \cup_jU'_j) & \text{since }A^*_i\subseteq A^\dag  \nonumber
\\&= w(D_i,(A^*_i\setminus U'_i)\setminus \cup_{j\ne i}U'_j) \nonumber
\\&\ge w(D_i,A^*_i\setminus U'_i) - \sum_{j\ne i}w(D_i,U'_j) \nonumber
\\&\ge w(D_i,A^*_i\setminus U'_i) - \sum_{j\ne i}w(U'_i,U'_j). \label{eq:medium-cuts-2}
\end{align}
To avoid clutter, define $D=\cup_iD_i$. Note that $U^* = U \setminus D$. By examining the difference in edges, observe that $\partial_{G[A]}(U^*)-\partial_{G[A]}U = w(D,U^*) - w(D,A\setminus U)$. Using Equations~(\ref{eq:medium-cuts-0}) and (\ref{eq:medium-cuts-2}), we obtain
\begin{align}
\partial_{G[A]}U^*-\partial_{G[A]}U &= \partial_{G[A]}(U^*)-\partial_{G[A]}U \nonumber
\\&= w(D,U^*) - w(D,A\setminus U) \nonumber
\\&= \sum_{i\in[k]}\big( w((D_i,U^*) - w((D_i,A\setminus U) \big) \nonumber
\\&\stackrel{\mathclap{(\ref{eq:medium-cuts-0}),(\ref{eq:medium-cuts-2})}}\le \sum_{i\in[k]}\bigg(\big( w(D_i,A\setminus A^*_i) + w(D_i,U_i^*) + \sum_{j\ne i}w(U'_i,U'_j) \big) \nonumber
\\& \qquad\qquad \; -\big( w(D_i,A^*_i\setminus U'_i) - \sum_{j\ne i}w(U'_i,U'_j) \big)\bigg) \nonumber
\\&= \sum_{i\in[k]}\big( w(D_i,A\setminus A^*_i) + w(D_i,U_i^*) - w(D_i,A^*_i\setminus U'_i) \big) \label{eq:medium-cuts-4}
\\& \quad \ + 2\sum_{\substack{i,j\in[k]\\i\ne j}}w(U'_i,U'_j).\nonumber
\end{align}
In the following note that
$w(A^*_i\setminus U^*_i,U^*_i) =  \partial_{G[A^*_i]}U^*_i
$ and
$w(U'_i\cap A^*_i,A^*_i\setminus U'_i) = \partial_{G[A^*_i]}(U'_i\cap A^*_i)$.
For the expression $w(D_i,A\setminus A^*_i) + w(D_i,U_i^*) - w(D_i,A^*_i\setminus U'_i)$, we add $w(U_i^*,A^*_i\setminus U'_i)$ to the second and third terms so that they become
\begin{align*}
&w(D_i,A\setminus A^*_i) + w(D_i\cup (A^*_i\setminus U'_i),U_i^*) - w(D_i\cup U_i^*,A^*_i\setminus U'_i)
\\={}&w(D_i,A\setminus A^*_i) + w(A^*_i\setminus U^*_i,U_i^*) - w(U'_i\cap A^*_i,A^*_i\setminus U'_i)
\\={}&w(D_i,A\setminus A^*_i) + w(A^*_i\setminus U^*_i,U^*_i) - w(U'_i\cap A^*_i,A^*_i\setminus U'_i)
\\={}&w(D_i,A\setminus A^*_i) + \partial_{G[A^*_i]}U^*_i - \partial_{G[A^*_i]}(U'_i\cap A^*_i)
\\={}&w(D_i,V\setminus A^*_i)-w(D_i,V\setminus A) + \partial_{G[A^*_i]}U^*_i - \partial_{G[A^*_i]}(U'_i\cap A^*_i)
\\ \le{}&(1+ \epsilon/2) \partial_{G[A_i^*]}(U_i' \cap A_i^*) - \partial_{G[A_i^*]} U_i^* -w(D_i,V\setminus A) + \partial_{G[A^*_i]}U^*_i - \partial_{G[A^*_i]}(U'_i\cap A^*_i)
\\={}&\epsilon/2\partial_{G[A^*_i]}(U'_i\cap A^*_i)-w(D_i,V\setminus A)
,
\end{align*}
where the last inequality follow from \Cref{lem:last-cycle}.

Continuing from Equation~(\ref{eq:medium-cuts-4}), we obtain
the following bound. Notat that the second inequality below
comes from the fact that $U'_j$ can be larger than
$A^*_i\cap U'_j$ in general.
\begin{align}
\partial_{G[A]}U^*-\partial_{G[A]}U &\le \sum_{i\in[k]} \big(\epsilon/2 \cdot \partial_{G[A^*_i]}(U'_i\cap A^*_i)-w(D_i,V\setminus A)\big)+ 2\sum_{i,j:i\ne j}w(U'_i,U'_j) \nonumber
\\&= \big(\sum_{i\in[k]}\epsilon/2 \cdot \partial_{G[A^*_i]}(U'_i\cap A^*_i) \big)-w(D,V\setminus A) +2 \sum_{i,j:i\ne j}w(U'_i,U'_j) \nonumber
\\&= \sum_{i\in[k]}\epsilon/2 \cdot w(U'_i\cap A^*_i,A^*_i\setminus U'_i)-w(D,V\setminus A) +2 \sum_{i,j:i\ne j}w(U'_i,U'_j) \nonumber
\\&\le \sum_{i\in[k]}\epsilon/2\bigg( w(U'_i\cap A^*_i,A^*_i\setminus U) + \sum_{j\ne i}w(U'_i\cap A^*_i,U'_j)  \bigg)-w(D,V\setminus A)\nonumber
\\& \quad \ + 2 \sum_{i,j:i\ne j}w(U'_i,U'_j) \nonumber
\\&\le\sum_{i\in[k]}\epsilon/2 \cdot w(U'_i\cap A^*_i,A^*_i\setminus U)-w(D,V\setminus A) + (2+\epsilon/2) \sum_{i,j:i\ne j}w(U'_i,U'_j). \label{eq:medium-cuts-5}
\end{align}
We bound $\sum_{i\in[k]}\epsilon/2 \cdot  w(U'_i\cap A^*_i,A^*_i\setminus U)$ as
\begin{align}
\sum_{i\in[k]}\epsilon/2 \cdot  w(U'_i\cap A^*_i,A^*_i\setminus U) \le \sum_{i\in[k]}\epsilon/2 \cdot  w(U'_i\cap A^\dag,A^\dag\setminus U) \nonumber
&= \epsilon/2 \cdot  w(\cup_i(U'_i\cap A^\dag),A^\dag\setminus U) \nonumber
\\&= \epsilon/2 \cdot  w(U\cap A^\dag,A^\dag\setminus U) \nonumber
\\&= \epsilon/2 \cdot \partial_{G[A^\dag]}(U\cap A^\dag) \nonumber
\\&\le \epsilon/2 \cdot  \partial_{G[A]}U . \label{eq:medium-cuts-6}
\end{align}
To bound $\sum_{i,j:i\ne j}w(U'_i,U'_j)$, recall that the construction of $\mathcal P=\{U'_1,\ldots,U'_k\}$ was formed by at most $|U\cap A^\dag|\le|U|\le s_0/\lambda \le 2s_0/\tilde\lambda$ many cuts, each of weight less than $\gamma\tilde\lambda=\frac{\epsilon\tilde\lambda}{100s_0}\tilde\lambda$, so $\sum_{i,j:i\ne j}w(U'_i,U'_j) = w(U'_1,\ldots,U'_k)\le2s_0/\tilde\lambda\cdot\gamma\tilde\lambda\le0.02\epsilon\tilde\lambda$. Continuing from Equation~(\ref{eq:medium-cuts-5}), we obtain
\[ \partial_{G[A]}U^*-\partial_{G[A]}U \le \epsilon/2 \cdot \partial_{G[A]}U-w(D,V\setminus A)+0.05\epsilon\tilde\lambda .\]
Note that we need to show that
\[ \partial_{G[A]}U^*-\partial_{G[A]}U \le \epsilon\partial_{G[A]}U-w(D,V\setminus A).\]

If $\partial_{G[A]}U\ge0.1\lambda$ then property~(\ref{item:final-cuts-3}) follows, since $D=U\setminus U^*$ and $\epsilon/2 \cdot \partial_{G[A]}U+0.05\epsilon\tilde\lambda \le \epsilon\partial_{G[A]}U$. 

If $\partial_{G[A]}U<0.1\lambda$ then we actually claim that $U^*=U$. If not, then there exists $i\in[k]$ such that $U^*_i\ne U'_i\cap A^*_i$. By \Cref{lem:last-cycle}, we must have $\partial_{G[A^*_i]}(U'_i\cap A^*_i)\ge0.2\lambda$. Repeating the chain of inequalities from Equation~(\ref{eq:medium-cuts-6}) for the last inequality below,
\begin{align*}
0.2\lambda \le \partial_{G[A^*_i]}(U'_i\cap A^*_i) = 
w(U'_i\cap A^*_i,A^*_i\setminus U'_i) &\le w(U'_i\cap A^*_i,A^*_i\setminus U) + \sum_{j\ne i}w(U'_i\cap A^*_i,U'_j) \\&\le  w(U'_i\cap A^*_i,A^*_i\setminus U)+0.02\epsilon\tilde\lambda \stackrel{(\ref{eq:medium-cuts-6})}\le \partial_{G[A]}U+0.02\epsilon\tilde\lambda ,
\end{align*}
contradicting the assumption that $\partial_{G[A]}U<0.1\lambda$. It follows that $U^*=U$. Note that $U^* \cap A_0 = \emptyset$ follows by construction and $\partial_{G[A]}U^*-\partial_{G[A]}U \le \epsilon\partial_{G[A]}U-w(D,V\setminus A)$ hold trivially. Thus
 property~(\ref{item:final-cuts-3}) is fulfilled.

Therefore, we have ensured property~(\ref{item:final-cuts-3}) follows.
This concludes the proof of \Cref{lem:final-cuts}.

\subsubsection{Deferred Proofs}\label{sec:deferred-proofs}

\begin{proof}[Proof of \Cref{lem:forest-packing}]
The algorithm is a greedy minimum spanning tree packing algorithm. Initialize each edge length 
$\ell_e$ in $H[U]$ to be $1$. For iteration $i$ from $1$ to $\lceil\kappa\ln m\rceil$, compute a minimum spanning tree $T_i$ with respect to lengths $\ell_e$, and then remove all edges in $T_i$ of length more than $e^5m^6$. Add the resulting forest $F_i$ to the collection and then increase the length $\ell_e$ of each edge $e$ in $F_i$ by factor $1.1$.

We first prove property~(\ref{item:forest-packing-1}). For each edge $e$ in $H$, its length $\ell_e$ increases by factor $1.1$ every time $e$ joins a forest $F_i$, so the edge participates in at most $\log_{1.1}e^5m^6 \le 100\ln m$ forests for $m$ large enough, as promised.

To prove property~(\ref{item:forest-packing-2}), fix $U$ such that $H[U]$ has minimum cut at least $\kappa$. It suffices to show that over $\lceil\kappa\ln m\rceil$ iterations of the algorithm, no edge in $H[U]$ has length $\ell_e$ more than $e^5m^6$. In that case, each minimum spanning tree $T_i$ must connect the vertices in $U$ with edges of length $\ell_e\le e^5m^6$, and these edges are not deleted in $F_i$, i.e., all nodes of $u$ are connected in each forest $F_i$.

To show this claim, we interpret the algorithm in the fractional packing setting of Young~\cite{young2002randomized} and follow the proof of Lemma~6.2 in~\cite{young2002randomized}. Consider an induced subgraph $H[U]$ with minimum cut at least $\kappa$. For each $i$ from $1$ to $\lceil\kappa\ln m\rceil$, consider the forest $T_i\cap H[U]$. It is not hard to see, by properties of the minimum spanning tree, that $T_i\cap H[U]$ is a subset of some minimum spanning tree $T'_i$ of $H[U]$ with respect to the current lengths $\ell$. In other words, on each iteration, within the scope of $H[U]$, the algorithm increases by factor $1.1$ the lengths of a subset of the edges of a minimum spanning tree with the current lengths $\ell$. We show that over $\lceil\kappa\ln m\rceil$ iterations, no edge in $H[U]$ can have length more than $e^5m^6$.

Define $P\subseteq\mathbb R^{E(H[U])}$ as the spanning tree polytope on $H[U]$, i.e., the convex hull of the indicator vectors of the spanning trees of $H[U]$. Define $\Delta\subseteq\mathbb R^{E(H[U])}$ as the simplex on $E(H[U])$, i.e., the convex hull of the unit vectors. By duality, we have
\[ \min_{x\in P}\max_{y\in\Delta}\,\langle x,y\rangle = \max_{y\in\Delta}\min_{x\in P}\,\langle x,y\rangle = \max_{\substack{y\ge\mathbf0\\y\ne\mathbf0}}\frac{\min_{x\in P}\sum_ex_ey_e}{\sum_ey_e}  .\]
By a theorem of Nash-Williams~\cite{nash1961edge}, there is an (integral) spanning tree packing of any graph of value at least half the minimum cut of the graph. By scaling down the packing and the fact that $H[U]$ has minimum cut at least $\kappa$, we obtain
\[  \min_{x\in P}\max_{y\in\Delta}\,\langle x,y\rangle \le \frac1{\kappa/2} = \frac2\kappa .\]
For each iteration $i$, by construction, the quantity $\sum_{e\in H[U]}\ell_e$ increases by factor $\sum_{e\in T_i\cap H[U]}0.1\ell_e$. Since $T_i\cap H[U] \subseteq T'_i$, this is at most $\sum_{e\in T'_i}0.1\ell_e$. Since the spanning tree polytope $P$ is integral, we have $\sum_{e\in T'_i}\ell_e=\min_{x\in P}\sum_ex_e\ell_e$. Putting everything together, the increase in $\sum_{e\in H[U]}\ell_e$ is
\[ \sum_{e\in T_i\cap H[U]}0.1\ell_e \le \sum_{e\in T'_i}0.1\ell_e = 0.1 \min_{x\in P}\sum_{e\in H[U]}x_e\ell_e \le 0.1\cdot (\min_{x\in P}\max_{y\in\Delta}\,\langle x,y\rangle)\sum_{e\in H[U]}\ell_e \le 0.1\cdot\frac2\kappa\sum_{e\in H[U]}\ell_e .\]
After $\lceil\kappa\ln m\rceil$ iterations, the quantity $\sum_{e\in H[U]}\ell_e$ is at most
\[ m \cdot \left(1+\frac5\kappa\right)^{\lceil\kappa\ln m\rceil} \le m \cdot e^{(5/\kappa)(\kappa\ln m+1)} \le m \cdot e^{5\ln m+5}= e^5m^6 .\]
It follows that $\ell_e\le\sum_{e\in H[U]}\ell_e\le e^5m^6$ for each edge $e\in H[U]$, as claimed.

\end{proof}

\begin{proof}[Proof of \Cref{lem:partition-tree}]
First, remove from $T$ all vertices $v\in V(H)=A$ with $d^W_G(v)>s_0$. Then, consider each connected component (tree) $T'$ in the remaining forest. If $\vol^W _G(V(T'))\le 3s_0$, we add $V(T')$ to the output and proceed to the next connected component. If $\vol^W _G(V(T'))>3s_0$, we will partition the edges of $T'$ into edge-disjoint trees $T_1,\ldots,T_\ell$ such that $\vol^W (V(T_i))\in( s_0,3s_0]$, and then add the sets $V(T_i)$ to the output. 

We now describe the partitioning process. Root the tree $T'$ at an arbitrary vertex and while the remaining tree has (vertex) volume more than $3s_0$, do the following. Find the lowest (i.e., farthest from the root) node $v$ whose subtree rooted at $v$ has volume more than $s_0$. By construction, the subtrees rooted at the children of $v$ have volume at most $s_0$. We now have two cases.
 \begin{enumerate}
 \item If the subtrees rooted at the children of $v$ have total volume  at most $2s_0$, then since $d^W_G(v)\le s_0$, the total volume of the subtree rooted at $v$ is at most $3s_0$, and we declare the subtree rooted at $v$ as one of the trees $T_i$.
 \item Otherwise, take a subset of these subtrees whose total volume is in $(s_0,2s_0]$. Connect $v$ to these subtrees, declare the result as one of the output trees $T_i$.
 \end{enumerate}
In both cases, we have $\vol^W (V(T_i))\in(s_0,3s_0]$ by construction. We then remove $T_i$ from $T'$ (as well as the newly isolated vertices) and proceed to the next iteration. Each iteration decreases $\vol^W (V(T'))$ by at most $2s_0$ until $\vol^W (V(T_i))\le3s_0$, which means $\vol^W (V(T_i))>s_0$ and we finish by taking $T'$ as the final $T_i$.

Property~(\ref{item:partition-tree-1}) holds by construction. For property~(\ref{item:partition-tree-2}), since the trees $T_i$ are edge-disjoint, each vertex $v\in A$ can belong to at most $d_{T'}(v)\leq_T(v)$ many sets $V(T_i)$. For the rest of the proof, we focus on property~(\ref{item:partition-tree-3}). Since $\vol^W (U)\le s_0$, we cannot remove any vertices from $U$ on the very first step, so $U$ is contained in a connected component $T'$. If $\vol^W _G(V(T'))\le 3 s_0$, then the entire $V(T')$ is added and property~(\ref{item:partition-tree-3}) holds. Otherwise, the edges of $T'$ are partitioned into edge-disjoint trees $T_1,\ldots,T_\ell$. Since $\vol^W (V(T_i))>s_0$ for all $i$, we cannot have $V(T_i)\subseteq U$ for any $i$. So any $V(T_i)$ that intersects $U$ must contain an edge with exactly one endpoint in $U$. Since there are $k$ such edges, there are at most $k$ many $V(T_i)$ that intersect $U$, concluding property~(\ref{item:partition-tree-3}).

This algorithm can be easily implemented in $\tilde O(|V(T)|)$ time using dynamic tree data structures that update the volume of subtrees after each vertex deletion.
\end{proof}

\begin{proof}[Proof of \Cref{lem:isolating-cuts}.]
We use the following \emph{approximate isolating cuts} algorithm of \cite{li2023near}. The theorem below resembles Theorem~6.2 of~\cite{li2023near} with an additional property that follows directly from the algorithm itself. We also relabel a few of the parameters to avoid overloading variables.

 \begin{theorem}\label{thm:isolating-cuts-source}
 Fix any $\epsilon'<1$. Given an undirected graph $H=(V,E)$ and a set of terminal vertices $T\subseteq V$, there is an algorithm that outputs, for each $t\in T$, a $(1+\epsilon')$-approximate minimum cut $V_t$ separating $t$ from $T\setminus t$, i.e., $V_t\cap T=\{t\}$. Moreover, the sets $V_t$ are disjoint, and for each $t\in T$ and a set $U$ separating $t$ from $T\setminus t$, there exists $U'\subseteq U\cap V_t$ such that $\partial U'\le(1+\epsilon')\partial U$. Finally, for each $t\in T$, there exists a feasible flow such that
 \begin{enumerate}
 \item The flow originates at $t$ and is not absorbed by any nodes in $V_t$. More precisely, vertex $t$ has positive source and zero sink, and each vertex $v\in V_t\setminus t$ has zero source and zero sink.\label{item:isolating-cuts-source-1}
 \item Each edge $e\in E(V_t,V\setminus V_t)$ sends at least $(1-\epsilon')w(e)$ flow in the direction from $V_t$ to $V\setminus V_t$.\label{item:isolating-cuts-source-2}
 \end{enumerate}

Note that in $V\setminus V_t$, the flow can do anything as long as it remains feasible.

Let $\phi$ be the \emph{conductance} of $H$, defined as $\min_{\emptyset\subsetneq S\subsetneq V}\partial S/\min\{\vol^W (S),\vol^W (\bar S)\}$. Then, the algorithm takes deterministic time $O(m/(\epsilon\phi)^{O(1)})$.
 \end{theorem}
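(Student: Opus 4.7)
The plan is to invoke the approximate isolating cuts framework of \cite{li2023near} largely as a black box. The overall approach is a standard adaptation of the Li--Panigrahi isolating cuts lemma to the approximate setting: rather than using exact max-flow computations, one uses a deterministic $(1+\epsilon')$-approximate max-flow routine whose running time scales as $O(m/(\epsilon\phi)^{O(1)})$ on graphs of conductance $\phi$. The statement essentially repackages that framework, and the only property not literally in the cited theorem is the existence of the certifying flow in property~(1)--(2), which will follow from the flow produced by the approximate max-flow subroutine itself.

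The first step of the plan is the bit-labeling reduction. I would assign each terminal $t \in T$ a binary label of length $k = \lceil\log_2 |T|\rceil$, and for each bit position $i \in [k]$ and each bit value $b \in \{0,1\}$ solve one approximate max-flow instance in which all terminals whose $i$-th bit equals $b$ are contracted into a super-source and all other terminals are contracted into a super-sink. For any two distinct terminals $t, t' \in T$ there exists a bit on which they differ, so the minimum cut separating $t$ from $T \setminus t$ is ``witnessed'' by at least one of these $O(\log |T|)$ flow instances. Using the stated deterministic approximate max-flow on each instance gives a total running time of $O(\log |T|) \cdot O(m/(\epsilon\phi)^{O(1)}) = O(m/(\epsilon\phi)^{O(1)})$ after adjusting constants.

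The second step is to extract $V_t$ and verify the structural guarantees. For each $t$, I would take $V_t$ to be the intersection, over all bit positions, of the source-side components (from the approximate min-cut returned by the corresponding max-flow instance) that contain $t$. Disjointness of the $V_t$ is immediate since any $t' \neq t$ differs from $t$ in some bit and thus lies on the opposite side of that bit-cut. For the uncrossing property, given any $U$ separating $t$ from $T \setminus t$, I would iteratively replace $U$ by $U \cap V_t^{(i)}$ where $V_t^{(i)}$ is the $t$-side of the $i$-th bit-cut, using standard submodular-uncrossing: because $V_t^{(i)}$ is a $(1+\epsilon')$-approximate minimum $(T_0^{(i)}, T_1^{(i)})$-cut, intersecting $U$ with it cannot blow up $\partial$ by more than the slack of the approximation, and since $U'$ ends up contained in $U \cap V_t$ the bound $\partial U' \le (1+\epsilon')\partial U$ is preserved (after rescaling $\epsilon'$ by the $O(\log |T|)$ factor to absorb iterated losses).

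Finally, for properties~(1) and (2) on the certifying flow, I would extract the flow directly from the approximate max-flow solver on each bit instance: the solver returns a flow saturating each edge crossing the computed min-cut to within a $(1-\epsilon')$ factor. Restricting each such flow to the portion originating at $t$ (possible because terminals have disjoint flow contributions in a max-flow on a super-source / super-sink construction), and combining the restrictions across bits while staying inside $V_t$ (which is contained in every bit-cut's $t$-side), yields a single feasible flow with source only at $t$, zero sink inside $V_t$, and $(1-\epsilon')w(e)$ flow crossing each $e \in E(V_t, V \setminus V_t)$. The main obstacle is ensuring that the deterministic approximate max-flow subroutine actually achieves the $O(m/(\epsilon\phi)^{O(1)})$ running time and produces a flow certificate with the stated per-edge saturation; I would rely on the specific solver invoked in \cite{li2023near}, which is precisely designed to give both the approximate cut and the dual flow with these per-edge guarantees.
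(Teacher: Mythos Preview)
Your overall architecture (bit-labeling bipartitions, $O(\log|T|)$ flow computations, intersect the $t$-sides, then uncross) matches the paper's. The real gap is in the uncrossing step. You write that ``because $V_t^{(i)}$ is a $(1+\epsilon')$-approximate minimum $(T_0^{(i)},T_1^{(i)})$-cut, intersecting $U$ with it cannot blow up $\partial$ by more than the slack of the approximation.'' That is not what submodularity gives you. From
\[
\partial(U\cap S)+\partial(U\cup S)\le \partial U+\partial S
\]
and the fact that $U\cup S$ is itself a $(T_0^{(i)},T_1^{(i)})$-cut, you only get $\partial(U\cup S)\ge \partial S/(1+\epsilon')$, hence
\[
\partial(U\cap S)\le \partial U+\frac{\epsilon'}{1+\epsilon'}\,\partial S.
\]
The additive error is proportional to $\partial S$, the size of the bipartition cut, which in general separates half the terminals from the other half and can be enormously larger than $\partial U$, the isolating cut for a single terminal. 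So ``standard submodular uncrossing'' simply fails in the approximate regime; the $(1+\epsilon')\partial U$ bound does not follow.

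This is exactly why the paper (and \cite{li2023near}) do \emph{not} use generic approximate max-flow here but instead compute \emph{fair cuts}: a $(1+\epsilon')$-fair cut comes with a certifying flow that saturates every boundary edge to a $(1-\epsilon')$ fraction, and it is this flow-based definition (Lemma~1.2 of \cite{li2023near}) that restores the uncrossing inequality $\partial(U\cap S_i)\le(1+\Theta(\epsilon'))\partial U$. The same fair-cut flow is also what directly yields properties~(\ref{item:isolating-cuts-source-1})--(\ref{item:isolating-cuts-source-2}); your plan to ``restrict each bit-instance flow to the portion originating at $t$'' and ``combine across bits'' does not produce a single feasible flow with the per-edge saturation guarantee on $\partial V_t$. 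A second, smaller discrepancy: the paper does not take $V_t$ to be the raw intersection of the bit-cut sides (whose boundary need not be small); it removes the bit-cut edges, takes the connected component of $t$, contracts its complement, and runs one more fair-cut computation to actually obtain a small-boundary $V_t$. Finally, the deterministic $O(m/(\epsilon\phi)^{O(1)})$ running time comes from observing that on a conductance-$\phi$ graph the singleton partition is a trivial congestion approximator of quality $1/\phi$, which derandomizes the one randomized ingredient in the fair-cuts algorithm.
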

\begin{proof}[Proof (Sketch)]
We refer to Algorithm~1 in Section~6 of~\cite{li2023near}. The algorithm begins with a collection of $O(\log|T|)$ bipartitions of $T$ such that every pair of terminals in $T$ is separated by at least one bipartition. For each bipartition $X_i,Y_i$ of $T$, the algorithm computes a $(1+\Theta(\epsilon/\log|T|))$-\emph{fair cut} $S_i$ between $X_i$ and $Y_i$. The definition of fair cuts from~\cite{li2023near} is rather technical so we skip it for simplicity. Instead, we focus on its key \emph{approximate uncrossing} property formalized by Lemma~1.2 of~\cite{li2023near}. Let $V_t^*$ be the minimum cut separating $t$ from $T\setminus t$, fix an index $i\le O(\log|T|)$, and suppose without loss of generality that $S_i$ is the side of the $(X_i,Y_i)$-cut containing $t$. We can ``uncross'' $V_t^*$ with the cut $S_i$ so that the new cut $V_t^*\cap S_i$ only increases by factor $(1+\Theta(\epsilon/\log|T|))$ compared to $V_t^*$. We uncross $V_t^*$ iteratively with each $S_i$ so that the final cut $V_t^*\cap(\cap_iS_i)$ increases by a factor $(1+\Theta(\epsilon/\log|T|))^{O(\log|T|)}\le1+O(\epsilon)$ compared to $V_t^*$.

The rest of the algorithm looks for $V_t^*\cap(\cap_iS_i)$ instead of $V_t^*$. After computing the $O(\log|T|)$ cuts, the algorithm removes all cut edges from the graph, isolating the terminals from each other. For each terminal $t\in T$, the algorithm considers the connected component containing $t$, contracts the graph outside this component, and makes one final $(1+\Theta(\epsilon/\log|T|))$-fair cut computation to obtain $V_t$, which approximates $V_t^*\cap(\cap_iS_i)$. The existence of a feasible flow satisfying properties~(\ref{item:isolating-cuts-source-1}) and~(\ref{item:isolating-cuts-source-2}) follows from the definition of fair cuts; we omit the details.


For the running time, we note that the fair cuts algorithm in \cite{li2023near} runs in $\tilde{O}(m/\epsilon^{O(1)})$ time regardless of $\phi$, but their algorithm is randomized. Fortunately, as remarked right before Section~1.2 of~\cite{li2023near}, the only randomized component is building a \emph{congestion approximator}~\cite{sherman2013nearly} of the graph. If $G$ has conductance $\phi$, then there is a trivial congestion approximator (the singleton vertices) with ``quality'' $1/\phi$, so there is a fast, deterministic construction. The final running time picks up a $1/\phi^{O(1)}$ factor due to the quality. We omit the details, which are standard in the literature.
\end{proof}

We construct $H=(V_H,E_H)$ as follows. Start with $G[A']$ and add a vertex $t^*$ connected to each vertex $v\in A'$ by an edge of weight $\frac{\epsilon\tilde\lambda}{10s_0}d^W_G(v)$. Next, for each $i\in[k]$, add a vertex $t_i$ connected to each vertex $v\in S_i$ by an edge of weight $(1-2\epsilon)w_G(V\setminus A',v)$. Call the resulting graph $H$, and let $T=\{t_1,\ldots,t_k,t^*\}$.

We first bound the conductance of the graph. Consider a subset $S\subseteq V_H$, and assume without loss of generality that $t^*\notin S$. Then, $\partial_HS\ge w_H(S,t^*)=\frac{\epsilon\tilde\lambda}{10s_0}\vol^W _G(S\setminus T)$. Since $t^*\notin S$, we have $\vol^W _H(S)=\Theta(\vol^W _G(S\setminus T))$ by construction. Therefore, $\partial_HS\ge\Omega(\frac{\epsilon\tilde\lambda}{s_0}\vol^W _H(S))\ge\Omega(\frac{\epsilon\tilde\lambda}{s_0})\cdot\min\{\vol^W _H(S),\vol^W _H(\bar S)\}$, and $H$ has conductance $\Omega(\frac{\epsilon\tilde\lambda}{s_0})$. 

We run \Cref{thm:isolating-cuts-source} on $H$ and $T$ with parameter $\epsilon'\gets\min\{\epsilon,\,c\cdot\tilde\lambda/s_0\}$ for sufficiently small constant $c>0$, and let $V_{t_1},\ldots,V_{t_k},V_{t^*}$ be the returned set. By the conductance bound, the algorithm takes time $\tilde{O}((\frac{\epsilon\tilde\lambda}{s_0})^{O(1)}|E(G[A'])|)$. We define $C_i=V_{t_i}\setminus \{t_i\}$ for each $i\in[k]$. (We ignore the set $V_{t^*}$.)

We now show the properties of \Cref{lem:isolating-cuts} in order.
 \begin{enumerate}
 \item The singleton cut $\{t_i\}$ separates $t_i$ from $T\setminus\{ t_i\}$ , and since $C_i$ is a $(1+\epsilon)$-approximate minimum cut separating $t_i$ from $T\setminus \{t_i\}$, we have $\partial_HC_i\le(1+\epsilon)\partial_H\{t_i\}=(1+\epsilon)(1-2\epsilon)w_G(V\setminus A',S_i)\le(1-\epsilon)w_G(V\setminus A',S_i)$. Also, since $t^*\in T\setminus \{t_i\}$, we have $t^*\notin C_i$, so $\partial_HC_i\ge w_H(C_i,t^*)=\frac{\epsilon\tilde\lambda}{10s_0}\vol^W _G(C_i)$. Combining these two inequalities, we obtain $\frac{\epsilon\tilde\lambda}{10s_0}\vol^W _G(C_i)\le\partial_HC_i\le(1-\epsilon)w_G(V\setminus A',S_i)\le(1-\epsilon)\vol^W _G(S_i)$, which proves property~(\ref{item:isolating-cuts-1}).
 
 \item By property~(\ref{item:isolating-cuts-source-2}) of \Cref{thm:isolating-cuts-source}, there is at least $(1-\epsilon')w_H(V_{t_i},V_H\setminus V_{t_i})$ flow going out of $V_{t_i}$. By property~(\ref{item:isolating-cuts-source-1}), this flow can only originate from $t_i$. Since there is at most $w_H(t_i,V_{t_i}\setminus \{t_i\})$ total capacity from $t_i$ into $V_{t_i}\setminus \{t_i\}$, we must have $w_H(t_i,V_{t_i}\setminus \{t_i\})\ge(1-\epsilon')w_H(V_{t_i},V_H\setminus V_{t_i})$. We now relate these quantities to those involving graph $G$. By construction, $w_H(t_i,V_{t_i}\setminus \{t_i\})=w_H(t_i,C_i)=(1-2\epsilon)w_G(V\setminus A',C_i\cap S_i)$. Also, $\partial_{G[A']}C_i=w_H(V_{t_i}\setminus \{t_i\},V_H\setminus(V_{t_i}\cup T))$ since no additional edges in $H$ are added between $V_{t_i}\setminus \{t_i\}$ and $V_H\setminus(V_{t_i}\cup T)$.  Therefore,
\begin{align*}
\partial_{G[A']}C_i&=w_H(V_{t_i}\setminus \{t_i\},V_H\setminus(V_{t_i}\cup T))
\\&\le w_H(V_{t_i},V_H\setminus V_{t_i})
\\&\le\frac1{1-\epsilon'}w_H(t_i,V_{t_i}\setminus \{t_i\})
\\&=\frac1{1-\epsilon'}\cdot(1-2\epsilon)w_G(V\setminus A',C_i\cap S_i)
\\&\le\frac1{1-\epsilon'}\cdot(1-2\epsilon)w_G(V\setminus A',C_i)
\\&\le(1-\epsilon)w_G(V\setminus A',C_i) ,
\end{align*}
where the last inequality uses $\epsilon'\le\epsilon$. This concludes property~(\ref{item:isolating-cuts-source-2}).
 \item Consider a set $U^\dag\subseteq S_i$ with $\partial_{G[A']}U^\dag\le1.01\tilde\lambda$ and $\vol^W _G(U^\dag)\le s_0$. Let $U=U^\dag\cup\{t_i\}$. 
 Since $U\subseteq S_i$ and $S_i$ is disjoint from $S_j$ for any $j\ne i$, there are no edges from $U$ to $t_j$, so $$\partial_HU=E_H(U,A'\setminus U)+E_H(t_i,S_i\setminus U)+E_H(U,t^*)=\partial_{G[A']}U^\dag+(1-2\epsilon)w_G(V\setminus A',S_i\setminus U^\dag)+\frac{\epsilon\tilde\lambda}{10s_0}\vol^W _G(U^\dag).$$ 
 
 \Cref{thm:isolating-cuts-source} guarantees a set $U'\subseteq U\cap V_t$ such that $\partial_HU'\le(1+\epsilon')\partial_HU$. 
 Let $U^*=U'\setminus \{t_i\}$, so that we similarly have $\partial_HU'=\partial_{G[A']}U^*+(1-2\epsilon)w_G(V\setminus A',S_i\setminus U^*)+\frac{\epsilon\tilde\lambda}{10s_0}\vol^W _G(U^*)$. It follows that
\begin{align}
&\partial_{G[A']}U^*+(1-2\epsilon)w_G(V\setminus A',S_i\setminus U^*)
\\\le{}& \partial_{G[A']}U^*+(1-2\epsilon)w_G(V\setminus A',S_i\setminus U^*)+\frac{\epsilon\tilde\lambda}{10s_0}\vol^W _G(U^*) \nonumber
\\={}& \partial_HU' \nonumber
\\\le{}& (1+\epsilon')\partial_HU \nonumber
\\={}&(1+\epsilon')\big(\partial_{G[A']}U^\dag+(1-2\epsilon)w_G(V\setminus A',S_i\setminus U^\dag)+\frac{\epsilon\tilde\lambda}{10s_0}\vol^W _G(U^\dag)\big) .\nonumber
\end{align}
From $U^*\subseteq U^\dag\subseteq S_i$ we obtain that
\[ w_G(V\setminus A',S_i\setminus U^*)-w_G(V\setminus A',S_i\setminus U^\dag)=w_G(V\setminus A',U^\dag\setminus U^*) ,\]
so subtracting $(1-2\epsilon)w_G(V\setminus A',S_i\setminus U^\dag)$ from both sides gives
\begin{align}
&\partial_{G[A']}U^*+(1-2\epsilon)w_G(V\setminus A',U^\dag\setminus U^*)
\\\le{}&(1+\epsilon')\partial_{G[A']}U^\dag+\epsilon' (1-2\epsilon)w_G(V\setminus A',S_i\setminus U^\dag)+\frac{(1+\epsilon')\epsilon\tilde\lambda}{10s_0}\vol^W _G(U^\dag) \nonumber
\\\le{}&(1+\epsilon')\partial_{G[A']}U^\dag+\epsilon'(1-2\epsilon)\vol^W _G(S_i\setminus U^\dag)+\frac{(1+\epsilon')\epsilon\tilde\lambda}{10s_0}\vol^W _G(U^\dag) \nonumber
\\\le{}&(1+\epsilon')\partial_{G[A']}U^\dag+\epsilon'\cdot O(s_0)+\frac{(1+\epsilon')\epsilon\tilde\lambda}{10s_0}\cdot s_0. \nonumber
\end{align}
As long as $\epsilon'\le c\cdot\tilde\lambda/s_0$ for sufficiently small constant $c>0$, we obtain
\begin{gather}
\partial_{G[A']}U^*+(1-2\epsilon)w_G(V\setminus A',U^\dag\setminus U^*) \le (1+\epsilon')\partial_{G[A']}U^\dag+\epsilon\tilde\lambda/2 .\label{eq:isolating-cuts-1}
\end{gather}
If $\partial_{G[A']}U^\dag\ge0.2\lambda$, then we obtain $\partial_{G[A']}U^*+(1-2\epsilon)w_G(V\setminus A',U^\dag\setminus U^*) \le(1+O(\epsilon))\partial_{G[A']}U^\dag$, and dividing by $(1-2\epsilon)$ fulfills property~(\ref{item:isolating-cuts-3}) with $\epsilon$ replaced by $O(\epsilon)$. Otherwise, if $\partial_{G[A']}U^\dag<0.2\lambda$, then we need to show that $U^*=U^\dag$. Suppose for contradiction that $U^*\subsetneq U^\dag$. From Equation~(\ref{eq:isolating-cuts-1}) and recalling that $\epsilon'\le\epsilon \le 0.1$ and $\tilde\lambda\le1.01\lambda$, we obtain
\begin{align}
\partial_{G[A']}U^*+(1-2\epsilon)w_G(V\setminus A',U^\dag\setminus U^*)=(1+\epsilon')\partial_{G[A']}U^\dag+\epsilon\tilde\lambda/2&<0.3\lambda.\label{eq:isolating-cuts-4}
\end{align}
Since $U^\dag\setminus U^*$ is non-empty by assumption,
\begin{align*}
\lambda \le \partial_G(U^\dag\setminus U^*)&=\partial_{G[A']}(U^\dag\setminus U^*)+w_G(V\setminus A',U^\dag\setminus U^*)
\\&\le\partial_{G[A']}U^\dag+\partial_{G[A']}U^*+w_G(V\setminus A',U^\dag\setminus U^*)
\\&\stackrel{\mathclap{(\ref{eq:isolating-cuts-4})}}\le\partial_{G[A']}U^\dag+0.3\lambda+w_G(V\setminus A',U^\dag\setminus U^*)
\\&\le0.5\lambda+w_G(V\setminus A',U^\dag\setminus U^*).
\end{align*}
It follows that $w_G(V\setminus A',U^\dag\setminus U^*)\ge0.5\lambda$, which contradicts Equation~(\ref{eq:isolating-cuts-4}) for $\epsilon\le0.1$. Finally, to satisfy property~(\ref{item:isolating-cuts-3}) with factor $(1+\epsilon)$ instead of $(1+O(\epsilon))$, we can simply reset $\epsilon$ to be a constant factor smaller.
\end{enumerate}
With all three properties established, this concludes the proof.
\end{proof}

\section{Sparsifier construction}\label{sec:sparsifier}

In this section, we outline the construction of the skeleton graph using the iterative graph decomposition from the previous section. Similarly to \cite{Li21}, we separate the construction into an ``unbalanced'' case and a ``balanced'' case.

At a high level, the algorithm derandomizes the sampling procedure where edges are sampled independently with probability proportional to their weights. The challenge is in establishing an efficient ``union bound'' over all cuts, which are exponential in number. To cope with this issue, we exploit the structure of graph cuts guaranteed by \Cref{lem:structure}. Still, the derandomization is only sufficient to preserve ``unbalanced'' cuts, which includes (near-) minimum cuts. To address the remaining ``balanced'' cuts, we construct a separate graph that forces these cuts to be large enough.

\subsection{Unbalanced sparsifier}

In this section, we present our construction of the unbalanced sparsifier $H$, which is an unweighted graph supported on a subset of edges in $G$. To establish the properties we want, we require that a certain list of quantities are approximately preserved. To formally describe these properties, we introduce the graph Laplacian for an algebraic representation of the cuts of a graph.

For a given set $S\subseteq V$, let $\mathds 1_S\in\{0,1\}^V$ be the vector with value $1$ at vertex $v$ if $v\in S$, and value $0$ otherwise. Define the\emph{ Laplacian} $L_G$ of a weighted graph $G$ as the $|V|\times|V|$ matrix indexed by $V\times V$, where each diagonal entry $(v,v)$ is $d^W(v)$ and each off-diagonal entry $(u,v)$ is $-w(u,v)$ if $(u,v)\in E$ and $0$ otherwise. It is well-known that the quadratic form $\mathds 1_S^TL_G\mathds 1_S$ is equal to $\partial_GS$.

Let $L$ be the total number of iterations of Step~\ref{item:decomposition} until $G_L$ is a single vertex.
For each iteration $j$ and each vertex $v\in V_j$, let $\overline v\subseteq V$ denote its ``pre-image under the contraction map'', defined as all original vertices in $V$ that get contracted into $v$ in graph $G_j$. For a set $S\subseteq V_j$, let $\overline S=\bigcup_{v\in D}\overline v$ be the pre-image of $S$.

The unbalanced sparsifier $H$ is an unweighted graph supported on a subset of the edges of $G$.
Let $\nu >0$ and $\eta >0$ be parameters. Our goal is to compute $H$ and an associated weight $W$ so that for all iterations $j,k$ and vertices $u\in V_j,v\in V_k$ with $d^W_{V_j}(u)\le \nu$ and $d^W_{V_k}(v)\le \nu$, the quantity $\mathds 1^T_{\overline u}L_G\mathds 1_{\overline v}$ is preserved to an additive error of $\eta\lambda$: $|W\cdot\mathds 1_{\overline u}^TL_H\mathds 1_{\overline v}-\mathds 1_{\overline u}^TL_G\mathds 1_{\overline v}|\le\eta\lambda$.

We first observe that if $d^W_{V_j}(u)\le \nu$ and $d^W_{V_k}(v)\le \nu$, then the total weight of edges that affect the expression $\mathds 1^T_{\overline u}L_H\mathds 1_{\overline v}$ is at most $\nu$. Assume without loss of generality that $j\le k$; then, either $\overline u\cap\overline v=\emptyset$ or $\overline v\subseteq\overline u$. If the former is true, then $\mathds 1_{\overline u}^TL_G\mathds 1_{\overline v}=-w(\overline u,\overline v)$ only depends on edges with exactly one endpoint in $\overline u$, which is at most $\partial_G(\overline u)=d^W_{V_j}(u)$ total weight of edges. If the latter, then $\mathds 1_{\overline u}^TL_G\mathds 1_{\overline v}=\mathds 1_{\overline u}^TL_G\mathds 1_{\overline u}-\mathds 1_{\overline u}^TL_G\mathds 1_{\overline v\setminus\overline u}=\partial_G(\overline u)-w(\overline u,\overline v\setminus\overline u)$ also only depends on edges with exactly one endpoint in $\overline u$.

Therefore, a simple random sampling procedure gives sufficient concentration. Namely, sampling each edge $e\in E$ with probability proportional to $w(e)$ 
and assigning the same weight $W$ to every sampled edge
preserves the $\eta\lambda$ additive error with high probability by Chernoff bounds. Also, note that $\mathds 1_{\overline u}^TL_H\mathds 1_{\overline v}$ is only nonzero when there is an edge in $G$ between $\overline u$ and $\overline v$. In fact, for each $j,k$, there are $O(1)$ choices of $u\in V_j$ and $v\in V_k$ for which the edge participates in $\mathds 1^T_{\overline u}L_G\mathds 1_{\overline v}$: the endpoints of the edge must be contained in $\overline u$ and $\overline v$. So the total ``representation size'' of the required constraints is $O(L^2|E|)$,  excluding the constraints for which the quantities are zero, which are vacuously satisfied. Since the total representation size of the constraints is only $O(L^2|E|)$, we can derandomize the above sampling procedure with the method of pessimistic estimators, following the approach from Section~3.2.1 of \cite{Li21}. We omit the technical details since the setup is nearly identical and is standard in the literature. The final result can be stated as follows.

\begin{lemma}[Lemma~3.8 of~\cite{Li21}]\label{lem:pessimistic}
For any parameters $\nu,\eta>0$, there is an algorithm with running time $\tilde{O}(L^2 m)$ that computes an unweighted graph $H$ and a weight $W\le O(\frac{\eta\lambda^2}{\nu\ln(Lm)})$ such that for all iterations $j,k$ and vertices $u\in V_j,v\in V_k$ with $d^W_{V_j}(u)\le \nu$ and $d^W_{V_k}(v)\le \nu$, we have $|W\cdot\mathds1_{\overline u}^TL_H\mathds1_{\overline v}-\mathds1_{\overline u}^TL_G\mathds1_{\overline v}|\le\eta\lambda$.
\end{lemma}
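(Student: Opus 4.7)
The plan is to follow the standard pessimistic-estimator derandomization of a weighted Bernoulli sparsifier, exploiting the small ``representation size'' observation made in the preceding paragraph: although $H$ must simultaneously preserve a superpolynomial family of cut quantities, only $O(L^2 m)$ of the constraints $\mathds 1_{\overline u}^T L_H \mathds 1_{\overline v}$ are actually nonzero, one (up to a constant) for each edge and each pair of iterations $(j,k)$.

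I would first set up the random experiment. Choose $W$ to be the smallest value permitted by the lemma statement, say $W = c\,\eta\lambda^2/(\nu \ln(Lm))$, and set the sampling probability $p_e \myeq w(e)/W$ for each edge $e\in E$ (we may assume $W \ge \max_e w(e)$ since any heavier edge can be split into unit-capacity parallel copies at no asymptotic cost, or we treat the edge as deterministically included). Sample each edge independently to form $H$. Then for any pair $u\in V_j,v\in V_k$ the random variable $Y_{u,v} \myeq W\cdot\mathds 1_{\overline u}^T L_H \mathds 1_{\overline v}$ is a sum of independent terms $W X_e c_e$, where $c_e\in\{-1,0,+1\}$ is the Laplacian coefficient contributed by edge $e$ and $X_e$ is its indicator. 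By construction $\mathbb E[Y_{u,v}] = \mathds 1_{\overline u}^T L_G \mathds 1_{\overline v}$, and by the preceding discussion the total weight of edges with $c_e\neq 0$ is at most $\nu$ whenever both $d^W_{V_j}(u),d^W_{V_k}(v)\le \nu$. A standard Chernoff/Hoeffding bound then gives
\begin{equation*}
\Pr\bigl[\lvert Y_{u,v}-\mathbb E[Y_{u,v}]\rvert > \eta\lambda\bigr] \;\le\; 2\exp\!\Bigl(-\Omega\bigl((\eta\lambda)^2/(W\nu)\bigr)\Bigr),
\end{equation*}
and with our choice of $W$ this is $\le 1/(L m)^{\Theta(1)}$, enough to union-bound over all $O(L^2 m)$ nonzero constraints.

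Next I would derandomize along the lines of Raghavan~\cite{raghavan1988probabilistic}. For each constraint $(u,v)$ attach the two exponential pessimistic estimators
\begin{equation*}
\Phi^\pm_{u,v}(X) \;=\; e^{\mp t(\mathbb E[Y_{u,v}]\pm \eta\lambda)} \prod_{e}\bigl(p_e e^{\pm t W c_e} + (1-p_e)\bigr)^{1-\text{(decided)}} \cdot (\text{factors from decided }X_e),
\end{equation*}
with $t$ chosen to minimize the Chernoff bound, so that $\Phi^+_{u,v}+\Phi^-_{u,v}$ dominates the indicator that constraint $(u,v)$ fails. By the calculation above, the initial sum $\Phi = \sum_{u,v}(\Phi^+_{u,v}+\Phi^-_{u,v})$ is strictly less than $1$. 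Process the edges in an arbitrary order: for each $e$, greedily set $X_e\in\{0,1\}$ to the value that does not increase $\Phi$ (one of the two choices must succeed by the conditional expectations inequality). At the end every constraint is satisfied since each indicator is $0/1$ and $\Phi<1$.

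The main obstacle, and the reason one needs the ``representation size'' observation, is the running time. Naively $\Phi$ has $O(L^2 m)$ terms, each of which seems to depend on all $m$ edges, giving $\Omega(L^2 m^2)$. The saving is that edge $e$ appears in the product defining $\Phi^\pm_{u,v}$ with $c_e\ne 0$ only if $e$ has an endpoint in $\overline u\cup \overline v$; as noted in the paragraph preceding the lemma, for each pair of levels $(j,k)$ this is true for only $O(1)$ pairs $(u,v)$. Hence fixing $X_e$ updates at most $O(L^2)$ of the factors, and a standard amortized analysis (maintaining each $\Phi^\pm_{u,v}$ as a running product, together with the list of remaining unfixed edges per constraint) yields a total update cost of $\tilde O(L^2 m)$ across all edges. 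The final output is the unweighted graph $H=\{e:X_e=1\}$ together with the weight $W$, and by construction every constraint of the lemma is satisfied deterministically, completing the proof.
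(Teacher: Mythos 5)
Your proposal follows essentially the same route as the paper, which itself only sketches this step and defers the details to Lemma~3.8 and Section~3.2.1 of \cite{Li21}: independent edge sampling with probability proportional to weight, a Chernoff/union bound over the $O(L^2 m)$ nonzero constraints, and a pessimistic-estimator derandomization whose running time is controlled by the observation that each edge participates in only $O(L^2)$ constraints. One quantitative caveat: with your choice $W=\Theta(\eta\lambda^2/(\nu\ln(Lm)))$ the Bernstein exponent for deviation $\eta\lambda$ is only $\Theta\big((\eta\lambda)^2/(W\nu)\big)=\Theta(\eta\ln(Lm))$, which does not beat the union bound when $\eta=o(1)$ (as it is in the paper's application); you should instead take $W=\Theta(\eta^2\lambda^2/(\nu\ln(Lm)))$, which still satisfies the lemma's upper bound on $W$ and makes the rest of the argument go through unchanged.
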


We now prove the key property of the unbalanced sparsifier $H$ and associated weight $W$. Intuitively, the lemma below says that if a set $S\subseteq V$ can be represented efficiently by the set difference of sets of the form $\overline v$ with small total degree $\sum_vd^W(v)$, then the cut $\partial S$ is approximately preserved by $H$. Here, the degree $d^W(v)$ of vertex $v\in\cup_jV_j$ equals $d^W_{V_j}(v)$ for the iteration $j$ with $v\in V_j$.
\begin{lemma}\label{lem:structure-unbalanced}
For any cut $S\subseteq V$ in $G$, suppose that there exists a set $D\subseteq\cup_jV_j$ such that $S=\triangle_{v\in D}\overline v$ and $d=\sum_{v\in D}d^W(v)\le\nu$. Then, $|W\cdot\partial_HS-\partial_GS|\le (d/\lambda)^2\eta\lambda$.
\end{lemma}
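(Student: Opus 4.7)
\medskip

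\noindent\textbf{Proof plan for \Cref{lem:structure-unbalanced}.}

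The plan is to expand $\partial_G S$ and $W\cdot\partial_H S$ in terms of the bilinear quantities controlled by \Cref{lem:pessimistic}, and then bound the number of such terms using the minimum-cut lower bound $\lambda$ on the degrees.

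\textbf{Step 1: Write $\mathds 1_S$ as a signed sum of cluster indicators.} Because the partitions $V_0,V_1,\ldots,V_L$ form a hierarchy of refinements, the family $\{\overline v:v\in\bigcup_j V_j\}$ is laminar, and in particular $\{\overline v:v\in D\}$ is laminar. For each $v\in D$, let $\mathrm{dep}(v)=|\{u\in D:\overline v\subsetneq\overline u\}|$ count how many sets in the family properly contain $\overline v$, and set $\epsilon_v=(-1)^{\mathrm{dep}(v)}\in\{\pm1\}$. I claim that
\[
\mathds 1_S \;=\; \sum_{v\in D}\epsilon_v\,\mathds 1_{\overline v}.
\]
Indeed, a point $x\in V$ lies in some chain $\overline{v_1}\subsetneq\overline{v_2}\subsetneq\cdots\subsetneq\overline{v_k}$ of members of $D$; since $S=\triangle_{v\in D}\overline v$, we have $x\in S$ iff $k$ is odd. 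With the above choice of signs, $\epsilon_{v_i}=(-1)^{k-i}$, so $\sum_{i=1}^k\epsilon_{v_i}=\sum_{j=0}^{k-1}(-1)^j$ equals $1$ if $k$ is odd and $0$ if $k$ is even, matching $\mathds 1_S(x)$.

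\textbf{Step 2: Reduce to the pairwise estimates of \Cref{lem:pessimistic}.} Using $\partial_GS=\mathds 1_S^T L_G\mathds 1_S$ and $\partial_HS=\mathds 1_S^T L_H\mathds 1_S$ together with the expansion of Step 1, we obtain
\begin{align*}
W\cdot\partial_HS-\partial_GS
&=\sum_{u,v\in D}\epsilon_u\epsilon_v\bigl(W\cdot\mathds 1_{\overline u}^T L_H\mathds 1_{\overline v}-\mathds 1_{\overline u}^T L_G\mathds 1_{\overline v}\bigr).
\end{align*}
For every $v\in D$ we have $d^W(v)\le\sum_{u\in D}d^W(u)=d\le\nu$, so the hypothesis of \Cref{lem:pessimistic} holds for each pair $(u,v)\in D\times D$, giving
\[
\bigl|W\cdot\mathds 1_{\overline u}^T L_H\mathds 1_{\overline v}-\mathds 1_{\overline u}^T L_G\mathds 1_{\overline v}\bigr|\le\eta\lambda.
\]
By the triangle inequality, $|W\cdot\partial_HS-\partial_GS|\le|D|^2\,\eta\lambda$.

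\textbf{Step 3: Bound $|D|$ by $d/\lambda$.} For every $v\in\bigcup_jV_j$ with $\overline v\subsetneq V$, the set $\overline v$ induces a cut of $G$ of size $\partial_G\overline v=d^W(v)$, which is at least $\lambda$ by definition of the minimum cut (the degenerate case $\overline v=V$ can be discarded since it contributes $\mathds 1_V$ and only flips $S$ to its complement, leaving $\partial S$ unchanged). Hence $d=\sum_{v\in D}d^W(v)\ge|D|\lambda$, so $|D|\le d/\lambda$ and the bound becomes $(d/\lambda)^2\eta\lambda$, as required.

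The only conceptual step is the signed decomposition in Step 1; the rest is a direct application of \Cref{lem:pessimistic} and the minimum-cut bound on vertex degrees. I do not expect any obstacle, but it is worth noting that the quadratic dependence on $d/\lambda$ comes from expanding the quadratic form $\mathds 1_S^T(L_G-WL_H)\mathds 1_S$, and this is exactly the reason the subsequent balanced-sparsifier step will need to be tuned so that $d/\lambda$ stays small for the cuts of interest.
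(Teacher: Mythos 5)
Your proof is correct and follows essentially the same route as the paper's: write $\mathds 1_S$ as a signed sum of the laminar indicators $\mathds 1_{\overline v}$, expand the quadratic form $\mathds 1_S^T(W\cdot L_H-L_G)\mathds 1_S$ into at most $|D|^2$ pairwise terms each controlled by \Cref{lem:pessimistic}, and bound $|D|\le d/\lambda$ via $d^W(v)=\partial_G\overline v\ge\lambda$ (with the $\overline v=V$ case handled separately, as in the paper where that term vanishes since both Laplacians annihilate $\mathds 1_V$). Your Step 1 is in fact more explicit than the paper's ``add or subtract as necessary''; just note that the sign formula $(-1)^{\mathrm{dep}(v)}$ implicitly assumes the sets $\overline v$, $v\in D$, are pairwise distinct, which one can always arrange by cancelling duplicate pairs from $D$ first.
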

\begin{proof}
Since the sets $\overline v$ for $v\in\cup_jV_j$ form a laminar family, we can translate the set difference into algebraic form $\mathds 1_S=\sum_{v\in D}\pm\mathds 1_{\overline v}$ by traversing through the sets $V_j$ in decreasing order of $j$ and adding or subtracting vectors $\mathds 1_{\overline v}$ as necessary. We now write
\begin{align*}
W\cdot\partial_HS-\partial_GS&=W\cdot\mathds 1_S^TL_H\mathds 1_S-\mathds 1_S^TL_G\mathds 1_S
\\&= \mathds 1^T_S(W\cdot L_H-L_G)\mathds 1_S
\\&=\bigg(\sum_{v\in D}\pm\mathds 1_{\overline v}\bigg)^T(W\cdot L_H-L_G)\bigg(\sum_{v\in D}\pm\mathds 1_{\overline v}\bigg) .
\end{align*}
Expanding the summation gives $|D|^2$ terms of the form $\mathds 1^T_{\overline u}(W\cdot L_H-L_G)\mathds 1_{\overline v}$, each of which is approximated to additive error $\eta\lambda$ by Lemma~\ref{lem:pessimistic} since $d \le \nu$. It remains to bound the number of terms. Each $v\in D$ has $d^W(v)=\partial_G\overline v\ge\lambda$ unless $\overline v=V$, in which case $\mathds 1^T_{\overline u}(W\cdot L_H-L_G)\mathds 1_{\overline v}=\mathds 1^T_{\overline u}(W\cdot L_H-L_G)\mathds 1_V=0$ for any $u$. So there are at most $d/\lambda$ many vertices, and the additive error of $W\cdot\mathds 1_S^TL_H\mathds 1_S-\mathds 1_S^TL_G\mathds 1_S$ is at most $(d/\lambda)^2\eta\lambda$, as promised.
\end{proof}

We now show that in particular, the minimum cut can be efficiently represented this way.
\begin{lemma}\label{lem:mincut-unbalanced}
For any minimum cut $S\subseteq V$ in $G$, we can represent $S$ as the symmetric difference $\triangle_{v\in D}\overline v$ for a set $D\subseteq\cup_jV_j$ with $\sum_{v\in D}d^W(v)\le O(Ls_0^2/(\epsilon\lambda))$.
\end{lemma}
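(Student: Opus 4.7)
\medskip

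\noindent\textbf{Proof plan for \Cref{lem:mincut-unbalanced}.}
The plan is to iterate the Structure Lemma (\Cref{lem:structure}) one contraction level at a time, transforming the minimum cut $S$ at level $0$ into a trivial cut at the topmost level, and collecting the small ``corrections'' made at each level. Concretely, I define sets $T_j\subseteq V_j$ with $T_0=S$. At level $j$, I apply \Cref{lem:structure} to $T_j$ to obtain $T_j'\subseteq V_j$ that does not cross any cluster in $\mathcal C_j$, together with the bounds $\partial_{G_j}T_j'\le(1+3\epsilon)\partial_{G_j}T_j$ and $\vol^W_{G_j}(T_j\triangle T_j')\le O(s_0^2/(\epsilon\lambda))$. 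Since $T_j'$ respects the cluster partition $\mathcal C_j$, its image under the contraction in Step~\ref{item:contraction} is a well-defined set $T_{j+1}\subseteq V_{j+1}$ with $\overline{T_{j+1}}=\overline{T_j'}$. I set $D_j := T_j\triangle T_j'\subseteq V_j$.

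To justify each application of \Cref{lem:structure}, I must check that $\partial_{G_j}T_j\le 1.01\lambda$ throughout. Inductively, $\partial_{G_j}T_j\le(1+3\epsilon)^j\lambda$. Since $\epsilon=1/\log^{1.1}|V|$ and $L=O(\log\vol^W_G(V))=O(\log|V|)$ (using the standing assumption that edge weights are polynomially bounded), we have
\[ (1+3\epsilon)^L \;\le\; \exp(3\epsilon L) \;\le\; \exp\!\bigl(O(1/\log^{0.1}|V|)\bigr) \;\le\; 1.01 \]
for $|V|$ sufficiently large, so the hypothesis of \Cref{lem:structure} is preserved at every level. I expect this parameter check to be the main subtlety; everything else is bookkeeping.

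Since $|V_L|=1$, the final set $T_L$ is either $\emptyset$ or $\{v_L\}$, where $v_L$ is the unique vertex of $V_L$; in the latter case $d^W(v_L)=0$ since $v_L$ has no non-loop incident edges in $G_L$. Because $\overline{T_j}=\overline{T_{j+1}}\triangle\overline{D_j}$, telescoping yields
\[ S \;=\; \overline{T_0} \;=\; \overline{T_L}\;\triangle\;\overline{D_{L-1}}\;\triangle\;\cdots\;\triangle\;\overline{D_0}. \]
Within any single level $j$, the pre-images $\{\overline v\}_{v\in V_j}$ are pairwise disjoint, so $\overline{D_j}=\triangle_{v\in D_j}\overline v$. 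Defining $D\subseteq\bigcup_j V_j$ as the (level-disjoint) union $D_0\cup\cdots\cup D_{L-1}$, adjoined with $v_L$ if $T_L=\{v_L\}$, I obtain the representation $S=\triangle_{v\in D}\overline v$.

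Finally, the degree sum satisfies
\[ \sum_{v\in D}d^W(v) \;=\; \sum_{j=0}^{L-1}\sum_{v\in D_j}d^W_{V_j}(v)\;+\;d^W(v_L) \;=\; \sum_{j=0}^{L-1}\vol^W_{G_j}(D_j) \;\le\; L\cdot O\!\bigl(s_0^2/(\epsilon\lambda)\bigr) \;=\; O\!\bigl(Ls_0^2/(\epsilon\lambda)\bigr), \]
using the structure lemma bound on each $\vol^W_{G_j}(D_j)$ and $d^W(v_L)=0$. This completes the proof.
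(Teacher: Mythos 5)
Your proposal is correct and follows essentially the same route as the paper: iterate \Cref{lem:structure} level by level, collect the symmetric differences $S_j\triangle S'_j$ into $D$, verify $(1+3\epsilon)^j\lambda\le1.01\lambda$ to keep the hypothesis of \Cref{lem:structure} valid, and bound $\sum_{v\in D}d^W(v)$ by $L$ times the per-level volume bound $O(s_0^2/(\epsilon\lambda))$. The only cosmetic difference is that the paper stops as soon as $S'_j\in\{\emptyset,V_j\}$ whereas you run all $L$ levels and handle the single top vertex explicitly; both treatments of the possible complementation (adjoining the top vertex $v_L$ with $d^W(v_L)=0$) agree.
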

\begin{proof}
Since $S$ is a minimum cut, we have $\partial_GS=\lambda$. Construct a sequence $S=S_0,S_1,\ldots,S_\ell$ as follows. For each $j$ in increasing order, we ensure by induction that $S\subseteq V_j$ is a cut with value $\partial_{G_j}S\le(1+3\epsilon)^j\lambda$. Since $\epsilon=1/\log^{1.1}|V|$ and $j \le \log |V|$, we have $\partial_{G_j}S\le(1+3\epsilon)^j\lambda\le1.01\lambda$. So $\partial_{G_j}S\le1.01\lambda$, and we can apply \Cref{lem:structure} to obtain a set $S'$, which we index as $S'_j$. If $S'_j=\emptyset$ or $S'_j=V_j$, then we stop. Otherwise, let $S_{j+1}$ be the set $S'_j$ after the contraction from $V_j$ to $V_{j+1}$, which is a cut with value
\[ \partial_{G_{j+1}}S_{j+1}=\partial_{G_j}S'_j\le(1+3\epsilon)\partial_{G_j}S_j\le(1+3\epsilon)^{j+1}\lambda ,\] preserving the inductive statement. 

We define $D'=\cup_{j=0}^{\ell-1}(S_j\triangle S'_j)$, which satisfies either $\triangle_{v\in D'}\overline v=S$ or $\triangle_{v\in D'}\overline v=V\setminus S$ depending on whether $S'_j=\emptyset$ or $S'_j=V_j$ on the last iteration. If the latter, then set $D=D'\cup\{v\}$ for the single vertex $v\in V_L$; otherwise, set $D=D'$. By construction, $S=\triangle_{v\in D}\overline v$. Also, by property~(\ref{item:structure-1c}) of \Cref{lem:structure}, we have $\vol^W(S_j\triangle S'_j)\le O(s_0^2/(\epsilon\lambda))$ for each $j$, so the total degree $\sum_{v\in D}d^W(v)$ is at most $L$ times that, as promised.
\end{proof}

\subsection{Balanced sparsifier}

In \cite{Li21}, the balanced sparsifier is simply a deterministic $n^{o(1)}$-approximate sparsifier, which is itself constructed by overlaying expanders at each level of an expander decomposition hierarchy. For technical reasons, we are unable to extend this approach to our setting. Instead, we will introduce \emph{Steiner} vertices in the balanced sparsifier, which are eventually \emph{split off}.

For each iteration $j\in\{1,2,\ldots,L\}$ and each vertex $v\in V_j$, we introduce a Steiner vertex $x_v^j$. For each original vertex $v\in V$, let $x_v^0$ indicate that vertex. For each iteration $j\in\{0,1,\ldots,L-1\}$ and each vertex $u\in V_j$, let $v\in V_{j+1}$ be the vertex that $u$ contracts to; add an edge $(x_u^j,x_v^{j+1})$ of weight $d^W_{V_j}(u)$. We do not add any edges between original vertices. This concludes the construction of the balanced sparsifier, which we name $X=(V_X,E_X)$. 

For each cut in $G$, we can extend it to a cut in $X$ by adding each Steiner vertex to one side of the cut. The key lemma for the balanced sparsifier is the following. Intuitively, it says that cuts $S$ in $G$ that can be extended to small cuts in $X$ (by siding the Steiner vertices optimally) are precisely the sets $S$ that can be represented by the set difference of sets of the form $\overline v$ with small total degree $\sum_vd^W(v)$.

\begin{lemma}\label{lem:structure-balanced}
For any cut $S\subseteq V$ in $G$, the following two minima are equal:
 \begin{enumerate}
 \item The minimum value $\sum_{v\in D}d^W(v)$ over all sets $D\subseteq\cup_jV_j$ such that the symmetric difference $\triangle_{v\in D}\overline v$ equals $S$. Here, $d^W(v)=d^W_{G_j}(v)$ for the iteration $j$ with $v\in V_j$. \label{item:minimum-1}
 \item The minimum value of $\partial_XS^*$ over all cuts $S^*\subseteq V_X$ with $S^*\cap V=S$.\label{item:minimum-2}
 \end{enumerate}
\end{lemma}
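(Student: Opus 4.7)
The plan is to exploit the tree structure of $X$. By construction, each Steiner vertex $x_u^j$ with $u\in V_j$ and $j<L$ has exactly one neighbor at level $j+1$, namely $x_v^{j+1}$ where $v\in V_{j+1}$ is the image of $u$ under contraction. Since the algorithm is run until $G_L$ has a single vertex $v^*$, the graph $X$ is therefore a rooted tree with root $x_{v^*}^L$, with the original vertices $V=\{x_v^0\}_{v\in V}$ as leaves, and with each non-root node $x_u^j$ carrying a parent edge of weight $d^W_{V_j}(u)$. A second useful observation is that $d^W(v^*)=0$, since $G_L$ has only the single vertex $v^*$ and self-loops may be assumed absent (they contribute to no cut).

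With these two observations in place, I would set up a cost-preserving correspondence between the feasible points of the two minimization problems. For each $D\subseteq\bigcup_j V_j$, define a labeling $\ell_D:V_X\to\{0,1\}$ by letting $\ell_D(x_u^j)$ be the parity of the number of $D$-members among the ancestors of $x_u^j$ in the tree (inclusive of $x_u^j$ itself and of the root). Equivalently, writing $u=v^j,v^{j+1},\ldots,v^L=v^*$ for the chain of successive contractions starting at $u$, set $\ell_D(x_u^j)=|\{k\ge j:v^k\in D\}|\bmod 2$. Let $S^*_D=\{x\in V_X:\ell_D(x)=1\}$. Conversely, given $S^*\subseteq V_X$, let $D(S^*)$ consist of all non-root tree nodes whose parent edge is cut by $(S^*,V_X\setminus S^*)$, together with $v^*$ itself iff $x_{v^*}^L\in S^*$.

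The plan then reduces to two routine verifications. First, the two maps send feasible points to feasible points: the constraint $S^*_D\cap V=S$ is equivalent to $\triangle_{v\in D}\overline v=S$, because both state that every leaf $u\in V$ lies in the set iff the parity of $|\{k:v^k\in D\}|$ is odd. Second, the two maps preserve cost: unwinding the definition gives $\ell_D(x_u^j)-\ell_D(x_v^{j+1})\equiv \mathbf{1}[x_u^j\in D]\pmod 2$, so a tree edge $(x_u^j,x_v^{j+1})$ is cut by $(S^*_D,V_X\setminus S^*_D)$ exactly when $x_u^j\in D$, contributing $d^W_{V_j}(u)=d^W(u)$. Thus $\partial_X S^*_D=\sum_{v\in D,\,v\ne v^*}d^W(v)$, and since $d^W(v^*)=0$ this equals $\sum_{v\in D}d^W(v)$. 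Taking minima gives the equality of (\ref{item:minimum-1}) and (\ref{item:minimum-2}).

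The only subtle point I expect is the treatment of the root. Including or excluding $v^*$ from $D$ toggles the root's label without changing $\partial_X S^*_D$, so the two maps above are mutual inverses only up to this costless flip; one must observe that $d^W(v^*)=0$ to ensure the cost bookkeeping still works out cleanly. Everything else is a direct parity/induction argument on the tree.
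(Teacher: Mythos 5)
Your proof is correct and follows essentially the same route as the paper: both arguments rest on the observation that $X$ is a tree of contraction chains, that an edge $(x_u^j,x_{v}^{j+1})$ is cut precisely when $u\in D$, and that membership of a leaf in the cut is determined by the parity of $D$-members along its root path; the paper merely presents the two directions of the equality separately (with an iterative construction $S'_j=S_j\triangle(D\cap V_j)$ for one direction) rather than as a single cost-preserving bijection. Your explicit handling of the root via $d^W(v^*)=0$ matches the paper's convention $d^W(v)=\partial_G\overline v$ (cf.\ the remark in the proof of \Cref{lem:structure-unbalanced} that the term for $\overline v=V$ vanishes), so no gap remains.
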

\begin{proof}
To show that the minimum in (\ref{item:minimum-1}) is at most the minimum in (\ref{item:minimum-2}), consider a set $D\subseteq\cup_jV_j$ such that the symmetric difference $\triangle_{v\in D}\overline v$ equals $S$. We construct a cut $S^*$ in $X$ with $S^*\cap V=S$ and total weight $\sum_{v\in D}d^W(v)$ as follows. Initialize $S_0\gets S$, and for each iteration $j$ in increasing order, let $S'_j=S_j\triangle(D\cap V_j)$ and let $S_{j+1}\subseteq V_{j+1}$ be the contraction of the set $S'_j$ into $V_{j+1}$, which is well-defined. The cut $S^*$ is the union of $S\subseteq V$ and $\{x_v^j \mid v,j:v\in S_j\}$. For any edge $(x^j_u,x^{j+1}_v)$ cut by $S^*$ in $X$, where $u\in V_j$ contracts to $v\in V_{j+1}$, we must have either $u\in S_j$ or $v\in S_{j+1}$ but not both. It follows that $u\in D\cap V_j$. Conversely, for any $u\in D\cap V_j$ that contracts to $v\in V_{j+1}$, we have either $x^j_u\in S^*$ or $x^{j+1}_v\in S^*$ but not both, so edge $(x^j_u,x^{j+1}_v)$ cut by $S^*$ is cut by $S^*$. So there is a one-to-one mapping between vertices in $D$ and edges cut by $S^*$ with the same degree and edge weight, respectively. It follows that the minimum in (\ref{item:minimum-1}) is at most the minimum in (\ref{item:minimum-2}), 

To show the other direction, consider a cut $S^*$ in $X$ with $S^*\cap V=S$. Denote by $x^L$ the vertex $x^L_v$ for the single vertex $v\in L$. Without loss of generality, assume that $x^L\notin S^*$. Let $D$ consist of all vertices $u$ for which the edge $(x^j_u,x^{j+1}_v)$ is cut by $S^*$ in $X$. By construction, $\sum_{v\in D}d^W(v)$ is the value of the cut $S^*$ in $X$. It remains to show that $S=\triangle_{v\in D}\overline v$. For any $v\in S$ and iteration $j$, there is exactly one vertex $u\in V_j$ with $v\in\overline u$, and the corresponding vertex $x_u^j$ is on the path from $x_v^0$ to $x^L$. Since $x^L\notin S^*$, for any $v\in S$ there is an odd number of edges cut by $S^*$ on the path from $x_v^0$ to $x^L$, which corresponds to an odd number of vertices $u\in D$ for which $v\in\overline u$. It follows that $v\in\triangle_{u\in D}\overline u$. For any $v\notin S$, replacing ``odd'' with ``even'' in the argument above, we have $v\notin\triangle_{u\in D}\overline u$. Thus $S=\triangle_{v\in D}\overline v$, concluding the proof.
\end{proof}

\subsubsection{Combining the two sparsifiers}

Let $G'$ be the union of the following two graphs: the unbalanced sparsifier $H$ with all edges weighted by $W$, and the graph $X$ with all edge weights multiplied by $Z=(\epsilon\lambda)^2/Ls_0^2$. So $G'$ is supported on vertex set $V_X$. For a cut $S\subseteq V$, we call a cut $S^*\subseteq V_X$ an \emph{extension} of $S$ if $S^*\cap V=S$. We now claim that the minimum cut can be extended to a near-minimum cut in $G'$.

\begin{claim}\label{clm:cut-upperbound}
Assume that $\nu\ge \Omega(Ls_0^2/(\epsilon\lambda))$.
For any minimum cut $S\subseteq V$ in $G$, there exists an extension $S^*\subseteq V_X$ with $\partial_{G'}S^*\le \big(1+O(\epsilon)+O( ( \frac{Ls_0^2}{\epsilon\lambda^2})^2\eta) \big)\lambda$.
\end{claim}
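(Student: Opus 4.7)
The plan is to combine the three structural lemmas \Cref{lem:mincut-unbalanced}, \Cref{lem:structure-balanced}, and \Cref{lem:structure-unbalanced} to construct the extension $S^*$ and then bound its cut value contribution from $H$ and from $X$ separately.

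First, I would invoke \Cref{lem:mincut-unbalanced} on the minimum cut $S$ to obtain a set $D\subseteq\cup_jV_j$ with $S=\triangle_{v\in D}\overline v$ and $d\myeq\sum_{v\in D}d^W(v)\le O(Ls_0^2/(\epsilon\lambda))$. By the assumption $\nu\ge\Omega(Ls_0^2/(\epsilon\lambda))$, we can ensure $d\le\nu$, which lets us apply the unbalanced approximation later. Next, I would apply the ``$(\ref{item:minimum-1})\Rightarrow(\ref{item:minimum-2})$'' direction of \Cref{lem:structure-balanced} with this $D$ to produce an extension $S^*\subseteq V_X$ of $S$ (i.e.\ $S^*\cap V=S$) with $\partial_XS^*=d\le O(Ls_0^2/(\epsilon\lambda))$.

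Now I would bound $\partial_{G'}S^*=W\cdot\partial_HS^*+Z\cdot\partial_XS^*$ term by term. For the $X$-contribution, plugging in $Z=(\epsilon\lambda)^2/(Ls_0^2)$ gives
\[
Z\cdot\partial_XS^*\le \frac{(\epsilon\lambda)^2}{Ls_0^2}\cdot O\!\left(\frac{Ls_0^2}{\epsilon\lambda}\right)=O(\epsilon\lambda).
\]
For the $H$-contribution, I would first note that $H$ has no edges incident to Steiner vertices and is supported on $V$, so $\partial_HS^*=\partial_HS$. Since $d\le\nu$, \Cref{lem:structure-unbalanced} applies to $S=\triangle_{v\in D}\overline v$ and yields
\[
W\cdot\partial_HS\le \partial_GS+(d/\lambda)^2\eta\lambda=\lambda+O\!\left(\!\left(\tfrac{Ls_0^2}{\epsilon\lambda^2}\right)^{\!2}\eta\right)\lambda.
\]
Summing the two bounds gives the claimed $\big(1+O(\epsilon)+O\!\big((\tfrac{Ls_0^2}{\epsilon\lambda^2})^2\eta\big)\big)\lambda$.

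No step here seems especially subtle once the preceding structural lemmas are in hand; the only thing to double-check carefully is that $\partial_HS^*=\partial_HS$, which follows from the explicit construction of $X$ (edges only run between Steiner vertices or between a Steiner vertex and an original vertex, never between two original vertices, and $H$ is supported on original edges of $G$). The main conceptual point is the choice of $Z$: it is calibrated so that the Steiner ``penalty'' of using an efficient representation of $S$ (bounded via \Cref{lem:structure-balanced}) gets scaled down to exactly $O(\epsilon\lambda)$, matching the additive-error contribution $O(\epsilon)\lambda$ from the unbalanced sparsifier.
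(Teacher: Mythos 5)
Your proposal is correct and follows essentially the same route as the paper's proof: apply \Cref{lem:mincut-unbalanced} to get the low-degree representation $D$, use \Cref{lem:structure-unbalanced} to bound the $H$-contribution by $(1+O((\frac{Ls_0^2}{\epsilon\lambda^2})^2\eta))\lambda$, and use \Cref{lem:structure-balanced} with the scaling $Z$ to bound the $X$-contribution by $O(\epsilon\lambda)$. The observation that $\partial_HS^*=\partial_HS$ for any extension is the same point the paper makes implicitly, so there is nothing to add.
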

\begin{proof}
By \Cref{lem:mincut-unbalanced}, we can write $S=\triangle_{v\in D}\overline v$ for a set $D\subseteq\cup_jV_j$ with $\sum_{v\in D}d^W(v)\le O(Ls_0^2/(\epsilon\lambda))$. By \Cref{lem:structure-unbalanced} with $d=O(Ls_0^2/(\epsilon\lambda))\le\nu$, we obtain
\[ W\cdot\partial_HS \le \partial_GS + O\bigg( \frac{Ls_0^2}{\epsilon\lambda^2}\bigg)^2 \eta \lambda = \bigg(1+O\bigg( \frac{Ls_0^2}{\epsilon\lambda^2}\bigg)^2\eta \bigg)\lambda .\]
So for any extension $S^*\subseteq S$, the total weight in $\partial_{G'}S^*$ resulting from the unbalanced sparsifier $H$ is at most $\big(1+O( ( \frac{Ls_0^2}{\epsilon\lambda^2})^2\eta) \big)\lambda$.

By \Cref{lem:structure-balanced}, there exists a set $S^*\subseteq V_X$ with $\partial_XS^*=\sum_{v\in D}d^W(v)=O(Ls_0^2/(\epsilon\lambda))$. After dividing every weight by $Z=(\epsilon\lambda)^2/Ls_0^2$, the total weight in $\partial_{G'}S^*$ resulting from the balanced sparsifier $X$ is at most $O(\epsilon\lambda)$. Together with the weight from the unbalanced sparsifier $H$, we obtain the desired bound.
\end{proof}

At the same time, we do not want any extensions of cuts to go below the minimum cut $\lambda$. The claim below ensures this does not happen.
\begin{claim}\label{clm:cut-lowerbound}
Assume that $\nu\ge Ls_0^2/(\epsilon^2\lambda)$. Then, for any cut $S\subseteq V$ in $G$ and any extension $S^*\subseteq V_X$, we have $\partial_{G'}S^*\ge(1-(\epsilon\lambda)^4/(L^2s_0^4)\eta)\lambda$.
\end{claim}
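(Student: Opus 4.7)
The plan is to translate $S^*$ into a combinatorial ``witness'' via Lemma~\ref{lem:structure-balanced} and then case on its size. Concretely, the construction in the second direction of the proof of Lemma~\ref{lem:structure-balanced} produces, starting from $S^*$, a set $D\subseteq\bigcup_jV_j$ with $\triangle_{v\in D}\overline v=S$ and $\sum_{v\in D}d^W(v)=\partial_XS^*$; write $d:=\partial_XS^*$ and $Z:=(\epsilon\lambda)^2/(Ls_0^2)$. Because $H$ is supported on $V$ and $S^*\cap V=S$, I have $\partial_HS^*=\partial_HS$ and therefore
\[ \partial_{G'}S^* \;=\; W\cdot\partial_HS \;+\; Z\cdot d. \]

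I then split on $d$. If $d\ge\nu$, the hypothesis $\nu\ge Ls_0^2/(\epsilon^2\lambda)=\lambda/Z$ gives $Zd\ge Z\nu\ge\lambda$, already above the target. If $d<\nu$, then Lemma~\ref{lem:structure-unbalanced} applies and yields $W\partial_HS\ge\partial_GS-\eta(d/\lambda)^2\lambda\ge\lambda-\eta d^2/\lambda$, so that
\[ \partial_{G'}S^* \;\ge\; \lambda\;-\;\eta d^2/\lambda\;+\;Zd. \]
When $d\le Z\lambda$ this gives $W\partial_HS\ge(1-Z^2\eta)\lambda$ from the unbalanced term alone. In the remaining middle range $Z\lambda<d<\nu$, the target $(1-Z^2\eta)\lambda$ is equivalent to the quadratic inequality $\eta d^2-Z\lambda d-Z^2\eta\lambda^2\le0$, whose larger root is $Z\lambda(1+\sqrt{1+4\eta^2})/(2\eta)\ge Z\lambda/\eta$; the parameter regime used in the algorithm (in particular $\eta\le Z^2=(\epsilon\lambda)^4/(L^2s_0^4)$) places this root at or above $\nu=\lambda/Z$, so the bound extends to the entire interval. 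Equivalently, in this regime $\eta d\le Z\lambda$, so $\eta d^2/\lambda\le Zd$ and the perturbation $-\eta d^2/\lambda+Zd$ is already nonnegative, giving $\partial_{G'}S^*\ge\lambda$.

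The main obstacle is precisely this middle range: neither sparsifier on its own meets the target, since the unbalanced sparsifier's error grows quadratically in $d$ while the balanced contribution $Zd$ is only linear. The two lower bounds must be added and the resulting quadratic carefully analyzed. The choice of scaling $Z=(\epsilon\lambda)^2/(Ls_0^2)$ together with the hypothesis $\nu\ge\lambda/Z$ is exactly what aligns the unbalanced/balanced crossover points ($d\approx Z\lambda$ on one side, $d\approx\nu=\lambda/Z$ on the other) so that the combined bound closes the gap, provided $\eta$ is taken at the polylogarithmic level required by the algorithm.
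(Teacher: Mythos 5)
Your proposal is correct and rests on the same two pillars as the paper's proof---\Cref{lem:structure-balanced} to extract a combinatorial witness $D$ from the extension and \Cref{lem:structure-unbalanced} to control the $H$-contribution when that witness is small---but the case split is executed differently. The paper thresholds on the \emph{minimum} witness size at exactly $\lambda/Z$: if $\min_D\sum_{v\in D}d^W(v)\ge\lambda/Z$ then the balanced part alone gives $Z\cdot\partial_XS^*\ge\lambda$, and otherwise the witness has size at most $\lambda/Z\le\nu$, so the unbalanced part alone gives $W\cdot\partial_HS\ge\lambda-\eta(1/Z)^2\lambda$; each case discards the other (nonnegative) contribution, and no intermediate regime arises. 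You instead threshold at $\nu$ and take $d=\partial_XS^*$ itself as the witness, which forces you to \emph{add} the two contributions and analyze the quadratic $\eta d^2-Z\lambda d-Z^2\eta\lambda^2\le 0$ in the middle range $Z\lambda<d<\nu$; closing that range needs $\eta\le Z^2$ (so that $Z\lambda/\eta\ge\lambda/Z$), a condition not in the claim's hypotheses but guaranteed by the final parameter setting $\eta=\textup{poly}(\epsilon,1/L,\lambda/s_0)$. What your extra work buys is the literal stated bound $(1-Z^2\eta)\lambda$ (indeed $\ge\lambda$ whenever $\eta d\le Z\lambda$), whereas the paper's computation actually yields the weaker $(1-\eta/Z^2)\lambda$ (its displayed identification of $(1/Z)^2$ with $(\epsilon\lambda)^4/(Ls_0^2)^2$ is an inversion); both versions suffice downstream once $\eta$ is chosen small. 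If you want to avoid the quadratic and the dependence on $\eta\le Z^2$ entirely, split at $d\ge\lambda/Z$ rather than at $d\ge\nu$.
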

\begin{proof}
Consider the minimum value $\sum_{v\in D}d^W(v)$ over all sets $D\subseteq\cup_jV_j$ such that the symmetric difference $\triangle_{v\in D}\overline v$ equals $S$. If $\sum_{v\in D}d^W(v)\ge\lambda/Z$, then by \Cref{lem:structure-balanced}, any extension $S^*\subseteq V$ must have $\partial_XS^*\ge\lambda/Z$. After scaling by $Z$, the total weight in $\partial_{G'}S^*$ resulting from the balanced sparsifier $X$ is at least $\lambda$, as promised.

Otherwise, we have $\sum_{v\in D}d^W(v)\le\lambda/Z=Ls_0^2/(\epsilon^2\lambda)\le\nu$, so \Cref{lem:structure-unbalanced} guarantees that 
\[ W\cdot\mathds 1_{\overline u}^TL_H\mathds 1_{\overline v} \ge \mathds 1_{\overline u}^TL_G\mathds 1_{\overline v}-(1/Z)^2\eta\lambda=\lambda-(\epsilon\lambda)^4/(Ls_0^2)^2\eta\lambda .\]
So the total weight in $\partial_{G'}S^*$ resulting from the unbalanced sparsifier $H$ is at least $(1-(\epsilon\lambda)^4/(Ls_0^2)^2\eta)\lambda$, as promised.
\end{proof}

\subsubsection{The final construction}
We start by finalizing our parameters for the algorithm:
 \begin{enumerate}
 \item We set $\eta=\textup{poly}(\epsilon,1/L,\lambda/s_0)$ small enough so  that the right-hand side of the guarantees of \Cref{clm:cut-upperbound} and \Cref{clm:cut-lowerbound} are $(1+O(\epsilon))\lambda$ and $(1-O(\epsilon))\lambda$, respectively.
 \item We set $\nu= c \cdot Ls_0^2/(\epsilon^2\lambda)$ 
 for some constant $c \ge 1$ so that \Cref{clm:cut-upperbound} and \Cref{clm:cut-lowerbound} hold.
 \item Given $\eta$ and $\nu$, we set $W\le O(\frac{\eta\lambda^2}{\nu\ln(Lm)})=\textup{poly}(\epsilon,1/L,\lambda/s_0,1/\log n)\cdot\lambda$ so that \Cref{lem:pessimistic} holds.
 \item Recall that we set $Z=(\epsilon\lambda)^2/Ls_0^2=\textup{poly}(\epsilon,1/L,\lambda/s_0)$.
 \end{enumerate}

We thus obtain a graph $G'$ satisfying Claims~\ref{clm:cut-upperbound}~and~\ref{clm:cut-lowerbound}. We want an unweighted sparsifier at the end, so we now ``discretize'' the edges of $G'$. By construction, the balanced sparsifier $X$ consists of edges $(x_u^j,x_v^{j+1})$ of weight $d^W_{V_j}(u)\ge\lambda$, so all edges in $G'$ originating from $X$ have weight at least $\lambda/Z$. By construction, all edges originating from the unbalanced sparsifier $H$ have weight $W$. For each edge $e$ in $G'$, replace it with $\lfloor w(e)/\min\{\epsilon W,\epsilon\lambda/Z\}\rfloor$ many parallel edges of weight $W'=\min\{\epsilon W,\epsilon\lambda/Z\}$ each, which preserves all cuts up to factor $(1+O(\epsilon))$. We now bound
\[ W'=\textup{poly}(\epsilon,1/L,\lambda/s_0,1/\log n)\cdot\lambda=(\epsilon/\log n)^{O(1)}\lambda \]
since $L=O(\log n)$ and $s_0=\lambda\polylog(n)$. Let $H'$ be the unweighted graph consisting of these edges of weight $W'$.

Next, we want to split off the Steiner vertices originating from $X$. Given a graph and a subset of vertices called the \emph{terminals}, the Steiner minimum cut of a graph is the minimum cut separating at least two terminals. Let $V\subseteq V_X$ be the terminals in our setting. We now call the \emph{splitting-off} algorithm of Bhalgat~et~al.~\cite{bhalgat2008fast} on $H'$, which guarantees the following.
\begin{theorem}[\cite{bhalgat2008fast}]\label{thm:splitting-off}
Given an unweighted graph $G=(V_X,E)$ with terminal set $T\subseteq V_X$ and Steiner minimum cut $c$, there is a deterministic, $\tilde{O}(m+nc^2)$ time algorithm that computes a graph $G'$ satisfying the following:
 \begin{enumerate}
 \item Cuts cannot increase in value: for each cut $S\subseteq V_X$, we have $\partial_{G'}S\le\partial_GS$,\label{item:splitting-1}
 \item The Steiner minimum cut of $G$ equals the Steiner minimum cut of $G'$, and\label{item:splitting-2}
 \item Every non-terminal vertex $v\in V_X\setminus T$ is isolated: $d_{G'}(v)=0$.\label{item:splitting-3}
 \end{enumerate}
\end{theorem}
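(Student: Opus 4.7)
The plan is to prove this via the classical splitting-off paradigm, originally due to Lov\'asz and Mader, combined with the efficient data-structural implementation of Bhalgat et al. Recall that a \emph{splitting-off} operation at a vertex $v$ replaces two edges $(u,v)$ and $(v,w)$ by the single edge $(u,w)$, and a \emph{complete splitting} at $v$ pairs up all edges incident to $v$ (assuming even degree) and splits each pair, leaving $v$ isolated. The first observation is that a single splitting-off operation can never increase any cut: for any $S\subseteq V_X$, a case analysis on whether $u,v,w\in S$ shows that the number of boundary edges of $S$ either stays the same or drops by $2$. This immediately gives property~(\ref{item:splitting-1}) and shows that the Steiner minimum cut cannot increase.

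The hard direction for property~(\ref{item:splitting-2}) is that splittings can be chosen to preserve the Steiner minimum cut. This is where one invokes Mader's theorem: at any vertex $v\notin T$ of even degree, there exists a complete splitting at $v$ under which, for every pair $s,t$ of remaining vertices, the $s$\nobreakdash-$t$ edge connectivity is unchanged (and in particular, the Steiner mincut among $T$ is preserved). If $v$ has odd degree, one handles the parity issue by a preliminary step (e.g., pairing one edge with a ``dummy'' before applying Mader, or by a careful parity argument; standard in the splitting-off literature). Iterating complete splittings over all non-terminals, in any order, yields a graph in which every non-terminal has degree $0$, establishing property~(\ref{item:splitting-3}), while Mader's theorem applied inductively preserves the Steiner mincut throughout.

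The main obstacle, and the genuine contribution of \cite{bhalgat2008fast}, is the running time $\tilde O(m+nc^2)$. A naive implementation of Mader's theorem tests candidate splitting pairs via max-flow/min-cut computations, which is far too slow. The efficient approach is to maintain a structure that certifies Steiner $c$-edge-connectivity among $T$ --- specifically, a packing of $\Omega(c)$ spanning trees on the terminals, after which a pair of edges at $v$ is ``safe'' to split unless it tightens some witness cut in the packing. One processes each non-terminal $v$ by repeatedly finding a safe pair among its $d(v)$ incident edges; any unsafe pair certifies a near-tight Steiner cut, which is then used to reroute the search so that each $v$ is processed in $\tilde O(d(v)+c^2)$ amortized time using dynamic tree data structures (Sleator--Tarjan) on the maintained packing. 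Summing over all non-terminals yields $\tilde O(m+nc^2)$. I would structure the write-up by (i) stating and invoking Mader's theorem as a black box for correctness, (ii) verifying property~(\ref{item:splitting-1}) by direct case analysis, and (iii) deferring the runtime bound to \cite{bhalgat2008fast}, since it is this last step --- maintaining the tree packing under edge modifications while locating safe splitting pairs --- that is the real technical meat and the principal obstacle to a short self-contained proof.
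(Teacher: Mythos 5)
The paper does not prove this statement at all: it is imported verbatim as a black box from \cite{bhalgat2008fast}, so there is no in-paper argument to compare yours against. Your outline is a faithful sketch of how the cited result is actually established: property~(1) follows from the elementary case analysis showing a single splitting-off at $v$ changes $\partial S$ by $0$ or $-2$ for every $S$; property~(3) is the definition of a complete splitting applied to every non-terminal; and property~(2) is the Lov\'asz--Mader existence theorem (in its Steiner form, where only connectivities among $T$ need be preserved, together with the usual parity/cut-edge caveats you mention), iterated over all non-terminals. You also correctly locate the real content in the $\tilde O(m+nc^2)$ running time and defer that to \cite{bhalgat2008fast}, which is exactly what the paper does. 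One minor caution: your description of the internals of the Bhalgat et al.\ algorithm (a packing of $\Omega(c)$ spanning trees maintained under dynamic trees) is a plausible paraphrase but not a verified account of their actual mechanism, so in a write-up you should keep that part explicitly attributed rather than asserted; since you are citing the runtime as a black box anyway, this does not affect correctness.
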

Let $H''$ be the output of applying \Cref{thm:splitting-off} to $H'$. By construction of $H'$, any minimum cut $S\subseteq V$ in $G$ extends to a cut $S^*\subseteq V_X$ in $H'$ with $\partial_{H'}S^*\le(1+O(\epsilon))\lambda/W'$, and property~(\ref{item:splitting-1}) guarantees that $\partial_{H''}S^*\le(1+O(\epsilon))\lambda/W'$ as well. In particular, the Steiner minimum cut of $H'$ is at most $(1+O(\epsilon))\lambda/W'$, and \Cref{thm:splitting-off} takes $\tilde O(m+n(\lambda/W')^2)=\tilde O(m/\epsilon^{O(1)})$ time. On the other hand, any extension $S^*\subseteq V_X$ of a cut $S\subseteq V$ in $G$ satisfies $\partial_{H'}S^*\ge(1-O(\epsilon))\lambda/W'$, so the Steiner minimum cut of $H'$ is at least $(1-O(\epsilon))\lambda/W'$, and property~(\ref{item:splitting-2}) ensures that the Steiner minimum cut of $H''$ is also at least $(1-O(\epsilon))\lambda/W'$. Finally, let $\widetilde H$ be the restriction of $H''$ to the terminal set $V$, which does not affect cut extensions by property~(\ref{item:splitting-3}). We have thus constructed a minimum cut sparsifier, also known as a \emph{skeleton graph}, formalized below (after resetting $\epsilon>0$ to be a constant factor smaller).
\begin{theorem}[Skeleton graph, see Theorem~1.5 of~\cite{Li21}]\label{thm:sparsifier}
For any $0<\epsilon\le1$, we can compute, in deterministic $\tilde{O}(m/\epsilon^{O(1)})$ time, an unweighted graph $\widetilde H$ and some weight $W'=(\epsilon/\log n)^{O(1)}\lambda$ such that
 \begin{enumerate}
 \item For any mincut $\partial S^*$ of $G$, we have $W'\cdot|\partial_{\widetilde H}S^*|\le(1+\epsilon)\lambda$, and
 \item For any cut $\emptyset\subsetneq S\subsetneq V$ of $G$, we have $W'\cdot|\partial_{\widetilde H}S| \ge (1-\epsilon)\lambda$.
 \end{enumerate}
\end{theorem}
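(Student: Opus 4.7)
The plan is to assemble the pieces that are already constructed just before the theorem statement into a clean verification. First I set the parameter regime: take $\eta = \mathrm{poly}(\epsilon, 1/L, \lambda/s_0)$ small enough to absorb the quadratic blow-ups in \Cref{clm:cut-upperbound} and \Cref{clm:cut-lowerbound}, $\nu = \Theta(Ls_0^2/(\epsilon^2\lambda))$ so both claims apply, $W = O(\eta\lambda^2/(\nu\ln(Lm)))$ from \Cref{lem:pessimistic}, and $Z = (\epsilon\lambda)^2/(Ls_0^2)$. I then invoke \Cref{lem:pessimistic} on the hierarchical decomposition of Step~\ref{item:decomposition} (which already exists by \Cref{lem:progress} in $\tilde{O}(m)$ time) to obtain the unbalanced sparsifier $H$, and build the balanced sparsifier $X$ on the Steiner vertices $x_v^j$ in linear time.

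Next I form the weighted graph $G' = W \cdot H \,\cup\, Z \cdot X$ on vertex set $V_X$ and invoke the two preservation claims. \Cref{clm:cut-upperbound} gives, for every minimum cut $S$ of $G$, an extension $S^*$ with $\partial_{G'}S^* \le (1+O(\epsilon))\lambda$, and \Cref{clm:cut-lowerbound} guarantees that every extension of every cut $S$ of $G$ has $\partial_{G'}S^* \ge (1-O(\epsilon))\lambda$. To convert $G'$ into an unweighted graph, I replace each edge $e$ by $\lfloor w(e)/W'\rfloor$ parallel copies of unit weight $W' = \min\{\epsilon W,\, \epsilon\lambda/Z\}$; since every edge of $X$ has weight at least $\lambda/Z$ (as $d^W_{V_j}(u) \ge \lambda$) and every edge of $H$ has weight $W$, the rounding error in any cut is at most an $O(\epsilon)$-fraction of $\lambda$. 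Plugging in the chosen parameters yields $W' = (\epsilon/\log n)^{O(1)}\lambda$, and the resulting unweighted graph $H'$ inherits the $(1\pm O(\epsilon))\lambda/W'$ bounds on Steiner cuts whose restriction to $V$ is a cut of $G$.

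To remove the Steiner vertices I apply \Cref{thm:splitting-off} with terminal set $V$, producing $H''$ in $\tilde{O}(m + n(\lambda/W')^2) = \tilde{O}(m/\epsilon^{O(1)})$ time. Property~(\ref{item:splitting-1}) ensures the upper bound $(1+O(\epsilon))\lambda/W'$ on extensions of the minimum cut is preserved, and property~(\ref{item:splitting-2}) ensures the Steiner minimum cut—which lower-bounds $|\partial_{H'}S^*|$ over all extensions $S^*$ of any nontrivial cut $S$ of $G$—remains at least $(1-O(\epsilon))\lambda/W'$. Finally, property~(\ref{item:splitting-3}) isolates every non-terminal, so restricting $H''$ to $V$ yields $\widetilde H$ without affecting any cut: for each cut $S$ of $G$, $|\partial_{\widetilde H}S|$ equals the minimum of $|\partial_{H''}S^*|$ over extensions $S^*$, which lies in $[(1-O(\epsilon))\lambda/W',\,(1+O(\epsilon))\lambda/W']$. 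Rescaling $\epsilon$ by a constant factor gives the stated bounds.

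The only subtle step is the second one: verifying that both the discretization and the splitting-off respect the two-sided bound simultaneously. The upper bound is straightforward since cut values only decrease under splitting-off, but the lower bound requires the observation that $|\partial_{\widetilde H}S|$ for $S \subseteq V$ is exactly the Steiner-min-cut of $H''$ restricted to extensions of $S$, which by \Cref{lem:structure-balanced} and \Cref{lem:structure-unbalanced} is at least $(1-O(\epsilon))\lambda/W'$ regardless of how the Steiner vertices are sided. The running time bound follows because \Cref{lem:pessimistic} costs $\tilde{O}(L^2 m) = \tilde{O}(m)$, the other constructions are near-linear, and \Cref{thm:splitting-off} costs $\tilde{O}(m/\epsilon^{O(1)})$, which dominates.
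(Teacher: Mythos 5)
Your proposal is correct and follows essentially the same route as the paper: the same parameter choices for $\eta,\nu,W,Z$, the same combination $G'=W\cdot H\cup Z\cdot X$ verified via \Cref{clm:cut-upperbound} and \Cref{clm:cut-lowerbound}, the same discretization with $W'=\min\{\epsilon W,\epsilon\lambda/Z\}$, and the same use of \Cref{thm:splitting-off} followed by restriction to $V$. The observation you flag as subtle—that the lower bound survives splitting-off because every extension of a nontrivial cut has value at least $(1-O(\epsilon))\lambda/W'$, so the Steiner minimum cut is preserved at that level—is exactly the argument the paper gives.
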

Section~1.2.1 of~\cite{Li21} outlines Karger's framework for deterministic minimum cut given the skeleton graph of~\Cref{thm:sparsifier}. The final running time is $\tilde{O}(m)$.

\bibliographystyle{alpha}
\bibliography{ref}
\begin{appendix}
\section{Missing details and analysis of Section~\ref{sec:unit-flow}}
In this section we provide the implementation details and the analysis of the Unit-Flow algorithm from~\Cref{sec:unit-flow}. The algorithm and its analysis largely follows the standard Push-relabel algorithm~\cite{GT88} and the original version of Unit-Flow for unweighted graphs in~\cite{DBLP:journals/siamcomp/HenzingerRW20} with fairly mechanical adaptation, and we include material directly from aforementioned work here for completeness. For readers familiar with Push-relabel and unweighted Unit-Flow, the main difference from the unweighted version of Unit-Flow is that we need to use the dynamic-tree data structure~\cite{ST81} for weighted graphs (as in the standard Push-relabel algorithm) since we want the running time to depend on the number of edges rather than the weighted volume. Similar to the unweighted case, by limiting the vertex labels to be at most $h$ in Unit-Flow, we can bound the total number of relabels per node to be $O(h)$. This is a crucial factor in the running time of Push-relabel, and in particular bounds the number of times we cut or link each edge in the dynamic-tree data structure at $O(h)$. This is the reason that our Unit-Flow runs in time $\tilde{O}(m\cdot h)$ instead of the $\tilde{O}(mn)$ running time of standard Push-relabel where labels can go as high as $O(n)$.
\label{app:unit-flow}
\subsection{Unit-Flow Implementation}
\label{app:unit-flow-details}
{\em Unit-Flow} (Algorithm~\ref{alg:unit-flow})
 takes as input an undirected weighted graph $G=(V,E,w)$ (possibly with self-loops), source function
$\Delta$ and integer $\beta\geq 2$ such that $\forall v:0\leq \Delta(v)\leq \beta d^W(v)$, as well as a capacity parameter $U>0$ on all edges. Each vertex $v$ is a sink of capacity $d^W(v)$. Furthermore, the procedure takes as input an integer $h\ge\ln(\vol^W(G))$ to customize the Push-relabel algorithm, which we describe next.

In the Push-relabel algorithm~\cite{GT88}, each vertex $v$ has a non-negative integer label $l(v)$ which is initially zero. The label of a vertex only increases during the execution of the algorithm and in our modification of the standard Push-relabel the label cannot become larger than $h$. The bound of $h$ on the labels makes the running time of {\em Unit-Flow} linear in $h$, but it may prevent our algorithm from routing all the supply to sinks even when there exists a feasible routing for the flow problem. However, when our algorithm cannot route a feasible flow, allowing labels of value up to $h$ is sufficient to find a cut with low conductance (i.e.,~of value inversely proportional to $h$), which is our primary concern. 

The algorithm maintains a pre-flow and the standard residual network, where each undirected edge $\{v,u\}$ in $G$ corresponds to two directed arcs $(v,u)$ and $(u,v)$ in the residual network, with flow values such that $f(v,u)=-f(u,v)$, and $|f(v,u)|,|f(u,v)|\leq U\cdot w_e$. The residual capacity of an arc $(v,u)$ is $r_f(v,u)=U\cdot w_e-f(v,u)$. We also maintain $f(v)=\Delta(v)+\sum_u f(u,v)$, which will be non-negative for all nodes $v$ during the execution, and the excess $\ex(v)=\max\left(0,f(v)-d^W(v)\right)$.

As in the generic Push-relabel framework, an arc $(v,u)$ is {\em eligible} if $r_f(v,u)>0$ and $l(v) = l(u)+1$. A vertex $v$ is {\em active} if $l(v)<h$ and $\ex(v)>0$. The
algorithm  maintains the property that for any arc $(v,u)$ with
positive residual capacity,  $l(v) \leq l(u) + 1$.
The algorithm maintains a (FIFO) queue $Q$ of active vertices and repeatedly picks an active vertex $v$ from $Q$ to either pushes along an eligible arc incident to
$v$ if there is one, or to raise the label of $v$ by $1$ if $v$ is active but there is no eligible arc out of $v$. In addition, we follow the efficient implementation of Push-relabel using the dynamic-tree data structure, which supports these operations below. The total time for a sequence of $t$ tree operations starting with a collection of single-vertex trees is $O(t\ln n)$ when we have $n$ vertices in total.\\

The dynamic-tree data structure maintains a set of vertex-disjoint rooted trees, where each vertex $v$ has an associated real value $g(v)$, possibly $\pm\infty$. Tree edges are directed toward the root, and every vertex $v$ is considered as both an ancestor and a descendant of itself. In our context, the value $g(v)$ of a vertex $v$ in its dynamic tree is $r_f(v,w)$ if $v$ has a parent $w$ and $g(v)=\infty$ if $v$ is a tree root. Each vertex starts in a single-vertex tree and has $g(v)=\infty$. As we don't aim to optimize the $\ln|C|$ term in the running time, we allow any tree to grow as large as possible (in contrast to the standard Push-relabel implementation). The tree edges in the dynamic trees form a subset of the eligible arcs intersecting the set of current incident edge being considered for each vertex (denoted as $\current(v)$), and the tree operations allow us to push supply along an entire path in a tree in $O(\ln |C|)$ time, causing either a saturating push (i.e. use up the residual capacity of an arc on the path) or moving excess supply from some vertex in the tree all the way to the tree root (See the {\em Send} operation in Algorithm~\ref{alg:unit-flow}). The algorithm keeps the pre-flow $f$ in two different ways; If $(v,w)$ is an edge not in any dynamic tree, $f(v,w)$ is stored explicitly with the edge; Otherwise, if $e=(v,w)$ is a dynamic tree edge, with $w$ being the parent of $v$, then $g(v)=U\cdot w_e - f(v,w)$ is stored implicitly in the dynamic-tree data structure, and will be computed when the tree edge $(v,w)$ is cut. However, note in our Unit-Flow implementation with incremental source supply, we only need to know the $f(v)$'s and $\ex(v)$'s (i.e. supply and excess on nodes) each time Unit-Flow finishes with an interim source supply (to check if we move to the next stage or if there is enough total excess to make a cut). The supply ending on vertices can always be maintained explicitly during the execution of the algorithm, since only the two endpoints' supply will change when we send supply along a tree path. In particular, this means when $\Delta(\cdot)$ increases, we can just continue from the dynamic-tree data structure's state at the time when Unit-Flow finishes with the previous $\Delta(\cdot)$, and we don't need to do anything about the data structure (e.g. explicitly compute the supply routed on each edge) before finishing with the entire aggregate flow problem. To avoid including more redundant material, we refer readers to~\cite{GT88} for further details.\\
\\
\vspace{0.2in}
\fbox{
\parbox{\textwidth}{
{\bf Dynamic-Tree Operations}
\begin{itemize}
    \item $\dtfr(v)$: Find and return the root of the tree containing vertex $v$.
    \item $\dtfs(v)$: Find and return the number of vertices in the tree containing vertex $v$.
    \item $\dtfv(v)$: Compute and return $g(v)$.
    \item $\dtfm(v)$: Find and return the ancestor $w$ of $v$ of minimum value $g(w)$. In case of a tie, choose the vertex $w$ closest to the root.
    \item $\dtcv(v,x)$: Add real number $x$ to g(w) for each ancestor $w$ of $v$. We follow the convention that $\infty+(-\infty)=0$.
    \item $\dtlink(v,w)$: Combine the trees containing vertices $v$ and $w$ by making $w$ the parent of $v$. This operation does nothing if $v$ and $w$ are in the same tree or if $v$ is not a tree root.
    \item $\dtcut(v)$: Break the tree containing $v$ into two trees by deleting the edge from $v$ to its parent. This operation does nothing if $v$ is a tree root.
\end{itemize}}}
\begin{algorithm}
\label{alg:unit-flow}
~\\
\fbox{
\parbox{0.95\textwidth}{
{\em Unit-Flow}($G$,$\Delta$,$U$,$h$,$w$) \\ 
\tab {\bf Initialization:}\\
\tab \tab $\forall \{v,u\}\in E$, $f(u,v)=f(v,u)=0$, $Q=\{v|\Delta(v)>d^W(v)\}$.\\
\tab \tab $\forall v$, $l(v)=0$, and $\current(v)$ is the first edge in its list of incident edges.\\
\tab {\bf While} $Q$ is not empty\\
\tab \tab Remove the vertex $v$ at the front of $Q$.\\
\tab \tab {\bf Repeat}\\
\tab \tab \tab Perform {\em Tree-Push/Relabel}$(v)$. \\
\tab \tab \tab {\bf If} $w$ becomes active during this {\em Tree-Push/Relabel} operation \\
\tab \tab \tab \tab Add $w$ to the rear of $Q$\\
\tab \tab \tab {\bf End If}\\
\tab \tab {\bf Until} $\ex(v)=0$ or $l(v)$ increases.\\
\tab \tab {\bf If} $v$ is still active, add $v$ to the rear of $Q$. \\
\tab {\bf End While}}}
\fbox{
\parbox{0.95\textwidth}{
{\em Tree-Push/Relabel}$(v)$ \\
\tab {\bf Applicability}: $v$ is an active tree root (in the dynamic tree).\\
\tab Let $\{v,w\}$ be $\current(v)$.\\
\tab {\bf If} $l(v)=l(w)+1$ {\bf and } $r_f(v,w)>0$ \tab \tab {\bf Case $(1)$}\\
\tab \tab Make $w$ the parent of $v$ by performing \\
\tab \tab \tab $\dtcv(v,-\infty),\dtcv(v,r_f(v,w))$, and $\dtlink(v,w)$.\\
\tab \tab Push excess from $v$ to $w$ using {\em Send$(v)$}.\\
\tab {\bf Else} (i.e. $l(v)\leq l(w)$ or $r_f(v,w)=0$) \tab \tab {\bf Case $(2)$} \\
\tab \tab {\bf If} $\{v,w\}$ is not the last edge in $v$'s list of edges. \tab \tab {\bf Case $(2a)$}\\
\tab \tab \tab Set $\current(v)$ to be the next edge in $v$'s list of edges.\\
\tab \tab {\bf Else} (i.e. $(v,u)$ is the last edge of $v$) \tab \tab {\bf Case $(2b)$}\\
\tab \tab \tab Set $\current(v)$ to be the first edge of $v$'s list of edges.\\
\tab \tab \tab Perform $\dtcut(u)$ and $\dtcv(u,\infty)$ for every child $u$ of $v$ (in the dynamic tree).\\
\tab \tab \tab Perform {\em Relabel}$(v)$.\\
\tab \tab {\bf End If}\\
\tab {\bf End If}
}}
\fbox{
\parbox{0.95\textwidth}{
{\em Send}$(v)$\\
\tab {\bf Applicability}: $v$ is active.\\
\tab {\bf While} $\dtfr(v)\neq v$ {\bf and} $\ex(v)>0$\\
\tab \tab Send $\delta\leftarrow\min\left(\ex(v),\dtfv\left(\dtfm(v)\right)\right)$ units of supply along\\
\tab \tab \tab the tree path from $v$ by performing $\dtcv(v,-\delta)$\\
\tab \tab {\bf While} $\dtfv\left(\dtfm(v)\right)=0$\\
\tab \tab \tab $u\leftarrow\dtfm(v)$\\
\tab \tab \tab Perform $\dtcut(u)$ followed by $\dtcv(u,\infty)$.\\
\tab \tab {\bf End While}\\
\tab {\bf End While}
}}
\fbox{
\parbox{0.95\textwidth}{
{\em Relabel}$(v)$\\
\tab {\bf Applicability}: $v$ is active, and $\forall w\in V$, $r_f(v,w)>0\implies l(v)\leq l(w)$.\\
\tab $l(v)\leftarrow l(v)+1$.
}}
\end{algorithm}
In the rest of this section, we show how capping labels at $h$ guarantees the running time to be $\tilde{O}(\vol(C)\cdot h)$ if $C$ is the set of nodes with sinks saturated (and $\vol(C)$ is the unweighted volume of $C$). 
\begin{lemma}
\label{lem:flow-runtime} Let $C=\left\{v:f(v)\geq d^W(v)\right\}$ be the set of saturated nodes when Unit-Flow finishes, the running time for {\em Unit-Flow} is $O(\vol(C)\cdot h\cdot \log |C|)$.
\end{lemma}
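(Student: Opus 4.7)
The plan is to follow the standard amortized analysis of Push-Relabel with link-cut trees of Goldberg and Tarjan, taking advantage of two features particular to our setting: the hard height cap at $h$ and the fact that only saturated nodes ever become active. The starting observation is that a node $v$ is active only while $\ex(v) > 0$, i.e., $f(v) > d^W(v)$, which means $v \in C$ at that moment. Moreover, once $v$ enters $C$ it stays there: outgoing pushes occur only while $v$ is active (so they cannot reduce $f(v)$ below $d^W(v)$), and no other step ever decreases $f(v)$. Therefore every Relabel is performed on a node of $C$, and since labels are monotone nondecreasing and capped at $h$, the total number of Relabel operations is at most $|C|\cdot h$.

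Next I would bound the number of saturating pushes. A push on $\{v,u\}$ requires the source of the push to be active, hence in $C$. Between two consecutive saturating pushes in the same direction on $\{v,u\}$, both endpoints' labels must each increase by at least one (to reestablish the $l(v) = l(u)+1$ gap after it has been reversed by a push in the opposite direction), so each undirected edge sees at most $O(h)$ saturating pushes. Since the only edges that see any push at all are incident to $C$, the total count is $O(\vol(C)\cdot h)$.

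Finally I would bound the dynamic-tree work. Every $\dtlink$ is performed in Case~$(1)$ by an active $v\in C$, and is matched by a subsequent $\dtcut$ of the same tree edge; each $\dtcut$ is triggered either during a Send (when the tree edge becomes saturated, chargeable to one of the $O(\vol(C)\cdot h)$ saturating pushes) or in Case~$(2b)$ of Tree-Push/Relabel, which severs all child edges of $v$ whenever $v$ is relabeled. Case~$(2b)$ cuts at most $d(v)$ edges per Relabel of $v$, and summing over all Relabels gives $\sum_{v\in C}d(v)\cdot h = O(\vol(C)\cdot h)$. Each Send call either terminates by emptying $v$'s excess or causes at least one tree edge to be cut, so the number of Send calls is $O(\vol(C)\cdot h)$ as well, and every auxiliary tree query ($\dtfr$, $\dtfm$, $\dtfv$, $\dtcv$) is charged to a link, cut, push, or relabel. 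Since every node that ever appears in a dynamic tree is either in $C$ or was linked as a parent by some node of $C$, the tree size remains polynomial in $|C|$, so each dynamic-tree operation costs $O(\log |C|)$ amortized time. Multiplying everything together yields the claimed $O(\vol(C)\cdot h\cdot \log|C|)$ bound. The main delicate point is the charging of Case~$(2b)$ cuts to the degree of the relabeled node rather than to the global edge count $m$, which is precisely what prevents the analysis from inflating to the textbook $O(mh\log n)$.
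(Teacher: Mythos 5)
Your proposal is correct and follows essentially the same amortized analysis as the paper: relabels are confined to $C$ and bounded by $|C|h$, cuts/links are charged to saturating pushes and to the $O(d(v))$ edge scans per relabel for a total of $O(\vol(C)\cdot h)$ tree operations, and each operation costs $O(\log|C|)$ because every non-root tree node was active (hence in $C$) when linked. The only cosmetic difference is that you count saturating pushes per edge via the label-gap argument, whereas the paper charges each such cut directly to a subsequent Relabel of the other endpoint; the accounting is equivalent.
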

\begin{proof}
Throughout the execution no dynamic tree will grow to have more than $|C|$ nodes, since we only link a tree rooted at $v$ to another tree when $v$ is active (i.e. has excess supply, thus already saturated, so $v\in C$). Thus the time per dynamic-tree operation is $O(\ln |C|)$.

The running time of Unit-Flow is mostly in the {\em Tree-Push/Relabel} calls. Each {\em Tree-Push/Relabel} invocation takes
\begin{enumerate}[(i)]
    \item $O(1)$ time plus $O(1)$ dynamic-tree
operations;
\item $O(1)$ dynamic-tree operations per $\dtcut$ operation in invocations of {\em Send$(v)$} and
in Case $(2b)$ plus time for {\em Relabel} in Case $(2b)$. 
\end{enumerate}
To bound $(ii)$, the total {\em Relabel} time
is $O(|C|h)$ since we only need to raise the label of nodes in $C$ for at most $h$ times on each node. The total number of $\dtcut$ operations can be bounded as follows. Each $\dtcut$ operation can be attributed to a saturating push or to an edge scan for a {\em Relabel}. If we saturate an arc $(v,w)$ during a push, we know $v\in C$ and $l(v)=l(w)+1$. We won't be able to push along $(v,w)$ (in either direction) until we raise the label of $w$ to be above $l(v)$, so we can charge the $\dtcut$ to the {\em Relabel} of $w$. Similarly, we can charge the $\dtcut$ operations during the edge scan in {\em Relabel$(v)$} to that {\em Relabel}. In total, each node $v\in C$ will be charged $O(d(v))$ per {\em Relabel}, so $O(d(v)h)$ in total. The number of $\dtlink$ operations is at most $|C|-1$ more than the total number of $\dtcut$ operations, and thus can be bounded similarly. Summing over all vertices we get a bound of $O(\vol(C)h\ln|C|)$ for the running time from $(ii)$. 

The running time from $(i)$ is $O(\ln |C|)$ times the number of times we add a vertex to $Q$ (since we only call {\em Tree-Push/Relabel} when we remove a vertex from $Q$). Beyond the (at most $|C|$) vertices we add to $Q$ during initialization (i.e. nodes starting with $\Delta(v)\geq d^W(v)$), a node $v$ can only be added to $Q$ when $l(v)$ increases or as a result of $\ex(v)$ increases above $0$. The former happens at most $|C|h$ times in total, and the latter only happens during Case $(1)$ when we send supply along a tree path. In this case, the number of nodes with excess increasing above $0$ is at most $1$ more than the number of $\dtcut$ operations, and we already bounded the total number of $\dtcut$ operations by $O(\vol(C)h)$ in the previous paragraph. The number of times we get to Case $(1)$ is bounded by the total number of $\dtlink$ operations, which again is bounded by $O(\vol(C)h)$ as in the previous paragraph. In total this bounds the total number of times a vertex is added to $Q$ to be $O(\vol(C)h)$, and thus the running time from $(i)$ is $O(\vol(C)h\ln|C|)$.

The additional running time if we need to find a low conductance cut will be $O(\vol(C))$ since it's just a sweeping cut algorithm using the vertex labels, see details in the proof of~\Cref{thm:unit-flow}.
\end{proof}

\subsection{Analysis of Unit-Flow}
\label{app:unit-flow-analysis}
The {\em Unit-Flow} procedure (Algorithm~\ref{alg:unit-flow}) is a fairly straightforward implementation of the push-relabel framework with a cap on the labels: If a vertex has label $h$, and we cannot push remaining excess from the vertex, instead of relabeling it to $h+1$, we leave the vertex at label $h$ and it's never considered as active from then on. Upon termination, we have a pre-flow $f$ , and labels $l$ on vertices. We make the following observations.
\begin{observation} 
\label{obs:unit-flow}
During the execution of {\em Unit-Flow}, we have
\begin{enumerate}[(1)]
\item If $v$ is active at any point, the final label of $v$ cannot be $0$. The reason is that $v$ will remain active until either $l(v)$ is increased to $h$, or its excess is pushed out of $v$, which is applicable only when $l(v)$ is larger than $0$ at the time of the push.
\item Each vertex $v$ is a sink that can absorb up to $d^W(v)$ units of supply, so we call the $f(v)-\ex(v)=\min(f(v),d^W(v))$ units of supply remaining at $v$ the {\em absorbed} supply. The amount of absorbed supply at $v$ is between $[0,d^W(v)]$, and is non-decreasing. Thus any time after the point that $v$ first becomes active, the amount of absorbed supply is $d^W(v)$. In particular any time the algorithm relabels $v$, there have been $d^W(v)$ units of supply absorbed by $v$.
\end{enumerate}
 Upon termination of {\em Unit-Flow} procedure, we have
\begin{enumerate}[(1)]
\setcounter{enumi}{2}
\item For any edge $\{v,u\}\in E$, if the labels of the two endpoints differ by more than $1$, say $l(v)-l(u)>1$, then arc $(v,u)$ is saturated. This follows directly from a standard property of the push-relabel framework, where $r_f(v,u)>0$ implies $l(v)\leq l(u)+1$. 
\end{enumerate}
\end{observation}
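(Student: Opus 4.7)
The plan is to verify the three items by standard push-relabel invariant reasoning, treating the execution as a sequence of Push, Relabel, and tree-maintenance operations and checking that each operation preserves the claimed properties.

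For item~(1), I would argue by following a vertex $v$ from the first moment it becomes active. At that moment $l(v) < h$ and $\ex(v) > 0$, and $v$ stays in the queue $Q$ until it is no longer active. The only way $v$ leaves the active state without reaching $l(v) = h$ is by having $\ex(v)$ reduced to $0$ via a push out of $v$. A push out of $v$ (Case~(1) of \emph{Tree-Push/Relabel}) sends supply along an eligible arc $(v,w)$ with $l(v) = l(w) + 1$, which forces $l(v) \ge 1$. Since labels are monotone non-decreasing throughout the execution, once $l(v)$ becomes at least $1$ it stays so. Hence the final label of $v$ is either $h$ or at least $1$; in either case it is positive.

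For item~(2), I would unfold the definition $\ex(v) = \max(f(v) - d^W(v), 0)$ to get $f(v) - \ex(v) = \min(f(v), d^W(v)) \in [0, d^W(v)]$, and then show monotonicity by a case analysis on how $f(v)$ changes. A push into $v$ only increases $f(v)$ and hence cannot decrease $\min(f(v), d^W(v))$; a push out of $v$ requires $\ex(v) > 0$, i.e.\ $f(v) > d^W(v)$, and decreases $f(v)$ by at most $\ex(v)$, so the new $f(v)$ remains at least $d^W(v)$ and the absorbed amount stays at $d^W(v)$. Once $v$ becomes active for the first time, we have $f(v) > d^W(v)$ and so the absorbed supply is exactly $d^W(v)$; by monotonicity it remains at that value forever. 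In particular, since \emph{Relabel} is only applied when $v$ is active, at the moment of any relabel the absorbed supply at $v$ equals $d^W(v)$.

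For item~(3), I would establish the labelling invariant $r_f(v,u) > 0 \implies l(v) \le l(u) + 1$ for every ordered pair of neighbours and maintain it throughout the algorithm. It holds trivially at initialization where all labels are $0$. \emph{Relabel}$(v)$ is applied only when every residual out-arc $(v,w)$ satisfies $l(v) \le l(w)$, so after incrementing $l(v)$ by one we still have $l(v) \le l(w) + 1$; moreover, incoming residual arcs $(u,v)$ are unaffected since only $l(v)$ increased (and the invariant for $(u,v)$ is an upper bound on $l(u)$, not $l(v)$). A push along $(v,u)$ can only create or increase $r_f(u,v)$, but $l(u) = l(v) - 1 \le l(v) + 1$, so the invariant continues to hold. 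Contrapositively, if at termination $l(v) - l(u) > 1$, then $r_f(v,u) = 0$, i.e.\ the arc $(v,u)$ is saturated, which is precisely the claim.

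None of these steps presents a real obstacle; each is an essentially mechanical extension of the classical Goldberg--Tarjan push-relabel analysis to our capped-label, weighted-capacity setting, and the only minor subtlety is making sure that the incremental-source variant does not break the invariants, which it does not because adding to $\Delta(v)$ only increases $f(v)$ without touching $r_f$ or labels.
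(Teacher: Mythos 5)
Your proposal is correct and follows essentially the same route as the paper, which proves all three items inline by appealing to the standard push-relabel invariants (label monotonicity, the fact that a push out of $v$ requires $l(v)=l(w)+1\ge 1$ and removes at most $\ex(v)$ supply, and the invariant $r_f(v,u)>0\implies l(v)\le l(u)+1$ preserved by Relabel and Push). Your write-up merely spells out these checks in more detail, including the harmless effect of incremental sources, so there is nothing to fix.
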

Although {\em Unit-Flow} may terminate with $\ex(v)>0$ for some $v$, we know all such vertices must have label $h$, as the algorithm only stops trying to route $v$'s excess supply to sinks when $v$ reaches level $h$. Thus we have the following lemma. Also note that once some supply is absorbed by a sink or is left as excess on a node at height $h$ when Unit-Flow stops, even if $\Delta(\cdot)$ is incremental and we resume the execution of Unit-Flow, the absorbed supply remains absorbed, and similarly for the excess supply on node with label $h$. Thus it is well-defined to discuss these concepts at the time Unit-Flow terminates with some interim $\Delta(\cdot)$.
\begin{lemma}
\label{lemma:excess}
Upon termination of {\em Unit-Flow} with input $(G,\Delta,h,U,\beta)$, where $\Delta(v)\leq \beta d^W(v)$ for all $v$, the pre-flow and labels satisfy
\begin{enumerate}[(a)]
\item If $l(v)=h$, $(\beta+U) d^W(v)\geq f(v)\geq d^W(v)$; 
\item If $h-1\geq l(v)\geq 1$, $f(v)=d^W(v)$; 
\item If $l(v)=0$, $f(v)\leq d^W(v)$.
\end{enumerate}
\end{lemma}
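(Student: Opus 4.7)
The plan is to work case by case on the value of $l(v)$ at termination, leveraging the two facts about the algorithm's execution recorded in Observation~\ref{obs:unit-flow}: that any vertex whose final label is positive must have become active at some point, and that once $v$ is active we have $f(v)\ge d^W(v)$ (since $\ex(v)>0$), and this inequality is preserved for the rest of the execution (each Send only decreases $f(v)$ by at most $\ex(v)$, so $f(v)$ can drop to $d^W(v)$ but no further).

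For case~(c), when $l(v)=0$, the idea is that $v$ was never active during the whole execution. If it ever were, then since the only pushes out of $v$ require an eligible arc $(v,w)$ with $l(v)=l(w)+1$ — impossible when $l(v)=0$ — the algorithm would execute a Relabel on $v$ and raise its label, contradicting $l(v)=0$ at termination. Hence $\ex(v)=0$ at all times and $f(v)\le d^W(v)$.

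For case~(b), $1\le l(v)\le h-1$, the label was raised from $0$ at least once; at each relabel $v$ was active, so by Observation~\ref{obs:unit-flow}(2) we have $f(v)\ge d^W(v)$ from then on. On the other hand, at termination the queue $Q$ is empty, so $v$ is not active; since $l(v)<h$, this forces $\ex(v)=0$, i.e. $f(v)\le d^W(v)$. Combining gives $f(v)=d^W(v)$.

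For case~(a), $l(v)=h$, the lower bound $f(v)\ge d^W(v)$ follows exactly as in case~(b) (the relabel from $h-1$ to $h$ requires $v$ to have been active). For the upper bound we simply use capacity-feasibility of the pre-flow together with the input bound on $\Delta$:
\[
f(v)=\Delta(v)+\sum_{u}f(u,v)\;\le\;\beta\,d^W(v)+\sum_{\{u,v\}\in E}U\cdot w_{\{u,v\}}\;=\;(\beta+U)\,d^W(v).
\]
This finishes all three cases. There is no real technical obstacle here; the only subtlety is justifying that $f(v)$ cannot fall below $d^W(v)$ after $v$ first becomes active, which is the one place where the specific behavior of Send matters (it only routes out quantities up to the current excess), and which is exactly the content of Observation~\ref{obs:unit-flow}(2).
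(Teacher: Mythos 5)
Your proof is correct and follows essentially the same route as the paper's: Observation~\ref{obs:unit-flow}(2) gives the lower bound $f(v)\ge d^W(v)$ for any vertex whose label was ever raised, the termination condition (no active vertices below label $h$) gives $\ex(v)=0$ hence $f(v)\le d^W(v)$ for $l(v)<h$, and source-plus-capacity feasibility gives the $(\beta+U)d^W(v)$ upper bound at label $h$. The only cosmetic difference is that for case~(c) you argue $v$ was never active, whereas the paper just invokes the termination condition directly; both are valid.
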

\begin{proof}
By Observation~\ref{obs:unit-flow}.$(2)$, any vertex with label larger than $0$ must have $f(v)\geq d^W(v)$. The algorithm terminates when there is no active vertices, i.e. $\ex(v)>0\implies l(v)=h$, so all vertices with label below $h$ must have $f(v)\leq d^W(v)$. Moreover, $f(v)\leq wd^W(v)$ since at the beginning $f(v)=\Delta(v)\leq \beta d^W(v)$, and the pre-flow routing can send at most an additional $U\cdot d^W(v)$ units of supply to $v$ since that's the total capacity of edges incident to $v$ for the flow problem.
\end{proof}
Now we can prove the main theorem about {\em Unit-Flow}. Everything other than the running time largely follows the analysis of the unweighted version from~\cite{DBLP:journals/siamcomp/HenzingerRW20}, and the running time follows mechanically from the Push-relabel analysis in~\cite{GT88} using the dynamic-tree data structure from~\cite{ST81}
\unitFlow*
\begin{proof}
We use the labels at the end of {\em Unit-Flow} to divide the vertices into groups
\[
B_i=\{v|l(v)=i\}
\]
If $B_h=\emptyset$, no vertex has positive excess, so all $|\Delta(\cdot)|$ units of supply are absorbed by sinks, and we end up with case $(1)$.

If $B_h\neq \emptyset$, but $B_0=\emptyset$, by Lemma~\ref{lemma:excess} every vertex $v$ has $f(v)\geq d^W(v)$, so we have $\sum_vd^W(v)=\vol^W(G)$ units of supply absorbed by sinks, and we end up with case $(2)$.

{\bf Case $(3)$:} When both $B_h$ and $B_0$ are non-empty, we compute the cut $(A,V\setminus A)$ as follows: Let $S_i=\cup_{j=i}^h B_j$ be the set of vertices with labels at least $i$. We sweep from $h$ to $1$, and let $A$ be the first $i$ such that $\Phi(S_i)\leq 20(\frac{\ln \vol^W(G)}{h}+\frac{\beta}{U})$. The claims in $(3)(a)$ follow directly from $S_h\subseteq A\subseteq S_1$ and Lemma~\ref{lemma:excess}. 

For $(3)(b)$, We will show that there must exists some $S_i$ satisfying the conductance bound. For any $i$, an edge $e=\{v,u\}$ across the cut $S_i$, with $v\in S_i,u\in V\setminus S_i$, must be one of the two groups:
\begin{enumerate}
\item In the residual network, the arc $(v,u)$ has positive residual capacity $r_f(v,u)>0$, so $l(v)\leq l(u)+1$. But we also know $l(v)\geq i > l(u)$ as $v\in S_i,u\in V\setminus S_i$, so we must have $l(v)=i,l(u)=i-1$. In the Push-relabel framework, such edges are called {\em eligible arcs}.
\item In the residual network, if $r_f(v,u)=0$, then $(v,u)$ is a saturated arc sending $U\cdot w(e)$ units of supply from $S_i$ to $V\setminus S_i$.
\end{enumerate}
Suppose the total edge weight of edges in the first group is $z_1(i)$, and the total weight of edges in the second group is $z_2(i)$. By the following region growing argument, we can show there exists some choice of $i=i^*$, such that 
\begin{equation}\label{eq:eligible}
z_1(i^*)\leq \frac{10\min(\vol^W(S_{i^*}),\vol^W(G)-\vol^W(S_{i^*}))\ln \vol^W(G)}{h}
\end{equation}
If $\vol^W(S_{\floor*{h/2}})\leq \vol^W(G)/2$, we start the region growing argument from $i=h$ down to $\floor*{h/2}$. By contradiction, suppose $z_1(i)\geq \frac{10\vol^W(S_i)\ln \vol^W(G)}{h}$ for all $h\geq i\geq \floor*{h/2}$, which implies $\vol^W(S_i)\geq \vol^W(S_{i+1})(1+\frac{10\ln \vol^W(G)}{h})$ for all $h>i\geq \floor*{h/2}$. Since $\vol^W(S_h)=\vol^W(B_h)\geq 1$ and $h\ge \ln \vol^W(G)$, we will have $\vol^W(S_{\floor*{h/2}})\geq(1+\frac{10\ln \vol^W(G)}{h})^{h/2}\gg \vol^W(G)$, which gives a contradiction. The case where $\vol^W(S_{\floor*{h/2}})> \vol^W(G)/2$ is symmetric, and we run the region growing argument from $i=1$ up to $\floor*{h/2}$ instead. 

For any $i$, we can bound $z_2(i)$ as follows. Since the pre-flow pushes $z_2(i)U$ units of supply in total from $S_i$ to $V\setminus S_i$ along the saturated arcs in the second group, $z_2(i)U$ is at most $\sum_{v\in S_i}\Delta(v)+z_1(i)U$, i.e. the sum of the source supply in $S_i$ plus the supply pushed into $S_i$ along the eligible arcs in the first group (and the total capacity of those arcs is $z_1(i)U$). As $\Delta(v)\leq \beta d^W(v)$ for all $v$, we know 
\[
z_2(i)\leq \frac{\beta\vol^W(S_i)}{U}+z_1(i)
\]
On the other hand, $z_2(i)U$ is at most $\sum_{v\in V\setminus S_i}f(v)+z_1(i)U$, as the $z_2(i)U$ units of supply pushed into $V\setminus S_i$ either remain at vertices in $V\setminus S_i$, or back to $S_i$ along the reverse arcs of the eligible arcs in the first group, and the total capacity of these reverse arcs is also $z_1(i)U$. Since any $v\in V\setminus S_i$ is not with label $h$, thus $f(v)\leq d^W(v)$, then we get
\[
z_2(i)\leq \frac{\vol^W(V\setminus S_i)}{U}+z_1(i)
\]
The two upper-bounds of $z_2(i)$ together give
\begin{equation}\label{eq:saturated}
z_2(i)\leq \frac{\beta\min(\vol^W(S_i),\vol^W(G)-\vol^W(S_i))}{U}+z_1(i)
\end{equation}

We know there exists some $i^*$ such that $z_1(i^*)$ is bounded by~\eqref{eq:eligible}, together with~\eqref{eq:saturated}, we have
\[
z_1(i^*)+z_2(i^*)\leq \min(\vol^W(S_{i^*}),2m-\vol^W(S_{i^*}))(\frac{20\ln \vol^W(G)}{h} +\frac{\beta}{U})
\] 
thus $\Phi(S_{i^*})\leq \frac{20 \ln \vol^W(G)}{h}+\frac{\beta}{U}$, which completes the proof of $(3)(b)$.

For $(3)(c)$, we again use the observation that the supply ending in $A$ (i.e. $S_{i^*}$) can only come from either the source supply started in $A$ or sent into $A$ via the reverse arcs of the eligible arcs in the first group. The former is at most $\beta\vol^W(A)$ and the latter is at most $z_1(i^*)U\leq \frac{10U\vol^W(A)\ln\vol^W(G)}{h}$ by~\eqref{eq:eligible}. Moreover, since all nodes in $A$ have their sinks saturated, we know the following about the excess
\[
\ex(T) + \vol^W(A) \leq \beta\vol^W(A)+\frac{10U\vol^W(A)\ln\vol^W(G)}{h},
\]
which gives the bound on $\ex(T)$ in the theorem statement.

The running time claim follows from~\Cref{lem:flow-runtime}.
\end{proof}
\end{appendix}
\end{document}